 \theoremstyle{bold}
\newtheorem{theorem}{Theorem}[section]
\newtheorem{proposition}{Proposition}[section]
\newtheorem{lemma}{Lemma}[section]
\newtheorem{corollary}{Corollary}[section]
\newtheorem{conjecture}{Conjecture}[section]
\theoremstyle{shy}
\newtheorem{definition}{Definition}[section]
\newtheorem{remark}{Remark}[section]
\newtheorem{example}{Example}[section]
\DeclareMathOperator{\per}{per}
\DeclareMathOperator{\dens}{dens}
\DeclareMathOperator{\supp}{supp}
\DeclareMathOperator{\Real}{Re}
\DeclareMathOperator{\Imag}{Im}
\DeclareMathOperator{\ord}{ord}
\newcommand{\vL}{\varLambda}
\newcommand{\vG}{\varGamma}
\newcommand{\cL}{\mathcal{L}}
\newcommand{\cU}{\mathcal{U}}
\newcommand{\AAA}{\mathbb{A}}
\newcommand{\ZZ}{\mathbb{Z}}
\newcommand{\QQ}{\mathbb{Q}\ts}
\newcommand{\RR}{\mathbb{R}}
\newcommand{\CC}{\mathbb{C}}
\newcommand{\TT}{\mathbb{T}}
\newcommand{\XX}{\mathbb{X}}
\newcommand{\oplam}{\mbox{\Large $\curlywedge$}}
\newcommand{\dd}{\,\mathrm{d}}
\newcommand{\ee}{\ts\mathrm{e}\ts}
\newcommand{\ii}{\ts\mathrm{i}\ts}
\newcommand{\ts}{\hspace{0.5pt}}
\newcommand{\nts}{\hspace{-0.5pt}}
\newcommand{\myfrac}[2]{\frac{\raisebox{-2pt}{$#1$}}{\raisebox{0.5pt}{$#2$}}}
\newcommand{\exend}{\hfill$\Diamond$}
\begin{document}

\title{Almost Periodic Measures and
Meyer Sets}

\author{Nicolae Strungaru}
\address{Department of Mathematical Sciences, MacEwan University
\\
10700 “ 104 Avenue, Edmonton, AB, T5J 4S2;\\
and \\
Institute of Mathematics ``Simon Stoilow'' \\
Bucharest, Romania}
\email{strungarun@macewan.ca}
\urladdr{http://academic.macewan.ca/strungarun/}

\begin{abstract}
In the first part, we construct a cut and project scheme from a family
$\{P_\varepsilon\}$ of sets verifying four conditions. We use this
construction to characterise weighted Dirac combs defined by cut and project
schemes and by continuous functions on the internal groups in terms of
almost periodicity. We are also able to characterise those weighted Dirac
combs for which the internal function is compactly supported. Lastly,
using the same cut and project construction for $\varepsilon$-dual
sets, we are able to characterise Meyer sets in $\sigma$-compact
locally compact Abelian groups.
\end{abstract}

\maketitle

\section{Introduction}

In 1984, Shechtman et al.\ announced the discovery of a solid with an
unusual diffraction pattern. While the diffraction was similar to the
one of a periodic crystal, exhibiting only bright spots (called Bragg
peaks) and no diffuse background, it also had fivefold symmetry which
is impossible in a perfect (periodic) crystal in $3$-space; see
\cite[Cor.~3.1]{S-TAO}. Solids with such unusual symmetries are called
quasicrystals. A quasicrystal cannot repeat periodically in all
directions, yet in order to produce many Bragg peaks, a large number
of its elements need to repeat in a highly ordered way.

Twelve years earlier, Y.\ Meyer introduced the concept of a harmonious
set in \cite{S-MEY}. These sets exhibit long-range order and are
usually aperiodic. Meyer also introduced cut and project schemes as a
simple way of generating such sets and studied the relation between
harmonious sets and model sets. In the context of icosahedral
structures, the projection method was independently discovered by
Kramer and coworkers \cite{S-Kra82,S-KN84}. The relevance of Meyer's
work to quasicrystals was realised only in the 1990s; see
\cite{S-MOO,S-LAG}.

A large class of sets produced by the cut and project scheme, called
regular model sets, produce a pure point diffraction pattern
\cite{S-Martin1}. Conversely, under suitable additional assumptions,
any set with a pure point diffraction pattern can be obtained from a
cut and project scheme \cite{S-BLM}.

A harmonious set is nowadays also called a Meyer set. The full
characterisation of Meyer sets by Moody \cite{S-MOO}, completing the
previous work of Meyer and Lagarias, is surprising, Meyer sets being
characterised in different ways: via duality, discrete geometry, cut
and project schemes and the theory of almost lattices. All these
different characterisations emphasise that Meyer sets have very strong
long-range order and, as it was recently shown, this order shows in
the diffraction pattern as a relatively dense set of Bragg peaks
\cite{S-NS1} which is highly ordered \cite{S-NS2}.

The connection between cut and project schemes and almost periodic
measures has became apparent over the last few years. Pure point
diffraction is equivalent to the strongly almost periodicity of the
autocorrelation measure \cite{S-ARMA}, see also \cite{S-MoSt} in this volume, but in general strongly almost
periodicity is a concept which is hard to check. Baake and Moody
\cite{S-BM} introduced a stronger concept of almost periodicity,
namely norm-almost periodicity. For a regular model set, their work
shows that norm-almost periodicity of the autocorrelation follows from
the fact that the covariogram of the window is continuous and
compactly supported in the internal space. A nice yet intriguing
consequence is that the pure point nature of the diffraction spectra
of a regular model set is a consequence of the continuity of the
covariogram. They also showed that, if the autocorrelation of a Meyer
set is norm-almost periodic, there is a natural way of constructing a
cut and project scheme from the sets of almost periods of the
autocorrelation.

Motivated by some random tiling examples, Richard \cite{S-CR} and
Lenz--Richard \cite{S-LR} studied a class of (possibly dense) weighted
combs coming from a cut and project scheme. More exactly, given a cut
and project scheme $(G \times H, \cL)$ and a function $g$ on
$H$, we can define a weighted comb $\omega_g$ on $G$ by
\[
   \omega_g \, := \sum_{(x, x^{\star\nts}) \in \cL} g(x^{\star\nts})
    \ts \delta_x \ts .
\]
We will often refer to such a measure as a weighted model comb.

Since we need the measure $\omega_g$ to be translation bounded, $g$
has to be decaying fast \cite{S-CR}. Note that, if $g =1_W$ is the
characteristic function of a window $W$, then $\omega_g$ is exactly
the Dirac comb of the model set $\oplam(W)$; while if $g$ is the
covariogram of the window, we get exactly the construction of Baake
and Moody, and in this case $\omega_g$ is the autocorrelation of the
model set.

When one studies regular model sets, most of the issues appear around
the boundary of the window. These issues seem to come from the fact
that $\partial W$ is exactly the set of discontinuities of $1_W$. All
these considerations motivated Richard and Lenz--Richard to look at
the combs $\omega_g$ for continuous functions with fast decay. They
showed that, in this case, $\omega_g$ is strongly almost periodic
\cite{S-LR}.

A last form of almost periodicity is the sup-almost periodicity which
appeared in a natural way in the study of the pure point spectrum of a
Meyer set \cite{S-NS2}. While this concept is more recent than
\cite{S-BM,S-LR}, looking back to those papers it also appears hidden
in the proof of almost periodicity of the weighted model combs as a
consequence of continuity of the corresponding function on the
internal group.

Norm-almost periodicity is the strongest form implying both strongly and
sup-almost periodicity. In general, strongly and sup-almost periodicity
are weaker and not related to each other, but for a measure with
sparse support, and in particular when the support of the measure is a
Meyer set, we will see that the three are equivalent. Because of its
connection to pure point diffraction, strongly almost periodicity is
the most natural to study, but we will see in this paper that sup-
and norm-almost periodicity are directly related to the cut and
project formalism.

The reason why we need sup- or norm-almost periodicity to get the cut
and project scheme, and why strongly almost periodicity does not seem to
work, is interesting and hidden in the construction. Any group $G$
acts by translations on the space of translation bounded measures
${\mathcal M}^\infty(G)$. The group action is continuous in the strong
topology, but not in the norm- nor sup-topology. It is exactly this
discontinuity which allows us to construct the scheme: we will build
the cut and project scheme from the set of almost periods of an almost
periodic measure using the axioms (A1)-(A4) from
Section~\ref{construction cp}. And while in general (A1) and (A3) hold
for any (pseudo)-metric and (A4) is equivalent to almost periodicity,
we cannot get (A2) unless the group action on ${\mathcal M}^\infty(G)$
is discontinuous.

This chapter is organised as follows.  In Section~\ref{construction
  cp}, we verify that the Baake--Moody construction of a cut and project scheme of
\cite{S-BM} works under much weaker assumptions. Given a family $\{
P_\varepsilon \}^{} _{ 0< \varepsilon <C}$ which satisfies axioms
(A1)--(A4) listed at the beginning of the section, and a group which
contains all these sets, we construct a cut and project scheme in
which all these $\{ P_\varepsilon \}$ are model sets. In particular,
the Baake--Moody construction works for any discrete sup-almost periodic measure,
and moreover we show that the support function of any such measure
defines an uniformly continuous function on the internal groups $H$.

This will allow us the fully characterise all weighted model combs
$\omega_g$ given by uniformly continuous functions $g \in C_{0}(H)$ in
terms of sup-almost periodicity as follows.

\setcounter{theorem}{1}
\begin{theorem} Let\/ $\omega := \sum_{x \in \vG} \omega(\{x\})\,
  \delta_x$ be a regular measure on\/ $G$. Then, the following
  statements are equivalent.
  \begin{itemize}\itemsep=1pt
  \item[(i)] $\omega$ is sup-almost periodic.
  \item[(ii)] There exists a cut and project scheme\/ $(G\times H,
    \cL)$, and a uniformly continuous function\/ $g\in
    C_0(H)$ vanishing at\/ $\infty$, so that $\omega = \omega_g$.
  \end{itemize}
\end{theorem}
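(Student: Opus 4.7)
My overall plan is to use the cut and project construction of Section~\ref{construction cp} as the central tool: the direction (ii)$\Rightarrow$(i) is a direct verification, while (i)$\Rightarrow$(ii) is the substantive direction, in which one reconstructs a cut and project scheme from the sup-almost periods of $\omega$ and recovers $g$ as the uniformly continuous extension of the weight function.

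For (ii)$\Rightarrow$(i), I would start from the scheme $(G \times H, \cL)$ and the uniformly continuous $g \in C_0(H)$. Given $\varepsilon > 0$, uniform continuity together with vanishing at infinity furnishes a $0$-neighbourhood $U \subseteq H$ such that $\sup_{y \in H} |g(y + u) - g(y)| < \varepsilon$ for every $u \in U$. Combined with a uniform bound $N$ on the number of lattice points of $\cL$ lying over any fixed bounded region of $G$, this yields $\|\delta_t \ast \omega_g - \omega_g\|_\infty \leq N\varepsilon$ whenever $t^\star \in U$. Since the projection to $G$ of $\cL \cap (G \times U)$ is relatively dense in $G$---a standard consequence of the CPS axioms---the set of sup-$(N\varepsilon)$-almost periods of $\omega_g$ is relatively dense, which is exactly sup-almost periodicity.

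For (i)$\Rightarrow$(ii), I would introduce the family of sup-$\varepsilon$-almost periods
\[
   P_\varepsilon := \{t \in G : \|\delta_t \ast \omega - \omega\|_\infty < \varepsilon\}.
\]
Sup-almost periodicity of $\omega$ gives axiom (A4), while (A1)--(A3) hold formally since $d(s,t) := \|\delta_{s-t}\ast\omega - \omega\|_\infty$ is a translation-invariant pseudo-metric. The construction of Section~\ref{construction cp} then produces a cut and project scheme $(G \times H, \cL)$ in which every $P_\varepsilon$ is a model set. I would next define $g$ on the dense subgroup $\Gamma^\star := \pi_H(\cL) \subseteq H$ by $g(x^\star) := \omega(\{x\})$ whenever $(x, x^\star) \in \cL$, with the convention $\omega(\{x\}) = 0$ for $x \notin \vG$. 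Uniform continuity of $g$ on $\Gamma^\star$ is built into the construction: if $x^\star - y^\star$ lies in a sufficiently small symmetric $0$-neighbourhood of $H$, then by design $x - y \in P_\varepsilon$, hence
\[
   |\omega(\{x\}) - \omega(\{y\})| \leq \|\delta_{x-y}\ast\omega - \omega\|_\infty < \varepsilon.
\]
This extends $g$ uniquely to a uniformly continuous function on $H$, and $\omega = \omega_g$ by construction.

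The main obstacle I foresee is showing that the extension $g$ actually lies in $C_0(H)$, not merely in $\Cu(H)$. If $g$ failed to vanish at infinity, one would find a sequence $(x_n, x_n^\star) \in \cL$ with $x_n^\star$ escaping every compact subset of $H$ and $|\omega(\{x_n\})| \geq \delta$ for some fixed $\delta > 0$. Uniform continuity of $g$ then supplies a $0$-neighbourhood $U \subseteq H$ on which $|g|$ stays bounded below by $\delta/2$, so each $x_n^\star + U$ contains a full neighbourhood's worth of $\Gamma^\star$-points carrying weight at least $\delta/2$. Translating this density statement from $H$ back to $G$ and confronting it with the translation boundedness of $\omega$ that the regularity hypothesis provides should produce a contradiction; making this $H$-to-$G$ density comparison precise within the ambient CPS is, I expect, the technical heart of the argument.
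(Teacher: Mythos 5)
Your overall strategy is the same as the paper's, but there is one genuine gap and one acknowledged hole. The gap is in your treatment of the axioms for the Baake--Moody construction: you assert that (A1)--(A3) ``hold formally'' because $d(s,t) := \|T_{s-t}\ts\omega-\omega\|_\infty$ is a translation-invariant pseudo-metric. That is true for (A1) and (A3), but (A2) --- the local finiteness of each $P_\varepsilon$ --- is not a formal consequence of anything; it is exactly the point where the regularity of $\omega$ enters, and without it the completion $H$ is not locally compact and $\cL$ is not discrete, so no cut and project scheme results (the paper's remark after the theorem makes this explicit with $\delta_{\QQ}$ on $\RR$: formally sup-almost periodic, yet its almost periods produce no scheme). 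The paper proves (A2) by contradiction: if $P_\varepsilon\cap K'$ were infinite for some compact $K'$ and some $0<\varepsilon<\|\omega\|_\infty$, pick $y$ with $|\omega(\{y\})|>\varepsilon$; then every $t\in P_\varepsilon\cap K'$ gives $|\omega(\{y+t\})|>|\omega(\{y\})|-\varepsilon>0$, forcing $|\omega|(y+K')=\infty$ and contradicting regularity. You also never fix the group $L$ in (A5); the paper takes $L=\langle\vG\rangle$ and checks $P_\varepsilon\subset\vG-\vG\subset L$ by the same kind of estimate --- without $\vG\subset\pi_1(\cL)$ the identity $\omega=\omega_g$ cannot even be stated.

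The second issue is the $C_0$ claim, which you explicitly leave open. Your sketch (points $x_n^\star$ escaping every compact with $|g|\geq\delta/2$ on $x_n^\star+U$, then ``translating density back to $G$'') does not obviously close, because nothing a priori ties escape to infinity in $H$ to any locally visible feature of $\omega$ in $G$; that link is precisely what must be built. The paper's route is short: regularity makes $S_{\varepsilon/2}:=\{x\mid|\omega(\{x\})|>\varepsilon/2\}$ locally finite; relative denseness of $P_{\varepsilon/2}$ gives a compact $K'$ with $P_{\varepsilon/2}+K'=G$; every $x$ with $|\omega(\{x\})|>\varepsilon$ then decomposes as $x=z+k$ with $z\in P_{\varepsilon/2}$ and $k\in S_{\varepsilon/2}\cap K'=:F$ finite, so $\{y\in H\mid|g(y)|>\varepsilon\}\cap\pi_2(\cL)\subset\phi(P_{\varepsilon/2})+\phi(F)$, whose closure is compact because $\phi(P_{\varepsilon/2})$ is precompact by construction. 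Finally, a small but real error in your (ii)$\Rightarrow$(i): there is in general no uniform bound $N$ on the number of points of $\cL$ lying over a bounded region of $G$ (the projection $\pi_1(\cL)$ may be dense, as in the dense weighted model combs this theorem is about), but no such bound is needed --- since $\|\cdot\|_\infty$ is a pointwise supremum and the weight of $T_t\ts\omega_g-\omega_g$ at $x\in\pi_1(\cL)$ is exactly $g(x^\star-t^\star)-g(x^\star)$, the bound $\varepsilon$ comes out directly, with no factor $N$.
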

\setcounter{theorem}{0}

Under the conditions of Theorem~\ref{T1}, compactness in the internal
group can be characterised in terms of the sparseness of the support
of the measure, and this allows us to also characterise the weighted
model combs with $g \in C_{\mathsf{c}}(H)$ in terms of all three types
of almost periodicity in Theorem~\ref{T2}.

The Baake--Moody construction from Section~\ref{construction cp} also works for the
sets of $\varepsilon$-dual characters, if these sets are relatively
dense, and this allows us to give a characterisation of Meyer sets in
any $\sigma$-compact LCAG (locally compact Abelian group) similar to the one
in \cite{S-MOO}.  While many of the proofs in \cite{S-MOO} are based
on the geometry of $\RR^d$, and cannot be extended to arbitrary groups
$G$, the strong connection between sup-almost periodicity and cut and
project schemes from Theorem~\ref{T1} allows us find new proofs for
some of the equivalences. We are able to prove all the equivalences
except the one introduced by Lagarias \cite{S-LAG}. Moreover, if $G$
is also compactly generated, we also get this last equivalence. The
main result of Section~\ref{CHARMS} can be summarised as follows.

\setcounter{theorem}{7}
\begin{theorem}
  Let\/ $\vL \subset G$ be relatively dense, where\/ $G$ is a\/
  $\sigma$-compact LCAG. Then, the following properties are
  equivalent.
\begin{itemize}\itemsep=1pt
\item[(i)] $\vL$ is a subset of a model set.
\item[(ii)] $\vL$ is harmonious.
\item[(iii)] $\vL^\varepsilon$ is relatively dense for all\/ $0< \varepsilon <2$.
\item[(iv)] $\vL^\varepsilon$ is a model set for all\/
           $0< \varepsilon < \frac{\sqrt{3}}{2}$.
\item[(v)] $\vL^\varepsilon$ is a model set for some\/
           $0< \varepsilon < \frac{\sqrt{3}}{2}$.
\item[(vi)] $\vL^\varepsilon$ is relatively dense for some
           $0< \varepsilon < \frac{\sqrt{3}}{6}$.
\item[(vii)] $\vL$ is an almost lattice.
\item[(viii)] $\vL- \vL- \vL$ is locally finite.
\end{itemize}
\end{theorem}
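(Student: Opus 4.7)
The plan is to prove the eight equivalences through a web of implications whose backbone is the cut and project construction from Section~\ref{construction cp} applied to the family $\{\vL^{\varepsilon}\}$ of $\varepsilon$-dual characters. The key observation to verify first is that, as soon as each $\vL^{\varepsilon}$ in a suitable $\varepsilon$-range is relatively dense, the family satisfies axioms (A1)--(A4): monotonicity (A1) and the triangle-like inclusion $\vL^{\varepsilon_{1}}-\vL^{\varepsilon_{2}} \subset \vL^{\varepsilon_{1}+\varepsilon_{2}}$ (a form of A3) are immediate from the definition; uniform discreteness (A2) for small $\varepsilon$ follows from the elementary geometry of characters on neighbourhoods of $0$; and (A4) is precisely the assumed relative density. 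Therefore, whenever relative density is in force, the general construction yields a cut and project scheme $(\widehat{G}\times H,\cL)$ in which each $\vL^{\varepsilon}$ appears as a model set.

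First I would dispatch the easy implications. The chain $(i)\Rightarrow(ii)$ is the fact that being harmonious passes to subsets, $(ii)\Leftrightarrow(iii)$ is essentially the definition of harmoniousness in the LCAG setting, and $(iv)\Rightarrow(v)\Rightarrow(vi)$ is trivial because model sets are relatively dense. Next, $(iii)\Rightarrow(iv)$ is the first substantive step: starting from relative density of $\vL^{\varepsilon}$ for all $0<\varepsilon<2$, apply the construction to the subfamily indexed by $0<\varepsilon<\tfrac{\sqrt 3}{2}$; Theorem~\ref{T1} (or its discrete counterpart in Section~\ref{construction cp}) then identifies each $\vL^{\varepsilon}$ as a model set with a concrete window in the internal group $H$.

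The core difficulty, and what I expect to be the main obstacle, is closing the loop from the weakest quantitative hypothesis $(vi)$ back to the others. The numerical thresholds $\tfrac{\sqrt 3}{2}$ and $\tfrac{\sqrt 3}{6}$ signal a \emph{double dualisation} argument: the constant $\tfrac{\sqrt 3}{2}$ is the threshold below which $\varepsilon$-dual characters on a relatively dense set automatically cluster near $1$, and $3\cdot\tfrac{\sqrt 3}{6}=\tfrac{\sqrt 3}{2}$ reflects that one traverses the duality twice, accumulating a factor of $3$ via $\vL-\vL-\vL$ (hence also the relevance of (viii)). Concretely, from $(vi)$ with $\varepsilon<\tfrac{\sqrt 3}{6}$ one shows that the triple sum set $\vL-\vL-\vL$ is contained in $(\vL^{\varepsilon})^{3\varepsilon}$, pushes relative density through to obtain $(\vL^{\varepsilon})^{\delta}$ relatively dense for all $\delta<\tfrac{\sqrt 3}{2}$, and finally applies the $(iii)\Rightarrow(iv)$ step to $\vL^{\varepsilon}$ to produce a model set whose dual-window construction sandwiches $\vL$ inside a model set, giving $(i)$.

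Finally, the almost-lattice part. The implication $(i)\Rightarrow(vii)$ follows by translating the geometric definition of an almost lattice into the cut and project picture ($\vL-\vL$ is then contained in the projection of a bounded set of the lattice). $(vii)\Rightarrow(viii)$ is immediate from the definition of almost lattice. The remaining $(viii)\Rightarrow(vii)$ is the Lagarias equivalence and is the only place where compact generation of $G$ is invoked: one uses local finiteness of $\vL-\vL-\vL$ together with a Baire/compact generation argument to produce a finite set $F$ with $\vL-\vL\subset\vL+F$. To close the cycle, $(vii)\Rightarrow(iii)$ is proved by a Kronecker-type approximation: since $\vL-\vL$ sits in finitely many translates of a compact set, the $\varepsilon$-dual characters are forced to be $\delta$-close to $1$ on the compact part, and relative density of $\vL^{\varepsilon}$ follows from compactness of the annihilator neighbourhood in $\widehat{G}$. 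With all pieces in place, the eight conditions form a single equivalence class.
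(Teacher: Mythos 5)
Your outline gets the easy arcs right (the dual-group Baake--Moody construction applied to $\{\vL^\varepsilon\}$ for $\text{(iii)}\Rightarrow\text{(iv)}$, the trivial chains, and the use of $\vL^{\varepsilon}-\vL^{\varepsilon}-\vL^{\varepsilon}\subset\vL^{3\varepsilon}$), but it breaks down at the one implication that carries all the weight: getting from the almost lattice condition (vii) back to (i)/(iii). Your proposed ``Kronecker-type approximation'' --- that $\vL-\vL$ sitting in finitely many translates of $\vL$ forces $\varepsilon$-dual characters to be close to $1$, with relative density of $\vL^\varepsilon$ coming from ``compactness of the annihilator neighbourhood'' --- is not a proof and will not become one: the annihilator of a relatively dense set is not a neighbourhood of $1$ in $\widehat{G}$, and no purely combinatorial consequence of $\vL-\vL\subset\vL+F$ produces relatively dense $\varepsilon$-dual sets. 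This is the hard direction of Meyer's theorem. The paper proves it (Theorem~\ref{almostlatice}) by a genuinely different mechanism: it forms the autocorrelation $\eta$ of the autocorrelation $\gamma$ of $\vL$, shows via Eberlein convolutions of weakly almost periodic functions that $\eta$ is strongly almost periodic with $\supp(\eta)-\supp(\eta)$ uniformly discrete, and then invokes Theorem~\ref{T2} to realise $\supp(\eta)$ as a model set containing (after finitely many translates) $\vL$. The same gap infects your $\text{(vi)}\Rightarrow\text{(i)}$ step: the claim that one can ``push relative density through to obtain $(\vL^{\varepsilon})^{\delta}$ relatively dense for all $\delta<\frac{\sqrt{3}}{2}$'' is precisely condition (iii) for the set $\vL^\varepsilon$, i.e.\ the thing to be proved; the paper instead shows $\vL^\varepsilon$ is an almost lattice (via Proposition~\ref{EDProp} and Lemma~\ref{meyl1}) and then again routes through Theorem~\ref{almostlatice} before applying Theorem~\ref{T1x}.

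A second, more localised error: you assert that $\text{(viii)}\Rightarrow\text{(vii)}$ ``is the Lagarias equivalence'' and requires compact generation of $G$. It is not and it does not. The Lagarias condition is $\text{(viii}')$: \emph{$\vL-\vL$ is uniformly discrete}, and that is the equivalence the paper defers to \cite{S-BLM} and states only for compactly generated groups (Theorem~\ref{MeyCharCompGen}). The condition (viii) actually appearing in this theorem is \emph{$\vL-\vL-\vL$ is locally finite}, and its equivalence with (vii) is elementary for any relatively dense $\vL$ in any LCAG: take $K$ compact with $\vL+K=G$ and set $F:=(\vL-\vL-\vL)\cap K$, which is finite and satisfies $\vL-\vL\subset\vL+F$ (Lemma~\ref{meyl1}). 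Since the theorem is stated for arbitrary $\sigma$-compact $G$, invoking compact generation here would leave your proof incomplete even though the implication itself is the easiest one in the list.
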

\setcounter{theorem}{0}

If the group $G$ is compactly generated, it follows from \cite{S-BLM} that property\/ {\rm (viii)} can be
replaced by
\begin{itemize}
\item[(viii')] $\vL- \vL$ is uniformly discrete.
\end{itemize}

We summarise the equivalence in the case of compactly generated LCAG in Theorem~\ref{MeyCharCompGen}.

Using this characterisation, we are able to extend all results about
the diffraction of Meyer sets from $\RR^d$ to an arbitrary
$\sigma$-compact LCAG $G$ in Section~\ref{diffMS}, and obtain some new
characterisations for Meyer sets in terms of almost periods for
almost periodic measures.

As already mentioned, Baake and Moody \cite{S-BM} showed that, for a
regular model set, both the autocorrelation measure $\gamma$ and the
diffraction measure $\widehat{\gamma}$ can be obtained as weighted
model combs. For Meyer sets, we show in Section~\ref{sectfewnotes}
that the same holds for the strongly almost periodic component
$\gamma^{}_{\mathsf{s}}$ of the autocorrelation and for the pure point diffraction
spectrum $\widehat{\gamma}_{\mathsf{pp}}=\widehat{\gamma^{}_{\mathsf{s}}}$.

At the end of this chapter, we also take a closer look at uniform
density in regular model sets.

\section{The Baake--Moody construction of a cut and project
scheme}\label{construction cp}

Throughout the entire chapter, $G$ is assumed to be a $\sigma$-compact
LCAG.

The goal of this section is to construct a cut and project scheme from
a family of subsets $\{ P_\varepsilon \}$ and a subgroup $L$ of $G$ which
satisfy the conditions (A1)--(A5) listed below. It is easy to see that
the set of almost periods of an almost periodic element verifies
conditions (A1), (A3) and (A4), and if the group action is not continuous
in the topology we consider, condition (A2) usually holds for
$\varepsilon$ small enough. The (A5) requirement is the only restriction
we have on $L$, and we can easily satisfy it by picking any group $L$
so that $\langle \cup P_\varepsilon\rangle \leq L \leq G$. Choosing $L=G$ always
works, but sometimes it leads to too big schemes. We will see that in
fact (A5) is not a restriction, but a degree of freedom. We will provide
some examples which emphasise this point at the end of this section.

Our construction is a variation of the one from
\cite{S-BM}. While the construction in \cite{S-BM} is done for the sets of
almost periods of a norm-almost periodic autocorrelation, they only
used certain properties of those $P_\varepsilon$. In particular, this
construction holds for a larger class of almost periodic measures.
We will also see in Section~\ref{CHARMS} a completely different
setting in which this basic construction works.

Recall that a set $A \subset G$ is called \emph{locally finite} if, for
every compact set $K$, $A \cap K$ is a finite set. This is equivalent
to saying that $A$ is closed and discrete in $G$.

Given some $C>0$, a family $\{ P_\varepsilon \}^{} _{ 0< \varepsilon
  <C}$ and a subgroup $L \leq G$, our basic requirements are:
\begin{itemize}\label{A1-A5}
\item[(A1)] $0 \in P_\varepsilon=- P_\varepsilon$ for all $0 <
  \varepsilon < C$.
\item[(A2)] For each $0 < \varepsilon <C$, the set $P_\varepsilon$ is
  locally finite.
\item[(A3)] For each $\varepsilon, \varepsilon'$ with $\varepsilon
  +\varepsilon' <C$, we have
\[
   P_\varepsilon + P_{\varepsilon'} \,\subset\, P_{\varepsilon +\varepsilon'} \ts .
\]
\item[(A4)] For all $0< \varepsilon < C$, the set $P_\varepsilon$ is
  relatively dense.
\item[(A5)] For all $0< \varepsilon < C$, we have $P_\varepsilon \subset L$.
\end{itemize}

Before proceeding to the construction, let us try to understand these requirements.
Consider a cut and project scheme $(G \times H, \cL)$, with $H$ a metrisable group. Define $L:= \pi_1(\cL)$ to be the projection of the lattice $\cL$ on $G$.
This will be exactly our group $L$.

The star map $\star: L \to H$ has dense image. We can use this map to induce the topology of $H$ on $L$, and then $H$ is isomorphic to the completion of $L$ with respect to the induced topology.
Therefore, if we know $L \subset G$ and the topology induced on $L$ by $H$, we can recover $H$. We should also emphasize here that $L$ with the induced topology give us not only $H$ but also the $\star$ mapping. Indeed, the completion of $L$ with respect to this topology becomes is a LCAG which is isomorphic to $H$, and the isomorphism takes the inclusion into the $\star$ map.

Moreover, as
\[
\cL \, := \,
    \bigl\{ (t, t^\star) \mid t \in L \bigr\}  \ts ,
\]
as soon as we have $H$ and the $\star$ map, we get the lattice for free.

Next, let us look at the induced topology on $L$. As the topology on $H$ is given by a metric, we can define the balls $B_\varepsilon$ of radius $\varepsilon$ centered at $0 \in H$. The locally compactness property of $H$  implies that there exists some $C>0$ such that $B_\varepsilon$ is precompact in $H$ for all $0 < \varepsilon <C$.

Since $\{ B_\varepsilon \}$ is a fundamental basis of open sets at $0$ for the topology of $H$, we can define a basis of open sets at $0$ for the induced topology on $L$ by setting
\[
P_\varepsilon \, := \,
  \star^{-1}(  B_\varepsilon \cap L )  \ts ,
\]
Then, $\{ P_\varepsilon \}$ is a fundamental basis of open sets at $0$ for the induced topology on $L$. The five conditions (A1)--(A5) are immediate consequences of the choice of  $\{ B_\varepsilon \}^{} _{ 0< \varepsilon <C}$ and the fact that $(G \times H, \cL)$ is a cut and project scheme.

It is more interesting, and probably surprising, that the five conditions are sufficient to produce a cut and project schemes. The axioms (A1), (A3) and (A5) are both necessary and sufficient to define a topology on $L$. Once we take the completion $H$ of $L$ in this topology, we also get a subgroup $\cL \subset G \times H$. But in general the pair $(G \times H, \cL)$ is not a cut and project scheme. The axioms (A1), (A3) and (A5) are not enough to guarantee that $H$ is a LCAG and, more importantly, that $\cL$ is a lattice in $G \times H$.

In the construction above, it is easy to check that the discreteness of $\cL$ is equivalent to (A2) which makes this condition both natural and necessary. Moreover, once (A2) is assumed, the relatively denseness of $\cL$ becomes equivalent to (A4). This makes both conditions (A2) and (A4) necessary for the construction.

We now proceed to the construction. Starting with the condition (A1)--(A5), we will define in order an uniform topology on $L$, the group $H$ and the $\star$ map, a lattice $\cL$ and finally we prove that we get a cut and project scheme $(G \times H, \cL)$.

We start by showing that the sets $P_\varepsilon$ are decreasing with $\varepsilon$. This is an immediate consequence of (A3) and the fact that $0 \in P_\varepsilon$.

\begin{lemma}\label{L1111}
  If\/ {\rm (A1)} and\/ {\rm (A3)} hold and\/ $0< \varepsilon' <
  \varepsilon< C$, then
\[
    P_{\varepsilon'} \, \subset\,  P_{\varepsilon} \ts .
\]
\end{lemma}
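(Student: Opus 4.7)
The plan is to reduce the inclusion to a direct application of (A3) together with the fact that $0$ is always contained in our sets. The key observation is that $P_\varepsilon = P_{\varepsilon'} + P_{\varepsilon - \varepsilon'}$ is ``more than enough'' room to fit $P_{\varepsilon'}$, provided we can throw in a zero from the second summand.

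Concretely, I would set $\delta := \varepsilon - \varepsilon'$. Since $0 < \varepsilon' < \varepsilon < C$, we have $0 < \delta < C$, so $P_\delta$ is among the sets of the family and axiom (A1) applies to it, giving $0 \in P_\delta$. Now take any $x \in P_{\varepsilon'}$. Writing $x = x + 0$ with $x \in P_{\varepsilon'}$ and $0 \in P_\delta$, axiom (A3) (applicable because $\varepsilon' + \delta = \varepsilon < C$) yields
\[
   x \,=\, x + 0 \,\in\, P_{\varepsilon'} + P_{\delta} \,\subset\, P_{\varepsilon' + \delta} \,=\, P_{\varepsilon} \ts .
\]
Hence $P_{\varepsilon'} \subset P_\varepsilon$.

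There is no real obstacle here; the only thing to check carefully is that all parameters lie in the admissible range $(0, C)$ so that (A1) and (A3) may legitimately be invoked. Note that (A5) plays no role, and the symmetry $P_\varepsilon = -P_\varepsilon$ from (A1) is likewise unused — only $0 \in P_\delta$ is needed. The argument is essentially the standard ``monotonicity of a filter of neighbourhoods of $0$ under Minkowski addition'', and this monotonicity is what later allows the family $\{P_\varepsilon\}$ to serve as a fundamental system of neighbourhoods of $0$ for the uniform topology to be constructed on $L$.
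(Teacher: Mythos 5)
Your proof is correct and is essentially identical to the paper's own argument: both set $\varepsilon''=\delta=\varepsilon-\varepsilon'$, use (A1) to get $0\in P_\delta$, and apply (A3) to conclude $P_{\varepsilon'}=P_{\varepsilon'}+0\subset P_{\varepsilon'}+P_\delta\subset P_\varepsilon$. (Only note that your opening phrase ``$P_\varepsilon = P_{\varepsilon'}+P_{\varepsilon-\varepsilon'}$'' should be an inclusion ``$\supset$'' rather than an equality, but your actual argument uses it correctly.)
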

\begin{proof}
Let $\varepsilon''=\varepsilon - \varepsilon' >0$.
By (A1), we have $0 \in P_{\varepsilon''}$, and by (A3), we have
$P_{\varepsilon'} + P_{\varepsilon''} \subset P_{\varepsilon' +\varepsilon''}=
P_{\varepsilon}$. Thus,
\[
   P_{\varepsilon'}\, =\, P_{\varepsilon'}+0 \, \subset\,
   P_{\varepsilon'} + P_{\varepsilon''}
   \,\subset\, P_{\varepsilon}\ts ,
\]
which completes the proof.
\end{proof}

The set $P_\varepsilon$ are locally finite by (A2), but this is usually a very weak requirement. There are many pointsets which are locally finite, but which are far from being uniformly discrete. One such simple example is
\[
\vL \, := \,
    \bigl\{ n+\myfrac{k}{n+1} \mid n \in \ZZ_+ \quad\text{and}\quad 0 \leq k \leq n-1 \bigr\}  \ts .
\]

In the following Lemma we show that, for $\varepsilon$ small, the sets $P_\varepsilon$ are actually uniformly discrete.

\begin{lemma}\label{L2222}
  If\/ {\rm (A1)}, {\rm (A2)} and\/ {\rm (A3)} hold, then
\begin{itemize}
\item[(i)] For all\/ $0 < \varepsilon < \frac{C}{2}$, the set\/
  $P_{\varepsilon}$ has finite local complexity. In particular,
  $P_{\varepsilon}$ is uniformly discrete.
\item[(ii)] For all\/ $0 < \varepsilon < \frac{C}{4}$, the set\/
  $P_{\varepsilon}-P_{\varepsilon}$ is uniformly discrete.
\end{itemize}
\end{lemma}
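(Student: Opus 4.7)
\begin{beweis}[Proof sketch]
The strategy is to turn differences of points in $P_\varepsilon$ into sums using (A1), then use (A3) to bound these sums by some $P_{\varepsilon'}$ with $\varepsilon' < C$, and finally invoke (A2) together with the fact that a locally finite subset of an LCAG that contains $0$ is discrete at $0$, i.e.\ $0$ is an isolated point.

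For part (i), fix $0 < \varepsilon < C/2$ so that $2\varepsilon < C$. Using (A1) we rewrite
\[
   P_{\varepsilon} - P_{\varepsilon} \, = \, P_\varepsilon + P_\varepsilon ,
\]
and applying (A3) gives $P_\varepsilon + P_\varepsilon \subset P_{2\varepsilon}$. By (A2), $P_{2\varepsilon}$ is locally finite, hence so is its subset $P_\varepsilon - P_\varepsilon$, which is precisely the statement of finite local complexity for $P_\varepsilon$. Because $0 \in P_\varepsilon - P_\varepsilon$ and this set is locally finite, $0$ is isolated in it, i.e.\ there is an open neighbourhood $U$ of $0$ with $(P_\varepsilon - P_\varepsilon) \cap U = \{0\}$, which is exactly uniform discreteness of $P_\varepsilon$.

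For part (ii), fix $0 < \varepsilon < C/4$ so that $4\varepsilon < C$. We iterate the same idea: by part (i) and (A3),
\[
  (P_\varepsilon - P_\varepsilon) - (P_\varepsilon - P_\varepsilon)
  \, \subset \, P_{2\varepsilon} + P_{2\varepsilon} \, \subset \, P_{4\varepsilon},
\]
and $P_{4\varepsilon}$ is locally finite by (A2). Thus the difference set of $P_\varepsilon - P_\varepsilon$ is locally finite, and the same isolation-of-zero argument yields that $P_\varepsilon - P_\varepsilon$ is uniformly discrete.

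No step looks like a genuine obstacle; the main thing to be careful about is the factor-of-two bookkeeping (one halving for the difference set, another for the difference of the difference set), which is exactly what forces the two threshold values $C/2$ and $C/4$, and the use of (A1) to identify $-P_\varepsilon$ with $P_\varepsilon$ so that (A3) can be applied.
\end{beweis}
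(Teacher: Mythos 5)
Your proof is correct and follows essentially the same route as the paper: both parts rest on $P_\varepsilon - P_\varepsilon = P_\varepsilon + P_\varepsilon \subset P_{2\varepsilon}$ via (A1) and (A3), combined with (A2) and the isolation of $0$ in a locally finite set. The only cosmetic difference is in (ii), where the paper applies part (i) to $P_{2\varepsilon}$ (noting $2\varepsilon < C/2$) and uses that $P_\varepsilon - P_\varepsilon$ is a subset of the uniformly discrete set $P_{2\varepsilon}$, whereas you unfold the same computation to bound the second difference set by $P_{4\varepsilon}$ directly.
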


\begin{proof}
  Combining (A1) and (A3), we get
\[
    P_{\varepsilon}-P_{\varepsilon} \, =\,
     P_{\varepsilon}+P_{\varepsilon} \,\subset\,
    P_{2\varepsilon} \ts,
\]
whenever $2 \varepsilon <C$.

Claim (i). Since $2\varepsilon < C$, the set
$P_{2\varepsilon}$ is locally finite by (A2), thus
$P_{\varepsilon}-P_{\varepsilon}$ is locally finite.

Claim (ii). $2\varepsilon < \frac{C}{2}$ and thus, by (i),
$P_{2\varepsilon}$ is uniformly discrete. Hence
$P_{\varepsilon}-P_{\varepsilon}$ is also uniformly discrete.
\end{proof}

If $G = \RR^d$ for some $d$ and if the collection $\{ P_\varepsilon
\}$ satisfies requirements (A1)--(A4), combining Lemma~\ref{L2222}
with (A4) shows that the sets $\{P_{\varepsilon} \}^{}_{0 <
  \varepsilon < \frac{C}{4}}$ are Meyer sets. Thus, each of them is a
subset of a model set, and embedding some $P_\varepsilon$ in a model
set, all $\{ P_{\varepsilon'} \}^{}_{0 < \varepsilon' < \varepsilon}$
become, by Lemma~\ref{L1111}, subsets of the same model set. Thus, for
some $0< a <\frac{C}{4}$, all $\{ P_{\varepsilon} \}^{}_{0 <
  \varepsilon < a}$ are coming from the same cut and project scheme.

We will see later that the same is true under the more general setting of
$G$ being a $\sigma$-compact LCAG. Under the assumptions (A1)--(A4),
picking any $L$ which satisfies (A5), we will prove that all $\{ P_\varepsilon \}^{}_{ 0<
  \varepsilon <C}$ are subsets of model sets in the cut and project scheme we are constructing.

Next, we want to introduce a topology on $L$ for which the family $\{
P_\varepsilon \}^{}_{0 < \varepsilon < C}$ is a neighbourhood basis of pre-compact
sets. We first prove the following Lemma, which will imply later in the Section the pre-compactness of the
$P_\varepsilon$.\bigskip

\parbox{0.9\textwidth}{ \textbf{Note.}  \emph{For the remainder of the
    section, we assume\/ {\rm (A1)--(A5)} to hold and we construct the
  cut and project scheme.}\bigskip}

\begin{lemma}\label{L3333}
  Let\/ $0 < \varepsilon' < \varepsilon < C$. Then, there exists a
  finite set\/ $F$ so that
\[
   P_{\varepsilon} \,\subset\, P_{\varepsilon'} +F \ts .
\]
\end{lemma}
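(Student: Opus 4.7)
The plan is to cover $P_\varepsilon$ by finitely many translates of a \emph{smaller} set $P_\eta \subset P_{\varepsilon'}$, and to arrange that the translating offsets all lie in a compact set intersected with a locally finite set, hence form a finite set.

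I would first fix an auxiliary parameter $\eta > 0$ satisfying both $\eta < \varepsilon'$ and $\varepsilon + \eta < C$. Such an $\eta$ exists because $\varepsilon' > 0$ and $C - \varepsilon > 0$; for instance one may take $\eta := \tfrac{1}{2}\min\{\varepsilon',\, C-\varepsilon\}$. The point of this choice is that the first condition lets Lemma~\ref{L1111} deliver $P_\eta \subset P_{\varepsilon'}$, while the second brings $\varepsilon + \eta$ into the range where (A2) and (A3) are available.

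By (A4), the set $P_\eta$ is relatively dense, so there exists a compact $K \subset G$ with $P_\eta + K = G$. For each $x \in P_\varepsilon$, I would pick a decomposition $x = p_x + k_x$ with $p_x \in P_\eta$ and $k_x \in K$, and then inspect where $k_x = x - p_x$ lives. Using (A1) we have $P_\varepsilon - P_\eta = P_\varepsilon + P_\eta$, and by (A3) together with $\varepsilon + \eta < C$ this is contained in $P_{\varepsilon + \eta}$. Invoking (A2) once more, $P_{\varepsilon + \eta}$ is locally finite, hence $K \cap P_{\varepsilon + \eta}$ is a finite set.

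Setting $F := \{k_x : x \in P_\varepsilon\} \subset K \cap P_{\varepsilon + \eta}$ therefore gives a finite set, and by construction
\[
  P_\varepsilon \,\subset\, P_\eta + F \,\subset\, P_{\varepsilon'} + F\ts,
\]
where the last inclusion uses $P_\eta \subset P_{\varepsilon'}$ from Lemma~\ref{L1111}. The only delicate point is the choice of $\eta$: one cannot simply take $\eta = \varepsilon'$, because $\varepsilon + \varepsilon'$ may exceed $C$, in which case (A3) would no longer place the remainders $k_x$ inside a set known to be locally finite. Introducing the auxiliary $\eta$ resolves this at no additional cost.
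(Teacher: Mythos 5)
Your proof is correct and is essentially the same as the paper's: the auxiliary parameter $\eta$ plays exactly the role of the paper's $\varepsilon''$ (chosen with $\varepsilon''<\varepsilon'$ and $\varepsilon+\varepsilon''<C$), the remainders are trapped in $K\cap P_{\varepsilon+\eta}$ via (A1), (A3), and finiteness follows from (A2). No gaps.
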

\begin{proof}
  Since $\varepsilon < C$, there exists $0 < \varepsilon''<
  \varepsilon' $ so that $\varepsilon + \varepsilon'' < C$.  Then, as
  $\varepsilon + \varepsilon'' < C$, by (A2) we have
\[
     P_{\varepsilon}+P_{\varepsilon''} \,\subset\,
     P_{\varepsilon+\varepsilon''} \ts .
\]
Now, we use the relatively denseness of $P_{\varepsilon''}$ and the
discreteness of $P_{\varepsilon+\varepsilon''}$ to find the finite
set.  Let $K \subset G$ be compact so that $P_{\varepsilon''}+K = G$.
Let $F := K \cap P_{\varepsilon+\varepsilon''}$. Then by
(A2) $F$ is finite. We show that $P_{\varepsilon} \subset P_{\varepsilon'} +F$.

For all $z \in P_\varepsilon$, there exists $y \in P_{\varepsilon''}$ and
$k \in K$ so that $z =y+k$. Then we get
\[
   k \,=\, z-y \,\in\, P_{\varepsilon}-P_{\varepsilon''}
   \,\subset\, P_{\varepsilon+\varepsilon''} \ts .
\]
Thus $k \in F$, and since $y  \in P_{\varepsilon''} \subset
P_{\varepsilon'}$, we are done.
\end{proof}

We are ready now to define the topology on $L$. As our goal is to make $\{ P_\varepsilon \}^{}_{0 <
    \varepsilon <C}$ a neighbourhood basis at $0$, the definition of the uniformity ${\mathcal U}$ is straightforward, we only need to check that indeed it is an uniformity.

\begin{proposition}\label{prop1.1} The family\/ $\{ P_\varepsilon \}^{}_{0 <
    \varepsilon <C}$ is a neighbourhood basis at\/ $0$ for an uniform
  topology on\/ $L$.
\end{proposition}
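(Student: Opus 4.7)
My plan is to define the uniformity explicitly from the sets $P_\varepsilon$ and then check the four standard axioms of a uniform base. For each $0 < \varepsilon < C$, set
\[
   U_\varepsilon \, := \, \bigl\{ (x,y) \in L \times L \mid x - y \in P_\varepsilon \bigr\} \ts ,
\]
which is well defined thanks to (A5), since $P_\varepsilon \subset L$ guarantees $x - y \in L$ forces $x - y$ to be compared with an element of $L$. The family $\{U_\varepsilon\}^{}_{0< \varepsilon <C}$ will be the proposed base of entourages, and I intend to show it generates a uniformity $\mathcal{U}$ whose $0$-neighbourhood filter is generated by $\{P_\varepsilon\}^{}_{0< \varepsilon <C}$.

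First I would verify the four defining properties of a uniform base. The diagonal is contained in each $U_\varepsilon$ because $0 \in P_\varepsilon$ by (A1). Each $U_\varepsilon$ is symmetric because $P_\varepsilon = -P_\varepsilon$, again by (A1). Given two indices $\varepsilon_1, \varepsilon_2$, one picks any $\varepsilon' < \min(\varepsilon_1, \varepsilon_2)$; Lemma~\ref{L1111} yields $P_{\varepsilon'} \subset P_{\varepsilon_1} \cap P_{\varepsilon_2}$, hence $U_{\varepsilon'} \subset U_{\varepsilon_1} \cap U_{\varepsilon_2}$. Finally, for the composition axiom, given $\varepsilon$ choose $\varepsilon' = \varepsilon/2$. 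Then by (A3)
\[
     P_{\varepsilon'} + P_{\varepsilon'} \,\subset\, P_{\varepsilon} \ts ,
\]
which translates to $U_{\varepsilon'} \circ U_{\varepsilon'} \subset U_\varepsilon$: if $(x,y), (y,z) \in U_{\varepsilon'}$ then $x - z = (x-y) + (y - z) \in P_{\varepsilon'} + P_{\varepsilon'} \subset P_\varepsilon$.

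Once these four properties are established, $\{U_\varepsilon\}$ is a base for a uniformity $\mathcal{U}$ on $L$, and the induced topology has as a neighbourhood basis at any point $x_0 \in L$ the collection $\{x_0 + P_\varepsilon\}^{}_{0<\varepsilon<C}$; in particular, at $x_0 = 0$, the collection $\{P_\varepsilon\}^{}_{0<\varepsilon<C}$ is a neighbourhood basis. Moreover, since the entourages $U_\varepsilon$ are invariant under the diagonal translation action of $L$, the resulting topology turns $L$ into a topological group, as required for the cut and project construction that follows.

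I do not expect any genuine obstacle here: the statement is a routine verification that the axioms (A1) (symmetry and $0 \in P_\varepsilon$) together with (A3) (sub-additivity in the index) are precisely the group-theoretic analogues of the symmetry, reflexivity and triangle inequality of a pseudo-metric, and that (A5) ensures everything takes place inside $L$. The only point worth highlighting is that (A2) and (A4) are not used at this stage; they will enter later when verifying pre-compactness of the $P_\varepsilon$ in the completion and the lattice property of $\cL$, respectively.
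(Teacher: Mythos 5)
Your proof is correct and follows essentially the same route as the paper: both define the entourages $U_\varepsilon = \{(x,y) \in L\times L \mid x-y \in P_\varepsilon\}$ and derive reflexivity and symmetry from (A1), the refinement/intersection property from the monotonicity of the $P_\varepsilon$ (Lemma~\ref{L1111}), and the composition axiom from (A3); the only cosmetic difference is that you verify the axioms for a \emph{base} of entourages while the paper checks the five uniformity axioms on the generated filter $\cU$. Your closing observation that (A2) and (A4) are not needed at this stage matches the paper's discussion as well.
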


\begin{proof}
Let
\[
   U_\varepsilon \,:=\, \bigl\{ (x,y) \in L \times L \mid
   x-y \in P_\varepsilon \bigr\} .
\]
We show that the set ${\mathcal U} := \{ U \subset L \times L \mid
\exists\, 0 < \varepsilon <C \text{ so that } U_\varepsilon \subset U
\}$ is a uniformity on $L$. For this, we have to show that $\cU$
verifies the following five conditions.
\begin{itemize}
\item[(U1)] $U \in \cU$ , $U \subset V$ $\,\Longrightarrow\,$ $V \in \cU$.\\
  This is clear from the definition of $\cU$.
\item[(U2)] $V_1,\dots,V_n \in \cU$ $\,\Longrightarrow\,$ $V_1 \cap
  V_2 \cap
  \dots \cap V_n \in \cU$.\\
  For each $1 \leq i \leq n$, there exists an $0 < \varepsilon_i < C$
  so that $U_{\varepsilon_i} \subset V_i$. Let $\varepsilon := \min_{1
    \leq i \leq n} \{ \varepsilon_i \}$.  Then, $\varepsilon >0$; $U
  _\varepsilon \subset U_{\varepsilon_i}$ for all $1 \leq i \leq n$
  and hence $U _\varepsilon \subset V_1 \cap V_2 \cap \dots \cap V_n$.
  Thus, by the definition of $\cU$, we have $V_1 \cap V_2 \cap \dots
  \cap V_n \in \cU$.
\item[(U3)] $U \in \cU$ $\,\Longrightarrow\,$  $\Delta \subset U$.\\
  This follows from (A1).
\item[(U4)] $U \in \cU$ $\,\Longrightarrow\,$ $U^{-1} \in \cU$.\\
  Since $U \in \cU$, there exists $0 < \varepsilon < C$ so that
  $U_{\varepsilon} \subset U$. Then, by (A1), we have $U_{\varepsilon}
  \subset U^{-1}$.
\item[(U5)] $U \in \cU$ $\,\Longrightarrow\,$ $\exists\, V \in \cU $ so
  that $V_\circ V \subset U$.\\
  This follows from (A3).
\end{itemize}
This completes the proof.
\end{proof}

\begin{remark}\label{R1} It is easy to see that the family $\{ U_\varepsilon
  \}^{}_{\varepsilon}$ is a fundamental system of entourages for this
  topology, and one could use this to define the uniformity on
  $L$. This shows that $\{ P_\varepsilon \}^{}_{0 < \varepsilon <C}$
  is a neighbourhood basis at $0$ for the topology on $L$.

  Moreover, since the family $\{ P_\varepsilon \}^{}_{0 < \varepsilon
    <C}$ is a neighbourhood basis at $0$ for the topology on $L$,
  it follows that this topology is metrisable (since $\{ P_\frac{1}{n}
  \}^{}_{0 < \frac{1}{n} <C}$ is also a neighbourhood basis at $0$ for the topology on $L$). This tells us that the topology on $L$ can
  also be given by some pseudo-metric. In this case we can easily
  define a pseudo-metric, which gives the topology
\[
   d(x,y) \, := \,
   \begin{cases}
   \inf\{ \varepsilon | x-y \in P_\varepsilon \} \ts , &\text{if }x-y \in
   \bigcup_{0 < \varepsilon <C} P_\varepsilon\ts , \\ C\ts , & \text{otherwise.}
   \end{cases}
\]
Using (A1) and (A3), it is easy to prove that $d$ is a pseudo-metric
on $L$, which yields exactly the same topology as the uniformity from Proposition~\ref{prop1.1}.  \exend
\end{remark}

Let $H$ be the completion of $L$ in this topology. Then, there exists
a uniformly continuous map $\phi\! :\, L \longrightarrow H$ with the
following properties.
\begin{itemize}
\item[(C1)] $\phi(L)$ is dense in $H$.
\item[(C2)] The mapping $\phi$ is an open mapping from $L$ onto
  $\phi(L)$, the latter with the induced topology.
\item[(C3)] $\ker(\phi)$ is the closure of $0$ in this topology.
\end{itemize}
These three properties characterise the pair $(H, \phi)$ up to
isomorphism. Also note that
\[
\ker(\phi) \,=\, \bigcap_{\varepsilon > 0} P_\varepsilon \ts .
\]
Finally, for each $0 < \varepsilon < C$, there exists a neighbourhood
$V_\varepsilon \subset H$ of $0$ so that
\[
V_\varepsilon \cap \phi(L)\, =\,  \phi(P_\varepsilon) \ts .
\]
We will frequently make use of this fact.

We constructed the internal group $H$ of our cut and project scheme. In order to get a cut and project scheme, this must be a LCAG, and this is the next result we prove. Our proof below is very similar to the one in \cite[Prop.~1]{S-BM}.

\begin{proposition} The completion\/ $H$ is a LCAG.
\end{proposition}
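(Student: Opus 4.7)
The plan splits into two parts: inheriting the Hausdorff abelian topological group structure from $L$, and then establishing local compactness by showing that a basic neighbourhood $V_\varepsilon$ of $0 \in H$ is totally bounded (hence has compact closure, since $H$ is complete).

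For the first part, since $L$ is an abelian subgroup of $G$ and the uniformity $\mathcal{U}$ from Proposition~\ref{prop1.1} consists of the translation-invariant entourages $U_\varepsilon$, axioms (A1) and (A3) make both addition and inversion uniformly continuous on $L$: indeed, (A3) gives $P_{\varepsilon/2}+P_{\varepsilon/2}\subset P_\varepsilon$, which translates into $U_{\varepsilon/2}\circ U_{\varepsilon/2}\subset U_\varepsilon$, while (A1) yields the symmetry needed for inversion. General theory of completions of uniform abelian groups then makes $H$ a Hausdorff abelian topological group, with $\phi\colon L\to H$ a continuous homomorphism of dense image. No further work is needed here.

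For the local compactness, fix any $0<\varepsilon<C$; I claim $\overline{V_\varepsilon}$ is compact. Since $H$ is complete, it suffices to prove that $V_\varepsilon$ is totally bounded. Given any neighbourhood $W$ of $0$ in $H$, pick, using that $\{V_\delta\}$ forms a symmetric base at $0$, some $0<\varepsilon'<\varepsilon$ with $V_{\varepsilon'}+V_{\varepsilon'}\subset W$. Lemma~\ref{L3333} supplies a finite set $F\subset L$ with
\[
   P_\varepsilon \,\subset\, P_{\varepsilon'}+F \ts ,
\]
and applying $\phi$ yields $\phi(P_\varepsilon)\subset V_{\varepsilon'}+\phi(F)$.

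The main obstacle, and the crux of the argument, is to transfer this covering from $\phi(P_\varepsilon)=V_\varepsilon\cap\phi(L)$ to all of $V_\varepsilon$, since a priori $V_\varepsilon$ may contain points of $H\setminus\phi(L)$. For this, fix $x\in V_\varepsilon$; because $V_\varepsilon\cap(x+V_{\varepsilon'})$ is an open neighbourhood of $x$ and $\phi(L)$ is dense in $H$ by (C1), we can pick $y\in\phi(P_\varepsilon)\cap(x+V_{\varepsilon'})$. Writing $y=v+\phi(f)$ with $v\in V_{\varepsilon'}$ and $f\in F$, and using the symmetry $-V_{\varepsilon'}=V_{\varepsilon'}$, we obtain
\[
    x\,=\,\phi(f)+v+(x-y)\,\in\,\phi(F)+V_{\varepsilon'}+V_{\varepsilon'}\,\subset\,\phi(F)+W \ts .
\]
Hence $V_\varepsilon\subset\phi(F)+W$ for every neighbourhood $W$ of $0$, which proves total boundedness and thereby compactness of $\overline{V_\varepsilon}$, completing the proof that $H$ is a LCAG.
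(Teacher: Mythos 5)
Your proof is correct and follows essentially the same route as the paper: the group structure is inherited from the completion process, and local compactness comes from the total boundedness of $P_\varepsilon$ supplied by Lemma~\ref{L3333} together with the density of $\phi(L)$ and completeness of $H$. The paper states this in two sentences where you spell out the covering argument in full, but the underlying idea is identical.
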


\begin{proof}
  It is clear from the completion process that $H$ is an Abelian
  group.

  We now show that locally compactness follows from Lemma~\ref{L3333}. Indeed, for each $0 < \varepsilon < C$,
  $P_\varepsilon$ is a precompact set and hence, by the denseness of
  $\phi(L)$, the closure $\overline{\phi(P_\varepsilon)}=\overline{V_\varepsilon}$ is a
  compact neighbourhood of $0 \in H$.
\end{proof}

Next, we construct a lattice $\cL$ in $G \times H$, and this way we will get a cut and project scheme. As we said before, under this construction $\phi$ will be the $\star$ map, which makes the definition of $\cL$ in Lemma~\ref{latticecp} natural.

\begin{lemma}\label{latticecp} The set
\[
   \cL \, := \,
    \bigl\{ (t, \phi(t)) \mid t \in L \bigr\}
\]
is a lattice in $G \times H$.
\end{lemma}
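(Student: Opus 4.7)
The plan is to verify the two defining properties of a lattice in the LCAG $G \times H$: that $\cL$ is a discrete subgroup and that the quotient $(G \times H)/\cL$ is compact. That $\cL$ is a subgroup is immediate, since $\phi$ is a continuous group homomorphism on $L$, so $t \mapsto (t, \phi(t))$ is a homomorphism from $L$ into $G \times H$ and $\cL$ is its image.

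For discreteness, I would find a neighbourhood of $(0, 0)$ in $G \times H$ meeting $\cL$ only at the origin. Fix $\varepsilon$ with $0 < \varepsilon < C/4$, so that by Lemma~\ref{L2222}(i) the set $P_{2\varepsilon}$ is uniformly discrete in $G$; then pick a neighbourhood $U$ of $0 \in G$ with $U \cap P_{2\varepsilon} = \{0\}$. Suppose $(t, \phi(t)) \in U \times V_\varepsilon$. Since $\phi(t) \in V_\varepsilon \cap \phi(L) = \phi(P_\varepsilon)$, there exists $s \in P_\varepsilon$ with $\phi(t - s) = 0$, hence $t - s \in \ker(\phi) \subset P_\varepsilon$ (as the kernel is contained in every $P_{\varepsilon'}$). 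By (A3), $t = s + (t-s) \in P_\varepsilon + P_\varepsilon \subset P_{2\varepsilon}$, and combined with $t \in U$ this forces $t = 0$, so $(t, \phi(t)) = (0, 0)$.

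For cocompactness, I would combine density of $\phi(L)$ in $H$ with the relative denseness supplied by (A4). Given $(x, y) \in G \times H$, property (C1) produces $t_0 \in L$ with $y - \phi(t_0) \in V_\varepsilon$. By (A4), there is a compact $K_0 \subset G$ with $P_\varepsilon + K_0 = G$, so we can write $x - t_0 = t_1 + k$ with $t_1 \in P_\varepsilon \subset L$ (using (A5)) and $k \in K_0$. Setting $t := t_0 + t_1 \in L$, one has
\[
(x, y) - (t, \phi(t)) \,\in\, K_0 \times \bigl(V_\varepsilon - \phi(P_\varepsilon)\bigr) \,\subset\, K_0 \times \bigl(\overline{V_\varepsilon} + \overline{V_\varepsilon}\bigr).
\]
Because $\overline{V_\varepsilon}$ is compact in $H$ (by the preceding proposition, via Lemma~\ref{L3333}), the right-hand set is compact in $G \times H$, establishing that $(G\times H)/\cL$ is compact.

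The only delicate point is the discreteness argument, where one must account for $\ker(\phi)$: elements of $\ker(\phi) \subset L$ contribute lattice points of the form $(t, 0)$, so one needs these to be isolated in the $G$-direction. The trick is to absorb the kernel into a $2\varepsilon$ uniform discreteness budget via (A3), so that the bookkeeping reduces to the known uniform discreteness of $P_{2\varepsilon}$.
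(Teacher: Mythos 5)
Your proposal is correct and follows essentially the same route as the paper: discreteness is obtained by absorbing $\ker(\phi)$ into a slightly larger $P$-set via (A3) and invoking its discreteness in $G$ (you use $P_{2\varepsilon}$ with $\varepsilon<C/4$, the paper uses $P_{\varepsilon'}$ for an arbitrary $\varepsilon'\in(\varepsilon,C)$, which is only a cosmetic difference), and relative denseness is obtained by the same decomposition $(x,y)=(t,\phi(t))+(k,\,[y-\phi(t_0)]-\phi(t_1))$ with the $H$-component trapped in a difference of compact closures of $\phi(P_\varepsilon)$. No gaps.
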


\begin{proof}
  Since $\cL$ is a subgroup of $G \times H$, to prove that
  it is a lattice we have to show that it is discrete and relatively
  dense.

  \emph{Step 1: $\cL$ is discrete.}

  The idea for this step is simple: To prove that $\cL$ is discrete, we must find an open neighbourhood of $0$ in $G \times H$ which intersects $\cL$ only at $0$. As the topology on $G \times H$ is the product topology, and $\{ V_\varepsilon \}$ is a neighbourhood basis at $0$ in $H$, it suffices to find an open set $U \in G$ and some $\varepsilon >0$ such that $\cL \cap ( U \times V_\varepsilon) = \{ (0,0) \}$. Any point in this intersection has, the form $(t , \phi(t))$ with $t \in U \cap \left(P_\varepsilon + \ker(\phi)\right)$. Thus, if we can find $U$ and $\varepsilon$ such that $U \cap \left( P_\varepsilon + \ker(\phi) \right)= \{ 0 \}$, the discreteness of $\cL$ follows. As for each $\varepsilon' >0$ we have $P_\varepsilon + \ker(\phi) \subset P^{}_{\varepsilon'}$, the existence of such pair follows immediately from (A2).

  This reasoning tells us exactly how to prove this result. We start by fixing some $0 < \varepsilon < C$. Next lets pick some open set $V \subset H$ so that
  $ V \cap \phi(L) \, \subset \, \phi(P_\varepsilon)$.

  Let $\varepsilon < \varepsilon' < C$. Since $P_{\varepsilon'}$ is locally finite in
  $G$, there exists a neighbourhood $U$ of $0$ so that
  $P_{\varepsilon'} \cap U =\{ 0\}$.

  We show that
\[
    \cL \cap ( U \times V) = \{ (0,0) \} \ts .
\]
Let $(t ,\phi(t)) \in \cL \cap ( U \times V)$. Then $t \in
L$ and $\phi(t) \in V$, thus $\phi(t) \in V \cap \phi(L) \, \subset \,
\phi(P_\varepsilon)$.

Therefore, there exists $s \in P_\varepsilon$ so
that $\phi(s) = \phi(t)$, which implies that $t-s \in \ker(\phi) \subset P_{\varepsilon' -\varepsilon} $.

This shows that $t=s+ (t-s) \in P_\varepsilon + P_{\varepsilon' -\varepsilon} \subset P_{\varepsilon'}$. As $t$ is also in $U$, it follows that
\[
t \in P_{\varepsilon'} \cap U \, = \, \{ 0 \} \ts ,
\]
which completes the proof of this step.

\emph{Step 2: $\cL$ is relatively dense.}

Let us pick some $0 < \varepsilon < C$. We know that the set $W :=
\overline{\phi(P_\varepsilon)}$ is compact in $H$ and has non-empty
interior. As $P_\varepsilon$ is relatively dense in $G$, there
exists a compact set $K \subset G$ so that $G= P_\varepsilon + K$.

We prove that
\[
    \cL + K \times (W-W) =G \times H \ts .
\]

Let us pick some arbitrary $(x,y) \in G \times H$. Since $\phi(L)$ is dense in $H$, there
exists some $z \in L$ so that $y - \phi(z) \in W$.

As $G= P_\varepsilon + K$, we can write the difference $x-z$ in the form $x-z = t +k$
with $t \in P_\varepsilon, k \in K$. Thus
\[
\begin{split}
  (x,y)\, &= \,(z, \phi(z))+ (x-z, y-\phi(z)) \\
  & = \, (z, \phi(z))+ (t+k, y-\phi(z))\\
  & =  \, (z, \phi(z))+ (t , \phi(t)) +
       \bigl(k , [y -\phi(z)] - \phi(t)\bigr)\\
  & \in \, \cL + \cL + K \times (W-W) \, =\,
  \cL + K \times (W-W)
\end{split}
\]
Thus $(x,y) \in  \cL + K \times (W-W)$. As the element $(x,y) \in G \times H$ was chosen arbitrary, it follows that
\[
    G \times H \,\subset\, \cL + K \times (W-W) \ts ,
\]
which completes the proof.
\end{proof}

Combining all the results in this section together we obtain the
following result.

\begin{theorem}\label{capscheme}
  Let\/ $\bigl(\{P_\varepsilon\}^{}_{0 < \varepsilon <C}, L\bigr)$ be
  a pair which satisfies\/ {\rm (A1)--(A5)}. Let\/ $H$ be the
  completion of\/ $L$ in the uniformity for which\/
  $\{ P_\varepsilon \}^{}_{0 < \varepsilon <C}$ is a neighbourhood basis at\/$0$, and let $\phi\! :\, L \longrightarrow H$ be
  the completion map. Define\/ $\cL := \bigl\{ (t, \phi(t))\mid t
  \in L \bigr\}$.  Then,
\begin{equation}\label{S1cpScheme}
   \begin{array}{ccccc}
    G & \xleftarrow{\;\pi^{}_{1}\;} & G\times H &
    \xrightarrow{\;\pi^{}_{2}\;} & H  \\
    && \cup \\
    && \cL
   \end{array}
\end{equation}
is a cut and project scheme.
\end{theorem}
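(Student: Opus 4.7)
The plan is to assemble the results proved earlier in the section. Recall that a cut and project scheme consists of two LCAGs $G$ and $H$ together with a subgroup $\cL \subset G\times H$ such that $\cL$ is a lattice, $\pi_1|_{\cL}$ is injective, and $\pi_2(\cL)$ is dense in $H$. Each of these four ingredients has already been addressed above.

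First, $G$ is a $\sigma$-compact LCAG by the standing assumption at the beginning of Section~\ref{construction cp}, and $H$ is a LCAG by the proposition preceding Lemma~\ref{latticecp}. Second, $\cL$ is a subgroup of $G\times H$ because $\phi\colon L\to H$ is a group homomorphism (the completion map of an Abelian topological group), so its graph is a subgroup of $L\times H\subset G\times H$; Lemma~\ref{latticecp} then shows that this subgroup is both discrete and relatively dense, which is precisely the lattice property. Third, injectivity of $\pi_1|_{\cL}$ is immediate from the definition $\cL = \{(t,\phi(t))\mid t\in L\}$: the first coordinate already determines the element. Finally, $\pi_2(\cL) = \phi(L)$, which is dense in $H$ by property (C1) of the completion.

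Putting these four observations together verifies that the diagram \eqref{S1cpScheme} is indeed a cut and project scheme. There is no real obstacle in the proof of Theorem~\ref{capscheme} itself; the substantive work was done in Lemma~\ref{L3333} and the preceding proposition (which together yield the local compactness of $H$, using axiom (A2) in an essential way) and in Lemma~\ref{latticecp} (whose two steps relied on axiom (A2) for discreteness and axiom (A4) for relative denseness). Hence the proof reduces to noting that the hypotheses of each previously proved result hold, and collecting the conclusions.
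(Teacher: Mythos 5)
Your proposal is correct and follows essentially the same route as the paper, which proves Theorem~\ref{capscheme} simply by combining the preceding results (local compactness of $H$, Lemma~\ref{latticecp} for the lattice property, density of $\phi(L)$ from the completion, and injectivity of $\pi_1|_{\cL}$ from the graph construction). Your write-up is in fact slightly more explicit than the paper's one-line justification, but the content is identical.
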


We will refer to the construction of the cut and project scheme from Theorem \ref{capscheme} as the \emph{Baake--Moody construction}.

\begin{remark}
  In Theorem~\ref{capscheme}, $L = \pi^{}_1(\cL)$ and $\phi$
  is the $\star$-map. \exend
\end{remark}

\begin{corollary}\label{capschemecor}
  Let\/ $\{ P_\varepsilon \}^{}_{0 < \varepsilon <C}$ be such that they satisfy\/ {\rm
    (A1)--(A4)}. Then, for all $0 < \varepsilon < C$, the set
   \[
P_\varepsilon+  \bigcap_{0< \varepsilon' <C} P_{\varepsilon'} \ts .
   \]
is a model set.

In particular, if\/ $\bigcap_{0< \varepsilon' <C} P_{\varepsilon'} = \{ 0 \}$, then  for all $0 < \varepsilon < C$, the set\/ $P_\varepsilon$ is a model set.
\end{corollary}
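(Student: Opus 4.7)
The plan is to reduce the claim directly to Theorem~\ref{capscheme}, which already does the hard work of producing a cut and project scheme from $(\{P_\varepsilon\}, L)$ satisfying (A1)--(A5). The corollary's hypothesis gives (A1)--(A4) but not (A5), so the first move is simply to manufacture an $L$: the natural choice is $L := \langle \bigcup_{0<\varepsilon<C} P_\varepsilon \rangle$, for which (A5) is automatic (any larger group would work equally well). Theorem~\ref{capscheme} then yields a cut and project scheme $(G\times H, \cL)$ whose star map is the completion map $\phi\colon L\to H$.

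For each fixed $0<\varepsilon<C$, the construction supplies a neighbourhood $V_\varepsilon$ of $0\in H$ with $V_\varepsilon\cap \phi(L) = \phi(P_\varepsilon)$, and the argument establishing that $H$ is an LCAG shows $\overline{V_\varepsilon} = \overline{\phi(P_\varepsilon)}$ is compact. Thus $V_\varepsilon$ is a precompact neighbourhood of $0$, i.e.\ a valid window. The identity
\[
   \oplam(V_\varepsilon) \,=\, \{t \in L : \phi(t) \in V_\varepsilon\} \,=\, \phi^{-1}\bigl(V_\varepsilon\cap \phi(L)\bigr) \,=\, \phi^{-1}\bigl(\phi(P_\varepsilon)\bigr) \,=\, P_\varepsilon + \ker(\phi),
\]
in which only the last step uses that $\phi$ is a group homomorphism, combined with the identity $\ker(\phi) = \bigcap_{0<\varepsilon'<C} P_{\varepsilon'}$ recorded earlier, gives exactly the first claim.

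The ``in particular'' statement then drops out: if $\bigcap_{\varepsilon'} P_{\varepsilon'} = \{0\}$ then $\ker(\phi)=\{0\}$, so $P_\varepsilon + \ker(\phi) = P_\varepsilon$ itself is a model set. The only place I would expect friction is cosmetic and concerns the window convention: if one insists the window be compact rather than merely precompact, one would work with $\overline{V_\varepsilon}$ in place of $V_\varepsilon$, and must then check that $\overline{V_\varepsilon}\cap \phi(L)$ does not pick up lattice points beyond $\phi(P_\varepsilon)$---achievable by shrinking $\varepsilon$ slightly or choosing $V_\varepsilon$ judiciously. This bookkeeping is the main, but minor, obstacle; the structural content of the argument is unaffected.
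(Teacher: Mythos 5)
Your proposal is correct and follows essentially the same route as the paper: choose $L$ generated by $\bigcup_\varepsilon P_\varepsilon$ so that (A5) holds, invoke Theorem~\ref{capscheme}, and identify $\oplam(V_\varepsilon)$ with $P_\varepsilon+\ker(\phi)=P_\varepsilon+\bigcap_{\varepsilon'}P_{\varepsilon'}$ via the relation $V_\varepsilon\cap\phi(L)=\phi(P_\varepsilon)$; the paper carries out the same identification by proving the two inclusions explicitly. (Your use of $\langle\,\cdot\,\rangle$ is in fact slightly more careful than the paper's bare union, and the precompact-window convention you flag is the one the paper already uses, so no adjustment is needed.)
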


\begin{proof}
  Pick
\[
    L \, :=\,  \bigcup_{0< \varepsilon <C} P_\varepsilon  \,.
\]
Then, in the Baake--Moody construction, $(G \times H,
\cL)$ is a cut and project scheme. Also, there exists some precompact set $V_\varepsilon \subset H$ with non-empty interior so that
$\phi(L) \cap V_\varepsilon = \phi(P_\varepsilon)$.

We claim that
\[
   P_\varepsilon + \bigcap_{0< \varepsilon' <C} P_{\varepsilon'} \,=\, \oplam (V_\varepsilon) \ts .
\]

The inclusion $P_\varepsilon \subset \oplam (V_\varepsilon)$ follows immediately from $\phi(P_\varepsilon) \subset V_\varepsilon$. Moreover, as
\[
\bigcap_{0< \varepsilon' <C} P_{\varepsilon'} \, = \, \ker(\phi) \ts,
\]
we get $P_\varepsilon +\bigcap_{0< \varepsilon' <C} P_{\varepsilon'} \subset \oplam (V_\varepsilon)$

To complete the proof we need to show that $\oplam (V_\varepsilon) \subset P_\varepsilon + \bigcap_{0< \varepsilon' <C} P_{\varepsilon'}$.

Let $x \in \oplam (V_\varepsilon)$. Then $x \in L$ and $\phi(x) \in V_\varepsilon$.
This shows that $\phi(x) \in \phi(L) \cap V_\varepsilon = \phi(P_\varepsilon)$, thus $\phi(x)=\phi(y)$ for some $y \in P_\varepsilon$.

Then
\[
 x -y \in \ker(\phi) \, = \, \bigcap_{0< \varepsilon' <C} P_{\varepsilon'}  \ts .
\]
Thus
\[
 x  \, = \, y +(x-y) \in P_\varepsilon + \bigcap_{0< \varepsilon' <C} P_{\varepsilon'}  \ts .
\]

\end{proof}

Recall that for a topological group $G$, we denote by
$C_{\mathsf{u}}(G)$ the space of uniformly continuous bounded
functions from $G$ to $\CC$.

The last question we will address in this section is the following: given a function $f : L \to \CC$, when can we lift it through $\phi$ to some function $g \in C_{\mathsf{u}}(H)$?

 As $(H, \phi)$ is the completion of $L$, a function $f$ on $L$ can be extended to some $g \in C_{\mathsf{u}}(H)$ if and only if $f$ is bounded and uniformly continuous in the uniformity on $L$ \cite{S-RUD}.

Therefore we get the following simple Lemma.

\begin{lemma}\label{capschemelemma}
  Let\/ $f\! :\, L \longrightarrow \CC$ be a bounded function with the property that for each\/
  $\varepsilon > 0$, there exists an\/ $\delta >0$ such that, for
  all\/ $t \in P_{\delta}$ and all\/ $x \in L$, we have\/
  $\left| f(t+x) -f(x) \right| < \varepsilon$. Then  there
  exists a function\/ $g \in C_{\mathsf{u}}(H)$ so that\/
  $f(x)=g(x^\star)$ for all\/ $x \in L$.
\end{lemma}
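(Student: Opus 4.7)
The key observation is that the hypothesis is nothing other than uniform continuity of $f$ with respect to the uniformity $\mathcal{U}$ constructed in Proposition~\ref{prop1.1}. Indeed, if we set $y = t+x$ with $t \in P_\delta$, then $(x,y) \in U_\delta$, so the condition $|f(t+x)-f(x)| < \varepsilon$ is exactly the standard $\varepsilon$--entourage formulation of uniform continuity for the uniformity $\mathcal{U}$. Once this is recognised, the lemma reduces to the classical extension theorem for bounded uniformly continuous functions on a dense subset of a complete uniform space, and the only real work is checking that $f$ descends through the (possibly non-injective) map $\phi$.

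So the first step is to verify that $f$ is constant on cosets of $\ker(\phi)$. Recall from the construction that $\ker(\phi) = \bigcap_{\varepsilon' > 0} P_{\varepsilon'}$. Fix any $x \in L$ and any $t \in \ker(\phi)$. Then $t \in P_\delta$ for every $\delta > 0$, so for every $\varepsilon > 0$ the hypothesis yields $|f(t+x) - f(x)| < \varepsilon$. Letting $\varepsilon \downarrow 0$ gives $f(t+x) = f(x)$. Consequently there is a well-defined function $\tilde{f}\colon \phi(L) \to \CC$ characterised by $\tilde{f}(\phi(x)) = f(x)$ for all $x \in L$.

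Next I would transfer uniform continuity through $\phi$. By property (C2), $\phi$ is an open mapping onto $\phi(L)$, equipped with the uniformity induced from $H$; equivalently, the family $\{\phi(P_\varepsilon)\}_{0<\varepsilon<C}$ is a neighbourhood basis at $0$ in $\phi(L)$. Given $\varepsilon > 0$, pick $\delta$ as in the hypothesis: then for any $u,v \in \phi(L)$ with $u - v \in \phi(P_\delta)$, choosing preimages $x,y \in L$ with $\phi(x) = u$, $\phi(y) = v$ and $x-y \in P_\delta$ (using that $P_\delta + \ker(\phi) = P_\delta$, which follows from Lemma~\ref{L1111}) we obtain $|\tilde{f}(u) - \tilde{f}(v)| = |f(x) - f(y)| < \varepsilon$. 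Hence $\tilde{f}$ is bounded and uniformly continuous on $\phi(L)$.

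Finally, since $\phi(L)$ is dense in the complete Hausdorff uniform space $H$ (property (C1)), the standard extension theorem for bounded uniformly continuous functions (see \cite{S-RUD}) produces a unique $g \in C_{\mathsf{u}}(H)$ with $g|_{\phi(L)} = \tilde{f}$. By construction, $g(x^\star) = g(\phi(x)) = \tilde{f}(\phi(x)) = f(x)$ for every $x \in L$. The only conceptual obstacle is recognising the hypothesis as uniform continuity for $\mathcal{U}$; beyond that, the argument is just a careful assembly of the completion machinery already established in the section.
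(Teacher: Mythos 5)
Your proposal is correct and follows exactly the route the paper intends: the paper's ``proof'' is just the sentence preceding the lemma, which observes that the hypothesis is uniform continuity for the uniformity of Proposition~\ref{prop1.1} and invokes the standard extension theorem from \cite{S-RUD}; your write-up simply supplies the details (descent through $\ker(\phi)$, transfer of uniform continuity, extension to the completion). One small imprecision: the identity $P_\delta + \ker(\phi) = P_\delta$ does \emph{not} follow from Lemma~\ref{L1111} and is false in general (this is precisely why Corollary~\ref{capschemecor} must add $\bigcap_{\varepsilon'} P_{\varepsilon'}$ to $P_\varepsilon$ to get a model set); but the step is easily repaired, either by shrinking the entourage (if $u-v\in\phi(P_{\delta''})$ with $\delta''<\delta$ then $x-y\in P_{\delta''}+\ker(\phi)\subset P_{\delta''}+P_{\delta-\delta''}\subset P_\delta$), or more directly by writing $f(x)=f(s+y)$ with $s\in P_\delta$ using the already-established invariance of $f$ under translation by $\ker(\phi)$.
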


We complete the section by looking at some explicit examples of the Baake--Moody construction.

We start with a very simple example, which yields $\ZZ$ as a regular model set. The cut and project scheme we get is very natural, and this example can easily be generalized to any lattice in an arbitrary LCAG $G$.

\begin{example}
Let $G= \RR, L= \ZZ$ and let
\[
P_\varepsilon \, = \, \ZZ \ts .
\]
In this case we get $H=0$ and $\cL=\{ (n,0) \mid n \in \ZZ \}$.

This is the standard cut and project scheme which produces $\ZZ$ as a model set.
\exend
\end{example}

With a very small modification, we can get a new cut and project scheme which yields $\ZZ$ with internal group $H=\dfrac{\ZZ}{n \ZZ}$.

\begin{example}\label{s1.2Ex2} Again, set $G= \RR, L= \ZZ$ and let $n$ be a fixed positive integer. Define
\[
P_\varepsilon \, = \, n \ZZ \ts .
\]
In this case we get $H=\dfrac{\ZZ}{n \ZZ}$ and
\[ \cL\, = \, \{ (m, m \bmod{n}) \mid m \in \ZZ \} \, \subset \, \RR \times \dfrac{\ZZ}{n \ZZ} \ts .
\]
\exend
\end{example}

The next example shows what happens when we take advantage of the freedom of (A5).

\begin{example}\label{s1.2Ex3} Set $G= \RR$ and define again
\[
P_\varepsilon \, = \, \ZZ \ts .
\]
Pick any additive subgroup $L \leq \RR$ which contains $\ZZ$. Then, it is easy to see that $H$ is simply the factor group $\frac{L}{\ZZ}$ equipped with the discrete topology, and

\[ \cL\, = \, \{ (\alpha, \alpha +\ZZ) \mid \alpha \in L \} \, \subset \, \RR \times \dfrac{L}{ \ZZ} \ts .
\]

In the particular case when $L=\frac{1}{n} \ZZ$, we recover, up to multiplication by a constant, the Example~\ref{s1.2Ex2}.
\exend
\end{example}

Let us look more closely at a very interesting particular case of Example~\ref{s1.2Ex3}. As we mentioned before, (A5) is not critical for our construction, and can always be fulfilled by setting $G=L$. But, as we will see in this example, this choice makes typically the cut and project scheme very big, and it contains much unneeded information.

\begin{example}\label{s1.2Ex4} Set $G= \RR, L=\RR$ and define again
\[
P_\varepsilon \, = \, \ZZ \ts .
\]
Then in the Baake--Moody construction, the group $H$ is just the torus
\[
H \, = \, \frac{\RR}{\ZZ} \,=:\, \TT \ts ,
\]
equipped with the discrete topology, and the lattice $\cL$ is
\[ \cL\, = \, \{ (\alpha, \alpha\bmod{1}) \mid \alpha \in \RR \} \, \subset \, \RR \times \TT \ts .
\]
We should point that in this example, even if our physical group $G$ is simply $\RR$, the lattice of the cut and project scheme is not countable. Moreover, we have
\[ \pi_1(\cL)\, = \, \RR \ts ,
\]
which looks surprising. The fact that $\cL$ is very big does not create any issues, as it is compensated by the spareness of compact sets in $\TT$. Indeed, compact sets in $\TT$ are simply the finite sets, any model set in this scheme is simply a set of the form $\ZZ+F$, where $F$ is a finite set.

It is easy to check that $\cL$ is a lattice
in $\RR \times \TT$. As the example is a little counterintuitive, we do this now.

\emph{Discreteness}: Let $ \lim_n (t_n , t_n+\ZZ)= (0 , 0+Z)$. Then,
$ \lim_n t_n+\ZZ= 0+\ZZ$ in the discrete topology on $\TT$, and hence there
exists an $N$ so that $t_n +\ZZ = 0 +\ZZ$ for all $n>N$. Thus, $t_n \in \ZZ \, \forall n >N$.

Now, since $t_n \in \ZZ$ and $ \lim_n t_n =0$, there exists some $M$ such that $t_n =0$
for all $n >M$.

\emph{Relative denseness}: Let $(x, y+\ZZ) \in \RR \times \TT$. Let
\[
     z \, := \, \lfloor x \rfloor + \{ y \} \ts ,
\]
where $\lfloor \, \rfloor$ and $\{ \, \}$ denote the integer respectively fractional part of a real number.

Then, $z+\ZZ =y+\ZZ $ and $\lfloor x \rfloor=\lfloor z \rfloor$, hence $x -z \in [-1, 1]$. Thus,
\[
   (x, y+\ZZ) \, =\,  (x, z+\ZZ) \, =\,
   ( z, z+ \ZZ) + (x-z , 0+\ZZ) \, \in \,
   \cL + [-1,1] \times \{ 0+ \ZZ \} \ts ,
\]
and hence
\[
  \RR \times \TT \,\subset\,  \cL + [-1,1] \times \{ 0+ \ZZ \}
 \ts .
\]
\exend
\end{example}

In the next example we see that we can use the Baake--Moody construction to create a cut and project scheme for $\ZZ$ with the $p$-adic group as the internal space.

\begin{example}
Let $ p \in \ZZ$ be a prime number. Set $G= \RR, L= \ZZ$. We next define $P_\varepsilon$ in such a way that $P_{\frac{1}{n}}=p^n \ZZ$, and the sets $P_\varepsilon=P_{\frac{1}{n}}$ for all $\frac{1}{n+1}< \varepsilon \leq \frac{1}{n}$.
\[
P_\varepsilon \, := \, p^{\lfloor \frac{1}{\varepsilon} \rfloor} \ZZ \ts .
\]
The conditions (A1), (A2), (A4) and (A5) are straightforward to check.

We see next that (A4) also holds. Indeed, it is easy to check that for all $0 < \varepsilon < \varepsilon'$ we have
\[
P_\varepsilon \, \subset P_{\varepsilon'} \, \ts .
\]
Moreover, as $p^n \ZZ$ is a subgroup of $\ZZ$ we have $P_\varepsilon+P_\varepsilon=P_\varepsilon$. Thus, we actually get a stronger version of (A4):
\[
P_\varepsilon \,+\, P_{\varepsilon'} \, \subset \, P_{ \max\{\varepsilon , \varepsilon' \} } \, \ts .
\]
It is easy to see that the group $H$ produced by the construction is exactly the group $\ZZ_p$ of p-adic integers, and the lattice $\cL$ is just the diagonal in $\ZZ \times \ZZ_p \subset \RR \times \ZZ_p$.
\exend
\end{example}

Finally, we will rediscover by the Baake--Moody construction a very well known cut and project scheme.

\begin{example} Let $\alpha$ be an irrational number. Set $G=\RR^2, L=\{ a+b \alpha \mid a, b \in \ZZ \}$. Then, the sets
\[
P_\varepsilon \, := \, \{ a+ b \alpha \mid  a- b \alpha \in (-\varepsilon , \varepsilon) \} \ts ,
\]
satisfy the conditions (A1)-(A5). It is easy to see that in this case the completion of $L$ in this topology is up to isomorphism $(\RR, \phi)$, where $\phi: L \to \RR$ is defined by
\[
\phi( a+ b \alpha) \,=\,  a- b \alpha  \ts .
\]

The cut and project scheme we get is the one we should expect $(\RR \times \RR, \cL)$ where
\[
\cL \, = \, \{ \left( a+ b \alpha, a- b \alpha \right) \mid a,b \in \ZZ \} \ts .
\]

If $\alpha = \sqrt{2}$, the cut and project scheme we get is exactly the one from \cite[Cor.~7.1]{S-TAO}.

\exend
\end{example}

\section{Almost periodic measures}

The connection between strongly almost periodicity and the discreteness of the Fourier Transform, a connection which makes the class of strongly almost periodic measures important to the theory of mathematical diffraction, was established in \cite{S-ARMA}. For a review of this connection we refer the reader to \cite{S-MoSt} in this volume.

There are two other notions of almost periodicity which appear in a natural way in the study of Meyer sets: norm-almost periodicity \cite{S-BM} and sup-almost periodicity \cite{S-NS2}.
In general it is much easier to deal with these two concepts than strongly almost periodicity, but it is the former which is important for mathematical diffraction. In this section we introduce these three concepts and see their basic properties as well as the connections among them.

On the space of translation bounded measures $\mathcal{M}^\infty(G)$,
we introduce the norm of \cite{S-BL}. Let $K \subset G$ be a fixed
compact set with nonempty interior. For a measure $\omega \in
\mathcal{M}^\infty(G)$ we define
\[
   {\| \omega \|}^{}_{K} \, := \, \sup_{t \in G}
   \left| \omega \right|( t+K) \ts ,
\]
where $\left| \omega \right|$ denotes the total variation measure \cite{S-HeRo};
compare \cite[Sec.~8.5.1]{S-TAO} and references therein.

\begin{definition} A measure $\omega$ is called
\emph{norm-almost periodic} if, for all $\varepsilon >0$, the set
\[
    P^K_\varepsilon(\omega) \, :=\,
    \{ t \in G | \| \omega - T_t \omega \|_K < \varepsilon \}
\]
is relatively dense. It is called \emph{strongly almost periodic} if, for
all $f \in C_{\mathsf{c}}(G)$, the function $f * \omega$ is almost periodic.
\end{definition}

\begin{remark}
If $K' \subset G$ is another compact set with non-empty interior, then
there exist two constants $0 <c \leq C < \infty$, so that
\[
    c \,{\| \cdot \|}^{}_{K'} \,\leq\,
     {\| \cdot \|}^{}_{K}  \,\leq\,
    C \,{\| \cdot \|}^{}_{K'} \ts .
\]
Thus, while different choices of $K$ define different norms, the
topology and the concept of almost periodicity are independent of this
choice.
\exend
\end{remark}

As we will see in Proposition~\ref{BMSNAP} and in Example~\ref{SSAM}
below, norm-almost periodicity is, in general, a stronger requirement
than strongly almost periodicity.

\begin{proposition}\cite[Lemma~7 and Prop.~8]{S-BM}\label{BMSNAP}
  Let\/ $\omega$ be a translation
  bounded measure on\/ $G$.
\begin{itemize}
\item[(i)] If\/ $\omega$ is norm-almost periodic, then\/ $\omega$ is
  strongly almost periodic.
\item[(ii)] If\/ $\supp(\omega)- \supp(\omega)$ is uniformly discrete, then\/ $\omega$ is
  norm-almost periodic if and only if\/ $\omega$ is strongly almost
  periodic.
\end{itemize}
\end{proposition}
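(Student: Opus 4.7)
The statement has two parts: (i) compares the norms driving the two notions of almost periodicity, while (ii) says that a sparseness condition on $\supp(\omega)$ forces the two notions to coincide. Throughout this plan I abbreviate $\Lambda := \supp(\omega)$.

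\textbf{Part (i).} My plan is the elementary operator estimate
\[
\|f * \mu\|_\infty \,\leq\, \|f\|_\infty \, \|\mu\|_K,
\]
valid for every $\mu \in \mathcal{M}^\infty(G)$ and every compact $K$ containing $-\supp(f)$; this follows directly from the integral representation of $f * \mu$ and the definition of $\|\cdot\|_K$. By the remark asserting the equivalence of all such norms, I may enlarge $K$ as needed for any given $f$. Applied to $\mu = \omega - T_t\omega$, the estimate shows that every $\varepsilon$-norm-almost period of $\omega$ is an $(\varepsilon \|f\|_\infty)$-Bohr almost period of $f * \omega$, and relative denseness passes along. Since $f \in C_{\mathsf{c}}(G)$ is arbitrary, $\omega$ is strongly almost periodic.

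\textbf{Part (ii).} One direction is (i); for the converse, I assume $\omega$ strongly almost periodic and $\Lambda - \Lambda$ uniformly discrete. The plan is to choose a symmetric precompact open neighborhood $V$ of $0$ with $(V-V) \cap (\Lambda - \Lambda) = \{0\}$, a compact neighborhood $K$ of $0$ with $K - K \subset V$ (used to compute $\|\cdot\|_K$), and a bump $h \in C_{\mathsf{c}}(G)$ with $h(0) = 1$, $0 \leq h \leq 1$, and $\supp(h) \subset K$. The crucial structural observation, forced by $V \cap (\Lambda - \Lambda) = \{0\}$, is that for every $x \in G$ the sum $(h*\omega)(x) = \sum_{z \in \Lambda} h(x-z)\omega(\{z\})$ has at most one nonzero term; in particular $(h*\omega)(y) = \omega(\{y\})$ for $y \in \Lambda$, and $h*\omega$ vanishes off $\Lambda + K$. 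Strong almost periodicity then supplies, for every $\eta > 0$, a relatively dense set $P$ of $t$ with $\|h*\omega - T_t(h*\omega)\|_\infty < \eta$.

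The final step is to transfer this pointwise estimate to a bound on $\|\omega - T_t\omega\|_K$ for $t \in P$. Fix $x \in G$: because $K - K \subset V$, each of $(x+K) \cap \Lambda$ and $(x+K) \cap (\Lambda + t)$ contains at most one point, so $|\omega - T_t\omega|(x+K)$ decomposes into at most two weights. In the generic case $(x+K) \cap \Lambda = \{y\}$ with $y-t \in \Lambda$, evaluating the almost-periodicity inequality at $y$ gives $|\omega(\{y\}) - \omega(\{y-t\})| < \eta$ directly. In each asymmetric case (only $\omega$ or only $T_t\omega$ contributes a point to $x+K$), evaluating at the lone mass point and using the vanishing of $h*\omega$ off $\Lambda + K$ is designed to produce the analogous bound. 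The main obstacle lies in the subtler sub-case where the enlarged window $y + K$ strays outside $x + K$ and introduces a spurious term into $(h*\omega)(y - t)$; combining $(V-V)\cap(\Lambda-\Lambda) = \{0\}$ with $K - K \subset V$ to show that any such spurious mass point in fact still falls close enough to $x$ to be captured by a parallel evaluation is where the proof must be run carefully. The payoff is $\|\omega - T_t\omega\|_K \leq C \eta$ with an absolute constant $C$, and hence the relative denseness of the $C\eta$-norm-almost periods of $\omega$.
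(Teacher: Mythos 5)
First, a remark on scope: the paper does not actually prove Proposition~\ref{BMSNAP}; it quotes it from Baake and Moody \cite{S-BM} (Lemma~7 and Prop.~8). So the comparison below is with the standard argument rather than with a proof in the text. Your part (i) is correct and is exactly that standard argument: the estimate $\|f*\mu\|_\infty \leq \|f\|_\infty\,\|\mu\|_K$ for compact $K \supseteq -\supp(f)$, together with the equivalence of the norms $\|\cdot\|_K$ over different $K$, sends $\varepsilon$-norm-almost periods of $\omega$ to $\varepsilon\|f\|_\infty$-Bohr almost periods of $f*\omega$.

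Part (ii) has a genuine gap, and it sits precisely in the sub-case you flag. The decisive symptom is that your argument never consumes more than uniform discreteness of $\Lambda=\supp(\omega)$ itself: the only structural input is a $V$ with $(V-V)\cap(\Lambda-\Lambda)=\{0\}$, which is exactly uniform discreteness of $\Lambda$, not of $\Lambda-\Lambda$. The paper's Example~\ref{SSAM}, $\omega=\delta_{\ZZ^2}+\delta_{(\ZZ\times\pi\ZZ)+(1/2,0)}$, is strongly almost periodic with uniformly discrete support but is not norm-almost periodic, so any proof drawing only on that much of the hypothesis must break somewhere. Where it breaks: take $t$ that shifts every mass point of $\omega$ by a tiny nonzero amount onto a point \emph{near} but not in $\Lambda$ (in the example, $t=(0,s)$ with $s$ simultaneously within $\epsilon$ of $\ZZ$ and of $\pi\ZZ$ but lying in neither). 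Then $\|h*\omega-T_t(h*\omega)\|_\infty$ is small by uniform continuity of $h$, yet $|\omega-T_t\omega|(x+K)$ picks up the \emph{full} weights of the two nearby points. Your parallel evaluation at the spurious mass point $z_0+t$ only produces the pair of inequalities $|a-\alpha b|<\eta$ and $|b-\alpha a|<\eta$, where $a=\omega(\{y\})$, $b=\omega(\{z_0\})$ and $\alpha=h(y-t-z_0)\in[0,1]$; these yield $(1-\alpha^2)|a|<2\eta$, which is vacuous when the displacement $y-t-z_0$ is close to $0$ (so $\alpha$ is close to $1$). No absolute constant $C$ emerges. To close the gap you must spend the full hypothesis: choosing $V$ with $(V-V)\cap\bigl[(\Lambda-\Lambda)-(\Lambda-\Lambda)\bigr]=\{0\}$ forces the displacement $y-t-z$ to be one and the same element $k_0(t)\in K$ for every matched pair, after which one still has to argue that almost periods with $k_0(t)\neq 0$ can be discarded or corrected. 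That extra step is the actual content of \cite[Prop.~8]{S-BM} and is absent from your plan.
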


We will see at the end of this section that the condition $\supp(\omega)- \supp(\omega)$ is uniformly discrete cannot be dropped from Proposition~\ref{BMSNAP} (ii).

Next we define a new norm on the space $\mathcal{M}^\infty_d(G)$ of discrete translation bounded measures by
\[
    \| \omega \|_\infty \, := \,
    \sup_{x \in G} \bigl\{ \left| \omega (\{ x \}) \right| \bigr\} \ts .
\]
It is easy to see that $\| \cdot \|_\infty$ can be defined on the
space of translation bounded measures. On this space, $\| \cdot \|_\infty$
is a seminorm, and $\| \mu \|_\infty =0$ if and only if $\mu$ is a
continuous measure.\bigskip

\parbox{0.9\textwidth}{ \textbf{Note.}  \emph{For the remainder of this
    article, unless stated otherwise, all measures are assumed to be
    discrete.}}\bigskip

\begin{definition}
  A discrete measure $\omega$ is called \emph{sup-almost periodic} if,
  for all $\varepsilon >0$, the set
\[
\begin{split}
     P^\infty_\varepsilon(\omega) \, &:=\,
    \Bigl\{ t \in G \;\big|\;
     \| \omega - T_t \omega \|_\infty < \varepsilon \Bigr\} \\
    & \hphantom{:}= \, \Bigl\{ t \in G \;\big|\; \sup_{x\in G}
    \bigl\{ \left| \omega( \{ t+x \}) -\omega(\{ x \}) \right| \bigr\}
     < \varepsilon \Bigr\}
\end{split}
\]
is a relatively dense subset of $G$.
\end{definition}

The class of sup-almost periodic measures contains the fully periodic discrete measures, as well as the pure point diffraction measures of Meyer sets in $\RR^d$ \cite{S-NS2}. We will see later in this article that there is a very strong unexpected connection between sup almost periodic measures and cut and project schemes.

Next we study the connection between norm and sup almost periodicity. We show in the next two results that norm almost periodicity implies sup almost periodicity, and that, for measures with weakly uniformly discrete support, the two concepts are equivalent.

\begin{lemma}\label{LMSNAP}
  Let\/ $\omega := \sum_{x \in \vG} \omega(x)\ts \delta_x$ be a translation
  bounded measure on\/ $G$.  Then
\[
     {\| \omega \|}^{}_\infty \,\leq\, {\| \omega \|}^{}_{K}   \ts .
\]
\end{lemma}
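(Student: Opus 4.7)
The plan is to bound $|\omega(\{x\})|$ pointwise by $\|\omega\|_K$ for every $x\in G$, and then take the supremum. The idea is that, because $K$ has non-empty interior, any single point $x$ can be enclosed in a translate of $K$, and the atom mass at $x$ is then dominated by the total variation mass on that translate.

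The key step: fix $x \in G$ and choose any $k_0$ in the (non-empty) interior of $K$. Set $t := x - k_0$, so that $x = t + k_0 \in t + K$, i.e., $\{x\} \subset t+K$. By monotonicity of the total variation measure together with the standard inequality $|\omega(\{x\})| \le |\omega|(\{x\})$, one then obtains
\[
|\omega(\{x\})| \,\le\, |\omega|(\{x\}) \,\le\, |\omega|(t+K) \,\le\, \sup_{s\in G} |\omega|(s+K) \,=\, \|\omega\|_K .
\]
Since this bound holds for every $x\in G$, taking the supremum over $x$ yields $\|\omega\|_\infty \le \|\omega\|_K$, which is exactly what is claimed.

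There is no real obstacle here; the statement is essentially a one-line consequence of (a) the fact that a single point set can be covered by a translate of $K$ whenever $K$ has non-empty interior, and (b) the elementary inequality $|\omega(\{x\})|\le|\omega|(\{x\})$ relating a complex measure to its total variation. The only tiny point to flag is the use of non-empty interior of $K$ (guaranteed by the standing hypothesis on $K$ in the definition of $\|\cdot\|_K$), which is what allows us to pick $k_0 \in K$ and thereby realize $x \in t+K$.
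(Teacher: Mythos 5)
Your proposal is correct and follows essentially the same route as the paper: pick $k_0\in K^\circ$, observe $x\in (x-k_0)+K$, and bound $\left| \omega(\{x\})\right| \le \left|\omega\right|((x-k_0)+K) \le \|\omega\|_K$ before taking the supremum over $x$. No gaps.
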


\begin{proof}
  Pick some $k \in K^\circ$. Since $x =x-k+k \in x-k+K$ we get
\[
    \left| \omega (\{ x \})\right|  \,\leq\,
     \left| \omega \right| (x-k+K) \ts .
\]
Thus, the assertion
\[
    {\| \omega \|}^{}_{\infty} \,\leq\, {\| \omega \|}^{}_{K}
\]
holds.
\end{proof}

Let us point out that the condition $\mu$ is translation bounded is not needed. Indeed, if the measure $\omega$ is not translation bounded, we have ${ \|
  \omega \|}^{}_{K} =\infty$, and hence the inequality of Lemma~\ref{LMSNAP} is trivially true.

\begin{lemma}\label{MSNAP}
  Let\/ $\omega := \sum_{x \in \vG} \omega(x)\ts \delta_x$ be a translation
  bounded measure on\/ $G$.
\begin{itemize}
\item[(i)] If\/ $\omega$ is norm-almost periodic, then\/ $\omega$ is
  sup-almost periodic.
\item[(ii)] If\/ $\vG$ is weakly uniformly
  discrete, then\/ $\omega$ is norm-almost periodic if and only if\/
  $\omega$ is sup-almost periodic.
\end{itemize}
\end{lemma}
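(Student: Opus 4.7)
Part (i) should follow immediately from Lemma~\ref{LMSNAP} applied pointwise to the translation-difference $\omega - T_t\omega$. Indeed, since $\omega - T_t\omega$ is itself a translation-bounded measure, Lemma~\ref{LMSNAP} gives
\[
   \| \omega - T_t \omega \|_\infty \,\leq\, \| \omega - T_t \omega \|_K ,
\]
so any $t$ with $\|\omega - T_t\omega\|_K < \varepsilon$ also satisfies $\|\omega - T_t\omega\|_\infty < \varepsilon$. Hence $P^K_\varepsilon(\omega) \subset P^\infty_\varepsilon(\omega)$, and the relative denseness of the left-hand side forces the relative denseness of the right-hand side.

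For part (ii), one direction is (i); the content lies in the converse. Here I would try to obtain a matching reverse inequality of the form $\|\omega - T_t\omega\|_K \leq N \cdot \|\omega - T_t\omega\|_\infty$ with $N$ a constant independent of $t$. The idea is that weak uniform discreteness of $\vG$ gives a uniform bound
\[
   N_0 \, := \, \sup_{x \in G} \card\bigl( \vG \cap (x+K) \bigr) \,<\, \infty .
\]
Now $\supp(\omega - T_t\omega) \subset \vG \cup (\vG + t)$ (or $\vG-t$, depending on the sign convention for $T_t$), and since a translate of a weakly uniformly discrete set is weakly uniformly discrete with the same constant, the union has at most $2 N_0$ points in any translate of $K$. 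Bounding the total variation on $x+K$ by this count of support points times the sup-norm then yields
\[
   \| \omega - T_t \omega \|_K \,\leq\, 2 N_0 \cdot \| \omega - T_t \omega \|_\infty .
\]
Consequently $P^\infty_{\varepsilon/(2N_0)}(\omega) \subset P^K_\varepsilon(\omega)$, and sup-almost periodicity promotes to norm-almost periodicity.

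The step I expect to take the most care is keeping track of the support of $\omega - T_t\omega$ uniformly in $t$: one needs that $\vG$ and its translates all share the \emph{same} weak uniform discreteness constant, which follows from the translation-invariance of Haar measure/counting on $G$, but is worth stating explicitly. Beyond that, the argument is a direct two-way comparison of the seminorms $\|\cdot\|_K$ and $\|\cdot\|_\infty$ on measures supported in $\vG \cup (\vG \pm t)$, combined with the fact that the set of $\varepsilon$-almost periods is monotone in $\varepsilon$, so shrinking $\varepsilon$ by a fixed factor preserves relative denseness.
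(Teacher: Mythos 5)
Your proposal is correct and follows essentially the same route as the paper: part (i) is the immediate consequence of Lemma~\ref{LMSNAP}, and part (ii) is the reverse comparison $\| \omega - T_t\omega \|^{}_{K} \leq 2N \ts \| \omega - T_t\omega \|^{}_{\infty}$ obtained by counting the at most $2N$ support points of $\omega - T_t\omega$ in any translate of $K$, giving $P^{\infty}_{\!\varepsilon/2N}(\omega) \subset P^{K}_{\!\varepsilon}(\omega)$. The point you flag about translates sharing the same weak uniform discreteness constant is handled in the paper exactly as you suggest, via $\#\bigl([\vG \cup (-t+\vG)] \cap (x+K)\bigr) \leq 2N$.
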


\begin{proof}
Claim (i) follows immediately from Lemma \ref{LMSNAP}.

For claim (ii), we know that there exists an $N>0$ so that, for
all $x \in G$, we have $\# \left( \vG \cap (x+K) \right) \leq N$, where
$\#(\vL)$ denotes the number of elements in a finite point set
$\vL$. Let $t \in G$. Then
\[
   \begin{split}
    \left| T_t\omega - \omega \right| (x+K)\, &=
     \sum_{y \in [ \vG \cup (-t+\vG)] \cap (x+K)  }
     \left| T_t\omega - \omega \right| (\{y\})\\
     &\leq \sum_{y \in [ \vG \cup (-t+\vG)] \cap (x+K)  }
      {\| T_t\omega - \omega \|}_\infty  \\[1mm]
      &=\left[ \# \bigl( [ \vG \cup (-t+\vG)] \cap (x+K) \bigr)\right]
      {\| T_t\omega - \omega \|}^{}_{\infty} \\[1mm]
      &\leq\, 2N\ts {\| T_t\omega - \omega \|}_\infty\ts .
\end{split}
\]
Hence
\[
    {\| T_t\omega - \omega \|}^{}_{K}  \,\leq\,
     2N\ts {\| T_t\omega - \omega \|}^{}_{\infty} \ts ,
\]
and, in particular,
\[
  P^{\infty}_{\!\!\frac{\varepsilon}{2N}}(\omega) \,\subset\,
  P^{K}_{\!\nts\varepsilon}(\omega) \ts ,
\]
which completes the proof.
\end{proof}

We finish the section by studying the norm and sup almost periodicity of a Dirac comb
$\delta^{}_{\!\vL}$. We show that each of norm-almost periodicity respectively sup-almost periodicity
is equivalent to $\vL$ being a fully periodic crystal.

\begin{lemma}\label{L11112}
  Let\/ $\vG$ be a weakly uniformly discrete set, $0 < \varepsilon <1$ and\/ $t \in G$. Then, the following
  statements are equivalent.
\begin{itemize}\itemsep=1pt
\item[(i)] ${\| T_t(\delta_{\vG})-\delta_{\vG} \|}^{}_{K} \leq \varepsilon$.
\item[(ii)] ${\| T_t(\delta_{\vG})-\delta_{\vG} \|}^{}_{\infty} \leq \varepsilon$.
\item[(iii)] $t+ \vG =\vG$.
\end{itemize}
\end{lemma}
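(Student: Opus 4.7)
The strategy is straightforward: observe that the measure $T_t \delta^{}_{\!\vG} - \delta^{}_{\!\vG}$ is integer-valued on singletons, so any sup-norm strictly below $1$ forces it to be zero. I would organise the proof in the order (iii) $\Rightarrow$ (i) $\Rightarrow$ (ii) $\Rightarrow$ (iii).

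\textbf{Trivial direction.} If $t + \vG = \vG$, then $T_t \delta^{}_{\!\vG} = \delta^{}_{t+\vG} = \delta^{}_{\!\vG}$, so $T_t(\delta^{}_{\!\vG}) - \delta^{}_{\!\vG}$ is the zero measure; both ${\| \cdot \|}^{}_{K}$ and ${\| \cdot \|}^{}_{\infty}$ of this measure are $0$, hence $\leq \varepsilon$. This gives (iii) $\Rightarrow$ (i) and (iii) $\Rightarrow$ (ii).

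\textbf{(i) $\Rightarrow$ (ii).} This is immediate from Lemma~\ref{LMSNAP} applied to the translation bounded measure $T_t(\delta^{}_{\!\vG}) - \delta^{}_{\!\vG}$, which is discrete and translation bounded because $\vG$ is weakly uniformly discrete.

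\textbf{(ii) $\Rightarrow$ (iii).} This is the only point with content. The measure $T_t(\delta^{}_{\!\vG}) - \delta^{}_{\!\vG} = \delta^{}_{t+\vG} - \delta^{}_{\!\vG}$ has support contained in $(t+\vG) \cup \vG$, and for every $x \in G$,
\[
    \bigl( T_t(\delta^{}_{\!\vG}) - \delta^{}_{\!\vG} \bigr)(\{x\})
    \, =\, \mathbf{1}^{}_{t+\vG}(x) - \mathbf{1}^{}_{\vG}(x) \,\in\, \{-1, 0, 1\} \ts .
\]
Consequently, ${\| T_t(\delta^{}_{\!\vG}) - \delta^{}_{\!\vG} \|}^{}_{\infty}$ is either $0$ or $1$. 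Since it is assumed to be at most $\varepsilon < 1$, it must be $0$, so the coefficient at every $x$ vanishes, giving $\mathbf{1}^{}_{t+\vG} = \mathbf{1}^{}_{\vG}$, i.e., $t + \vG = \vG$.

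There is no serious obstacle here; the only subtlety worth flagging is that weak uniform discreteness of $\vG$ is used implicitly to ensure $\delta^{}_{\!\vG}$ (and hence its translate) is translation bounded, so that the norms ${\| \cdot \|}^{}_{K}$ and ${\| \cdot \|}^{}_{\infty}$ are meaningfully bounded and Lemma~\ref{LMSNAP} is applicable.
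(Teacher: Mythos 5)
Your proposal is correct and takes essentially the same route as the paper: the implications $\text{(iii)} \Rightarrow \text{(i)} \Rightarrow \text{(ii)}$ are dispatched as immediate, and $\text{(ii)} \Rightarrow \text{(iii)}$ rests on the same observation that the coefficients of $T_t(\delta_{\vG})-\delta_{\vG}$ are integer-valued (in $\{-1,0,1\}$), so a bound strictly below $1$ forces them all to vanish. The only cosmetic difference is that the paper phrases this pointwise (for $x \in \vG$, the inequality $\left| \delta_{\vG}(\{-t+x\}) - 1 \right| < 1$ forces $-t+x \in \vG$, and symmetrically for the reverse inclusion) rather than through the integrality of the sup-norm itself.
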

\begin{proof}
  The implications $\text{(iii)} \Longrightarrow \text{(i)}
  \Longrightarrow \text{(ii)}$ are clear. We now prove that
  $\text{(ii)} \Longrightarrow \text{(iii)}$.

  Let $x \in \vG$. Then, we have $\left| \delta_{\vG}(\{-t+x\}) - \delta_{\vG}(\{x \})
  \right| < \varepsilon$. Thus, we find that $\left| \delta_{\vG}(\{-t+x \}) -1
  \right| < 1 \Longrightarrow \delta_{\vG}(\{-t+x \}) \neq 0 \Longrightarrow x \in
  t+ \vG$.  This proves that $\vG \subset t+ \vG$.

  The other inclusion is analogous.  Let $x \in t+\vG$. Then $x=t+y$
  for some $y \in \vG$ and, since
\[
    \left| \delta_{\vG}(\{y\}) - \delta_{\vG}(\{t+y \}) \right| < \varepsilon \ts ,
\]
we have $\left| 1 -\delta_{\vG}(\{x \}) \right| < 1 \Longrightarrow
\delta_{\vG}(\{x \}) \neq 0 \Longrightarrow x \in \vG$. Hence $t+\vG \subset
\vG$.
\end{proof}

An immediate consequence of Lemma~\ref{L11112} is the following Corollary.

\begin{corollary}\label{CL1111}
  Let\/ $\vG$ be a weakly uniformly discrete set. Then, the following statements
  are equivalent.
\begin{itemize}\itemsep=1pt
\item[(i)] The measure\/ $\delta_{\vG}$ is norm-almost periodic.
\item[(ii)] The measure\/ $\delta_{\vG}$ is sup-almost periodic.
\item[(iii)] $P^{\infty}_{\!\varepsilon}(\delta_{\vG})$ is relatively dense for
  some\/ $0 < \varepsilon <1$.
\item[(iv)] $P^{K}_{\!\varepsilon}(\delta_{\vG})$ is relatively dense for
  some\/ $0 < \varepsilon <1$.
\item[(v)] The set\/ $\per(\vG):= \{ t \in G \mid t+ \vG = \vG
  \}$ is a lattice.
\item[(vi)] There exists a lattice\/ $L$ and a finite set\/ $F$ so that\/
  $\vG=L+F$.
\end{itemize}
\end{corollary}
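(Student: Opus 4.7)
The plan is to exploit Lemma~\ref{L11112}, which identifies the three sets $P^K_\varepsilon(\delta_\vG)$, $P^\infty_\varepsilon(\delta_\vG)$ and $\per(\vG)$ for $0<\varepsilon<1$, thereby collapsing (i)--(iv) to a single statement about $\per(\vG)$, and then to add the short geometric arguments needed to connect this with (v) and (vi). First I would record that Lemma~\ref{L11112} yields
\[
   P^K_\varepsilon(\delta_\vG) \,=\, P^\infty_\varepsilon(\delta_\vG) \,=\, \per(\vG)
   \quad\text{for every } 0<\varepsilon<1\ts ,
\]
together with the trivial inclusion $\per(\vG) \subseteq P^K_\varepsilon(\delta_\vG) \cap P^\infty_\varepsilon(\delta_\vG)$ valid for every $\varepsilon>0$. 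This immediately shows that each of (i), (ii), (iii), (iv) is equivalent to the single assertion that $\per(\vG)$ is relatively dense in $G$.

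The next step is to upgrade this to (v) by verifying that $\per(\vG)$ is automatically a discrete subgroup of $G$. Clearly $\per(\vG)$ is a subgroup, and for any fixed $x_0 \in \vG$ we have $\per(\vG) + x_0 \subseteq \vG$, so $\per(\vG)$ embeds into the translate $\vG - x_0$, which is weakly uniformly discrete. Hence $\per(\vG)$ is locally finite, and a locally finite subgroup of a topological group is necessarily discrete. Consequently, $\per(\vG)$ is relatively dense if and only if it is a lattice, which closes the loop (i) $\Leftrightarrow$ (ii) $\Leftrightarrow$ (iii) $\Leftrightarrow$ (iv) $\Leftrightarrow$ (v).

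It then remains to handle the equivalence of (v) and (vi). For (v)~$\Rightarrow$~(vi), I would set $L := \per(\vG)$, choose a precompact fundamental domain $D$ for $L$ in $G$, and define $F := \vG \cap D$, which is finite by the weak uniform discreteness of $\vG$. The inclusion $L + F \subseteq \vG$ follows from $L$-invariance of $\vG$, and the reverse inclusion from writing each $y \in \vG$ as $y = \ell + d$ with $\ell \in L$, $d \in D$, and noting that $d = y - \ell \in \vG \cap D = F$. For (vi)~$\Rightarrow$~(i), if $\vG = L + F$ with $L$ a lattice, then $L \subseteq \per(\vG) \subseteq P^K_\varepsilon(\delta_\vG)$ for every $\varepsilon>0$, so each $P^K_\varepsilon(\delta_\vG)$ contains a relatively dense set and $\delta_\vG$ is norm-almost periodic.

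The main technical hurdle is the discreteness of $\per(\vG)$: this is the only step where the weak uniform discreteness hypothesis on $\vG$ is genuinely used, and it is precisely what prevents the whole corollary from being a purely formal consequence of Lemma~\ref{L11112}. Everything else is bookkeeping around that lemma and the standard fundamental-domain argument.
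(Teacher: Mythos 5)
Your proof is correct and follows exactly the route the paper intends: the paper presents this corollary as an immediate consequence of Lemma~\ref{L11112}, and your argument makes that precise by collapsing (i)--(iv) to the relative denseness of $\per(\vG)$, checking that $\per(\vG)$ is automatically discrete (via $\per(\vG)\subseteq \vG-x_0$ and weak uniform discreteness), and closing the loop with the standard precompact fundamental-domain argument for (v)$\Leftrightarrow$(vi). The only (trivial) caveat is that your choice of $x_0\in\vG$ tacitly assumes $\vG\neq\varnothing$, an edge case the paper also ignores.
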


In Lemma~\ref{L11112}, weakly uniformly discreteness is only needed to make
sure that norm-almost periodicity makes sense. If $\vG$ is not
weakly uniformly discrete, the equivalence $\text{(ii)} \Longleftrightarrow
\text{(iii})$ still holds, and we can prove the following result.

\begin{corollary}\label{CL1112}
  Let\/ $\vG$ be a locally finite pointset. Then, the following
  statements are equivalent.
\begin{itemize}\itemsep=1pt
\item[(i)] The measure\/ $\delta_{\vG}$ is sup-almost periodic.
\item[(ii)] $P^{\infty}_{\!\varepsilon}(\delta_{\vG})$ is relatively dense
  for some\/ $0 < \varepsilon <1$.
\item[(iii)] The set\/ $\per(\vG):= \{ t \in G \mid t+ \vG = \vG \}$ is
  a lattice.
\item[(iv)] There exists a lattice\/ $L$ and a set\/ $J$ so that\/ $\vG=L+J$.
\end{itemize}
\end{corollary}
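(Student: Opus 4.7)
The plan is to establish the cycle (i)$\Rightarrow$(ii)$\Rightarrow$(iii)$\Rightarrow$(iv)$\Rightarrow$(i), mirroring Corollary~\ref{CL1111} but without the weakly uniformly discrete hypothesis. The implication (i)$\Rightarrow$(ii) is immediate from the definition of sup-almost periodicity; just take any $\varepsilon\in(0,1)$.

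The crux of the argument is (ii)$\Rightarrow$(iii). I would first observe that for a locally finite set $\vG$ and any $t\in G$, the values $(T_t\delta_\vG-\delta_\vG)(\{x\})$ lie in $\{-1,0,1\}$, so $\|T_t\delta_\vG-\delta_\vG\|_\infty\in\{0,1\}$. Consequently, if $0<\varepsilon<1$ and $t\in P^\infty_\varepsilon(\delta_\vG)$, the same reasoning as in the implication (ii)$\Rightarrow$(iii) of Lemma~\ref{L11112} forces $t+\vG=\vG$, i.e.\ $t\in\per(\vG)$. Therefore $P^\infty_\varepsilon(\delta_\vG)\subset\per(\vG)$, so $\per(\vG)$ inherits relative denseness from (ii). The remaining task is to show that this subgroup is discrete. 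This is where local finiteness of $\vG$ enters essentially: fix any $x\in\vG$ and suppose $t_n\in\per(\vG)$ with $t_n\to 0$; then $x+t_n\in\vG$ converges to $x$, but $\vG$ is discrete (as a locally finite set), so eventually $x+t_n=x$, whence $t_n=0$. Thus $\per(\vG)$ is a discrete, relatively dense subgroup of $G$, i.e.\ a lattice.

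For (iii)$\Rightarrow$(iv), I would set $L:=\per(\vG)$. Since $L+\vG=\vG$, the set $\vG$ is a union of $L$-orbits; picking one representative from each orbit yields a set $J$ with $\vG=L+J$.

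Finally, for (iv)$\Rightarrow$(i), if $\vG=L+J$ with $L$ a lattice, then every $t\in L$ satisfies $t+\vG=\vG$, so $T_t\delta_\vG=\delta_\vG$ and consequently $L\subset P^\infty_\varepsilon(\delta_\vG)$ for every $\varepsilon>0$; relative denseness of $L$ transfers to each $P^\infty_\varepsilon(\delta_\vG)$, yielding sup-almost periodicity. The only real obstacle anywhere in the argument is the discreteness of $\per(\vG)$, since without weak uniform discreteness we cannot invoke the scheme of Corollary~\ref{CL1111} directly; local finiteness of $\vG$ is precisely what makes the limit argument above go through.
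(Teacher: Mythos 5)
Your proof is correct and follows essentially the route the paper intends: the paper gives no separate argument for this corollary, deriving it from the remark that the equivalence $\text{(ii)} \Longleftrightarrow \text{(iii)}$ of Lemma~\ref{L11112} survives without weak uniform discreteness, which is exactly the engine of your $\text{(ii)} \Rightarrow \text{(iii)}$ step, and the remaining implications are the routine ones you give. One small point of hygiene: since $G$ is only assumed $\sigma$-compact and need not be metrisable, the discreteness of $\per(\vG)$ should be phrased with neighbourhoods rather than sequences --- local finiteness gives an open $U \ni x$ with $\vG \cap U = \{x\}$, whence $\per(\vG) \cap (-x+U) = \{0\}$ --- but this is the same idea you use, correctly packaged.
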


Note that, for $\vG$ as in Corollary~\ref{CL1112}, we have the
following equivalence: $\delta_{\vG}$ is regular if and only if $J$ is
finite, if and only if $\vG$ is weakly uniformly discrete, if and only if $\delta_{\vG}$ is translation bounded.

We complete the section by providing a simple example of a measure which is strongly almost periodic, but not norm- nor sup-almost periodic:

\begin{example}\label{SSAM}
Consider
\[
    \omega \, :=\,
     \delta^{}_{\ZZ^2}+  \delta^{}_{(\ZZ \times \pi\ZZ)+ (\frac{1}{2},0)} \ts .
\]
Both measures $\delta^{}_{\ZZ^2}$ and $\delta^{}_{(\ZZ \times \pi\ZZ)+
  (\frac{1}{2},0)}$ individually are fully periodic, thus strongly,
norm- and sup-almost periodic.

Since the sum of two strongly almost periodic measures is strongly
almost periodic \cite{S-ARMA}, $\omega$ is strongly almost
periodic. By Corollary~\ref{CL1111}, however, $\omega$ is neither sup-
nor norm-almost periodic.  \exend
\end{example}

As $\supp(\omega)$ is uniformly discrete, the condition $\supp(\omega)- \supp(\omega)$ is uniformly discrete in Proposition~\ref{BMSNAP} cannot be weaken to $\supp(\omega)$ is uniformly discrete. Later, after the proof of Theorem~\ref{T2}, it will natural to ask whether the above condition can be weakened to finite local complexity of $\supp(\omega)$. The author does not know the answer to this question.

\section{Dense weighted model combs}\label{DWMC}

In this section we introduce the class of measures defined by cut and project schemes and continuous functions on the internal space, and we study the connection between these measures and sup almost periodicity.

\begin{definition}
 Let $( G \times H , \cL )$ be a cut and project scheme. We say that $g :H \to \CC$ is a \emph{window function} if the expression
\[
    \omega_g \,:= \sum_{\ell\in\cL}
    g(\pi^{}_{2}(\ell))\, \delta_{\pi^{}_{1}(\ell)}
    \,= \sum_{ x \in \pi^{}_{1}(\cL)} \!g(x ^\star)\, \delta_x \ts .
\]
is a regular measure on \/$G$. In this case we will call the measure $\omega_g$ a \emph{weighted model
  comb}, and we will refer to\/ $g$ as the \emph{associated function}.

If moreover $g \in C_{\mathsf{u}}(G)$, we will refer to $\omega_g$ as a \emph{continuous weighted model comb}.
\end{definition}

We will not address the question of which functions $g$ are window functions, this will always be part of our assumption.

We now show that any sup-almost periodic discrete regular measure
comes from a cut and project scheme. The key for the next theorem is
the fact that, given such a measure $\omega$, the family $\{
P^\infty_\varepsilon(\omega) \}$ satisfies the requirements of
Section~\ref{construction cp}.

\begin{theorem}\label{T1}
  Let\/ $\omega := \sum_{x \in \vG} \omega(x)\ts \delta_x$ be a regular
  measure on\/ $G$. Then, the following statements are equivalent.
\begin{itemize}\itemsep=1pt
\item[(i)] The measure\/ $\omega$ is sup-almost periodic.
\item[(ii)] There exists a cut and project scheme\/ $(G \times H,
  \cL)$ and a uniformly continuous window function\/ $g \in C_0(H)$ that
  vanishes at\/ $\infty$, so that\/ $\omega = \omega_g$.
\end{itemize}
\end{theorem}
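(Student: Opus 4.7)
The direction (ii)~$\Rightarrow$~(i) is the short one. Given $\omega = \omega_g$ with $g \in C_0(H)$ uniformly continuous, I would fix $\varepsilon > 0$ and use uniform continuity to produce a symmetric relatively compact open neighbourhood $V$ of $0 \in H$ with $|g(h+v) - g(h)| < \varepsilon$ for all $h \in H$ and $v \in V$. Noting that the difference $\omega(\{t+x\}) - \omega(\{x\})$ vanishes whenever $x \notin \pi^{}_1(\cL)$, for any $t \in \pi^{}_1(\cL)$ with $t^\star \in V$ one immediately gets $\|T_t\omega - \omega\|_\infty < \varepsilon$. Therefore $P^\infty_\varepsilon(\omega) \supset \oplam(V)$, and the latter is relatively dense because $V$ has nonempty interior.

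For (i)~$\Rightarrow$~(ii), the plan is to apply Theorem~\ref{capscheme} to the family $P_\varepsilon := P^\infty_\varepsilon(\omega)$ together with the subgroup $L \leq G$ generated by $\vG \cup \bigcup_\varepsilon P_\varepsilon$. Axiom (A1) follows from the symmetry of $\|\cdot\|_\infty$, (A3) from the triangle inequality, (A4) is the sup-almost periodicity hypothesis, and (A5) is automatic from the choice of $L$. For (A2), regularity of $\omega$ makes each level set $\vG_\eta := \{x \in \vG : |\omega(\{x\})| \geq \eta\}$ weakly uniformly discrete via $\eta \cdot \#(\vG_\eta \cap (x+K)) \leq |\omega|(x+K)$; fixing any $x_0 \in \vG$ and setting $C := |\omega(\{x_0\})|$, for $t \in P_\varepsilon$ with $\varepsilon < C$ the inequality $|\omega(\{x_0-t\}) - \omega(\{x_0\})| < \varepsilon$ forces $x_0 - t \in \vG_{C-\varepsilon}$, so $P_\varepsilon \subset x_0 - \vG_{C-\varepsilon}$ is locally finite. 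The Baake--Moody construction then delivers a cut and project scheme $(G\times H, \cL)$ with $\star$-map $\phi$. I would next define $f\!:\, L \to \CC$ by $f(x) := \omega(\{x\})$ (set to $0$ off $\vG$); the bound $|f(t+x) - f(x)| \leq \|T_t\omega - \omega\|_\infty < \delta$ for $t \in P_\delta$ and all $x \in L$ lets Lemma~\ref{capschemelemma} produce $g \in C_{\mathsf{u}}(H)$ with $g(x^\star) = \omega(\{x\})$ on $L$, and then $\omega = \omega_g$ by construction.

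The hard part will be upgrading $g \in C_{\mathsf{u}}(H)$ to $g \in C_0(H)$. My plan is to argue by contradiction: if $g$ does not vanish at infinity, there exist $\eta > 0$ and a sequence $(h_n)$ in $H$ leaving every compact subset with $|g(h_n)| \geq \eta$. Uniform continuity produces a symmetric relatively compact open neighbourhood $V$ of $0 \in H$ on which $|g| \geq \eta/2$ on every translate $h_n + V$, and extracting a subsequence makes the sets $h_n + V$ pairwise disjoint. The key is the lattice property: from the proof of Lemma~\ref{latticecp}, there is a compact $K \subset G$ with $\cL + K \times (W-W) = G \times H$, where $W = \overline{\phi(P_\varepsilon)}$. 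Choosing $\varepsilon$ small enough that $W - W \subset V$, applying this relation to each point $(0, h_n)$ produces a lattice point $(x_n, x_n^\star) \in \cL$ with $x_n \in -K$ and $x_n^\star \in h_n + V$. Uniform continuity then gives $|\omega(\{x_n\})| = |g(x_n^\star)| \geq \eta/2$, and the disjointness of the sets $h_n + V$ makes the $x_n$ distinct. This forces infinitely many points of $\vG_{\eta/2}$ into the compact set $-K$, contradicting the weak uniform discreteness established earlier.
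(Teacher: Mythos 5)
Your proposal is correct, and its overall architecture coincides with the paper's: both directions hinge on the Baake--Moody construction applied to the family $P^{\infty}_{\varepsilon}(\omega)$, followed by the lift of $x \mapsto \omega(\{x\})$ through Lemma~\ref{capschemelemma}. Your verification of (A1)--(A5) is a mild variant of the paper's (you prove (A2) directly via the level sets $\vG_\eta$ rather than by contradiction, and you take a slightly larger $L$, which only makes (A5) automatic); one small caveat is that mere regularity of $\omega$ gives you only \emph{local finiteness} of $\vG_\eta$, not weak uniform discreteness (the bound $\left|\omega\right|(x+K)$ need not be uniform in $x$ without translation boundedness) --- but since you only ever use finiteness of $\vG_{\eta/2}$ inside a single compact set, nothing breaks.

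The genuine divergence is in the last step, proving $g \in C_0(H)$. The paper argues directly: for each $\varepsilon$ it covers the open set $\{\,y \in H \mid |g(y)|>\varepsilon\,\}$ by $\overline{\phi(P_{\varepsilon/2})+\phi(F)}$, where $F$ is the finite set $S_{\varepsilon/2}\cap K'$ obtained from relative denseness of $P_{\varepsilon/2}$ and local finiteness of the level set $S_{\varepsilon/2}$. You instead argue by contradiction, pulling a sequence $(h_n)$ escaping every compact set back into $G$ via the relative denseness of $\cL$ in $G\times H$ (Step~2 of Lemma~\ref{latticecp}), and then contradicting local finiteness of $\vG_{\eta/2}$ inside $-K$. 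Both arguments rest on exactly the same two ingredients (relative denseness plus regularity-induced local finiteness of level sets), so neither is more general; the paper's version is a bit shorter and constructive, while yours additionally uses the metrisability of $H$ (which does hold here, by Remark~\ref{R1}) to extract the escaping sequence and the pairwise disjoint translates $h_n+V$. Your reduction of $W-W \subset V$ by shrinking $\varepsilon$ is legitimate since the sets $\overline{\phi(P_\varepsilon)}$ form a neighbourhood basis of $0$ in $H$.
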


\begin{proof}
  To show that $\text{(ii)} \Longrightarrow \text{(i)}$, note that the
  same argument as in \cite{S-BLRS} holds.  The main idea is that a
  small translation in $H$ does not change $g$ too much. Therefore, as every open set $U \subset H$ yields a relatively dense set $\oplam(U)$, it follows that
  the sets $P_\varepsilon$ are relatively dense. We now make this a formal proof.

  Let $\varepsilon > 0$ be fixed but arbitrary. Since $g$ is uniformly
  continuous, there exists a precompact open set $U \subset H$ so
  that, for all $t \in U$, we have
\[
    \| g - T_t g \|_\infty < \varepsilon \ts .
\]
Then, for all $x \in \pi^{}_{1}(\cL)$ and $t \in \oplam(U)$, we have
\[
   \left| \omega_g( x+t ) - \omega_g (x) \right| \,=\, \left| g(
  x^\star +t^\star ) -g( x^\star ) \right| \,<\, \varepsilon \ts .
\]
Therefore, $\oplam (U) \subset P^{\infty}_{\!\varepsilon}(\omega_g) $. Since
$\oplam(U)$ is relatively dense \cite{S-MOO}, so is
$P^{\infty}_{\!\varepsilon}(\omega_g)$.

To prove the implication $\text{(i)} \Longrightarrow \text{(ii)}$, we
first see that $\| \omega \|_\infty < \infty$. Let $K'$ be so that
$P^{\infty}_{\! 1}(\omega) +K' = G$. Then, it is easy to see that, for
all $x \in G$, we have
\[
   \left| \omega \right| (x) \,\leq\,
    1 + \left| \omega \right| (K') \, <\,  \infty \ts .
\]

We now claim that the pair $\bigl(\{ P_{\varepsilon}(\omega) \}^{}_{ 0 <
  \varepsilon < \| \omega \|^{}_\infty } , L:= \langle \vG\rangle \bigr)$
satisfies conditions (A1)--(A5), thus the Baake--Moody construction yields our cut and project scheme.

Condition (A1) is clear.

Condition (A2) is proved by contradiction. Let $0 < \varepsilon < \|
\omega \|_\infty$ and assume that $P_\varepsilon$ is not locally
finite.  Then, there exists a compact set $K'$ so that $\# (P_\varepsilon
\cap K') = \infty$. Pick a $y \in \vG$ so that $ \left| \omega(y)
\right| > \varepsilon$. Then,
\[
    \left| \omega( y+t ) \right| \, >\, \left| \omega(y) \right|
    - \varepsilon \, >\, 0
\]
holds for all $t \in P_\varepsilon \cap K'$. Hence,
\[
   \left| \omega \right| (y+K') \,\geq
   \sum_{t \in P_\varepsilon \cap K} \left| \omega \right| (y+t)
   \,\geq\, \bigl(\left| \omega(y) \right| - \varepsilon\bigr)\,
    \# (P_\varepsilon \cap K') \,=\, \infty
  \ts .
\]
But this contradicts the regularity of $\omega$.

Condition (A3) follows from the triangle inequality.

Condition (A4) is equivalent to $\omega$ being sup-almost periodic.

Finally, consider condition (A5). Let $0 < \varepsilon < \| \omega
\|_\infty$. Let $x \in \vG$ be so that $\varepsilon < \left| \omega(x)
\right|$. Then, for each $t \in P_\varepsilon$ we have
\[
   \left| \omega(x+t) \right| \,\geq\, \left| \omega(x) \right| -
   \varepsilon \,>\, 0\ts .
\]
Thus, $t+x \in \vG$, and hence $t \in -x +\vG$. We conclude
that
\[
   P_\varepsilon \,\subset\, -x + \vG \,\subset\, \vG - \vG \,\subset\, L \ts .
\]
Hence the pair $\bigl(\{ P_\varepsilon(\omega) \}_{ 0 < \varepsilon <
  \| \omega \|_\infty } , L:= \langle \vG\rangle> \bigr)$ satisfies
conditions (A1)--(A5). Thus, by Theorem~\ref{capscheme}, we get a cut
and project scheme $(G \times H, \cL)$.

Let $f\! :\, L \rightarrow \CC$ be defined by $f(x) = \omega(\{x\})$. Then
$f(x)=\omega(x)$ for all $x \in \vG$, and $f(x)=0$ for $x$ outside
$\vG$. Let $t \in P_\varepsilon$ and $x \in L$. Since $t + x \in L$ we get
\[
    \left| f(t+x) - f(x) \right| \,=\,
    \left| \omega(\{ t+x \}) - \omega(\{ x \})
   \right| \, <\,  \varepsilon \ts .
\]
This shows that $f$ is uniformly continuous in the topology induced on $L$.
Hence, there exists a function $g \in C_{\mathsf{u}}(H)$ so that
\[
    f \, =\, g \circ \phi \ts .
\]
Then
\[
   \omega  \, = \sum_{x \in L}
  \omega(\{ x \})\, \delta_x
   \, = \sum _{x \in L} f(x)\, \delta_x
   \, =\sum _{x \in L} g(\phi(x))\, \delta_x
    \,= \sum _{x \in L} g(x^\star)\, \delta_x =\omega_g \ts .
\]

The only thing left to prove is that $g$ vanishes at $\infty$. We will
see that this is a consequence of the regularity of the measure.

Let observe first that, since $\omega$ is regular measure, the set
\[
   S_\varepsilon \,:=\,
    \{ x \in G \mid \left| \omega(x) \right| > \varepsilon \}
\]
is locally finite for all $ 0< \varepsilon$.

Fix some $\varepsilon >0$, and let $K'$ be some compact set such that
\[P_{\varepsilon/2} +K' =G \ts .\]
Define $F:= S_{\varepsilon/2} \cap K'$. Then $F$ is a finite subset
of $L$.

Let $W := \{ y \in H \mid \left| g(y) \right| > \varepsilon
\}$. To complete the proof we show that $W$ is a subset of the compact set $\overline{ \phi(P^{}_{\!\varepsilon/2})+
     \phi(F)\vphantom{\cL}}$.

 Let $y \in W \cap \pi^{}_{2}(\cL)$. Then, there exists
$x \in S_\varepsilon$ so that $y = \phi(x)$.  We can write $x = z + k$
with $z \in P^{}_{\varepsilon/2}$ and $k \in K'$.  Then,
\[
   \left| \omega (k) \right| \, =\,
   \left| \omega (x-z) \right| \, \geq\,
  \left| \omega (x) \right|- \myfrac{\varepsilon}{2}
   \, >\, \myfrac{\varepsilon}{2} \ts .
\]
Thus, $k \in S_{\varepsilon/2} \cap K' = F$.  So $x \in
P_{\varepsilon/2} +F$, and hence $y \in \phi(P_{\varepsilon/2})+ \phi(
F)$.

We showed that
\begin{equation}\label{EQ111}
    W \cap \pi^{}_{2}(\cL) \,\subset\,
   \phi(P^{}_{\!\varepsilon/2})+ \phi( F) \ts .
\end{equation}
Since $\pi^{}_{2}(\cL)$ is dense in $H$ and since $W$ is
open, taking the closure on both sides of Eq.~\eqref{EQ111} gives
\[
    W \,\subset\, \overline{ W \cap \pi^{}_{2}(\cL)}
      \,\subset\, \overline{ \phi(P^{}_{\!\varepsilon/2})+
     \phi(F)\vphantom{\cL}} \ts .
\]
Thus, we conclude that $\left| g(y) \right| \leq \varepsilon $
outside the compact set $\overline{ \phi(P_{\varepsilon/2})+ \phi(
  F)}$, which completes the proof.
\end{proof}

\begin{remark}\label{R1.4}\mbox{}
\begin{itemize}
\item[(i)] It is easy to check that under the conditions of Theorem~\ref{T1} we have
\[
\bigcap_{\varepsilon' >0} P^{\infty}_{\!\varepsilon'}(\omega) \, = \, \{ t \in G | T_t \omega = \omega \}  \, = \, \mbox{Per}(\omega) \ts .
\]
Therefore, for all $\varepsilon >0$ we have
\[
P^{\infty}_{\!\varepsilon}(\omega) + \bigcap_{\varepsilon' >0} P^{\infty}_{\!\varepsilon'}(\omega) \,=\, P^{\infty}_{\!\varepsilon}(\omega) \ts .
\]
\item[(ii)] The fact that $\omega$ is a regular measure is essential for the
  discreteness of the $P_\varepsilon$ and thus the construction of the
  cut and project scheme. Note that the formal sum $\delta^{}_{\QQ}$ is
  sup-almost periodic in $\RR$, but not a measure, and the set of almost periods
  does not produce a cut and project scheme.
\item[(iii)] The condition $g \in C_0(H)$ is necessary but not sufficient condition for $\omega$ to be a regular measure. \exend
\end{itemize}
\end{remark}

Combining Theorem~\ref{T1}, Corollary~\ref{capschemecor} and Remark~\ref{R1.4} (i), we obtain
the following result.

\begin{corollary}\label{123123123}
  Let\/ $\omega$ be a discrete regular measure on\/ $G$. If\/ $\omega$ is
  sup-almost periodic, then, for all\/ $0 < \varepsilon < \|
  \omega\|_\infty$, the set\/ $P^{\infty}_{\!\varepsilon}(\omega)$ is a model
  set.\qed
\end{corollary}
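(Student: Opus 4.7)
The plan is to chain the three results advertised in the statement. I would first invoke the $(\mathrm{i}) \Longrightarrow (\mathrm{ii})$ direction of Theorem~\ref{T1}: its proof establishes that the pair $\bigl(\{P^\infty_{\!\varepsilon}(\omega)\}^{}_{0 < \varepsilon < \|\omega\|_\infty},\, L := \langle \vG\rangle\bigr)$ satisfies axioms (A1)--(A5), so the Baake--Moody construction produces a cut and project scheme $(G \times H, \cL)$ inside which $\omega = \omega_g$ for a uniformly continuous $g \in C_0(H)$. This supplies the ambient scheme in which I must realise each $P^\infty_{\!\varepsilon}(\omega)$ as a model set.

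Next, I would apply Corollary~\ref{capschemecor} with $C := \|\omega\|_\infty$, whose hypothesis (A1)--(A4) is immediate from the already verified (A1)--(A5). The corollary then yields, for every $0 < \varepsilon < \|\omega\|_\infty$, that the set
\[
   P^\infty_{\!\varepsilon}(\omega) \,+ \bigcap_{0 < \varepsilon' < \|\omega\|_\infty} \!\! P^\infty_{\!\varepsilon'}(\omega)
\]
is a model set in that scheme (of the form $\oplam(V_\varepsilon)$ for the precompact window $V_\varepsilon \subset H$ produced in the completion). At this stage the result is proved modulo identifying the displayed sum with $P^\infty_{\!\varepsilon}(\omega)$ itself.

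That identification is exactly the content of Remark~\ref{R1.4}(i), which states
\[
   P^\infty_{\!\varepsilon}(\omega) \,+\, \bigcap_{\varepsilon' > 0} P^\infty_{\!\varepsilon'}(\omega) \,=\, P^\infty_{\!\varepsilon}(\omega) \ts .
\]
The only bookkeeping point I foresee is the mismatch of index sets between the two displayed intersections: Corollary~\ref{capschemecor} naturally intersects over $0 < \varepsilon' < \|\omega\|_\infty$, while Remark~\ref{R1.4}(i) intersects over all $\varepsilon' > 0$. Lemma~\ref{L1111} (monotonicity of $P^\infty_{\!\bullet}$) makes the two intersections coincide, so this is a one-line reconciliation rather than a genuine obstacle; concatenating the three steps then finishes the proof.
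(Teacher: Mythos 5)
Your proposal is correct and is precisely the paper's intended argument: the text introduces this corollary with the words ``Combining Theorem~\ref{T1}, Corollary~\ref{capschemecor} and Remark~\ref{R1.4}~(i), we obtain the following result,'' which is exactly the three-step chain you describe. Your reconciliation of the two index sets for the intersection is a valid (and worthwhile) bookkeeping point that the paper leaves implicit.
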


In the same way we can prove the following result.

\begin{lemma}\label{NAPLJ}
  Let\/ $\omega$ be a discrete translation bounded measure on\/
  $G$. If\/ $\omega$ is norm-almost periodic, then there exists a\/ $C>
  0$ so that, for all\/ $0 < \varepsilon < C$, the set\/
  $P^{K}_{\!\varepsilon}(\omega)$ is a model set.
\end{lemma}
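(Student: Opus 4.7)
The strategy is to run the argument of Theorem~\ref{T1} for the family $\{P^K_\varepsilon(\omega)\}$ in place of $\{P^\infty_\varepsilon(\omega)\}$, and then invoke Corollary~\ref{capschemecor}. Setting $C := \|\omega\|^{}_\infty$, I first note $C < \infty$: translation boundedness gives an $M < \infty$ with $|\omega|(x+K) \leq M$ for every $x$, and picking $k_{0}$ in the interior of $K$ yields $|\omega(\{x\})| \leq |\omega|(x-k_{0}+K) \leq M$. I then check axioms {\rm (A1)--(A5)} for $\{P^K_\varepsilon(\omega)\}^{}_{0<\varepsilon<C}$ with the ambient group $L := \langle \supp(\omega) \rangle$.

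Axiom (A1) is immediate from $T_{-t}\omega - \omega = -T_{-t}(T_{t}\omega - \omega)$ together with the translation invariance of $|\cdot|$; axiom (A3) is the triangle inequality combined with translation invariance of $\|\cdot\|^{}_{K}$; axiom (A4) is the definition of norm-almost periodicity. For (A2), Lemma~\ref{LMSNAP} gives $\|\mu\|^{}_{\infty} \leq \|\mu\|^{}_{K}$, hence $P^K_\varepsilon(\omega) \subset P^\infty_\varepsilon(\omega)$, so the contradiction argument from the proof of Theorem~\ref{T1} applies verbatim: if $P^\infty_\varepsilon$ for some $\varepsilon < C$ contained infinitely many points in a compact $K'$, then picking $y \in \supp(\omega)$ with $|\omega(\{y\})| > \varepsilon$ would force $|\omega|(y+K') = \infty$, contradicting translation boundedness. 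The same $y$ handles (A5): for $t \in P^K_\varepsilon$ one has $|\omega(\{y+t\})| \geq |\omega(\{y\})| - \varepsilon > 0$, so $t \in \supp(\omega) - y \subset L$.

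With {\rm (A1)--(A5)} in hand, Corollary~\ref{capschemecor} tells us that $P^K_\varepsilon(\omega) + \bigcap_{0 < \varepsilon' < C} P^K_{\varepsilon'}(\omega)$ is a model set for every $0 < \varepsilon < C$. To upgrade this to the statement that $P^K_\varepsilon(\omega)$ itself is a model set, it suffices to show the kernel intersection is absorbed. The intersection equals the period group $\mbox{Per}(\omega) := \{t \in G : T_t \omega = \omega\}$: if $\|T_t \omega - \omega\|^{}_{K} = 0$, then $|T_t\omega - \omega|(x+K) = 0$ for every $x \in G$, and $\sigma$-compactness plus the fact that translates of $K$ cover $G$ forces the total variation measure of $T_t\omega - \omega$ to vanish. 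Moreover, for $s \in P^K_\varepsilon$ and $p \in \mbox{Per}(\omega)$ we have $T_{s+p}\omega - \omega = T_{s}\omega - \omega$, so $s+p \in P^K_\varepsilon$. Hence $P^K_\varepsilon(\omega) + \mbox{Per}(\omega) = P^K_\varepsilon(\omega)$, which completes the proof.

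The only genuinely substantive step is the verification of (A2), which I throw back to the sup-almost periodic case via the inequality $\|\cdot\|^{}_{\infty} \leq \|\cdot\|^{}_{K}$; everything else is formal, so that the proof is really a clean reduction to Theorem~\ref{T1} plus the observation that $P^K_\varepsilon$ is stable under adding periods.
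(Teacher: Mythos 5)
Your proposal is correct and follows essentially the same route as the paper: verify (A1)--(A4) for $\{P^{K}_{\!\varepsilon}(\omega)\}^{}_{0<\varepsilon<\|\omega\|^{}_\infty}$, reducing (A2) to the sup-almost periodic case via $\|\cdot\|^{}_{\infty}\leq\|\cdot\|^{}_{K}$ and the local-finiteness argument from Theorem~\ref{T1}, then invoke Corollary~\ref{capschemecor}. You are in fact slightly more careful than the paper, which leaves implicit the final step that $\bigcap_{\varepsilon'}P^{K}_{\varepsilon'}(\omega)=\mathrm{Per}(\omega)$ is absorbed by $P^{K}_{\varepsilon}(\omega)$ (needed to pass from ``$P^{K}_{\varepsilon}+\bigcap P^{K}_{\varepsilon'}$ is a model set'' to ``$P^{K}_{\varepsilon}$ is a model set''); your explicit verification of that point, and of (A5), is a welcome but inessential addition.
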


\begin{proof}
  We show that $\{ P^{K}_{\!\varepsilon}(\omega) \}^{}_{0 < \varepsilon
    < \| \omega\|^{}_\infty}$ satisfies (A1)--(A4), and thus, by
  Corollary~\ref{capschemecor}, the sets
  $P^{K}_{\!\varepsilon}(\omega)$ are all model sets.

  Conditions (A1) and (A3) are obvious, while (A4) is equivalent to
  norm-almost periodicity.

  Since $\omega$ is norm-almost periodic, it is sup-almost periodic,
  and we showed in Theorem~\ref{T1} that, for all $0 < \varepsilon <
  \| \omega \|_\infty$, the set $P^{\infty}_{\!\varepsilon}(\omega)$
  is locally finite. Since $P^{K}_{\!\varepsilon}(\omega) \subset
  P^{\infty}_{\!\varepsilon}(\omega)$, condition (A2) also holds.
\end{proof}

Note that Lemma \ref{NAPLJ} is not necessarily true for non-discrete
measures. The following Lemma shows that if $f \in C_{\mathsf{u}}(G)$ is a almost periodic
function, then the measure $\mu= f \dd \theta_G$ is a norm-almost
periodic measure. Note that in this case, but the set of norm-almost periods of $\mu$ is not
locally finite.

\begin{lemma}
  Let\/ $f \in C_{\mathsf{u}}(G)$ and let\/ $\omega=f d \theta_G$. Then\/ $\omega$ is norm almost periodic if and only if\/ $f$ is an almost periodic function.
\end{lemma}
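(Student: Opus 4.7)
The plan is to translate the statement into a comparison between two norms: the sup-norm on $C_{\mathsf{u}}(G)$ that controls almost periodicity of $f$, and the norm $\|\cdot\|^{}_{K}$ that controls norm-almost periodicity of $\omega = f\dd\theta_G$. The starting observation is that, since $\theta_G$ is translation invariant, the total variation $|T_t\omega-\omega|$ has density $|T_t f - f|$ with respect to $\theta_G$, so that
\[
\|T_t\omega-\omega\|^{}_{K} \,=\, \sup_{x\in G}\int_{K}\bigl|f(x+u-t)-f(x+u)\bigr|\dd\theta_G(u)\ts .
\]

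For the implication ``$f$ almost periodic $\Longrightarrow$ $\omega$ norm-almost periodic'', I would estimate directly from the display above:
\[
\|T_t\omega-\omega\|^{}_{K} \,\leq\, \theta_G(K)\,\|T_t f - f\|^{}_{\infty}\ts .
\]
Hence every $\varepsilon$-almost period of $f$ belongs to $P^{K}_{\varepsilon\theta_G(K)}(\omega)$, which is therefore relatively dense.

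The converse direction is where the real work lies, and it is the step I expect to be the main obstacle. The issue is that $\|T_t\omega-\omega\|^{}_{K}$ is an integral quantity, while $\|T_tf-f\|^{}_{\infty}$ is a pointwise quantity, so one cannot bound the latter by the former without exploiting the uniform continuity of $f$. My plan is: fix $\varepsilon>0$, use $f\in C_{\mathsf{u}}(G)$ to choose a compact neighbourhood $V$ of $0$ so that $|f(y+s)-f(y)|<\varepsilon$ for all $y\in G$ and $s\in V$, and, invoking the fact that all norms $\|\cdot\|^{}_{K}$ are equivalent, assume without loss of generality that $V\subset K$. For every $x\in G$ and $s\in V$ the triangle inequality then gives
\[
\bigl|f(x-t)-f(x)\bigr| \,\leq\, \bigl|f(x-t+s)-f(x+s)\bigr| + 2\varepsilon\ts .
\]
Integrating over $s\in V$ and taking the supremum over $x$ yields
\[
\theta_G(V)\,\bigl(\|T_tf-f\|^{}_{\infty}-2\varepsilon\bigr) \,\leq\, \|T_t\omega-\omega\|^{}_{K}\ts .
\]

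From this it is straightforward to finish: given $\varepsilon'>0$, set $\varepsilon:=\varepsilon'/3$, choose $V$ as above, and let $\eta:=\varepsilon'\theta_G(V)/3$. Norm-almost periodicity of $\omega$ provides a relatively dense set $P^{K}_{\!\eta}(\omega)$ on which the previous inequality gives $\|T_tf-f\|^{}_{\infty}<\varepsilon'$, so $f$ is almost periodic. The only delicate point in the argument is the interplay between the choice of $K$ (fixed at the outset) and the neighbourhood $V$ coming from uniform continuity of $f$; this is precisely why we need the equivalence of the norms $\|\cdot\|^{}_{K}$ for different $K$, which is stated right after the definition of norm-almost periodicity.
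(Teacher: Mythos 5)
Your proof is correct, and in one direction it takes a genuinely different route from the paper's. The direction ``$f$ almost periodic $\Longrightarrow$ $\omega$ norm-almost periodic'' is the same computation as in the paper: the estimate $\|T_t\omega-\omega\|^{}_K \le \theta_G(K)\,\|T_tf-f\|^{}_\infty$ shows that every $\varepsilon$-almost period of $f$ is an $\varepsilon\ts\theta_G(K)$-norm-almost period of $\omega$. For the converse, however, the paper makes no direct estimate at all: it invokes the chain ``norm-almost periodic $\Rightarrow$ strongly almost periodic'' (Proposition~\ref{BMSNAP}, i.e.\ \cite[Lemma~7]{S-BM}) together with the fact, quoted from \cite{S-MoSt}, that a strongly almost periodic measure of the form $f\ts\theta_G$ with $f\in C_{\mathsf{u}}(G)$ forces $f$ to be an almost periodic function. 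You replace this citation by a self-contained quantitative argument: averaging the pointwise difference $|f(x-t)-f(x)|$ over a small neighbourhood $V$ of $0$ supplied by the uniform continuity of $f$, you get $\theta_G(V)\bigl(\|T_tf-f\|^{}_\infty-2\varepsilon\bigr)\le\|T_t\omega-\omega\|^{}_K$, which combined with the easy direction yields an explicit two-sided comparison between the almost periods of $f$ and the norm-almost periods of $\omega$. Your version is more elementary (no detour through strong almost periodicity or the companion article) and somewhat more informative, since it directly relates the two families of almost-period sets; the paper's version is shorter and reuses machinery already in place. The one delicate point you flag --- arranging $V\subset K$, or equivalently passing to the norm $\|\cdot\|^{}_V$ --- is indeed legitimately handled by the equivalence of the norms $\|\cdot\|^{}_K$ for different compact $K$ with nonempty interior, as recorded in the remark following the definition of norm-almost periodicity.
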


\begin{proof}
$\Longrightarrow$ is immediate: since $\omega$ is a norm almost periodic measure, it is a strongly almost periodic measure. Therefore, $f \theta_G$ is a strongly almost periodic measure. As $f \in C_{\mathsf{u}}(G)$, it follows \cite{S-MoSt}, that $f$ is an almost periodic function.

$\Longleftarrow$ Let $K$ be a compact set with non-empty interior. Now, if $t$ is an $\varepsilon$ period of $f$, we have
\[
\|f -T^tf \|_\infty \, < \, \varepsilon \ts .
\]
Therefore
\[
\begin{split}
\| \omega -T^t \omega \|_K \, & = \, \sup_{x \in G}  \left| \omega -T^t \omega \right| (x+K) \\
&= \,   \sup_{x \in G} \int_{x+K}  \left| f(y) -f(-t+y) \right| dy  \\
&\leq \,   \sup_{x \in G} \int_{x+K} \varepsilon dy  =\varepsilon \theta_G(K) \ts .
\end{split}
\]

Therefore, every $\varepsilon$ of $f$ is an $\varepsilon \theta_G(K)$-norm almost period of $\omega$. As $\theta_G(K)$ is a fixed constant, which doesn't depend on $\varepsilon$, the conclusion follows.
\end{proof}

\section[Continuous weighted model combs]{The
characterisation of continuous weighted
model combs with compactly supported associated function}\label{Continuous weighted model combs}

In this section, we characterise those weighted model combs which have a compactly supported continuous associated function $g \in C_{\mathsf{c}}(H)$.

The following lemma will be an important tool for many of the results
we will prove next. In the particular case $G=\RR^d$, this result was proven in \cite[Lemma 3.10]{S-NS1}, and the same proof works in the general case of a LCAG $G$. We include the proof below for completeness.

\begin{lemma}\label{NS1x}
  Let\/ $B \subset A$. If\/ $B$ is relatively dense and\/ $A$ has finite
  local complexity, then there exists a finite set\/ $F$ so that
\[
    A \,\subset\, B+F \ts.
\]
\end{lemma}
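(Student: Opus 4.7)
The plan is to unpack the two hypotheses and combine them mechanically. Since $B$ is relatively dense in $G$, there exists a compact set $K \subset G$ with $B + K = G$. The idea is that every element of $A$ differs from some element of $B$ by a bounded amount, and finite local complexity of $A$ forces only finitely many such differences to occur.

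More precisely, I would proceed as follows. Fix any $a \in A$. By relatively denseness of $B$, write $a = b + k$ with $b \in B$ and $k \in K$. Since $B \subset A$, both $a$ and $b$ belong to $A$, hence
\[
    k \, = \, a - b \,\in\, (A - A) \cap K \ts .
\]
Now $A$ has finite local complexity, which means $A - A$ is locally finite in $G$. Intersecting a locally finite set with the compact set $K$ yields a finite set, so $F := (A - A) \cap K$ is finite. By construction, every $a \in A$ can be expressed as $a = b + k$ with $b \in B$ and $k \in F$, i.e.\ $A \subset B + F$.

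There is really no obstacle here: the argument is essentially one line once one observes that the ``shifts'' $a - b$ needed to express elements of $A$ from elements of $B$ lie in the compact set $K$, and therefore in the finite set $(A-A) \cap K$ thanks to FLC. The only minor subtlety is to remember to use $B \subset A$ so that the differences land inside $A-A$ and the FLC hypothesis applies.
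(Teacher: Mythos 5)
Your proof is correct and follows essentially the same route as the paper: decompose $a=b+k$ using the compact set $K$ from relative denseness of $B$, observe that $k=a-b\in (A-A)\cap K$ because $B\subset A$, and invoke finite local complexity to conclude that $F:=(A-A)\cap K$ is finite. No gaps.
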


\begin{proof}
  Let $K$ be so that $B+K=G$. Let $F:= K \cap A-A$. We show that $A
  \subset B +F$.

  Let $x \in A$. Then $x \in B+K$, thus $x=y+k$ with $y \in B$, $k \in
  K$. Then, $k=x-y \in (A-B) \cap K \subset F$.

  Since $x=y+k$ with $y \in B$ and $k \in F$, we are done.
\end{proof}

We already proved in the previous section that any sup-almost periodic
discrete measure comes from a cut and project scheme and a uniformly
continuous function. We can now use Lemma~\ref{NS1x} to show that, if
the measure is supported on a set with finite local complexity, then
the associated function $g$ on $H$ is compactly supported.
Thus, we obtain the following result.

\begin{theorem}\label{T2}
  Let\/ $\omega = \sum_{x \in \vG} \omega(x)\, \delta_x$ be a
  translation bounded regular measure, with\/ $\omega(x) \neq 0 \, \forall x \in \vG$.  Then, the following
  statements are equivalent.
\begin{itemize}\itemsep=1pt
\item[(i)] $\omega$ is norm-almost periodic and\/ $\vG$ has finite local
  complexity.
\item[(ii)] $\omega$ is norm-almost periodic and\/ $\vG-\vG$ is uniformly
  discrete.
\item[(iii)] $\omega$ is strongly almost periodic and\/ $\vG-\vG$ is
  uniformly discrete.
\item[(iv)]$\omega$ is sup-almost periodic and\/ $\vG-\vG$ is uniformly
  discrete.
\item[(v)]$\omega$ is sup-almost periodic and\/ $\vG$ has finite local
  complexity.
\item[(vi)]There exists a cut and project scheme\/ $(G \times H,
  \cL)$ and a function\/ $f \in C_{\mathsf{c}}(H)$ so that
\[
    \omega \, =\, \omega_f\, = \!
     \sum_{(x,x^\star) \in \cL}\! f(x ^\star)\,\delta_x \ts .
\]
\end{itemize}
\end{theorem}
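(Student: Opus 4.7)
The plan is to organise the six conditions into a cycle of implications with (vi) as the hub, relying on the tools already established. Many equivalences are essentially free: once one knows $\vG-\vG$ is uniformly discrete, Lemma~\ref{MSNAP}(ii) makes norm-almost and sup-almost periodicity equivalent, and Proposition~\ref{BMSNAP}(ii) makes norm-almost and strongly almost periodicity equivalent, so (ii), (iii), (iv) are immediately equivalent. The trivial implications $(ii)\Rightarrow(i)$, $(iv)\Rightarrow(v)$ (uniform discreteness of $\vG-\vG$ forces finite local complexity of $\vG$) together with $(i)\Rightarrow(v)$ via Lemma~\ref{MSNAP}(i) (norm implies sup) reduce the task to the two implications $(v)\Rightarrow(vi)$ and $(vi)\Rightarrow(ii)$.

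For $(vi)\Rightarrow(ii)$, suppose $\omega=\omega_f$ with $f\in C_{\mathsf{c}}(H)$. Each $x\in\vG$ satisfies $0\neq\omega(\{x\})=f(x^\star)$, so $x^\star\in\supp(f)$ and $\vG\subset\oplam(\supp f)$. Since $\supp(f)$ is compact, $\oplam(\supp f)$ is a model set and in particular $\vG-\vG$ is uniformly discrete. On the other hand $f\in C_{\mathsf{u}}(H)\cap C_0(H)$, so the easy direction of Theorem~\ref{T1} gives that $\omega_f$ is sup-almost periodic; Lemma~\ref{MSNAP}(ii) then upgrades this to norm-almost periodic, yielding (ii).

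The heart of the proof is $(v)\Rightarrow(vi)$. Assuming $\omega$ is sup-almost periodic with $\vG$ of finite local complexity, Theorem~\ref{T1} provides a cut and project scheme $(G\times H,\cL)$ and a uniformly continuous $g\in C_0(H)$ with $\omega=\omega_g$; what remains is to show that $g$ has compact support. Choose $\varepsilon>0$ small enough that some $y\in\vG$ satisfies $|\omega(\{y\})|>\varepsilon$ and write $P_\varepsilon:=P^\infty_{\!\varepsilon}(\omega)$. For each $t\in P_\varepsilon$ the inequality $|\omega(\{t+y\})-\omega(\{y\})|<\varepsilon$ forces $t+y\in\vG$, so $y+P_\varepsilon\subset\vG$. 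Because $y+P_\varepsilon$ is relatively dense and $\vG$ has finite local complexity, Lemma~\ref{NS1x} produces a finite set $F\subset G$ with $\vG\subset y+P_\varepsilon+F$. Applying the star map gives $\phi(\vG)\subset\phi(y)+\phi(P_\varepsilon)+\phi(F)\subset\phi(y)+V_\varepsilon+\phi(F)$, a relatively compact subset of $H$. Density of $\phi(L)$ in $H$ together with continuity of $g$ forces $\{y\in H\mid g(y)\neq 0\}\subset\overline{\phi(\vG)}$, which is compact, so $g\in C_{\mathsf{c}}(H)$ as required.

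The main obstacle is the compact-support step in $(v)\Rightarrow(vi)$: it is exactly there that finite local complexity of $\vG$ must be used to convert the relatively dense almost-period set $P_\varepsilon$ into a finite cover of $\vG$, via Lemma~\ref{NS1x}, so that the star image $\phi(\vG)$ is confined to a precompact subset of $H$. The rest of the argument is bookkeeping that recombines Theorem~\ref{T1}, Lemma~\ref{MSNAP} and Proposition~\ref{BMSNAP}, together with the standard fact that $\oplam$ of a compact set is uniformly discrete, to close the cycle.
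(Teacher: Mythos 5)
Your proposal is correct and follows essentially the same route as the paper: the same reduction of (ii), (iii), (iv) to one another via Proposition~\ref{BMSNAP} and Lemma~\ref{MSNAP}, the same collapse of the remaining work to $(v)\Rightarrow(vi)$ and $(vi)\Rightarrow(iv)$ (you target (ii) instead of (iv), which is already equivalent), and the same use of Theorem~\ref{T1} plus Lemma~\ref{NS1x} to force compact support of the internal function. The only cosmetic difference is in $(v)\Rightarrow(vi)$: the paper feeds the relatively dense set $\oplam(U)$ with $U=\{y\in H\mid |g(y)|>\varepsilon\}$ into Lemma~\ref{NS1x}, whereas you use the translate $y+P^{\infty}_{\varepsilon}(\omega)\subset\vG$ of the almost-period set; both confine $\supp(g)$ to a compact set of the form $\overline{V_\varepsilon}+\phi(F)$ up to a translate.
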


\begin{proof}
  The equivalence $\text{(ii)} \Longleftrightarrow \text{(iii)}$
  follows from Proposition~\ref{BMSNAP}, while $\text{(ii)}
  \Longleftrightarrow \text{(iv)}$ follows from Lemma~\ref{MSNAP}.

  The implication $\text{(ii)} \Longrightarrow \text{(i)}$ is obvious,
  while $\text{(i)} \Longleftrightarrow \text{(v)}$ follows also from
  Lemma~\ref{MSNAP}. To complete the proof, we prove $\text{(v)}
  \Longrightarrow \text{(vi)} \Longrightarrow \text{(iv)}$.

  The implication $\text{(vi)} \Longrightarrow \text{(iv)}$ is
immediate. The measure $\omega$ is translation bounded, thus regular,
and applying Theorem~\ref{T1} we get that $\omega=\omega_f$ is
sup-almost periodic.

Let $W$ be a compact set in $H$ so that $\supp(f) \subset W$. Then, as $\omega(x) \neq 0 \, \forall x \in \vG$ we have
$\vG \subset \oplam(W)$ and $\vG- \vG \subset \oplam(W-W)$. Since $W-W$ is
compact, it follows that $\vG- \vG$ is uniformly discrete.

We now prove that $\text{(v)} \Longrightarrow \text{(vi)}$.  By
Theorem~\ref{T1}, there exists a cut and project scheme $(G \times H ,
\cL)$ and a function $f \in C_0(H)$ so that $\omega=
\omega_f$. We will show that $f$ is compactly supported.

If $f \equiv 0$, we are done. Otherwise, pick some $0 < \varepsilon <
\|f\|_\infty$. Since $f \in C_0(H)$, the set
\[
   U \,:=\, \bigl\{ y \in H \mid \left| f(y) \right| >
            \varepsilon \bigr\}
\]
is open, non-empty and precompact. Thus the set $\oplam(U)$ is relatively
dense, and it is contained in $\vG$.

Since $\vG$ has finite local complexity, by Lemma~\ref{NS1x}, there
exists a finite set $F$ so that
\[
    \vG \,\subset\, \oplam(U) +F \ts .
\]
Moreover, since both $\oplam(U)$ and $\vG$ are contained in the group
$\pi^{}_{1}(\cL)$, we can choose $F$ so that $F \subset
\pi^{}_{1}(\cL)$. Let $F^\star= \{x^\star \mid (x,x^\star)
\in \cL \text{ and } x \in F \}$.

Define
\[
   V\, := \, \{ y \in H \mid f(y) \neq 0 \} \ts .
\]
We need to show that $V$ is precompact. Since
$\pi^{}_{2}(\cL)$ is dense in $H$ and $V$ is open, it is
sufficient to show that $\overline{\pi^{}_{2}(\cL) \cap V}$ is
compact.

Let $y^\star \in \pi^{}_{2}(\cL) \cap V$. Then, there exists
an $y \in G$ so that $(y,y^\star) \in \cL$.
By the definition of $\omega$, we have $\omega(y)=f(y^\star) \neq
0$. Hence $y \in \vG \subset \oplam(U) +F$, and therefore $y^\star \in U +
F^\star$.

This shows that $\pi^{}_{2}(\cL) \cap V \subset U + F^{\star}$, and
hence
\[
  \supp(f) \,= \,\overline{\pi^{}_{2}(\cL) \cap V}
           \,\subset\, \overline{U}+ F^{\star} \ts ,
\]
which is compact since $U$ is precompact and $F^\star$ is finite.
\end{proof}

We complete the section by looking at one example, which emphasizes why we need the freedom of choice of $L$ in (A5).
\begin{example}
Let
\[
    \omega \, := \, \delta^{}_\ZZ - \myfrac{1}{2}\, \delta^{}_{\pi + \ZZ} \ts .
\]
In this case, $C(\omega) =1$ and
\[
  P_{\!\varepsilon} (\omega) \, =\,  \ZZ
\]
for all $0 < \varepsilon < 1$.

\emph{First Choice}. Let us choose $L = \langle \bigcup_{0 <
  \varepsilon < C} P_{\!\varepsilon} \rangle$, the smallest
possibility for $L$.  Then, our cut and project scheme is simply $(\RR
\times \{ 0 \} , \ZZ \times \{ 0 \} )$.

While we get a cut and project scheme, $\omega$ does not come from a
continuous function $g \in C_{\mathsf{u}}(H)$, it is only a linear
combination of translates of such weighted combs (one can show that
this is always the case for this choice).

\emph{Second Choice}. Let us choose $L = G$, the largest possibility
for $L$.  Then $H = \RR/\ZZ$ equipped with the discrete topology, and
\[
   \cL \,=\, \{ (t, t+ \ZZ) \mid t \in \RR \} \ts .
\]
Let $g\! :\, \RR/\ZZ \longrightarrow \CC$ be defined by
\[
   g(0+ \ZZ) \,=\, 1 \ts ,\quad
   g(\pi + \ZZ) \,=\, - \myfrac{1}{2} \quad\text{and}\quad
   g(x+ \ZZ) = 0 \, \text{ otherwise.}
\]
Then, $g \in C_{\mathsf{c}}(H)$ and $\omega =\omega^{}_{g}$.

While this choice of $G$ always works, let us observe that our cut and
project scheme is too big: $H$ is an uncountable discrete group, and
we only use two of its elements. It should be sufficient to replace
$H$ by the subgroup generated by these two elements.

\emph{Third Choice}. Let us choose $L= \langle \ZZ \cup (\pi
+Z)\rangle$, the group generated by the support of $\omega$. Then $L=
\ZZ + \pi\ZZ$ and $H = \pi\ZZ$, the free Abelian group with one
generator. Again $H$ is equipped with the discrete topology, and
\[
  \cL \, =\,
  \{ (a + b\pi, b \pi) \mid a, b  \in \ZZ \} \ts .
\]
Let $g\! :\, \pi \ZZ \longrightarrow \CC$ be defined by
\[
   g(0) \, =\,  1 \ts , \quad
   g(\pi) \, =\,  - \myfrac{1}{2}\quad\text{and}\quad
   g(x) \,=\, 0 \, \text{ otherwise.}
\]
Then, $g \in C_{\mathsf{c}}(H)$ and $\omega =\omega^{}_{g}$.
\exend
\end{example}

\section{$\varepsilon$-dual characters}\label{CHARMS}

Let $\vL \subset G$ be a relatively dense set. As usual, let
\[
    \vL^\varepsilon \, :=\,
     \bigl\{ \chi \in \widehat{G} \mid
     \left| \chi(x)-1 \right| \leq \varepsilon
     \text{ for all } x \in \vL \bigr\} .
\]
$\vL^\varepsilon$ is the set of $\varepsilon$-dual characters of
$\vL$.  In this section, we will study the $\varepsilon$-dual sets of
a Delone set $\vL \subset G$, where $G$ is an arbitrary
$\sigma$-compact LCAG. We will show that, if all the
$\varepsilon$-dual sets are relatively dense for $0 < \varepsilon <2$, then the axioms
(A1)--(A4) hold, and we thus the Baake--Moody construction yields a cut and project scheme in
which all $\vL^\varepsilon$ are model sets. This way, we will be
able to prove that most of the equivalent definitions of Meyer sets in
$\RR^d$ are still equivalent in the more general context of
$\sigma$-compact LCAGs.

It looks surprising that the theory we built at the beginning of the Chapter, with
the model of almost periods of sup-almost periodic measures in mind,
can also be applied to the $\varepsilon$-dual sets, but there is a
simple reason why this happens.

For a point set $\vL$ we denote $\Delta:= \vL - \vL$. It is easy to
see (compare for example Lemma~\ref{DMS3}) that
\[
    \vL^\varepsilon \, \subset\,   \Delta^{(2\varepsilon)}  .
\]
Moreover, as we will see in Section \ref{diffMS}, there exists a
constant $C>0$ so that $\Delta^\varepsilon$ are $C\varepsilon$-almost
periods of the pure point diffraction measure
$(\widehat{\gamma})_{\mathsf{pp}}$ of $\vL$. Thus, if
$\vL^\varepsilon$ is relatively dense for each $\varepsilon >0$, then
$(\widehat{\gamma})_{\mathsf{pp}}$ is sup-almost periodic, and the
sets $\vL^{\varepsilon}$ are subsets of the model sets given by the
$\varepsilon$ sup-almost periods of the measure.

In the first part of the section, we will show that, for a relative
dense set $\vL \subset G$, the family $\{ \vL^\varepsilon \}^{}_{0 <
  \varepsilon < \frac{\sqrt{3}}{2}}$ satisfies the axioms
(A1)--(A3). Both (A1) and (A3) are immediate consequences of the
definition of $\vL^\varepsilon$, and even hold if $\vL$ is not
relatively dense, while (A2) follows from the relative denseness of
$\vL$.

\begin{lemma}\label{a1a1a3}
  Let\/ $\vL \subset G$. Then, the family\/ $\{ \vL^\varepsilon
  \}^{}_{ 0 < \varepsilon <2 }$ satisfies\/ {\rm (A1)} and\/ {\rm
    (A3)} in\/ $\widehat{G}$.
\end{lemma}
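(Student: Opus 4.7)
The plan is to verify both axioms by a direct elementary computation on characters; neither (A1) nor (A3) requires any hypothesis on $\vL$ beyond being a subset of $G$, since they reduce to basic properties of the map $z \mapsto |z-1|$ on the unit circle.

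For (A1), I would note that the trivial character $\mathbf{1} \in \widehat{G}$, which is the neutral element of $\widehat{G}$ viewed additively, satisfies $|\mathbf{1}(x)-1|=0\leq\varepsilon$ for every $\varepsilon>0$ and every $x\in\vL$, so $0\in\vL^\varepsilon$. For the symmetry $\vL^\varepsilon=-\vL^\varepsilon$, I would observe that the inverse of $\chi$ in $\widehat{G}$ is $\overline{\chi}$, and for each $x\in\vL$,
\[
   \bigl|\overline{\chi(x)}-1\bigr| \,=\, \bigl|\overline{\chi(x)-1}\bigr| \,=\, |\chi(x)-1| \,\leq\, \varepsilon \ts,
\]
so $\chi\in\vL^\varepsilon$ implies $-\chi=\overline{\chi}\in\vL^\varepsilon$.

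For (A3), given $\chi\in\vL^\varepsilon$ and $\psi\in\vL^{\varepsilon'}$ with $\varepsilon+\varepsilon'<2$, I would fix an arbitrary $x\in\vL$ and apply the triangle inequality after adding and subtracting $\chi(x)$:
\[
   \bigl|(\chi\psi)(x)-1\bigr| \,=\, \bigl|\chi(x)\psi(x)-\chi(x)+\chi(x)-1\bigr|
   \,\leq\, |\chi(x)|\,|\psi(x)-1|+|\chi(x)-1| \,\leq\, \varepsilon'+\varepsilon \ts,
\]
where the key step is $|\chi(x)|=1$ since characters take values in the unit circle. Since this holds for every $x\in\vL$, we conclude $\chi+\psi\in\vL^{\varepsilon+\varepsilon'}$, which is exactly (A3).

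There is no real obstacle for this Lemma; the statement is essentially a bookkeeping exercise showing that the definition of $\vL^\varepsilon$ plays well with the group structure on $\widehat{G}$. The delicate axioms are (A2) and (A4), whose verification will genuinely require the relative denseness of $\vL$ and careful work on $\widehat{G}$; I expect those to be handled in subsequent lemmas of this section.
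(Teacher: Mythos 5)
Your proof is correct and follows essentially the same route as the paper: the paper declares (A1) obvious (you simply spell out the details with the trivial character and complex conjugation), and for (A3) the paper performs exactly your add-and-subtract of $\chi(x)$ followed by the triangle inequality and $|\chi(x)|=1$. Nothing to change.
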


\begin{proof}
Property (A1) is obvious.

To verify (A3), let $\varepsilon, \varepsilon'>0$ and let $\chi \in
\vL^\varepsilon$, $\psi \in \vL^{\varepsilon'}$. Then, for all $x \in
\vL$, we have
\[
\begin{split}
   \left| \chi(x)\psi(x) -1 \right|
    \,& =\,  \left| \chi(x)\psi(x) -\chi(x)+\chi(x)-1 \right| \\
    &\leq\, \left| \chi(x)\psi(x) -\chi(x) \right|+
      \left| \chi(x)-1 \right|
    \,\leq\, \varepsilon+\varepsilon' \ts ,
\end{split}
\]
completing the proof.
\end{proof}

To prove (A2), we first need to prove a simple lemma, which is a
pretty standard result in character theory.

\begin{lemma}\label{a2a2a3}
  Let\/ $H$ be a subgroup of\/ $U(1) \subset \CC^*$. If\/ $H \neq \{ 1
  \}$, then there exists some \/ $z \in H$ with $\left| z -1 \right| \geq
  \sqrt{3}$.
\end{lemma}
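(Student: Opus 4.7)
The plan is elementary trigonometry, once the modulus condition is translated into an angular one. I would start from the identity $|e^{i\theta}-1|^2 = 2(1-\cos\theta)$, so that $|z-1|\geq\sqrt{3}$ for $z=e^{i\theta}$ is equivalent to $\cos\theta\leq -1/2$, i.e.\ the argument of $z$ (taken in $(-\pi,\pi]$) satisfies $|\theta|\geq 2\pi/3$. The task thus reduces to producing an element of $H$ whose argument lies in the closed arc $[2\pi/3,\,4\pi/3]$ of the circle.

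The idea is to pick any $z_0=e^{i\theta_0}\in H\setminus\{1\}$ and walk along its powers $z_0^k\in H$ until one lands in the target arc. Replacing $z_0$ by $z_0^{-1}\in H$ if necessary, I may assume $\theta_0\in(0,\pi]$. If $\theta_0\geq 2\pi/3$, then $z_0$ itself is the required element. Otherwise $\theta_0\in(0,2\pi/3)$, and I take $k$ to be the smallest positive integer with $k\theta_0\geq 2\pi/3$. Minimality yields $(k-1)\theta_0 < 2\pi/3$, and hence
\[
 k\theta_0 \,<\, \myfrac{2\pi}{3} + \theta_0 \,<\, \myfrac{4\pi}{3}\ts ,
\]
so $k\theta_0\in[2\pi/3,\,4\pi/3)$ already lies inside $[0,2\pi)$ without any reduction modulo $2\pi$, and $z=z_0^k\in H$ does the job.

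I do not anticipate any genuine obstacle here: the geometric content is simply that, moving around the unit circle in equal steps strictly smaller than $2\pi/3$, one cannot jump across an arc of length $2\pi/3$ and hence must eventually land inside it. The bound $\sqrt{3}$ is sharp, being attained by a primitive cube root of unity; this is the source of the constant $\sqrt{3}/2$ that appears in the Meyer-set characterisation stated in the introduction.
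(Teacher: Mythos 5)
Your proof is correct, and it takes a genuinely different route from the paper's. The paper fixes one element $\zeta \neq 1$ and argues by cases on its order: if $\ord(\zeta)=\infty$ it invokes Dirichlet's theorem to get density of $\langle \zeta\rangle$ in $U(1)$ (hence elements near $-1$); if $\ord(\zeta)=2n$ it produces $-1=\zeta^n\in H$ directly; and if $\ord(\zeta)=2n+1$ it exhibits the power $e^{2n\pi i/(2n+1)}$ of a primitive root and runs the same $|e^{i\theta}-1|^2=2-2\cos\theta$ computation you use. Your argument replaces this trichotomy with a single uniform step: normalise the argument to $(0,\pi]$ and take the first multiple $k\theta_0$ that reaches $2\pi/3$; minimality caps it below $4\pi/3$, so no reduction modulo $2\pi$ is needed and the power lands in the arc where $\cos\theta\le -\tfrac12$. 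This is more elementary (no appeal to equidistribution of irrational rotations) and handles all three of the paper's cases at once; what it gives up is the extra structural information the paper's case analysis yields in passing, namely that in the infinite-order and even-order cases one actually finds elements with $|z-1|$ equal or arbitrarily close to $2$ (which underlies the paper's follow-up remark that the inequality is strict unless $H$ is exactly the group of third roots of unity). Your closing observation that the constant $\sqrt{3}$ is attained by a primitive cube root of unity matches the paper's remark and correctly identifies the extremal case.
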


\begin{proof}
  Since $H \neq \{ 1 \}$, we can find an $\zeta \in H$ so that
  $\zeta \neq 1$. We have to consider three cases.

  \emph{First case:} $\ord(\zeta) = \infty$, which means that $\zeta$ is not
  a root of unity. Then, by Dirichlet's theorem, $\zeta$ generates a
  subgroup which is dense in $U(1)$.

  \emph{Second case:} $\ord(\zeta) = 2n$ for some $n \geq 1$. Then, $-1 =
  \zeta^{n} \in H$.

  \emph{Third case:} $\ord(\zeta) = 2n+1$ for some $n \geq 1$. Then, $\zeta$
  is a primitive root of unity, and thus $\ee^{\frac{2 \pi \ii}{2n+1}} \in
  H \Longrightarrow \ee^{\frac{2 n \pi \ii}{2n+1}} \in H$. Then,
  $\frac{2n\pi}{2n+1}=1- \frac{\pi}{2n+1} \geq 1-
  \frac{\pi}{3}=\frac{2\pi}{3}$. Using the fact that $-\cos(x)$ is
  increasing on the interval $[ -\frac{\pi}{2} , \pi ]$, we get
\[
\begin{split}
    \Bigl| \ee^{\frac{2 n \pi \ii}{2n+1}} -1 \Bigr|^2 \,&=\,
    \Bigl[\cos \Bigl(\myfrac{2 n \pi }{2n+1}\Bigr) -1\Bigr]^2 +
    \Bigl[\sin \Bigl(\myfrac{2 n \pi }{2n+1}\Bigr)\Bigr]^2\\[1mm]
     &=\, 2-2 \cos \Bigl(\myfrac{2 n \pi }{2n+1}\Bigr)
     \,\geq\, 2-2 \cos \Bigl(\myfrac{2  \pi }{3}\Bigr) \, =\, 3 \ts .
\end{split}
\]
\end{proof}
Let us also note that unless $H$ consists exactly of the third roots of unity, we can find some $z \in H$ such that the inequality is strict.

We now show that, for all $0< \varepsilon < \frac{\sqrt{3}}{2}$, if one
of $\vL$ or $\vL^\varepsilon$ is relatively dense, the other one is
uniformly discrete. The proof is very similar to the one in
\cite{S-MOO}.

\begin{proposition}\label{EDProp}
  Let\/ $\vL \subset G$.
\begin{itemize}\itemsep=1pt
\item[(i)] If\/ $\vL$ is relatively dense, $\vL^\varepsilon$ is
  uniformly discrete for all\/ $0< \varepsilon < \frac{\sqrt{3}}{2}$.
\item[(ii)] If\/ $\vL^\varepsilon$ is relatively dense for some\/ $0<
  \varepsilon < \frac{\sqrt{3}}{2}$, the set\/ $\vL$ is uniformly discrete.
\end{itemize}
\end{proposition}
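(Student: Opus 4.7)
\textbf{Proof plan for Proposition~\ref{EDProp}.}

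The plan is to derive (i) directly from Lemma~\ref{a1a1a3} and Lemma~\ref{a2a2a3}, and then to deduce (ii) by Pontryagin duality from (i).

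For (i), since uniform discreteness of a subgroup-like set in $\widehat{G}$ is a translation-invariant property, it suffices to separate the identity character $1$ from $\vL^{2\varepsilon}\smallsetminus\{1\}$: by Lemma~\ref{a1a1a3}, $\vL^\varepsilon\cdot(\vL^\varepsilon)^{-1}\subset\vL^{2\varepsilon}$, so any neighbourhood $U$ of $1$ in $\widehat{G}$ satisfying $U\cap\vL^{2\varepsilon}=\{1\}$ will witness uniform discreteness of $\vL^\varepsilon$. To produce such a $U$, I would pick a compact $K\subset G$ with $\vL+K=G$ (using relative denseness) and set
\[
   U \,:=\,\bigl\{\chi\in\widehat{G} \,\big|\, |\chi(k)-1|<\sqrt{3}-2\varepsilon\text{ for all }k\in K\bigr\}\ts,
\]
which is an open neighbourhood of $1$ in the compact-open topology of $\widehat{G}$ because $2\varepsilon<\sqrt{3}$. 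For any $\chi\in U\cap\vL^{2\varepsilon}$ and any $x\in G$, write $x=y+k$ with $y\in\vL$, $k\in K$; then
\[
   |\chi(x)-1| \,\leq\, |\chi(k)-1|+|\chi(y)-1| \,<\, (\sqrt{3}-2\varepsilon)+2\varepsilon \,=\,\sqrt{3}\ts.
\]
Thus $\chi(G)$ is a subgroup of $U(1)$ all of whose elements are at distance $<\sqrt{3}$ from $1$, so Lemma~\ref{a2a2a3} forces $\chi(G)=\{1\}$, i.e.\ $\chi=1$. Hence $U\cap\vL^{2\varepsilon}=\{1\}$ and (i) follows.

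For (ii), I would use Pontryagin duality, identifying $G$ with $\widehat{\widehat{G}}$. Straight from the definition, every $x\in\vL$ satisfies $|\chi(x)-1|\leq\varepsilon$ for all $\chi\in\vL^\varepsilon$, so
\[
   \vL \,\subset\, (\vL^\varepsilon)^\varepsilon\ts.
\]
Since $\vL^\varepsilon\subset\widehat{G}$ is assumed relatively dense and $0<\varepsilon<\frac{\sqrt{3}}{2}$, part~(i) applied in $\widehat{G}$ instead of $G$ says that $(\vL^\varepsilon)^\varepsilon$ is uniformly discrete in $\widehat{\widehat{G}}=G$. Hence $\vL$ is uniformly discrete, finishing (ii).

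The only delicate point is the identification of the threshold $\frac{\sqrt{3}}{2}$: it appears precisely because we need $2\varepsilon<\sqrt{3}$ in order to leave room in $U$ and simultaneously to invoke Lemma~\ref{a2a2a3}. The rest is bookkeeping with the compact-open topology on $\widehat{G}$, which causes no new trouble in the $\sigma$-compact LCAG setting, so I do not expect a genuine obstacle beyond this quantitative calibration.
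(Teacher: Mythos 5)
Your proposal is correct and follows essentially the same route as the paper: part (i) is proved by intersecting $\vL^{2\varepsilon}\supset\vL^\varepsilon-\vL^\varepsilon$ with a neighbourhood $N(K,\varepsilon')$ of the identity calibrated so that $2\varepsilon+\varepsilon'<\sqrt{3}$, and then invoking Lemma~\ref{a2a2a3} to force $\Imag(\chi)=\{1\}$, while (ii) follows from $\vL\subset(\vL^\varepsilon)^\varepsilon$ and (i) applied in the dual group. The only cosmetic difference is that you hard-code $\varepsilon'=\sqrt{3}-2\varepsilon$ rather than leaving it as a free parameter.
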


\begin{proof}
  Claim (i). Let $K$ be such that $\vL+K = G$. Pick $\varepsilon'$ be
  so that $2\varepsilon+\varepsilon'< \sqrt{3}$.

  Let $N(K, \varepsilon'):= \bigl\{ \chi \in \widehat{G}\mid
  |\chi(x)-1| < \varepsilon'\text{ for all } x \in K \bigr\}$. Then,
  $N(K, \varepsilon')$ is an open neighbourhood of $1 \in \widehat{G}$.

  We prove that $(\vL^\varepsilon-\vL^\varepsilon) \cap N(K,
  \varepsilon') = \{ 1 \}$.

  Let $\chi \in (\vL^\varepsilon-\vL^\varepsilon) \cap N(K,
  \varepsilon')$ and let $x \in G$. Then, by Lemma~\ref{a1a1a3}, $\chi
  \in \vL^{2\varepsilon}$, and thus $x=y+k$ for some $y \in \vL$ and
  $k \in K$. Hence
\[
    \left| \chi(x) -1 \right| \,=\,
    \left| \chi(y)\chi(k) -1 \right| \,\leq\,
    \left| \chi(y) -1 \right| + \left| \chi(k) -1 \right|
    \,\leq\, 2\varepsilon+\varepsilon' \,<\, \sqrt{3} \ts .
\]
Thus, $\Imag(\chi)$ is a subgroup of $U(1)$ so that for all $z \in
\Imag(\chi)$ we have
\[
\left| z -1 \right| \, <\,  \sqrt{3} \ts .
\]
Then, by Lemma~\ref{a2a2a3}, we find that $\Imag(\chi)= \{ 1 \}$,
which shows that $\chi =1$, proving claim (i).

Claim (ii) follows immediately from (i) and the observation that $\vL
\subset \vL^{\varepsilon \varepsilon}$.
\end{proof}

\begin{remark}
  In $\RR^d$, the result in
  Proposition~\ref{EDProp} holds for all $0 < \varepsilon <1$ \cite[Prop.~6.5]{S-MOO}.

  The reason why we get the stronger result is simple: If we pick any $0< \varepsilon <1$ with $\varepsilon'
  <2-2\varepsilon$, and choose a $\chi \in
  (\vL^\varepsilon-\vL^\varepsilon) \cap N(K, \varepsilon')$, exactly as in Proposition~\ref{EDProp}, we get
\begin{equation}\label{EQ55}
   \left| \chi(x) -1 \right| \,\leq\, 2\varepsilon+\varepsilon'
\end{equation}
for all $x \in G$.
This shows that $-1 \notin \Imag (\chi)$.

If $G = \RR^d$, then $\Imag(\chi)$ is a
\emph{connected} subgroup of $U(1)$, thus is either $\{ 1 \}$ or
the entire $U(1)$. Therefore, $\Imag(\chi)=\{ 1 \}$.

In general, connectedness cannot be used. However, since
$2\varepsilon+\varepsilon' <1$, using Eq.~\eqref{EQ55} it is easy to
prove that $\Imag(\chi)$ is finite (since $-1$ is not in its
closure), and to show that there is an upper bound (which only
depends on the sum $2\varepsilon+\varepsilon'$) on the number of
elements of $\Imag(\chi)$.

In particular, it follows that, for every $\chi \in
(\vL^\varepsilon-\vL^\varepsilon) \cap N(K, \varepsilon')$, the
subgroup $\ker(\chi)$ is open and closed in $G$ and the factor group
$G / \ker(\chi)$ is finite of bounded order. This observation is
probably sufficient to prove that $\vL^\varepsilon$ is uniformly
discrete also for $ \frac{\sqrt{3}}{2} \leq \varepsilon <1$.  Anyhow,
Proposition~\ref{EDProp} is all we need in this section.
\exend
\end{remark}

\begin{corollary}\label{DMS}
  Let\/ $\vL \subset G$ be relatively dense.
\begin{itemize}\itemsep=1pt
\item[(i)] For all\/ $0< \varepsilon < \frac{\sqrt{3}}{4}$, the set\/
  $\vL^\varepsilon-\vL^\varepsilon$ is uniformly discrete.
\item[(ii)] For all\/ $0< \varepsilon < \frac{\sqrt{3}}{6}$, the set\/
  $\vL^\varepsilon-\vL^\varepsilon-\vL^\varepsilon$ is uniformly
  discrete.
\end{itemize}
\end{corollary}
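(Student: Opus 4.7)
The plan is to reduce both claims to Proposition~\ref{EDProp}(i) via the subadditivity property (A3) established in Lemma~\ref{a1a1a3}. The key observation is that $-\vL^{\varepsilon} = \vL^{\varepsilon}$ in $\widehat{G}$: if $\chi \in \vL^{\varepsilon}$ then $\chi^{-1}$ satisfies $|\chi^{-1}(x)-1| = |1 - \chi(x)| = |\chi(x)-1| \leq \varepsilon$ for every $x \in \vL$, so $\chi^{-1} \in \vL^{\varepsilon}$ as well. Combined with (A3), this gives the inclusions
\[
    \vL^{\varepsilon} - \vL^{\varepsilon} \,\subset\, \vL^{2\varepsilon}
    \quad\text{and}\quad
    \vL^{\varepsilon} - \vL^{\varepsilon} - \vL^{\varepsilon} \,\subset\, \vL^{3\varepsilon} \ts ,
\]
which hold whenever $2\varepsilon < 2$ and $3\varepsilon < 2$, respectively, so (A3) applies.

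For claim (i), the hypothesis $\varepsilon < \frac{\sqrt{3}}{4}$ yields $2\varepsilon < \frac{\sqrt{3}}{2}$, and since $\vL$ is relatively dense, Proposition~\ref{EDProp}(i) implies that $\vL^{2\varepsilon}$ is uniformly discrete. Any subset of a uniformly discrete set is uniformly discrete, so $\vL^{\varepsilon} - \vL^{\varepsilon}$ is uniformly discrete. Claim (ii) follows identically: $\varepsilon < \frac{\sqrt{3}}{6}$ gives $3\varepsilon < \frac{\sqrt{3}}{2}$, Proposition~\ref{EDProp}(i) makes $\vL^{3\varepsilon}$ uniformly discrete, and the three-fold inclusion above transfers this property to $\vL^{\varepsilon} - \vL^{\varepsilon} - \vL^{\varepsilon}$.

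There is no real obstacle here; the work has already been done in Lemma~\ref{a1a1a3} and Proposition~\ref{EDProp}. The only minor point worth verifying carefully is the symmetry $-\vL^{\varepsilon} = \vL^{\varepsilon}$ (i.e.\ condition (A1), which is recorded in Lemma~\ref{a1a1a3}), since without it one would only obtain $\vL^{\varepsilon} + \vL^{\varepsilon} \subset \vL^{2\varepsilon}$ from (A3). The constants $\frac{\sqrt{3}}{4}$ and $\frac{\sqrt{3}}{6}$ are precisely the largest thresholds for which the doubled and tripled bounds still fit under the $\frac{\sqrt{3}}{2}$ barrier of Proposition~\ref{EDProp}(i), so the statement is optimal with respect to the method used.
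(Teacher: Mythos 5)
Your proof is correct and follows exactly the paper's own argument: the inclusions $\vL^{\varepsilon}-\vL^{\varepsilon}\subset\vL^{2\varepsilon}$ and $\vL^{\varepsilon}-\vL^{\varepsilon}-\vL^{\varepsilon}\subset\vL^{3\varepsilon}$ (from (A1) and (A3) in Lemma~\ref{a1a1a3}) followed by Proposition~\ref{EDProp}(i). You simply spell out the symmetry step and the threshold arithmetic that the paper leaves implicit.
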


\begin{proof}
  We know that $\vL^\varepsilon-\vL^\varepsilon \subset \vL^{2
    \varepsilon}$ and $\vL^\varepsilon-\vL^\varepsilon-\vL^\varepsilon
  \subset \vL^{3 \varepsilon}$. The proof then follows immediately from
  Proposition~\ref{EDProp}.
\end{proof}

\begin{lemma}\label{DMS3}
  Let\/ $\vL \subset G$, let\/ $\Delta:= \vL- \vL$ and let\/
  $\varepsilon >0$.  Then, one has\/ $\Delta \subset \vL^{\varepsilon
    (2\varepsilon)}$ and\/ $\vL^\varepsilon \subset \Delta^{2
    \varepsilon}$.
\end{lemma}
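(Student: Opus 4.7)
The plan is to observe that both claimed inclusions are reformulations (via Pontryagin duality) of the same elementary estimate, so both follow from a single short calculation.

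First I would establish the key bound: for any $\chi\in\vL^\varepsilon$ and any $x,y\in\vL$, the element $d=x-y\in\Delta$ satisfies
\[
   |\chi(d)-1| \,=\, |\chi(x)\overline{\chi(y)}-1|
   \,=\, |\chi(x)-\chi(y)|
   \,\leq\, |\chi(x)-1|+|1-\chi(y)| \,\leq\, 2\varepsilon ,
\]
where the second equality uses $|\chi(y)|=1$ (since $\chi$ takes values in the unit circle), and the final bound is the definition of $\vL^\varepsilon$ applied to $x$ and $y$ separately.

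For the inclusion $\vL^\varepsilon\subset\Delta^{2\varepsilon}$ this bound is immediate from the definition of $\Delta^{2\varepsilon}$ as the set of characters whose value at every element of $\Delta$ differs from $1$ by at most $2\varepsilon$. For the inclusion $\Delta\subset(\vL^\varepsilon)^{2\varepsilon}$, I would interpret elements $d\in G$ as characters on $\widehat{G}$ by Pontryagin duality, so that $d(\chi)=\chi(d)$; the same estimate then says precisely that $d\in(\vL^\varepsilon)^{2\varepsilon}$.

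There is no real obstacle here: the lemma is essentially a symmetry statement that the defining inequality of $\vL^\varepsilon$ propagates to differences in $\vL$ with a factor of two, and by Pontryagin duality this is logically the same as the fact that elements of $\Delta$ are $2\varepsilon$-close to $1$ on all of $\vL^\varepsilon$. The only subtlety worth mentioning is the identification $\chi(-y)=\overline{\chi(y)}$, which is what allows rewriting $\chi(x-y)$ as $\chi(x)\overline{\chi(y)}$ and hence as $\chi(x)-\chi(y)$ after multiplying by a unit modulus number.
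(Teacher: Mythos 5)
Your proof is correct. The single estimate $\lvert\chi(x-y)-1\rvert=\lvert\chi(x)-\chi(y)\rvert\leq\lvert\chi(x)-1\rvert+\lvert\chi(y)-1\rvert\leq 2\varepsilon$ for $\chi\in\vL^\varepsilon$ and $x,y\in\vL$ is exactly what is needed, and reading it once in $\widehat{G}$ and once in $G\cong\widehat{\widehat{G}\,}$ does give both inclusions at one stroke. The paper reaches the same conclusions by a more formal route: it first gets $\Delta\subset\vL^{\varepsilon\varepsilon}-\vL^{\varepsilon\varepsilon}\subset\vL^{\varepsilon(2\varepsilon)}$ from the double-dual inclusion $\vL\subset\vL^{\varepsilon\varepsilon}$ together with the subadditivity property (A3) of $\varepsilon$-duals from Lemma~\ref{a1a1a3}, and then derives the second inclusion from the first via $\vL^\varepsilon\subset(\vL^\varepsilon)^{(2\varepsilon)(2\varepsilon)}=(\vL^{\varepsilon(2\varepsilon)})^{2\varepsilon}\subset\Delta^{2\varepsilon}$, using that the operation $A\mapsto A^{\delta}$ reverses inclusions. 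The content is identical — (A3) \emph{is} your triangle-inequality computation — but your version is more self-contained and makes the symmetry between the two claims explicit, whereas the paper's version illustrates how the inclusions follow mechanically from the general calculus of $\varepsilon$-dual sets (double-dual inclusion, antitonicity, subadditivity) without reopening the definition. Either presentation is acceptable; just make sure you state explicitly, as you do, that the identification of $d\in G$ with a character on $\widehat{G}$ is via Pontryagin duality, since $\vL^{\varepsilon(2\varepsilon)}$ is by definition a subset of $\widehat{\widehat{G}\,}$ identified with $G$.
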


\begin{proof}
  The first claim is a consequence of $\Delta \subset \vL^{\varepsilon
    \varepsilon}-\vL^{\varepsilon \varepsilon} \subset
  \vL^{\varepsilon (2\varepsilon)}$, while the second follows from
  $\vL^\varepsilon \subset
  (\vL^{\varepsilon})^{(2\varepsilon)(2\varepsilon)}
  =(\vL^{\varepsilon (2 \varepsilon)})^{2\varepsilon} \subset
  \Delta^{2\varepsilon}$, where the final inclusion uses the first
  claim again.
\end{proof}

\begin{corollary}\label{DMS2}
  Let\/ $\vL \subset G$. If, for some\/ $0< \varepsilon <
  \frac{\sqrt{3}}{4}$, the set\/ $\vL^\varepsilon$ is relatively dense,
  then\/ $\vL-\vL$ is uniformly discrete.
\end{corollary}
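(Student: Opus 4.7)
The plan is to reduce to Proposition~\ref{EDProp}(ii) applied to the difference set $\Delta := \vL - \vL$ rather than to $\vL$ itself. The key observation is that Lemma~\ref{DMS3} gives the inclusion $\vL^{\varepsilon} \subset \Delta^{2\varepsilon}$, which immediately upgrades information about $\vL^\varepsilon$ to information about $\Delta^{2\varepsilon}$.

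First, I would fix some $0 < \varepsilon < \frac{\sqrt{3}}{4}$ such that $\vL^\varepsilon$ is relatively dense by hypothesis. Setting $\varepsilon' := 2\varepsilon$, the upper bound on $\varepsilon$ guarantees $\varepsilon' < \frac{\sqrt{3}}{2}$, placing us in the range where Proposition~\ref{EDProp} applies. By Lemma~\ref{DMS3}, we have $\vL^\varepsilon \subset \Delta^{\varepsilon'}$, and since any superset of a relatively dense set is relatively dense, $\Delta^{\varepsilon'}$ is relatively dense.

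At this point I apply Proposition~\ref{EDProp}(ii) with $\Delta$ in place of $\vL$: since $\Delta^{\varepsilon'}$ is relatively dense for some $0 < \varepsilon' < \frac{\sqrt{3}}{2}$, the set $\Delta = \vL - \vL$ must be uniformly discrete. This finishes the argument.

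There is no real obstacle here — the content is entirely in Lemma~\ref{DMS3} and Proposition~\ref{EDProp}(ii), and the numerical constant $\frac{\sqrt{3}}{4}$ in the hypothesis is precisely what is needed so that doubling $\varepsilon$ still lands below the $\frac{\sqrt{3}}{2}$ threshold of Proposition~\ref{EDProp}. One minor sanity check worth mentioning is that Proposition~\ref{EDProp} is stated for arbitrary subsets of $G$, so applying it to $\Delta$ (which need not be Delone or have any special structure a priori) is legitimate.
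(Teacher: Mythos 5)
Your proof is correct and rests on the same two ingredients as the paper's: Lemma~\ref{DMS3} and Proposition~\ref{EDProp}, with the constant $\frac{\sqrt{3}}{4}$ used for exactly the same doubling reason. The only (immaterial) difference is the direction of the reduction: the paper applies Proposition~\ref{EDProp}(i) to the relatively dense set $\vL^\varepsilon$ and then uses the inclusion $\Delta \subset \vL^{\varepsilon\ts(2\varepsilon)}$ to conclude, whereas you use the inclusion $\vL^\varepsilon \subset \Delta^{2\varepsilon}$ to make $\Delta^{2\varepsilon}$ relatively dense and then invoke part (ii) with $\Delta$ in the role of $\vL$ --- and since part (ii) is itself deduced from part (i), the two arguments unwind to the same computation.
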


\begin{proof}
  Since $\vL^{\varepsilon}$ is relatively dense, and $2\varepsilon <
  \frac{\sqrt{3}}{2}$, it follows by Proposition~\ref{EDProp}(i)
  that $\vL^{\varepsilon (2\varepsilon)}$ is uniformly discrete. By
  Lemma~\ref{DMS3}, $\Delta := \vL - \vL$ is also uniformly discrete.
\end{proof}

\begin{theorem}\label{T66t66}
  Let\/ $\vL \subset G$ be relatively dense. If\/ $\vL^\varepsilon$ is
  relatively dense for all\/ $\varepsilon >0$, then\/ $\vL^\varepsilon$ is
  a model set for all\/ $0 < \varepsilon < \frac{\sqrt{3}}{2}$.
\end{theorem}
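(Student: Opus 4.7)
The plan is to apply the Baake--Moody construction (via Corollary~\ref{capschemecor}) to the family $\{\vL^\varepsilon\}^{}_{0<\varepsilon<C}$ sitting inside the dual group $\widehat{G}$, with the natural choice $C=\frac{\sqrt{3}}{2}$. All the pieces needed to verify axioms (A1)--(A4) have already been assembled in the section, so the work amounts to checking them and then identifying the output of the corollary with $\vL^\varepsilon$ itself.

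First I would verify (A1)--(A4) in $\widehat{G}$. Axiom (A1) is immediate from the definition of $\vL^\varepsilon$: the trivial character lies in every $\vL^\varepsilon$, and $\chi\in\vL^\varepsilon$ iff $\overline{\chi}\in\vL^\varepsilon$. Axiom (A3) is exactly the content of Lemma~\ref{a1a1a3}. Axiom (A4) is the hypothesis of the theorem. The crucial axiom (A2) is where the constant $\frac{\sqrt{3}}{2}$ enters: by Proposition~\ref{EDProp}(i), the relative denseness of $\vL$ implies that $\vL^\varepsilon$ is uniformly discrete, hence locally finite, for every $0<\varepsilon<\frac{\sqrt{3}}{2}$. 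Thus $\bigl(\{\vL^\varepsilon\}^{}_{0<\varepsilon<\sqrt{3}/2}\bigr)$ satisfies (A1)--(A4) in $\widehat{G}$.

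By Corollary~\ref{capschemecor}, for every $0<\varepsilon<\frac{\sqrt{3}}{2}$ the set
\[
   \vL^\varepsilon \;+\; \bigcap_{0<\varepsilon'<\sqrt{3}/2}\!\vL^{\varepsilon'}
\]
is a model set in the cut and project scheme produced by the Baake--Moody construction (with $L:=\langle\bigcup_{\varepsilon'}\vL^{\varepsilon'}\rangle$). To finish, I must identify this sum with $\vL^\varepsilon$ itself. The inclusion $\vL^\varepsilon\subset \vL^\varepsilon+\bigcap\vL^{\varepsilon'}$ is trivial since $1\in\bigcap\vL^{\varepsilon'}$ by (A1). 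For the reverse inclusion, any $\psi\in\bigcap_{\varepsilon'>0}\vL^{\varepsilon'}$ satisfies $|\psi(x)-1|\le\varepsilon'$ for all $x\in\vL$ and all $\varepsilon'>0$, forcing $\psi(x)=1$ on $\vL$. Consequently, for $\chi\in\vL^\varepsilon$,
\[
    |\chi(x)\psi(x)-1| \;=\; |\chi(x)-1| \;\le\; \varepsilon
\qquad\text{for all }x\in\vL,
\]
so $\chi\psi\in\vL^\varepsilon$. Hence $\vL^\varepsilon+\bigcap_{\varepsilon'}\vL^{\varepsilon'}=\vL^\varepsilon$, and $\vL^\varepsilon$ is a model set for every $0<\varepsilon<\frac{\sqrt{3}}{2}$.

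The main obstacle in this argument is the uniform discreteness in (A2), which is exactly why the bound $\frac{\sqrt{3}}{2}$ appears; everything else is a soft consequence of the definition of $\vL^\varepsilon$ or of the hypothesis. All difficulty has been pushed into the earlier Proposition~\ref{EDProp}, whose proof relied on Lemma~\ref{a2a2a3} about nontrivial subgroups of $U(1)$ containing an element at distance at least $\sqrt{3}$ from $1$.
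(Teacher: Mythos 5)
Your proposal is correct and follows essentially the same route as the paper: verify (A1)--(A4) for $\{\vL^\varepsilon\}_{0<\varepsilon<\sqrt{3}/2}$ in $\widehat{G}$ via Lemma~\ref{a1a1a3} and Proposition~\ref{EDProp}, identify $\bigcap_{\varepsilon'}\vL^{\varepsilon'}$ with the characters trivial on $\vL$ so that $\vL^\varepsilon+\bigcap_{\varepsilon'}\vL^{\varepsilon'}=\vL^\varepsilon$, and conclude by Corollary~\ref{capschemecor}. Your write-up simply spells out the last identification in slightly more detail than the paper does.
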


\begin{proof}
  By Lemma~\ref{a1a1a3}, the family $\{ \vL^\varepsilon \}^{}_{ 0 <
    \varepsilon <2 }$ satisfies (A1) and (A3) in $\widehat{G}$. Since
  $\vL$ is relatively dense, by Proposition~\ref{EDProp},
  $\vL^\varepsilon$ is also uniformly discrete for all $0 <
  \varepsilon < \frac{\sqrt{3}}{2}$. Thus $\{ \vL^\varepsilon \}^{}_{ 0
    < \varepsilon < \frac{\sqrt{3}}{2}}$ satisfies (A1)--(A4).

    Moreover,
    \[ \bigcap_{\varepsilon' >0} \vL^{\varepsilon'} \, =\,  \bigl\{ \chi \in \widehat{G} \mid
     \chi(x)=1
     \text{ for all } x \in \vL \bigr\} \ts .\]
  Therefore for all $\varepsilon >0$ we have
  \[ \vL^{\varepsilon} + \bigcap_{\varepsilon' >0} \vL^{\varepsilon'} \,=\, \vL^{\varepsilon} \ts .\]

  The claim follows now from Corollary~\ref{capschemecor}.
\end{proof}

Given a cut and project scheme $(\RR^d \times H, \cL)$,
Meyer \cite{S-MEY} and Moody \cite{S-MOO}  introduced an important scheme in the dual spaces,
called the \emph{dual cut and project scheme}, and observed its
importance in the study of the $\varepsilon$-dual sets of a Meyer
set. Theorem~\ref{dualcps} below extends the result to arbitrary LCAG $G$.

\begin{theorem}\cite[Prop.~1]{S-JPS}\label{dualcps}
  Let\/ $(G \times H, \cL)$ be a cut and project
  scheme. Let\/ $Z$ be the annihilator of\/ $\cL$ in\/
  $\widehat{G} \times \widehat{H} \cong \widehat{G \times H}$. Then,
  $(\widehat{G} \times \widehat{H}, Z)$ is a cut and project scheme.
\end{theorem}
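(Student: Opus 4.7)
The plan is to verify the three defining properties of a cut and project scheme for the pair $(\widehat{G}\times\widehat{H},Z)$: that $Z$ is a lattice in $\widehat{G}\times\widehat{H}$, that the restriction of the projection onto $\widehat{G}$ to $Z$ is injective, and that the image of $Z$ under the projection to $\widehat{H}$ is dense. All three will be derived from the corresponding properties of $\cL\subset G\times H$ by Pontryagin duality, using the canonical identification $\widehat{G\times H}\cong \widehat{G}\times\widehat{H}$, under which the annihilator of $\cL$ is exactly $Z$.

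For the lattice property of $Z$, the idea is to invoke the standard duality
\[
    \widehat{(G\times H)/\cL}\,\cong\,Z\quad\text{and}\quad
    \widehat{\cL}\,\cong\,(\widehat{G}\times\widehat{H})/Z \ts .
\]
Since $\cL$ is a lattice in $G\times H$, the quotient $(G\times H)/\cL$ is compact and hence its dual $Z$ is discrete. Dually, $\cL$ is discrete, so $\widehat{\cL}$ is compact, which means that $(\widehat{G}\times\widehat{H})/Z$ is compact, i.e., $Z$ is co-compact in $\widehat{G}\times\widehat{H}$. Together these give that $Z$ is a lattice.

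For the injectivity of $\pi_1|_Z$, suppose $(1,\psi)\in Z$. Then for every $(g,h)\in\cL$ one has $1\cdot\psi(h)=1$, so $\psi$ vanishes on the subgroup $\pi_2(\cL)\subset H$. Because $\pi_2(\cL)$ is dense in $H$ and $\psi$ is continuous, this forces $\psi\equiv 1$. For the density of $\pi_2(Z)$ in $\widehat{H}$, I would invoke the Pontryagin double-annihilator theorem $Z^\perp=\cL$ in $G\times H$. If $h\in H$ annihilates $\pi_2(Z)$, i.e.\ satisfies $\psi(h)=1$ for every $\psi\in \pi_2(Z)$, then the element $(0,h)\in G\times H$ annihilates all of $Z$, hence lies in $Z^\perp=\cL$; the injectivity of $\pi_1|_\cL$ then forces $h=0$, and so $\pi_2(Z)$ has trivial annihilator in $H$, which is equivalent to density in $\widehat{H}$.

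The main obstacle is merely bookkeeping: choosing a consistent convention for the identification $\widehat{G\times H}=\widehat{G}\times\widehat{H}$ so that the two projections and the annihilator operations dualise the way they should, and quoting the double-annihilator theorem in the right form for closed subgroups of a LCAG (which holds unconditionally). Once these identifications are aligned, the three properties follow formally from the corresponding properties of $\cL$, with no further analytic input.
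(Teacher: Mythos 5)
Your proposal is correct and complete. The paper itself gives no proof of this statement --- it is quoted from Schreiber \cite[Prop.~1]{S-JPS} --- and your argument is exactly the standard duality proof underlying that reference: discreteness of $Z$ from compactness of $(G\times H)/\cL$, cocompactness of $Z$ from discreteness of $\cL$, injectivity of the first projection on $Z$ from density of $\pi_2(\cL)$, and density of $\pi_2(Z)$ from the double-annihilator identity $Z^\perp=\cL$ (valid since a lattice is a closed subgroup) together with injectivity of $\pi_1$ on $\cL$. The only point worth making explicit is the equivalence, for closed subgroups of an LCAG, between relative denseness (the paper's formulation of the lattice condition) and cocompactness (the form you use); this is routine.
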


\begin{definition}
  Given a cut and project scheme $(G \times H, \cL)$, we
  call the scheme  $(\widehat{G} \times \widehat{H}, Z)$ from
  Theorem~\ref{dualcps} the \emph{dual scheme} to $(G \times H,
  \cL)$.
\end{definition}

It is easy to see that the $\varepsilon$-dual sets of model sets are relatively dense.

\begin{proposition}\cite[Lemma~2.6]{S-NS9}\label{T55t55}
  Let\/ $\vL \subset G$ be a subset of a model set. Then,
  $\vL^\varepsilon$ is relatively dense for all\/ $0 < \varepsilon <2$.
\end{proposition}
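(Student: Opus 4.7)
The plan is to prove the contrapositive reduction first: since any subset of a model set has fewer defining constraints, if $\vL \subset \vL' = \oplam(W)$ then $(\vL')^\varepsilon \subset \vL^\varepsilon$, so it suffices to show that $\oplam(W)^\varepsilon$ is relatively dense for any model set coming from a cut and project scheme $(G \times H, \cL)$ with precompact window $W$.

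The key idea is to pass to the dual cut and project scheme $(\widehat{G} \times \widehat{H}, Z)$ provided by Theorem~\ref{dualcps}, where $Z$ is the annihilator of $\cL$. In this scheme, $\pi^{}_{\widehat{G}}|_Z$ is injective and $\pi^{}_{\widehat{H}}(Z)$ is dense, so model sets with respect to precompact windows in $\widehat{H}$ are Delone sets (hence relatively dense) in $\widehat{G}$. I would then define
\[
    U^{}_\varepsilon \,:=\, \bigl\{ \eta \in \widehat{H} \,\big|\, |\eta(w)-1| \leq \varepsilon \text{ for all } w \in \overline{W}\,\bigr\} \ts ,
\]
which by the compact-open topology on $\widehat{H}$ and the compactness of $\overline{W}$ is a closed neighbourhood of the trivial character in $\widehat{H}$; in particular it contains a precompact open neighbourhood $U' \ni 1$.

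The heart of the argument is the inclusion of the dual model set into the $\varepsilon$-dual. Given any $\chi$ in the dual model set with window $U^{}_\varepsilon$, there exists $\eta \in U^{}_\varepsilon$ with $(\chi,\eta) \in Z$, which by the annihilator condition means $\chi(x) = \overline{\eta(x^\star)}$ for every $(x,x^\star) \in \cL$. For $x \in \oplam(W)$ we have $x^\star \in W \subset \overline{W}$, so $|\chi(x) - 1| = |\overline{\eta(x^\star)} - 1| = |\eta(x^\star) - 1| \leq \varepsilon$, which gives $\chi \in \oplam(W)^\varepsilon$. Thus the dual model set with window $U^{}_\varepsilon$ is contained in $\oplam(W)^\varepsilon \subset \vL^\varepsilon$, and since it contains the Delone set coming from $U'$, it is relatively dense.

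The main obstacle, or rather the only subtle point, is to correctly interpret the dual scheme and the direction of the annihilator relation — specifically, verifying that $\pi^{}_{\widehat{G}}|_Z$ is injective (using density of $\pi^{}_{2}(\cL)$ in $H$, which makes $\eta = 1$ forced whenever $\chi = 1$) so that the window-to-model-set construction in the dual scheme indeed produces relatively dense sets in $\widehat{G}$. Once this is in hand, the verification that $U^{}_\varepsilon$ is a neighbourhood of $1 \in \widehat{H}$ is a direct consequence of $\overline{W}$ being compact together with the definition of the compact-open topology, and the rest of the chain of inclusions is routine.
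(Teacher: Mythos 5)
Your argument is correct: the reduction $(\vL')^\varepsilon \subset \vL^\varepsilon$ for $\vL \subset \vL'$, the passage to the dual scheme of Theorem~\ref{dualcps}, and the inclusion of the dual model set with window $U_\varepsilon$ into $\oplam(W)^\varepsilon$ via the annihilator relation $\chi(x) = \overline{\chi^\star(x^\star)}$ is exactly the standard proof of this fact. The paper itself does not prove the proposition but cites it from \cite[Lemma~2.6]{S-NS9}, and your argument coincides with the one given there (and going back to Meyer and Moody \cite{S-MOO}), so there is nothing to add beyond noting that the relative denseness of $\oplam(U')$ for open non-empty $U'$ is the same standard fact the paper invokes elsewhere.
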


Combining the results of this section we get a (partial) characterisation of the
subsets of model sets in $\sigma$-compact LCAGs, similar to
\cite[Thm.~9.1]{S-MOO}. We will complete this later in Theorem~\ref{MEYchar}.

\begin{theorem}\label{T1x}
  Let\/ $\vL \subset G$ be relatively dense. Then, the following
  statements are equivalent.
\begin{itemize}\itemsep=1pt
\item[(i)] $\vL$ is a subset of a model set.
\item[(ii)] $\vL^\varepsilon$ is relatively dense for all\/ $0<
  \varepsilon <2$.
\item[(iii)] $\vL^\varepsilon$ is a model set for all\/ $0< \varepsilon <
  \frac{\sqrt{3}}{2}$.
\item[(iv)] $\vL^\varepsilon$ is a model set for some\/ $0< \varepsilon <
  \frac{\sqrt{3}}{2}$.
\end{itemize}
\end{theorem}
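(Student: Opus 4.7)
The four implications naturally split into three routine ones and one that carries the real content. The implication $\text{(i)} \Longrightarrow \text{(ii)}$ is exactly Proposition~\ref{T55t55}; the implication $\text{(ii)} \Longrightarrow \text{(iii)}$ is exactly Theorem~\ref{T66t66}; and $\text{(iii)} \Longrightarrow \text{(iv)}$ is immediate. So the plan reduces to proving $\text{(iv)} \Longrightarrow \text{(i)}$.

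For $\text{(iv)} \Longrightarrow \text{(i)}$, the plan is to use Pontryagin duality to embed $\vL$ inside a model set produced by the already established machinery. Identifying $G$ with $\widehat{\widehat{G}}$ via the canonical pairing $x(\chi) = \chi(x)$, the defining inequality of $\vL^\varepsilon$ reads symmetrically: for every $x \in \vL$ and every $\chi \in \vL^\varepsilon$ we have $|x(\chi)-1| = |\chi(x) - 1| \leq \varepsilon$. This is precisely the statement that
\[
    \vL \,\subset\, (\vL^\varepsilon)^\varepsilon \ts ,
\]
where the outer $\varepsilon$-dual is now formed in $\widehat{\widehat{G}} \cong G$.

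Next I would apply the implications already established to $\vL^\varepsilon$ itself. By hypothesis $\vL^\varepsilon$ is a model set, hence (trivially) a subset of a model set. Proposition~\ref{T55t55} then gives that $(\vL^\varepsilon)^\delta$ is relatively dense for all $0 < \delta < 2$, and for $\delta \geq 2$ the set $(\vL^\varepsilon)^\delta$ equals the full group and is relatively dense tautologically. With relative denseness in hand for every $\delta > 0$, Theorem~\ref{T66t66} applied to $\vL^\varepsilon$ upgrades this conclusion to: $(\vL^\varepsilon)^\delta$ is a model set for every $0 < \delta < \frac{\sqrt{3}}{2}$. Taking $\delta = \varepsilon$, which is admissible since $\varepsilon < \frac{\sqrt{3}}{2}$ is precisely the assumption in (iv), we conclude that $(\vL^\varepsilon)^\varepsilon$ is a model set containing $\vL$, which is (i).

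The main obstacle is purely conceptual rather than technical: one must see that the Baake--Moody machinery set up in Section~\ref{construction cp} operates symmetrically when the roles of ``physical'' and ``internal'' space are swapped via Pontryagin duality, so that Proposition~\ref{T55t55} and Theorem~\ref{T66t66} can be applied with $\vL^\varepsilon \subset \widehat{G}$ in place of $\vL \subset G$. Once this symmetry is accepted, the entire argument is essentially a one-line observation via duality (the inclusion $\vL \subset (\vL^\varepsilon)^\varepsilon$) combined with the two non-trivial implications already proved in the section. The numerical threshold $\frac{\sqrt{3}}{2}$ that appears in (iv) is then perfectly matched to the threshold in Theorem~\ref{T66t66}: it is exactly what is needed to guarantee that $(\vL^\varepsilon)^\varepsilon$ falls into the model-set range.
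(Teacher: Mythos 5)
Your proposal is correct and follows essentially the same route as the paper: the three easy implications are dispatched by Proposition~\ref{T55t55}, Theorem~\ref{T66t66}, and triviality, and for $\text{(iv)} \Longrightarrow \text{(i)}$ the paper likewise applies Proposition~\ref{T55t55} and Theorem~\ref{T66t66} to the relatively dense model set $\vL^\varepsilon$ to conclude that $(\vL^\varepsilon)^{\varepsilon}$ is a model set containing $\vL$. Your explicit remarks on the double-dual identification $G \cong \widehat{\widehat{G}}$ and on the tautological relative denseness of $(\vL^\varepsilon)^\delta$ for $\delta \geq 2$ merely spell out what the paper leaves implicit.
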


\begin{proof}
  The implication $\text{(i)} \Longrightarrow \text{(ii)}$ follows
  from Proposition~\ref{T55t55} and $\text{(ii)} \Longrightarrow
  \text{(iii)}$ follows from Theorem~\ref{T66t66}, while $\text{(iii)}
  \Longrightarrow \text{(iv)}$ is obvious.

  We need to show that $\text{(iv)} \Longrightarrow \text{(i)}$ to
  complete the proof.  By Proposition \ref{T55t55}, for all $0 <
  \varepsilon' < 2$, the set $(\vL^\varepsilon)^{\varepsilon'}$ is
  relatively dense. Since $\vL^\varepsilon$ is a model set, hence
  relatively dense, it follows from Theorem~\ref{T66t66} that
  $(\vL^\varepsilon)^{\varepsilon'}$ is a model set for all $0<
  \varepsilon' < \frac{\sqrt{3}}{2}$. In particular,
  $(\vL^\varepsilon)^{\varepsilon}$ is a model set, which contains the
  set $\vL$.
\end{proof}

Later we will need to use the fact that, if $\vL$ is a set which
satisfies one, and hence all, of the conditions in Theorem~\ref{T1x},
and $t \in G$, then $t+\vL$ also satisfies the conditions of
Theorem~\ref{T1x}. We will give two different proofs for this fact,
one in Corollary~\ref{translMS} based on the properties of the
$\varepsilon$-dual sets, and another one in Proposition~\ref{T1xy},
using properties of model sets.

\begin{lemma}\label{junk1}
  Let\/ $\vL \subset G$, $\varepsilon>0$ and\/ $t \in G$. If\/
  $\vL^\varepsilon$ is relatively dense, then\/ $(t+\vL)^{3\varepsilon}$
  is relatively dense.
\end{lemma}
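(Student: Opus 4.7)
I plan to prove this by exhibiting a relatively dense subset of $(t+\vL)^{3\varepsilon}$, via the elementary inclusion
\[
   \vL^\varepsilon \cap \{t\}^{2\varepsilon} \,\subset\, (t+\vL)^{3\varepsilon}, \qquad \{t\}^{2\varepsilon} \,:=\, \bigl\{ \chi \in \widehat{G} \;\big|\; |\chi(t) - 1| \leq 2\varepsilon \bigr\}.
\]
The inclusion follows at once from the triangle inequality in $U(1)$: for $\chi$ in the intersection and $y \in \vL$,
\[
   |\chi(t+y) - 1| \,=\, |\chi(t)\chi(y) - 1| \,\leq\, |\chi(t) - 1| + |\chi(y) - 1| \,\leq\, 2\varepsilon + \varepsilon \,=\, 3\varepsilon.
\]
Hence the task reduces to showing that $\vL^\varepsilon \cap \{t\}^{2\varepsilon}$ is relatively dense in $\widehat{G}$.

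The main structural input is that $t^\perp = \{\chi \in \widehat{G} \mid \chi(t) = 1\}$ is a cocompact closed subgroup of $\widehat{G}$: the evaluation map $\chi \mapsto \chi(t)$ factors $\widehat{G}/t^\perp$ injectively into the compact group $U(1)$, so the quotient is compact. Consequently the open set $\{t\}^{2\varepsilon}$, being a neighborhood of $t^\perp$, is itself cocompact in $\widehat{G}$, i.e.\ there is a compact $K_t \subset \widehat{G}$ with $\{t\}^{2\varepsilon} + K_t = \widehat{G}$. Combining this with the relative denseness of $\vL^\varepsilon$ and the $\vL^\perp$-invariance of $\vL^\varepsilon$ (since any $\eta \in \vL^\perp$ is trivial on $\vL$, so $\psi\eta \in \vL^\varepsilon$ whenever $\psi \in \vL^\varepsilon$) provides the flexibility needed: given any $\chi_0 \in \widehat{G}$, I would first approximate $\chi_0$ by some $\psi \in \vL^\varepsilon$ up to the fixed compact provided by the relative denseness of $\vL^\varepsilon$, and then correct $\psi(t)$ into the arc $\{z \in U(1) : |z-1| \leq 2\varepsilon\}$ by multiplying by a suitable element of $\vL^\perp$ together with a small additional compact adjustment coming from the cocompactness of $t^\perp$.

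The hardest part is this final combining step: in general, the intersection of a relatively dense set with a cocompact neighborhood of a closed subgroup need not be relatively dense, so one must exploit the particular group-theoretic structure here --- the $\vL^\perp$-invariance of $\vL^\varepsilon$ on the one hand, and the cocompactness of $t^\perp$ on the other --- to build an explicit element of $\vL^\varepsilon \cap \{t\}^{2\varepsilon}$ at bounded distance from $\chi_0$. The constant $3$ in the conclusion, rather than a sharper $2$, is exactly what gives enough slack to absorb the combined approximation errors introduced by first hitting $\vL^\varepsilon$ and then steering into $\{t\}^{2\varepsilon}$.
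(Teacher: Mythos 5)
Your opening inclusion $\vL^\varepsilon \cap \{t\}^{2\varepsilon} \subset (t+\vL)^{3\varepsilon}$ is correct and is the right kind of reduction, but what follows has a genuine gap. First, the structural claim that $t^\perp$ is cocompact in $\widehat{G}$ is false: a continuous injective homomorphism into a compact group does not force the domain to be compact. Take $G=\TT$ and $t=e^{2\pi i \alpha}$ with $\alpha$ irrational; then $\widehat{G}=\ZZ$, the evaluation map $n \mapsto t^n$ is injective, so $t^\perp=\{0\}$ and $\widehat{G}/t^\perp \cong \ZZ$ is not compact. (The set $\{t\}^{2\varepsilon}$ is nevertheless relatively dense --- it is the $2\varepsilon$-dual of the finite, hence harmonious, set $\{t\}$, cf.\ Theorem~\ref{harm} --- but not for the reason you give.) Second, and more seriously, you never actually produce an element of $\vL^\varepsilon \cap \{t\}^{2\varepsilon}$ within bounded distance of a given $\chi_0$: as you yourself note, the intersection of two relatively dense sets can be empty, and the tools you propose do not close this. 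The group $\vL^\perp$ may be trivial (e.g.\ when $\vL$ generates a dense subgroup of $G$), so multiplying by its elements gives no room to steer $\psi(t)$, while a ``small compact adjustment'' of $\psi$ will in general destroy the property $\psi\in\vL^\varepsilon$. The hard step is thus left as a plan rather than a proof.

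The paper closes exactly this gap with a pigeonhole argument that avoids intersecting the two sets directly. Fix $n$ with $2\pi/n<\varepsilon$ and partition $\vL^\varepsilon$ into the finitely many classes $\vG_k=\{\chi\in\vL^\varepsilon \mid \chi(t)\in C_k\}$ according to which arc of length $2\pi/n$ the value $\chi(t)$ lies in. For $\chi_1,\chi_2$ in the same class one has $\chi_1\overline{\chi_2}\in\vL^{2\varepsilon}$ and $|\chi_1(t)\overline{\chi_2(t)}-1|<\varepsilon$, so $\vG_k-\vG_k\subset(t+\vL)^{3\varepsilon}$ --- this is your inclusion with the two tolerances swapped. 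Then, choosing $K$ with $\vL^\varepsilon+K=\widehat{G}$ and representatives $\chi_j\in\vG_j$, one gets $\widehat{G}\subset\bigcup_j(\vG_j-\vG_j+\chi_j+K)\subset(t+\vL)^{3\varepsilon}+\bigcup_j(\chi_j+K)$, and the translate set on the right is compact because there are only finitely many classes. This colouring step is the ingredient your argument is missing: it replaces ``find a point of $\vL^\varepsilon$ whose value at $t$ is near $1$'' by ``find two points of $\vL^\varepsilon$ whose values at $t$ are near each other,'' which pigeonhole always provides.
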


\begin{proof}
  Select an integer $n$ so that $\frac{2\pi}{n} < \varepsilon$.  For
  each $0\leq k \leq n-1$, let $C_k:=\bigl\{ \ee^{\ii \theta} \mid
  \frac{2k\pi}{n} \leq \theta < \frac{2(k+1)\pi}{n}\bigr\}$. Then,
  $\bigcup_{k=0}^{n-1} C_k$ is a partition of $U(1)$.  Define
\[
   \vG_k \,:=\, \{  \chi \in \vL^\varepsilon \mid \chi(t) \in C_k \} \ts .
\]
It is clear by construction that
\[
  \vL^\varepsilon \,=\, \bigcup_{k=0}^{n-1} \vG_k \ts .
\]
We now show that
\[
   \vG_k - \vG_k \,\subset\, (t+\vL)^{3\varepsilon} \ts .
\]
Let $\chi^{}_{1}, \chi^{}_{2} \in \vG^{}_{k}$, and $x \in \vL$ be
arbitrary. Then,
\[
\begin{split}
    \bigl| \chi^{}_{1}(t+x)\ts\overline{\chi^{}_{2}(t+x)} -1 \bigr| \,
    & = \,
     \bigl| \chi^{}_{1}(t)\ts\overline{\chi^{}_{2}(t)}\ts
            \chi^{}_{1}(x)\ts \overline{\chi^{}_{2}(x)} -1 \bigr| \\[1mm]
     &\leq\, \bigl| \chi^{}_{1}(t)\ts \overline{\chi^{}_{2}(t)}\ts
       \chi^{}_{1}(x)\ts\overline{\chi^{}_{2}(x)} - \chi^{}_{1}(x)\ts
       \overline{\chi^{}_2(x)}\ts\bigr|\\ &\qquad +
       \bigl|  \chi^{}_{1}(x)\ts\overline{\chi^{}_{2}(x)} -
       \overline{\chi^{}_{2}(x)}\ts\bigr| +
       \bigl| \overline{\chi^{}_{2}(x)} - 1 \bigr| \\[1mm]
    &=\,\bigl| \chi^{}_1(t)\overline{\chi^{}_2(t)}  - 1 \bigr| +
       \bigl| \chi^{}_1(x)- 1\bigr| +
       \bigl|\ts\ts \overline{\chi^{}_2(x)} - 1 \bigr| \\[1mm]
    &\leq\, \bigl| \chi^{}_{1}(t)\ts\overline{\chi^{}_{2}(t)}  - 1 \bigr|
     + 2 \varepsilon
\end{split}
\]
Since $\chi^{}_{1}, \chi^{}_{2} \in \vG^{}_{k}$, we have
$\chi^{}_{1}(t)\ts\overline{\chi^{}_{2}(t)} \in \bigl\{ \ee^{\ii\theta}
\mid -\frac{2\pi}{n} \leq \theta \leq \frac{2\pi}{n} \bigr\}$, and
hence
\[
   \bigl| \chi^{}_{1}(t)\ts\overline{\chi^{}_{2}(t)} -1 \bigr| \,\leq\,
    \myfrac{2 \pi}{n} < \varepsilon \ts .
\]
Consequently,
\[
   \bigl| \chi^{}_{1}(t+x)\ts\overline{\chi^{}_{2}(t+x)} -1 \bigr|
    \, < \, 3 \varepsilon \ts .
\]
Let now $K \subset \widehat{G}$ be any compact set so that
$\vL^\varepsilon +K =\widehat{G}$.  Define the set $J:= \{ k \mid 0
\leq k \leq n-1, \vG_k \neq \varnothing \}$. For each $j \in J$, pick
some $\chi^{}_{j} \in \vG^{}_{j}$. Then,
\[
\begin{split}
  \widehat{G} \, & \subset\, \bigcup_{k=0}^{n-1} (\vG^{}_{k}+K) \, = \,
      \bigcup_{j \in J} (\vG^{}_{j}+K) \, = \,
     \bigcup_{j \in J} (\chi^{}_{j}-\vG^{}_{j}+\chi^{}_{j}+K) \\
    &\subset\, \bigcup_{j \in J} (\vG^{}_{j}-\vG^{}_{j}+\chi^{}_{j}+K)\\
    & \subset\,
     \bigcup_{j \in J} \bigl[(t+\vL)^{3\varepsilon} +\chi^{}_{j}+K\bigr] \,=\,
      (t+\vL)^{3\varepsilon} +\bigcup_{j \in J} (\chi^{}_{j}+K)
\end{split}
\]
As $\bigcup_{j \in J} (\chi^{}_{j}+K)$ is compact we are done.
\end{proof}

\begin{remark}
  The result of Lemma~\ref{junk1} can be improved easily in the
  following way. Let $\varepsilon'>0$. If one picks $n$ so that
  $\frac{2\pi}{n} < \varepsilon$, then the same proof yields
  that $(t+\vL)^{2\varepsilon+\varepsilon'}$ is relatively dense.  \exend
\end{remark}

An immediate consequence of Lemma \ref{junk1} is the following result.

\begin{corollary}\label{translMS}
  Let\/ $\vL \subset G$ be relatively dense and\/ $t \in G$. Then, the
  following statements are equivalent.
\begin{itemize}\itemsep=1pt
\item[(i)] $\vL^\varepsilon$ is relatively dense for all\/ $0<
  \varepsilon <2$.
\item[(ii)] $(t+\vL)^\varepsilon$ is relatively dense for all\/ $0<
  \varepsilon <2$.\qed
\end{itemize}
\end{corollary}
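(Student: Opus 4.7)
The plan is to derive both directions directly from Lemma~\ref{junk1}, which states that whenever $\vL^{\varepsilon}$ is relatively dense, so is $(t+\vL)^{3\varepsilon}$.

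\smallskip

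First I would prove $\text{(i)} \Longrightarrow \text{(ii)}$. Fix an arbitrary $0 < \varepsilon < 2$ and set $\varepsilon' := \varepsilon/3$. Since $0 < \varepsilon' < 2/3 < 2$, hypothesis (i) gives that $\vL^{\varepsilon'}$ is relatively dense. Applying Lemma~\ref{junk1} to $\vL$ with the parameters $\varepsilon'$ and $t$, one concludes that $(t+\vL)^{3\varepsilon'} = (t+\vL)^{\varepsilon}$ is relatively dense.

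\smallskip

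For the converse $\text{(ii)} \Longrightarrow \text{(i)}$, I would use the obvious symmetry $\vL = -t + (t+\vL)$. Fix $0 < \varepsilon < 2$ and again set $\varepsilon' := \varepsilon/3$. By hypothesis (ii), $(t+\vL)^{\varepsilon'}$ is relatively dense, so applying Lemma~\ref{junk1} to the set $t+\vL$ with parameters $\varepsilon'$ and $-t$ yields that $\bigl(-t+(t+\vL)\bigr)^{3\varepsilon'} = \vL^{\varepsilon}$ is relatively dense. Since $\varepsilon \in (0,2)$ was arbitrary, (i) follows.

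\smallskip

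There is really no main obstacle here: the content is entirely packed into Lemma~\ref{junk1}, and the corollary follows by choosing $\varepsilon/3$ in place of $\varepsilon$ and exploiting the symmetric role of $t$ and $-t$ in the translation. The only thing worth noting is that the factor $3$ in Lemma~\ref{junk1} is harmless precisely because the range of admissible $\varepsilon$ in (i) and (ii) is $(0,2)$, so $\varepsilon/3$ lies well within the same range and hypothesis (i) (respectively (ii)) can be invoked.
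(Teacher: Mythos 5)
Your proof is correct and is exactly the argument the paper intends: the corollary is stated there as an immediate consequence of Lemma~\ref{junk1}, obtained by replacing $\varepsilon$ with $\varepsilon/3$ and using the symmetry between translating by $t$ and by $-t$. Your write-up simply makes those two routine steps explicit.
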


Note that, for any $\varnothing \neq \vG \subset G$ and any $\varepsilon
\geq 2$, we have $\vG^\varepsilon=\widehat{G}$. Thus,
$\vG^\varepsilon$ is trivially relatively dense for all $\varepsilon
\geq 2$.

Corollary~\ref{translMS} is also a consequence of Theorem~\ref{T1x} and the following result.

\begin{proposition}\label{T1xy}
  Let\/ $\vL \subset G$ be a subset of a model set and let\/ $F \subset G$
  be a finite non-empty set. Then, $\vL+F$ is a subset of a model set.
\end{proposition}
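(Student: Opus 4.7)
The plan is to enlarge the cut and project scheme carrying a model set containing $\vL$ so that the finite set $F$ becomes part of the physical projection, after which the conclusion reduces to a translation of windows. By hypothesis, fix a scheme $(G \times H, \cL)$ and a compact window $W \subset H$ with non-empty interior such that $\vL \subset \oplam(W)$. Set $L := \pi^{}_{1}(\cL)$. If one happens to have $F \subset L$, the conclusion is immediate, since
\[
   \oplam(W) + F \,\subset\, \oplam(W + F^\star)
\]
is itself a model set in the given scheme; so the real content of the proposition lies in the case $F \not\subset L$.

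To handle the general case I would construct an enlarged scheme $(G \times H', \cL')$ whose physical projection is $L' := \langle L \cup F \rangle \leq G$. Since $F$ is finite, $L'/L$ is finitely generated abelian, hence countable. Give $L'$ the discrete topology and define
\[
   H' \, := \, (H \oplus L')\ts\big/\ts K \ts , \qquad
   K \, := \, \bigl\{ (-\phi(\ell), \ell) \mid \ell \in L \bigr\} \ts ,
\]
where $\phi$ is the original star map. Routine checks show that $K$ is closed in $H \oplus L'$, that $h \mapsto [h,0]$ identifies $H$ with an open subgroup of $H'$, that $H'$ is a $\sigma$-compact LCAG (being a countable disjoint union of translates of $H$), and that $\phi'\! :\, L' \to H'$ defined by $\phi'(x) := [0,x]$ is a homomorphism extending $\phi$ under the embedding $H \hookrightarrow H'$. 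Finally, set $\cL' := \bigl\{ (x, \phi'(x)) \mid x \in L' \bigr\}$.

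The key step is verifying that $(G \times H', \cL')$ is a cut and project scheme. Discreteness of $\cL'$ follows by shrinking a neighbourhood of $0 \in H'$ into the open subgroup $H$, which forces the candidate element into $L$ where $\cL$ is already discrete; density of $\phi'(L')$ in $H'$ follows from density of $\phi(L)$ in $H$ together with surjectivity of $L' \twoheadrightarrow H'/H \cong L'/L$. The main obstacle is cocompactness of $\cL'$ in $G \times H'$: given a target $(x, [h,y]+K)$, one exploits the freedom to translate the representative $y \in L'$ by an arbitrary $\ell \in L$, and then appeals to cocompactness of $\cL$ in $G \times H$ to find a single $\ell$ that simultaneously places $x - y - \ell$ into a fixed compact of $G$ and $h + \phi(\ell)$ into a fixed compact of $H$. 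Once the enlarged cut and project scheme is established, the inclusion $F \subset L' = \pi^{}_{1}(\cL')$ reduces us to the easy case already treated, yielding $\vL + F \subset \oplam(W')$ with $W' := \bigcup_{f \in F}(\phi'(f) + W)$ compact in $H'$ and of non-empty interior, which completes the proof.
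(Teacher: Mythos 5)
Your argument is correct, but it takes a genuinely different route from the paper's. The paper enlarges the scheme indirectly, through the almost-periodicity machinery already developed: it builds a norm-almost periodic weighted model comb $\omega=\omega_g$ whose support is a model set containing $\oplam(W)$, applies the Baake--Moody construction to the family $\{P^{\infty}_{\varepsilon}(\omega)\}$ with the group of axiom (A5) chosen as $L:=\langle \supp(\omega)\cup F\rangle$ so that $F$ lands in the projection of the new lattice, and then uses finite local complexity (Lemma~\ref{NS1x}) to trap $\vL+F$ inside a model set of that scheme. You instead enlarge the internal group directly, forming the pushout $H'=(H\oplus L')/K$ along the graph of the star map, so that $H$ embeds as an open subgroup of $H'$ and the star map extends to $L'=\langle L\cup F\rangle$; the verifications you sketch (closedness of $K$, discreteness and relative denseness of $\cL'$, with cocompactness reduced to that of $\cL$ by translating the $L'$-coordinate by elements of $L$) all go through, and the window for $\vL+F$ is the explicit finite union $\bigcup_{f\in F}\bigl(\phi'(f)+W\bigr)$. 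Your construction is more self-contained---it needs none of the results on sup- or norm-almost periodic measures---and it keeps tighter, more explicit control of the new scheme, whereas the paper's proof is shorter given the machinery in place and illustrates the intended use of the degree of freedom in (A5). One small inaccuracy: $H'$ need not be $\sigma$-compact, since $H$ itself need not be (it can be an uncountable discrete group, as in Example~\ref{s1.2Ex4}); this is harmless, because only local compactness of the internal group is required for a cut and project scheme.
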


\begin{proof}
  Let $\oplam(W)$ be any model set containing $\vL$, and let $(G \times
  H, \cL)$ be its cut and project scheme. The problem we
  have to face is that $F$ might not be in the projection of the
  lattice, thus we will need to enlarge the lattice in the cut and
  project scheme.

  Let $K \subset H$ be any compact set so that $ W \cup \{ 0 \}
  \subset K^\circ$. Let $g \in C_{\mathsf{c}}(G)$ be so that $g(x)=1$ for all $x
  \in K$. Let $\omega=\omega_g$. Then $\omega$ is norm-almost periodic
  and $\supp(\omega)$ is a model set.

  Let $L:= \langle \supp(\omega) \cup F \rangle$. Then, the pair
  $\bigl( \{ P^{\infty}_{\!\varepsilon} (\omega) \}^{}_{0 <
    \varepsilon < \| \omega\|^{}_\infty} , L\bigr)$ satisfies
  conditions (A1)--(A5). By Corollary~\ref{123123123}, we thus obtain a
  cut and project scheme in which $\{ P^{\infty}_{\!\varepsilon} (\omega)
  \}^{}_{0 < \varepsilon < \| \omega\|^{}_\infty}$ are model sets.

  Let $W_0$ be the window of one such $P^{\infty}_{\!\varepsilon}
  (\omega)$ and let $\vG:= \supp(\omega) \subset L$.  Then, $\oplam(W_0)
  \subset \vG \subset L$, and $\vG$ is a model set, hence has finite
  local complexity. Therefore, there exists a finite set $J$ so that
\[
    \vG \,\subset\, \oplam(W_0) + J \ts .
\]
Since $L$ is a group which contains both $\oplam(W_0)$ and $\vG$, it is
easy to see that the smallest such $J$ is also a subset of $L=
\pi^{}_{1}(\cL)$. We can thus define $J^\star:= \{ j^\star \mid j
\in J \}$. Then,
\[
    \vL \,\subset\, \vG \,\subset\, \oplam(W+J^\star) \,,
\]
and, since $F \subset L$,
\[
   \vL+F \,\subset\, \oplam(W+J^\star+F^\star) \,.
\]
Thus $\vL +F$ is relatively dense and is a subset of a model set.
\end{proof}

\section{Almost lattices}

In this section we complete the characterisation of Meyer sets by studying the almost lattice property. We start by reviewing a result from \cite{S-MEY}. As the result is not stated explicitly, and the details are lost along the work, we include it here with its proof.

\begin{lemma}\label{meyl1}
  Let\/ $\vL \subset G$ be relatively dense. Then, the following
  statements are equivalent.
  \begin{itemize}\itemsep=1pt
  \item[(i)] $\vL- \vL-\vL$ is locally finite.
  \item[(ii)] $\vL- \vL-\vL$ is uniformly discrete.
  \item[(iii)] $\vL$ is locally finite and there exists a finite set\/
    $F$ so that \/$\vL-\vL \subset \vL+F$.
  \end{itemize}
\end{lemma}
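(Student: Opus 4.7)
The plan is to close the cycle $\text{(ii)}\Longrightarrow\text{(i)}\Longrightarrow\text{(iii)}\Longrightarrow\text{(ii)}$. The first implication is immediate, since any uniformly discrete subset of $G$ is locally finite.

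For $\text{(i)}\Longrightarrow\text{(iii)}$, I would first derive local finiteness of $\vL$. Fix any $c_0 \in \vL$; taking $c = c_0$ in $a - b - c$ gives the inclusion
\[
    \vL - \vL \,\subset\, \vL - \vL - \vL + c_0 \ts ,
\]
so $\vL - \vL$ is locally finite. In particular $0$ is isolated in $\vL - \vL$, whence $\vL$ is uniformly discrete and in particular locally finite. For the second part of (iii) I would mimic the covering trick from Lemma~\ref{NS1x}. Pick a compact $K \subset G$ with $\vL + K = G$ and put $F := K \cap (\vL - \vL - \vL)$; since $K$ is compact and $\vL - \vL - \vL$ is locally finite, $F$ is finite. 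For any $x \in \vL - \vL$, write $x = l + k$ with $l \in \vL$, $k \in K$; then $k = x - l \in (\vL - \vL) - \vL = \vL - \vL - \vL$, so $k \in F$ and $x \in \vL + F$.

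For $\text{(iii)}\Longrightarrow\text{(ii)}$, I would iterate the inclusion:
\[
    \vL - \vL - \vL \,\subset\, (\vL - \vL) + F \,\subset\, \vL + (F + F) \ts ,
\]
and show that $\vL + (F+F)$ is uniformly discrete. The hypothesis in (iii) already says $\vL - \vL \subset \vL + F$, which is a finite union of translates of the locally finite set $\vL$, hence locally finite; therefore $\vL$ is uniformly discrete. The difference set
\[
    (\vL + F + F) - (\vL + F + F) \,=\, (\vL - \vL) + \bigl((F+F) - (F+F)\bigr)
\]
is a finite union of translates of the locally finite set $\vL - \vL$, and is therefore locally finite. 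Thus $0$ is isolated in it, which is exactly the statement that $\vL + (F + F)$ is uniformly discrete, and the subset $\vL - \vL - \vL$ inherits this property.

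The one spot that requires a little care is the last step: one must distinguish local finiteness from uniform discreteness of $\vL - \vL - \vL$, since a finite union of translates of a uniformly discrete set is not in general uniformly discrete. The argument goes through here because the inclusion in (iii) forces $\vL - \vL$ (and not just $\vL$) to be locally finite, which is precisely what is needed to control the difference set of $\vL + F + F$.
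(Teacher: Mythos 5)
Your proof is correct and follows essentially the same route as the paper: the cycle $\text{(ii)}\Rightarrow\text{(i)}\Rightarrow\text{(iii)}\Rightarrow\text{(ii)}$, the covering argument with $F := K \cap (\vL-\vL-\vL)$ for $\text{(i)}\Rightarrow\text{(iii)}$, and bounding the difference set of $\vL-\vL-\vL$ by finitely many translates of a locally finite set for $\text{(iii)}\Rightarrow\text{(ii)}$. The only (harmless) differences are that you explicitly verify the local finiteness of $\vL$ in $\text{(i)}\Rightarrow\text{(iii)}$, which the paper leaves implicit, and that you stop the final inclusion at $(\vL-\vL)+(F+F-F-F)$ rather than $\vL+F+F+F-F-F$.
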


\begin{proof}
  The implication $\text{(ii)}\Longrightarrow \text{(i)}$ is obvious.

  Let us now show that $\text{(i)} \Longrightarrow \text{(iii)}$.

  Since
  $\vL$ is relatively dense, there exists a compact set $K$ so that
  $\vL+K = G$. Let
\[
    F \, :=\,  (\vL-\vL-\vL) \cap K \ts .
\]
Then $F$ is a finite set. We prove now that $\vL - \vL \subset \vL +F$, which completes this implication.

Let $y \in \vL- \vL$, then $y \in G=\vL+K$, and thus $y=z+k$ for some
$z \in \vL, k \in K$. Then, $k= y-z$ and therefore $k \in
\vL-\vL-\vL$.

This shows that $k\in F$, and since $z \in \vL$, we are done.

Finally, to show that $\text{(iii)} \Longrightarrow \text{(ii)}$, note that
\[
   \vL- \vL-\vL \,\subset\, (\vL+F)-\vL \,\subset\, \vL+F+F \ts.
\]
Hence
\[
\begin{split}
   ( \vL- \vL-\vL) - (\vL- \vL-\vL)\, & \subset\,
   (\vL+F+F)-(\vL+F+F) \\
    & =\,  (\vL-\vL)+F+F-F-F \\
   & \subset\, \vL+F+F+F-F-F \ts .
\end{split}
\]
Since $\vL$ is locally finite and $F$ is finite, the set
$\vL+F+F+F-F-F$ is locally finite, and hence there exists an open
neighbourhood $U$ of zero so that $(\vL+F+F+F-F-F) \cap U = \{ 0
\}$. Then, $\vL- \vL-\vL$ is $U$-uniformly discrete.
\end{proof}

Let us recall the definition of an almost lattice \cite{S-MEY,S-MOO}.

\begin{definition}
  Let $\vL \subset G$. We say that $\vL$ is an \emph{almost lattice} if
  $\vL$ is relatively dense, locally finite and there exists a finite
  set $F$ so that $\vL-\vL \subset \vL+F$.
\end{definition}

A very important property of almost lattices, which will carry on to Meyer sets, is that they are closed under taking the Minkovski difference. We show this in the next Lemma. This property is the key for Theorem~\ref{almostlatice}, which later will allow us complete the characterisation of Meyer sets.

\begin{lemma}
  Let\/ $\vL \subset G$ be an almost lattice. Then, $\vL-\vL$ is also
  almost lattice. In particular, any almost lattice has finite local
  complexity.
\end{lemma}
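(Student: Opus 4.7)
The plan is to verify the three defining properties of an almost lattice for $\vL - \vL$ in order: relative denseness, local finiteness, and the existence of a finite set $F'$ such that $(\vL-\vL)-(\vL-\vL) \subset (\vL-\vL)+F'$.

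First, relative denseness is immediate. Fixing any $x_0 \in \vL$, we have $\vL - x_0 \subset \vL - \vL$. Since $\vL$ is relatively dense, so is the translate $\vL - x_0$, hence so is the larger set $\vL - \vL$. Next, for local finiteness, the hypothesis $\vL - \vL \subset \vL + F$ exhibits $\vL - \vL$ as a subset of a finite union of translates of the locally finite set $\vL$, which is again locally finite, and therefore so is $\vL - \vL$.

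The key computation is the third condition. Using the almost-lattice hypothesis twice, we get
\[
   (\vL-\vL)-(\vL-\vL) \, \subset \, (\vL+F)-(\vL+F) \, = \, (\vL-\vL)+(F-F)\ts ,
\]
so we may simply take $F' := F-F$, which is finite. This is the whole content of the first assertion.

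For the ``in particular'' claim, I would argue as follows: by what we just proved, $\vL-\vL$ is locally finite, and since $0 \in \vL - \vL$, there is an open neighbourhood $U$ of $0$ with $(\vL-\vL)\cap U = \{0\}$; this already shows that $\vL$ is uniformly discrete, and $\vL-\vL$ being locally finite is precisely the finite local complexity condition for $\vL$. No substantial obstacle is expected here; the whole argument rests on the observation that the almost-lattice property is stable under forming the Minkowski difference, with the new witnessing set simply being $F-F$.
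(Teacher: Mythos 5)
Your proposal is correct and follows essentially the same route as the paper: relative denseness via a translate of $\vL$ contained in $\vL-\vL$, local finiteness via $\vL-\vL\subset\vL+F$, and the containment $(\vL-\vL)-(\vL-\vL)\subset(\vL-\vL)+F-F$ with witnessing set $F-F$. The concluding remark that local finiteness of $\vL-\vL$ is exactly finite local complexity of $\vL$ also matches the paper's usage of that term.
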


\begin{proof}
  If $\vL$ is almost lattice, then $\vL-\vL \subset \vL+F$ is locally
  finite. Also, $\vL- \vL$ contains a translate of $\vL$ and thus is
  relatively dense. Moreover,
\[
   (\vL-\vL)-(\vL-\vL) \,\subset\, (\vL-\vL)+F-F \ts ,
\]
which completes the proof.
\end{proof}

Next we prove that every almost lattice is a subset of a model set. The key for this Theorem is a result of \cite{S-LS2} which shows that the autocorrelation measure of a weakly almost periodic measure is strongly almost periodic. Therefore, if $\vL$ is an almost lattice with autocorrelation $\gamma$, and if $\eta$ is the autocorrelation measure of $\gamma$, $\eta$ is a strongly almost periodic measure supported on an almost lattice. Therefore, we can apply Theorem~\ref{T2}, and the claim follows from there.

\begin{theorem}\label{almostlatice}
  Let\/ $\vL \subset G$ be an almost lattice. Then, $\vL$ is a subset of
  a model set.
\end{theorem}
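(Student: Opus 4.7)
The plan is to reduce the statement to Theorem~\ref{T2} applied to an iterated autocorrelation of $\delta_\vL$, exactly as suggested in the paragraph preceding the theorem. Since $\vL$ is an almost lattice, the preceding lemma shows that $\vL - \vL$ is also an almost lattice; in particular $\vL$ has finite local complexity and is therefore uniformly discrete, so $\omega := \delta_\vL$ is translation bounded on $G$. Fixing a van Hove averaging sequence, I would form the autocorrelation $\gamma$ of $\omega$, which is a positive, positive-definite, translation bounded measure with $\supp(\gamma) \subset \vL - \vL$.

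Next I would invoke two ingredients from the diffraction theory of almost periodic measures. First, every autocorrelation measure is weakly almost periodic in the sense of Eberlein. Second, by the theorem of \cite{S-LS2}, the autocorrelation of a weakly almost periodic measure is strongly almost periodic. Applying this to $\gamma$, its own autocorrelation $\eta$ is a strongly almost periodic measure with
\[
    \supp(\eta) \,\subset\, \supp(\gamma) - \supp(\gamma) \,\subset\, (\vL - \vL) - (\vL - \vL) .
\]
Applying the previous lemma a second time, $(\vL - \vL) - (\vL - \vL)$ has finite local complexity and is in particular uniformly discrete. Therefore $\supp(\eta) - \supp(\eta)$ is uniformly discrete as well.

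At this stage, the equivalence (iii)$\Longleftrightarrow$(vi) of Theorem~\ref{T2} applied to $\eta$ produces a cut and project scheme $(G \times H, \cL)$ and a function $f \in C_{\mathsf{c}}(H)$ with $\eta = \omega_f$, whence $\supp(\eta) \subset \oplam(\supp(f))$ is contained in a model set. To conclude, I would use that $\gamma$ is positive and positive-definite with $0 \in \supp(\gamma)$, so that $\supp(\eta) = \supp(\gamma) - \supp(\gamma) \supset \supp(\gamma)$. The almost lattice hypothesis then yields the key uniform-recurrence fact $\supp(\gamma) = \vL - \vL$: for every $z \in \vL - \vL$, the set $\{x \in \vL \mid x + z \in \vL\}$ is relatively dense in $G$, hence has positive density, so $\gamma(\{z\}) > 0$. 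Choosing any $x_0 \in \vL$, we then obtain
\[
    \vL - x_0 \,\subset\, \vL - \vL \,=\, \supp(\gamma) \,\subset\, \supp(\eta) \,\subset\, \oplam(\supp(f)),
\]
so $\vL$ is a translate of a subset of a model set. Proposition~\ref{T1xy} then identifies $\vL$ itself as a subset of a model set.

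I expect the main obstacle to be the identification $\supp(\gamma) = \vL - \vL$, i.e.\ the positivity of the overlap densities $\dens(\vL \cap (\vL - z))$ uniformly for $z \in \vL - \vL$. This is precisely where the defining property $\vL - \vL \subset \vL + F$ of an almost lattice is used essentially, via the finitely many $F$-classes to produce a relatively dense family of witness pairs in $\vL$ for every prescribed difference $z$.
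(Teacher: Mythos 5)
Your reduction to Theorem~\ref{T2} via the iterated autocorrelation is exactly the paper's strategy, and the first four steps (forming $\gamma$, passing to its autocorrelation $\eta$, invoking \cite{S-LS2} for strong almost periodicity, and applying Theorem~\ref{T2} to get $\eta=\omega_f$ with $f\in C_{\mathsf{c}}(H)$) all match the paper's proof. The problem is your final step: the ``uniform-recurrence fact'' $\supp(\gamma)=\vL-\vL$ is false for almost lattices. Take $\vL=\ZZ\cup\{\tfrac12\}\subset\RR$. This is an almost lattice, since $\vL-\vL=\ZZ\cup(\ZZ+\tfrac12)=\vL+\{0,\tfrac12\}$; but for $z=\tfrac12\in\vL-\vL$ the set $\{x\in\vL\mid x+z\in\vL\}$ equals $\{0,\tfrac12\}$, which is finite, so $\gamma(\{\tfrac12\})=0$ and $\tfrac12\notin\supp(\gamma)$. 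The almost lattice condition $\vL-\vL\subset\vL+F$ controls where differences can lie, but says nothing about how often a given difference vector recurs, so it does not produce a relatively dense family of witness pairs. The companion identity $\supp(\eta)=\supp(\gamma)-\supp(\gamma)$ that you use in the same breath fails for the same reason (only the inclusion $\subset$ is automatic), and in the example above the chain $\vL-x_0\subset\supp(\gamma)\subset\supp(\eta)$ is simply wrong.

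The paper circumvents this by never trying to show that $\supp(\eta)$ contains a translate of $\vL$. Instead, it observes that $\supp(\eta)$ is relatively dense (being the support of a nonzero strongly almost periodic measure, indeed a model set with nonempty window interior) and sits inside the almost lattice $\vG:=(\vL-\vL)-(\vL-\vL)$, which has finite local complexity; Lemma~\ref{NS1x} then yields a \emph{finite} set $J$ with $\vG\subset\supp(\eta)+J$. Since $\vL-x\subset\vG$ for any $x\in\vL$, one gets $\vL\subset\supp(\eta)+J+\{x\}$, and Proposition~\ref{T1xy} (stability of ``subset of a model set'' under adding a finite set) finishes the proof. Replacing your last paragraph with this covering argument repairs the proof; everything before it is sound and coincides with the paper.
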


\begin{proof}
  Let $\gamma$ be an autocorrelation for $\vL$. Since $\vL$ is
  relatively dense, $\gamma \neq 0$. Also, since $\vL$ has finite
  local complexity, $\supp(\gamma) \subset \vL - \vL$.

  Let $\eta$ be an autocorrelation of $\gamma$. Here, $\gamma$ is a
  positive definite measure, thus weakly almost periodic
  (see \cite{S-ARMA} or \cite{S-MoSt} in this volume). We repeat now the argument of \cite{S-LS2} to show that $\eta$ is strongly almost periodic.

  For all $f,g \in C_{\mathsf{c}}(G)$, the functions
  $f*\gamma, g*\tilde{\gamma}$ are weakly almost periodic
  functions. Their Eberlein convolution $(f*\gamma) \star
  (g*\tilde{\gamma})$ is thus a strongly almost periodic function by
  \cite[Thm.~15.1]{S-EBE} (see also \cite{S-MoSt}). An easy computation shows that $(f*\gamma)
  \star (g*\tilde{\gamma})=f*g*\eta$. Indeed, for any fixed $t \in G$,
  we have
\[
\begin{split}
   (f*\gamma) \star (g*\widetilde{\gamma})(t)\, &=
    \lim_{n \to \infty} \frac{ \int_{A_n} (f*\mu)(s)\,
     (g*\widetilde{\gamma})(t-s) \dd s}{\theta_G(A_n)} \\
    &= \lim_{n \to \infty} \frac{ \int_{G} (f*\gamma)(s) \,
       (g*\widetilde{\gamma})(t-s)\, 1^{}_{\! A_n}(s) \dd s}{\theta_G(A_n)}
\end{split}
\]
By \cite{S-BL},
\[
   f*\eta*g \, = \, \lim_{n \to \infty}
    \frac{ f*\gamma^{}_{\! A_n}*\widetilde{\gamma}*g}{\theta_G(A_n)} \ts ,
\]
and thus we get
\[
\begin{split}
    \lefteqn{\!\!\!\bigl| (f*\gamma) \star (g*\widetilde{\gamma})(t)-
    f*g*\eta (t)\bigr|} \\
     &=\,\biggl|  \lim_{n \to \infty}
     \frac{\int_{G} (f*\gamma)(s)\, (g*\widetilde{\gamma})(t-s)\,
           1^{}_{\! A_n}(s) \dd s}{\theta_G(A_n)}\\
    &\qquad - \lim_{n \to \infty}
     \frac{\int_{G} (f*\gamma^{}_{\! A_n})(s)\, (g*\widetilde{\gamma})(t-s)\,
      \dd s}{\theta_G(A_n)} \biggr|\\
    &\leq\, \lim_{n \to \infty}
     \frac{\int_{G} \bigl|  (f*\gamma)(s)\, 1^{}_{\! A_n}(s)\,
     (g*\widetilde{\gamma})(t-s) -
     (f*\gamma^{}_{A_n})(s)\, (g*\widetilde{\gamma})(t-s)
     \dd s \bigr|}{\theta_G(A_n)} \\
    &=\, \lim_{n \to \infty}
      \frac{\int_{G} \bigl|  (f*\gamma)(s)\, 1^{}_{\! A_n}(s)  -
       (f*\gamma^{}_{A_n})(s) \bigr| \cdot
      \bigl| (g*\widetilde{\gamma})(t-s)  \bigr| \dd s}{\theta_G(A_n)} \\
    &\leq\, \lim_{n \to \infty}
      \|g*\widetilde{\gamma} \|^{}_\infty
      \frac{ \int_{G} \bigl|  (f*\gamma)(s)\, 1^{}_{\! A_n}(s)  -
       (f*\gamma^{}_{A_n})(s) \bigr| \dd s}{\theta_G(A_n)} \, .
\end{split}
\]
Let $K$ be any compact set containing $\pm\supp(f)$. We claim that
\[
    (f*\gamma)(s)\, 1^{}_{\! A_n}(s)  -
    (f*\gamma^{}_{\! A_n})(s) \, \neq\,  0
    \:\Longrightarrow\: s \in \partial^K(A_n) \ts .
\]

We need to consider two cases. Start with \emph{case 1} that $s \in
A_n$. Then,
\[
    (f*\gamma)(s) \, \neq\,
     (f*\gamma^{}_{\! A_n})(s) \:\Longrightarrow\:
     \int_G \! f(s-t) \dd \gamma(t) \, \neq\,
     \int_G \! f(s-t)\, 1^{}_{\! A_n}(t) \dd \gamma(t) \ts .
\]
Thus,
\[
  \int_G  f(s-t) \, (1-1^{}_{\! A_n}(t)) \dd \gamma(t) \,\neq\, 0 \ts .
\]
Then, there must be a $t$ such that $ f(s-t)\, (1-1^{}_{\! A_n}(t)) \neq
0$, which means that $t \notin A_n$ and $s-t \in \supp(f)$.  Therefore $s
\in (G \backslash A_n)+K$. Since $s \in A_n$, we get $s
\in \partial^K(A_n)$.

Now consider \emph{case 2} that $s \notin A_n$.
Then,
\[
   0 \,\neq\, (f*\gamma^{}_{A_n})(s) \, =\,
   \int_G f(s-t)\, 1^{}_{\! A_n}(t) \dd \gamma(t) \ts .
\]
Thus, there exists a $t$ so that $f(s-t)\, 1^{}_{\! A_n}(t) \neq
0$. Hence, $t \in A_n$ and $s-t \in \supp(f)$. This shows that $s \in
A_n+K$, and therefore $s \in \partial^K(A_n)$.

Now, since $(f*\mu)(s)\, 1^{}_{\! A_n}(s) - (f*\mu^{}_{A_n})(s)=0$ for
$s$ outside $\partial^K(A_n)$, we get
\[
\begin{split}
   \bigl| (f*\gamma) &\star (g*\widetilde{\gamma})(t)-
    f*g*\eta (t)\bigr| \\
   &\leq\, \lim_{n \to \infty}
     \|g*\widetilde{\gamma} \|^{}_\infty
      \frac{ \int_{G} \bigl|  (f*\gamma)(s)\, 1^{}_{\! A_n}(s)  -
      (f*\gamma^{}_{A_n})(s) \bigr| \dd s}{\theta_G(A_n)}\\
   &= \,\lim_{n \to \infty} \|g*\widetilde{\gamma} \|^{}_\infty
      \frac{ \int_{\partial^K(A_n)} \bigl|  (f*\gamma)(s)\, 1^{}_{\! A_n}(s)  -
      (f*\gamma^{}_{A_n})(s) \bigr| \dd s}{\theta_G(A_n)} \\
   &\leq\, \lim_{n \to \infty}
      \|g*\widetilde{\gamma} \|^{}_\infty
      \frac{ \int_{\partial^K(A_n)} 2 \|  f*\gamma \|^{}_\infty
      \dd s}{\theta_G(A_n)} \\
   &=\, \lim_{n \to \infty} 2 \ts \|  f*\gamma \|^{}_\infty \,
      \|g*\widetilde{\gamma} \|^{}_\infty \ts
      \frac{ \theta_G(\partial^K(A_n))  }{\theta_G(A_n)}
    \,  =\, 0
\end{split}
\]
Thus, for all functions $f,g \in C_{\mathsf{c}}(G)$, the function $f*g*\eta$ is
almost periodic. This implies that $\eta$ is an almost periodic measure \cite{S-ARMA,S-MoSt}.

Since $\vL$ is an almost lattice, we have $\supp(\eta) \subset
(\vL-\vL)- (\vL-\vL)=: \vG$ and $\vG$ is an almost lattice.  Therefore
$\eta$ is a strongly almost periodic measure and
$\supp(\eta)-\supp(\eta)$ is uniformly discrete. Thus, by
Theorem~\ref{T2}, $\supp(\eta)$ is a model set.

As $\supp(\eta) \subset \vG$ and $\vG$ has
finite local complexity, it follows from Lemma \ref{NS1x} that there exists a finite set $J$ so that
\[
  \vG \,\subset\, \supp(\eta) +J \ts .
\]
For some $x \in \vL$, we thus have
\[
   \vL-x+x-x \,\subset\,
    (\vL-\vL)-(\vL-\vL) \,\subset\, \supp(\eta)+J \ts .
\]
By Proposition~\ref{T1xy}, $\vL$ is then a subset of a model set.
\end{proof}

We can now complete our characterisation of Meyer sets  $\sigma$-compact LCAG $G$, similar to the one in
\cite[Thm.~9.1]{S-MOO}.

\begin{theorem}\label{MEYchar}
  Let\/ $\vL \subset G$ be relatively dense. Then, the following
  properties are equivalent.
  \begin{itemize}\itemsep=1pt
  \item[(i)] $\vL$ is a subset of a model set.
  \item[(ii)] $\vL$ is harmonious.
  \item[(iii)] $\vL^\varepsilon$ is relatively dense for all\/ $0<
    \varepsilon <2$.
  \item[(iv)] $\vL^\varepsilon$ is a model set for all\/ $0< \varepsilon
    < \frac{\sqrt{3}}{2}$.
  \item[(v)] $\vL^\varepsilon$ is a model set for some\/ $0< \varepsilon
    < \frac{\sqrt{3}}{2}$.
  \item[(vi)] $\vL^\varepsilon$ is relatively dense for some\/ $0<
    \varepsilon < \frac{\sqrt{3}}{6}$.
  \item[(vii)] $\vL$ is an almost lattice.
  \item[(viii)] $\vL- \vL- \vL$ is locally finite.
\end{itemize}
\end{theorem}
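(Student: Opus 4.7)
The plan is to build a single closed cycle of implications that folds (vi), (vii) and (viii) into the equivalence class already produced by Theorem~\ref{T1x}. That theorem delivers the equivalence of (i), (iii), (iv) and (v), and (ii) coincides with (iii) by the definition of a harmonious set. What remains is therefore the cycle
\[
    \text{(iii)} \,\Longrightarrow\, \text{(vi)} \,\Longrightarrow\, \text{(viii)}
       \,\Longrightarrow\, \text{(vii)} \,\Longrightarrow\, \text{(i)} \ts .
\]

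Three of these four links are essentially free. The implication (iii) $\Longrightarrow$ (vi) is immediate because $\sqrt{3}/6<2$. Under the standing hypothesis that $\vL$ is relatively dense, (viii) $\Longrightarrow$ (vii) is exactly Lemma~\ref{meyl1}: local finiteness of $\vL-\vL-\vL$ forces $\vL-\vL\subset\vL+F$ for some finite $F$, which together with the relative denseness and local finiteness of $\vL$ is the definition of an almost lattice. Finally, (vii) $\Longrightarrow$ (i) is exactly Theorem~\ref{almostlatice}.

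The substantial new step is (vi) $\Longrightarrow$ (viii). I would argue that $\vL^{\varepsilon}\subset(\vL-\vL-\vL)^{3\varepsilon}$ by a triangle-inequality calculation in the style of Lemma~\ref{a1a1a3}: for $\chi\in\vL^{\varepsilon}$ and $x,y,z\in\vL$,
\[
    \bigl|\chi(x-y-z)-1\bigr| \,=\,
    \bigl|\chi(x)\overline{\chi(y)}\,\overline{\chi(z)}-1\bigr|
    \,\leq\, 3\varepsilon \ts .
\]
If $0<\varepsilon<\sqrt{3}/6$ and $\vL^{\varepsilon}$ is relatively dense, then the larger set $(\vL-\vL-\vL)^{3\varepsilon}$ is also relatively dense, and $3\varepsilon<\sqrt{3}/2$. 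Applying Proposition~\ref{EDProp}(ii) to the set $\vL-\vL-\vL$ then shows that $\vL-\vL-\vL$ is uniformly discrete, and in particular locally finite, which is (viii).

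The main obstacle is the matching of the two sharp constants in this last step: the threshold $\sqrt{3}/6$ in (vi) is tailored precisely so that $3\varepsilon$ stays inside the range $(0,\sqrt{3}/2)$ in which Proposition~\ref{EDProp}(ii) is applicable, and this is the only place where the specific numerical value in (vi) is used in an essential way. No new machinery beyond the Baake--Moody construction, Theorem~\ref{T1x}, Theorem~\ref{almostlatice} and Lemma~\ref{meyl1} is required; the proof is just the careful bookkeeping of these constants around the triangle-inequality step.
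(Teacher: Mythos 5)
Your proposal is correct, and it closes the cycle through (vi) by a genuinely different route than the paper. The paper proves $\text{(vi)} \Longrightarrow \text{(i)}$ directly: from $\vL^\varepsilon-\vL^\varepsilon-\vL^\varepsilon \subset \vL^{3\varepsilon}$ and Proposition~\ref{EDProp}(i) it deduces that $\vL^\varepsilon$ is an almost lattice, then invokes Theorem~\ref{almostlatice} a second time (now applied to $\vL^\varepsilon$ rather than to $\vL$) to conclude that $\vL^\varepsilon$ is a subset of a model set, and finally uses Theorem~\ref{T1x} to recover $\vL \subset \vL^{\varepsilon\varepsilon}$ as a subset of a model set. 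You instead prove $\text{(vi)} \Longrightarrow \text{(viii)}$ directly via the inclusion $\vL^\varepsilon \subset (\vL-\vL-\vL)^{3\varepsilon}$ (which is the correct dual counterpart of the computation in Lemma~\ref{a1a1a3}; the triangle-inequality bound $3\varepsilon$ checks out, using $|\overline{\chi(y)}-1|=|\chi(y)-1|$) and then apply Proposition~\ref{EDProp}(ii) to the set $\vL-\vL-\vL$ itself. This is slightly more economical: the heavy Theorem~\ref{almostlatice} is used only once in your cycle, in the link $\text{(vii)}\Longrightarrow\text{(i)}$, and your argument makes transparent that $\sqrt{3}/6$ is exactly the threshold forced by $3\varepsilon<\sqrt{3}/2$ --- which is also where the paper's constant comes from. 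Both routes rest on the same underlying machinery (Theorem~\ref{T1x}, Lemma~\ref{meyl1}, Theorem~\ref{almostlatice}, Proposition~\ref{EDProp}), so the difference is organisational rather than conceptual.

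One small caveat: you dispose of (ii) by saying it ``coincides with (iii) by the definition of a harmonious set.'' In this paper harmoniousness is Meyer's notion, formulated via the Bohr compactification $(\widehat{G}\ts)_{\text{b}}$, and the equivalence $\text{(ii)}\Longleftrightarrow\text{(iii)}$ is not definitional: it is Theorem~\ref{harm} in the Appendix, whose proof needs compactness of $V^\varepsilon(\vL)$ in $(\widehat{G}\ts)_{\text{b}}$ and a Baire category argument. You should cite that result rather than treat the step as tautological.
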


\begin{proof}
  The equivalences $\text{(i)} \Longleftrightarrow \text{(iii)}
  \Longleftrightarrow \text{(iv)} \Longleftrightarrow \text{(v)}$
  follow from Theorem~\ref{T1x}.  The equivalence $\text{(vii)}
  \Longleftrightarrow \text{(viii)}$ follows from Lemma~\ref{meyl1},
  while $\text{(ii)} \Longleftrightarrow \text{(iii)}$ is proven in
  Theorem~\ref{harm} in the Appendix.

  The implications $\text{(iv)} \Longrightarrow \text{(vi)}$ and
  $\text{(i)} \Longrightarrow \text{(viii)}$ are trivial, while
  $\text{(vii)} \Longrightarrow \text{(i)}$ follows from
  Theorem~\ref{almostlatice}.

  We now show $\text{(vi)} \Longrightarrow \text{(i)}$ to complete the
  proof.  Since $\vL^\varepsilon-\vL^\varepsilon -\vL^\varepsilon
  \subset \vL^{3\varepsilon}$ and $ 3\varepsilon <
  \frac{\sqrt{3}}{2}$, it follows from Proposition~\ref{EDProp} that
  $\vL^\varepsilon-\vL^\varepsilon -\vL^\varepsilon$ is uniformly
  discrete. Also, by (vi), $\vL^\varepsilon$ is relatively dense.

  Thus $\vL^\varepsilon$ is an almost lattice by Lemma~\ref{meyl1}, and
  hence, by Theorem~\ref{almostlatice}, $\vL^\varepsilon$ is a subset
  of a model set. Then, by Theorem~\ref{T1x},
  $\vL^{\varepsilon\varepsilon}$ is a model set, which contains $\vL$.
\end{proof}

\begin{definition}
  A set $\vL \subset G$ is called a \emph{Meyer set} if $\vL$ is
  relatively dense and $\vL-\vL-\vL$ is locally finite.
\end{definition}

For a relatively dense subset $\vL \subset G$, any of the eight
conditions in Theorem~\ref{MEYchar} is equivalent to $\vL$ being a
Meyer set.

The following result follows immediately from the properties of almost
lattices and model sets.

\begin{corollary}\label{CT1333}
  Let\/ $\vL \subset G$ be a Meyer set and\/ $F$ be a finite set. Then
  $\vL+F$, $\vL \cup F$ and $\vL \pm \vL \pm ...\pm \vL$ are Meyer
  sets. In particular, $\Delta:= \vL- \vL$ a Meyer set.\qed
\end{corollary}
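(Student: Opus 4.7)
The strategy is to reduce every claim to the two conditions ``relatively dense'' and ``subset of a model set'', since by Theorem~\ref{MEYchar} these together characterise Meyer sets. Using Theorem~\ref{MEYchar}, I fix once and for all a cut and project scheme $(G \times H, \cL)$ together with a compact window $W \subset H$ such that $\vL \subset \oplam(W)$. In every one of the three constructions below, relative denseness will be immediate from the fact that the constructed set contains either a translate of $\vL$ or a translate of $-\vL$.

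For $\vL + F$ (with $F$ nonempty) the inclusion in a model set is exactly the content of Proposition~\ref{T1xy}, so the Meyer property follows. For $\vL \cup F$, the plan is to absorb $F$ into a translated copy of $\vL$. Concretely, fix any $\lambda_0 \in \vL$ and set $F' := \{0\} \cup (F - \lambda_0)$, a finite subset of $G$. Then $\vL \cup F \subset \vL + F'$: each $x \in \vL$ satisfies $x = x + 0$ with $0 \in F'$, and each $f \in F$ satisfies $f = \lambda_0 + (f - \lambda_0) \in \vL + F'$. A second application of Proposition~\ref{T1xy} then places $\vL \cup F$ inside a model set.

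For an expression of the form $\epsilon_1 \vL + \epsilon_2 \vL + \dots + \epsilon_n \vL$ with signs $\epsilon_i \in \{+1, -1\}$, the idea is to stay within the chosen cut and project scheme and only enlarge the window. Since each factor lies in $\oplam(W) \subset \pi_1(\cL)$, the whole sum lies in $\pi_1(\cL)$; and for any choice $(x_1, \dots, x_n) \in \vL^n$, the star map gives
\[
   \bigl( \epsilon_1 x_1 + \dots + \epsilon_n x_n \bigr)^\star
   \,=\, \epsilon_1 x_1^\star + \dots + \epsilon_n x_n^\star
   \,\in\, \epsilon_1 W + \dots + \epsilon_n W \ts .
\]
The set $W' := \epsilon_1 W + \dots + \epsilon_n W$ is compact in $H$ as a continuous image of $W^n$, so $\epsilon_1 \vL + \dots + \epsilon_n \vL \subset \oplam(W')$ is a subset of a model set. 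Relative denseness follows by fixing all but one of the summands at a chosen $\lambda_0 \in \vL$, which exhibits a translate of $\pm \vL$ inside the sum. In particular, the special case $n = 2$ with $\epsilon_1 = +1$, $\epsilon_2 = -1$ gives $\Delta = \vL - \vL$.

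There is no genuinely hard step here; the argument is essentially bookkeeping on top of Theorem~\ref{MEYchar} and Proposition~\ref{T1xy}, together with the elementary observation that the image under the $\star$-map of a sum in $\pi_1(\cL)$ is the corresponding sum of stars. The only mild subtlety is the $\vL \cup F$ case, where one must notice the trick of shifting $F$ so that it lands on top of $\vL$; once this is spotted, that case reduces to the $\vL + F'$ case.
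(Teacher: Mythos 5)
Your argument is correct. The paper offers no written proof here---the corollary is stated with a one-line remark that it ``follows immediately from the properties of almost lattices and model sets''---so your write-up is a legitimate filling-in of the details, and the route you choose (reduce everything to condition (i) of Theorem~\ref{MEYchar} via Proposition~\ref{T1xy} and window arithmetic in a fixed cut and project scheme) is the natural ``model set'' half of what the author alludes to. The absorption trick $\vL \cup F \subset \vL + (\{0\}\cup(F-\lambda_0))$ and the observation that $\epsilon_1 W + \dots + \epsilon_n W$ is a compact window containing the image of the signed sum are both sound. The alternative the paper also hints at is the ``almost lattice'' half: starting from $\vL-\vL\subset\vL+F_0$ (Lemma~\ref{meyl1}) one can show by induction that any finite signed sum of copies of $\vL$ sits inside $\pm\vL$ plus a finite set, and then invoke condition (viii) of Theorem~\ref{MEYchar}; this avoids choosing a cut and project scheme at all but is no shorter. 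Two trivial points worth a word in a final version: Proposition~\ref{T1xy} requires $F\neq\varnothing$ (for $F=\varnothing$ the $\vL+F$ clause is vacuous and $\vL\cup F=\vL$), and when you enlarge the window to $W'$ you should note that $W$ may be taken with non-empty interior so that $\oplam(W')$ is genuinely a model set.
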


In the case of compactly generated LCAGs, Baake, Lenz and Moody showed
\cite{S-BLM} that the condition \emph{$\vL$ is an almost lattice} is
equivalent to the weaker condition \emph{$\vL- \vL$ is weakly
  uniformly discrete}. Thus we obtain the following result.

\begin{theorem}\label{MeyCharCompGen}
  Let\/ $G$ be a compactly generated LCAG and let\/ $\vL \subset G$ be
  relatively dense. Then, the following statements are equivalent.
  \begin{itemize}\itemsep=2pt
  \item[(i)] $\vL$ is a subset of a model set,
  \item[(ii)] $\vL$ is harmonious,
  \item[(iii)] $\vL^\varepsilon$ is relatively dense for all\/ $0<
    \varepsilon <2$,
  \item[(iv)] $\vL^\varepsilon$ is a model set for all\/ $0< \varepsilon
    < \frac{\sqrt{3}}{2}$,
  \item[(v)] $\vL^\varepsilon$ is a model set for some\/ $0< \varepsilon
    < \frac{\sqrt{3}}{2}$,
  \item[(vi)] $\vL^\varepsilon$ is relatively dense for some\/ $0<
    \varepsilon < \frac{\sqrt{3}}{6}$,
  \item[(vii)] $\vL$ is an almost lattice,
  \item[(viii)] $\vL- \vL$ is weakly uniformly discrete.
\end{itemize}
\end{theorem}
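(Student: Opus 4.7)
The plan is to reduce the theorem to the combination of Theorem~\ref{MEYchar} and the Baake--Lenz--Moody result \cite{S-BLM}. First I observe that the equivalences of (i)--(vii) from Theorem~\ref{MEYchar} carry over verbatim, since their proofs (Theorem~\ref{T1x}, Theorem~\ref{almostlatice}, Lemma~\ref{meyl1}, and the harmonious characterisation from the Appendix) do not use compact generation of $G$. Hence it suffices to show that condition (viii) of the present theorem --- namely, $\vL - \vL$ is weakly uniformly discrete --- is equivalent to condition (vii) --- $\vL$ is an almost lattice --- under the extra assumption that $G$ is compactly generated.

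For the easy direction (vii)~$\Longrightarrow$~(viii), if $\vL$ is an almost lattice then $\vL - \vL \subset \vL + F$ for some finite set $F$. Since $\vL$ is locally finite and $F$ is finite, $\vL + F$ is locally finite; the argument in the proof of Lemma~\ref{meyl1} in fact shows that $(\vL - \vL) - (\vL - \vL)$ is locally finite, so $\vL - \vL$ is uniformly discrete. Uniform discreteness trivially implies weak uniform discreteness, so (viii) holds.

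For the hard direction (viii)~$\Longrightarrow$~(vii), I would invoke \cite{S-BLM}, which proves that in a compactly generated LCAG a relatively dense set $\vL$ with $\vL - \vL$ weakly uniformly discrete is automatically an almost lattice. Their argument exploits the structure theorem $G \cong \RR^a \times \ZZ^b \times K$ with $K$ compact to reduce to an essentially Euclidean situation; the uniform upper bound on $\#\bigl((\vL - \vL)\cap (t+W)\bigr)$ for a fixed compact window $W$, together with the relative denseness of $\vL$, then permits the extraction of a finite set $F$ with $\vL - \vL \subset \vL + F$, which is precisely the almost lattice property.

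The main obstacle is this hard direction. In a general $\sigma$-compact LCAG, weak uniform discreteness is strictly weaker than local finiteness --- as illustrated by the example preceding Lemma~\ref{L2222} in this chapter --- and the implication can fail without further hypotheses. The compact generation of $G$ is precisely what rigidifies the problem and upgrades the weak uniform discreteness of $\vL - \vL$ into the full almost-lattice structure, so any proof genuinely different from citing \cite{S-BLM} would have to engage with the structure theorem in a nontrivial way.
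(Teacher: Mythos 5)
Your proposal is correct and follows essentially the same route as the paper: the paper's proof simply cites \cite{S-BLM} for the equivalence $\text{(vii)} \Longleftrightarrow \text{(viii)}$ and derives everything else from Theorem~\ref{MEYchar}. Your added elaboration of the easy direction $\text{(vii)} \Longrightarrow \text{(viii)}$ via Lemma~\ref{meyl1} is fine but not needed beyond what the citation already covers.
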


\begin{proof}
  In this case, the equivalence $\text{(vii)} \Longleftrightarrow
  \text{(viii)}$ was proven in \cite{S-BLM}. The rest of the theorem
  now follows from Theorem~\ref{MEYchar}.
\end{proof}

\section[WAP measures with Meyer support]{Weakly almost periodic measures with Meyer set support}\label{WAPMS}

In this section we study the properties of weakly almost periodic measures supported inside Meyer sets. The key tool for this section is Corollary~\ref{WAPMEYdec} below, which says that the class of weakly almost periodic measures supported inside a Meyer set is closed under the canonical almost periodic decomposition. These results have important consequences to the diffraction of measures supported inside Meyer sets, and we will look at this in Section~\ref{diffMS}.

As the definition of weakly almost periodicity for measures, and the decomposition
\[
     {\mu^{}_{}} \, =\,
    {\mu^{}_{\mathsf{s}} +{\mu^{}_{0}}} \ts ,
\]
of a weakly almost periodic measure into the strongly and null-weakly almost periodic components are technical, we skip the definitions here and instead refer the reader to \cite{S-MoSt} in this volume.

\begin{theorem}\label{WAPModdec}\cite[Thm.~4.6]{S-NS9} Let\/ $(G \times H, \cL)$ be a cut and project scheme,\/ $W \subset H$ a compact set and let\/ $\mu$ be a weakly almost periodic measure such that\/ $\supp(\mu) \subset \oplam(W)$. Then\/ $\supp(\mu^{}_{\mathsf{s}}) \subset \oplam(W)$ and\/ $\supp(\mu^{}_{0}) \subset \oplam(W)$. \qed
\end{theorem}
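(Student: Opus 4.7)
The plan is to manufacture a bounded continuous strongly almost periodic function $\Phi$ on\/ $G$ taking values in\/ $[0,1]$ and equal to\/ $1$ exactly on $\oplam(W)$, and then to exploit that multiplication by $\Phi$ preserves the canonical decomposition $\mu = \mu^{}_{\mathsf{s}} + \mu^{}_{0}$. Because $\Phi \equiv 1$ on $\supp(\mu)$, one has $\Phi \cdot \mu = \mu$; uniqueness of the decomposition will then force $\mu^{}_{\mathsf{s}} = \Phi \cdot \mu^{}_{\mathsf{s}}$ and $\mu^{}_{0} = \Phi \cdot \mu^{}_{0}$, and the precise zero set of $1-\Phi$ will pin both supports to $\oplam(W)$.

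For the construction of $\Phi$, I would pick $\phi \in \Cc(H)$ with $0 \le \phi \le 1$ and $\phi \equiv 1$ on $W$, which is possible by Urysohn's lemma since $W$ is compact in the LCAG $H$. By Theorem~\ref{T2}, the associated weighted model comb $\omega_\phi := \sum_{x \in \pi^{}_{1}(\cL)} \phi(x^\star)\,\delta_x$ is norm-almost periodic. Since $\supp(\phi)$ is compact, $\oplam(\supp\phi)$ is uniformly discrete, so one can choose a symmetric precompact open neighbourhood $U$ of $0 \in G$ with
\[
    (U-U) \,\cap\, \bigl(\oplam(\supp\phi)-\oplam(\supp\phi)\bigr) \,=\, \{0\} \ts .
\]
Pick an even $\psi \in \Cc(G)$ with $0 \le \psi \le 1$, $\psi(0)=1$ and $\supp(\psi) \subset U$, and set $\Phi := \psi * \omega_\phi$. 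The bound $\| T_t \Phi - \Phi \|^{}_{\infty} \le \|\psi\|^{}_{\infty}\, \|T_t \omega_\phi - \omega_\phi\|^{}_{\supp \psi}$ transfers norm-almost periods of $\omega_\phi$ into uniform almost periods of $\Phi$, so $\Phi$ is a Bohr almost periodic element of $\Cu(G)$. The packing property of $U$ forces at most one term of the pointwise sum defining $\Phi$ to be non-zero at any $z\in G$, so $0 \le \Phi \le 1$, and $\Phi(z) = 1$ if and only if $z \in \pi^{}_{1}(\cL)$ and $z^\star \in W$, i.e., precisely on $\oplam(W)$. In particular $\Phi \equiv 1$ on $\supp(\mu)$, and therefore $\Phi \cdot \mu = \mu$.

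The step I expect to be the main obstacle is the multiplicativity of the canonical decomposition with respect to strongly almost periodic multipliers: if $\nu$ is weakly almost periodic and $\Psi \in \Cu(G)$ is almost periodic, then $\Psi \cdot \nu$ is weakly almost periodic with $(\Psi \cdot \nu)^{}_{\mathsf{s}} = \Psi \cdot \nu^{}_{\mathsf{s}}$ and $(\Psi \cdot \nu)^{}_{0} = \Psi \cdot \nu^{}_{0}$. This rests on two standard features of the weakly almost periodic framework of \cite{S-ARMA,S-MoSt}: multiplying a strongly almost periodic measure by an almost periodic function yields a strongly almost periodic measure, and the Eberlein mean of a null weakly almost periodic measure twisted by an almost periodic function still vanishes, so $\Psi \cdot \nu^{}_{0}$ remains null weakly almost periodic. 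Combined with uniqueness of the decomposition, the identity $\Phi \cdot \mu = \mu$ then yields $\mu^{}_{\mathsf{s}} = \Phi \cdot \mu^{}_{\mathsf{s}}$ and $\mu^{}_{0} = \Phi \cdot \mu^{}_{0}$.

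To conclude, rewrite $\mu^{}_{\mathsf{s}} = \Phi \cdot \mu^{}_{\mathsf{s}}$ as $(1-\Phi) \cdot \mu^{}_{\mathsf{s}} = 0$ and observe that, since $1-\Phi \ge 0$ is continuous, $|(1-\Phi)\ts\mu^{}_{\mathsf{s}}| = (1-\Phi)\, |\mu^{}_{\mathsf{s}}|$; so $(1-\Phi)\,|\mu^{}_{\mathsf{s}}|=0$ as a positive measure, which forces $|\mu^{}_{\mathsf{s}}|$ to be concentrated on $\{1-\Phi = 0\} = \oplam(W)$. Hence $\supp(\mu^{}_{\mathsf{s}}) \subset \oplam(W)$, and the identical argument applied to $\mu^{}_{0}$ completes the proof.
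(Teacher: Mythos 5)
The paper does not actually prove this statement: it is imported verbatim from \cite[Thm.~4.6]{S-NS9} (note the \verb+\qed+ immediately after the statement), so there is no internal proof to compare yours against. Your strategy --- build a Bohr almost periodic $\Phi\in\Cu(G)$ with $\Phi\equiv 1$ on $\oplam(W)$, observe $\Phi\cdot\mu=\mu$, and use that multiplication by an almost periodic function commutes with the decomposition $\mu=\mu^{}_{\mathsf{s}}+\mu^{}_{0}$ --- is a legitimate route. The module step you single out as the main obstacle is indeed true and provable along the lines you indicate: for a character $\chi$ one has $f*(\chi\nu)=\chi\cdot\bigl((\overline{\chi}f)*\nu\bigr)$, which gives the commutation for trigonometric polynomials, and one passes to a general almost periodic $\Phi$ by uniform approximation, since the three classes of measures are closed under $\|\cdot\|^{}_{K}$-limits.

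The genuine gap is the claim that $\Phi(z)=1$ \emph{only if} $z\in\oplam(W)$. Urysohn's lemma gives you $\phi\equiv 1$ on $W$ and $\psi(0)=1$, but not $\phi^{-1}(1)=W$ and $\psi^{-1}(1)=\{0\}$; exact level sets of this kind require $W$ and $\{0\}$ to be $G_\delta$ sets, which can fail when $H$ or $G$ is not metrisable, and nothing in the statement forces metrisability of $H$. As written, your final step only yields $\supp(\mu^{}_{\mathsf{s}})\subset\{\Phi=1\}\subset\oplam\bigl(\phi^{-1}(1)\bigr)+\psi^{-1}(1)$, which may be strictly larger than $\oplam(W)$. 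The repair is cheap: run the whole argument for every precompact open $V\supset W$ with $\supp(\phi)\subset V$ and every admissible neighbourhood $U$ of $0$, obtaining $\supp(\mu^{}_{\mathsf{s}})\subset\oplam(\overline{V})+\overline{U}$ in each case, and then intersect over $V$ and $U$; since each $\oplam(\overline{V})$ is closed and uniformly discrete and $\bigcap_V\overline{V}=W$, the intersection is exactly $\oplam(W)$. With that modification (and the same for $\mu^{}_{0}$) the argument goes through.
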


An immediate consequence of Theorem~\ref{WAPModdec} is the following result.

\begin{corollary}\label{WAPMEYdec} Let\/ $\mu$ be a weakly almost periodic measure in some LCAG $G$ such that\/ $\supp(\mu)$ is a subset of a Meyer set. Then\/ $\supp(\mu^{}_{\mathsf{s}})$ and\/ $\supp(\mu^{}_{0})$ are also subsets of Meyer sets.
\end{corollary}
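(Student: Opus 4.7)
The plan is to reduce this to Theorem~\ref{WAPModdec} by using the characterisation of Meyer sets as subsets of model sets from Theorem~\ref{MEYchar}. The argument is essentially a one-line chain: Meyer $\Rightarrow$ model $\Rightarrow$ apply the cut and project version already proved.

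First, I would invoke the hypothesis. By assumption, there is a Meyer set $\vL \subset G$ with $\supp(\mu) \subset \vL$. By Theorem~\ref{MEYchar}, the equivalence (i)--(viii), the set $\vL$ is a subset of a model set, so there exists a cut and project scheme $(G \times H, \cL)$ and a compact set $W \subset H$ (with non-empty interior) such that $\vL \subset \oplam(W)$. In particular,
\[
   \supp(\mu) \,\subset\, \oplam(W) \ts .
\]

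Next, I would apply Theorem~\ref{WAPModdec} directly to the weakly almost periodic measure $\mu$ in this cut and project scheme with window $W$. This yields
\[
   \supp(\mu^{}_{\mathsf{s}}) \,\subset\, \oplam(W) \qquad \text{and} \qquad
   \supp(\mu^{}_{0}) \,\subset\, \oplam(W)\ts .
\]
Since $\oplam(W)$ is a model set, it is itself a Meyer set (any model set with compact window with non-empty interior is relatively dense, has finite local complexity, and in fact satisfies all the conditions of Theorem~\ref{MEYchar}). Hence $\supp(\mu^{}_{\mathsf{s}})$ and $\supp(\mu^{}_{0})$ are both subsets of the Meyer set $\oplam(W)$, which is what we needed to prove.

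There is no real obstacle here: Theorem~\ref{WAPModdec} does all the heavy lifting, and the only thing the corollary does is translate the hypothesis from the ``Meyer set'' language into the ``model set'' language via Theorem~\ref{MEYchar}(i), apply the theorem, and then translate back. If one wanted to be pedantic, the only mild point is that ``subset of a Meyer set'' is equivalent to ``subset of a model set'' via Theorem~\ref{MEYchar}, so the conclusion could equivalently be stated with ``model set'' in place of ``Meyer set,'' but the phrasing with Meyer sets is the more customary one for the applications to diffraction in Section~\ref{diffMS}.
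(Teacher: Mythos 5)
Your proof is correct and follows essentially the same route as the paper: pass from the Meyer set to a model set $\oplam(W)$ via Theorem~\ref{MEYchar}, apply Theorem~\ref{WAPModdec}, and note that $\oplam(W)$ is itself a Meyer set. The only cosmetic difference is that the paper replaces $W$ by $\overline{W}$ to guarantee the window is compact before invoking Theorem~\ref{WAPModdec}, a point you handle by assuming $W$ compact from the start.
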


\begin{proof}
Let $\vL$ be a Meyer set such that $\supp(\mu) \subset \vL$. By Theorem~\ref{MEYchar}, there exists some model set $\oplam(W)$ such that $\vL \subset \oplam(W)$. As $\overline{W}$ is compact and
$\supp(\mu) \subset \oplam(\overline{W})$, it follows from Theorem\ref{WAPModdec} that $\supp(\mu^{}_{\mathsf{s}})$ and $\supp(\mu^{}_{0})$ are subsets of the Meyer set $\oplam(\overline{W})$.
\end{proof}

It follows from Corollary~\ref{WAPMEYdec} that given a weakly almost periodic measure $\mu$ whose support is a subset of a Meyer set, the strongly almost periodic component $\mu^{}_{\mathsf{s}}$ satisfies the condition (iii), hence all the conditions of Theorem~\ref{T2}. It was observed in \cite[Prop.~5.7]{S-NS1} that null weakly almost periodic measures are also easy to characterize, which leads to a simple characterization of the decomposition of almost periodic measures with Meyer set support.

We start by reviewing an important equivalence from \cite[Prop.~5.7]{S-NS1}.

\begin{lemma}\label{L5.7} Let\/ $\mu$ be a weakly almost periodic measure with uniformly discrete support in some $\sigma$-compact LCAG $G$, and let\/ $A_n$ be a van Hove sequence. Then\/ $\mu$ is null weakly almost periodic if and only if
  \[
   \lim_{n \to \infty}
   \frac{\left| \mu\right|(A_n)}{\theta_G(A_n)} \,= \,0 \ts .
\]
\end{lemma}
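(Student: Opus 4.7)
My plan is to translate null-weak almost periodicity, which is defined via the decomposition of $\mu$, into a concrete density condition on $|\mu|$, exploiting the fact that on a uniformly discrete support the modulus of a convolution and the convolution of the modulus coincide. Recall that a weakly almost periodic function $f$ is null precisely when the Eberlein mean $M(|f|)$ vanishes, and that a weakly almost periodic measure $\mu$ is null weakly almost periodic iff $g * \mu$ is a null weakly almost periodic function for every $g \in C_{\mathsf{c}}(G)$. So the task reduces to showing that the condition $M(|g * \mu|) = 0$ for every $g \in C_{\mathsf{c}}(G)$ is equivalent to $\lim_n |\mu|(A_n)/\theta_G(A_n) = 0$.

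The first step I would carry out is a van Hove computation: for any non-negative $h \in C_{\mathsf{c}}(G)$ with support $K$,
\[
   M( h * |\mu|) \, = \, \Bigl( \int_G h \dd \theta_G \Bigr)\, \lim_{n\to\infty} \frac{|\mu|(A_n)}{\theta_G(A_n)} \ts .
\]
Fubini rewrites $\int_{A_n}(h * |\mu|) \dd \theta_G$ as $\int_G (\mathbf{1}_{A_n} * \check h)(y) \dd |\mu|(y)$; the inner factor equals $\int h$ for $y$ deep inside $A_n$, vanishes for $y \notin A_n + K$, and is uniformly bounded on the $K$-boundary. Uniform discreteness of $\supp(\mu)$ makes $|\mu|$ translation bounded, so the boundary contribution is $O(\theta_G(\partial^K A_n))$, which is $o(\theta_G(A_n))$ by the van Hove property. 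The second step exploits uniform discreteness directly: I would choose a symmetric open neighbourhood $V$ of $0$ with $(V - V) \cap (\supp(\mu) - \supp(\mu)) \subset \{0\}$ and a non-negative $g \in C_{\mathsf{c}}(G)$ supported in $V$ with $\int g > 0$. For each $x$, at most one $y \in \supp(\mu)$ then satisfies $g(x - y) \neq 0$, giving the pointwise identity $|g * \mu|(x) = (g * |\mu|)(x)$.

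Putting the two observations together handles both directions. For the forward implication, if $\mu$ is null weakly almost periodic, applying the characterisation to this $g$ gives $0 = M(|g * \mu|) = M(g * |\mu|) = (\int g)\ts \lim_n |\mu|(A_n)/\theta_G(A_n)$, so the limit vanishes. For the converse, once $\lim_n |\mu|(A_n)/\theta_G(A_n) = 0$, for any $f \in C_{\mathsf{c}}(G)$ the elementary pointwise bound $|f * \mu| \leq |f| * |\mu|$ and the van Hove computation applied with $h = |f|$ yield $0 \leq M(|f * \mu|) \leq M(|f| * |\mu|) = 0$. Hence each $f * \mu$ is a null weakly almost periodic function, and $\mu$ itself is null weakly almost periodic.

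The main obstacle I anticipate is a careful treatment of the mean $M$: for the positive measure $|\mu|$, the density $|\mu|(A_n)/\theta_G(A_n)$ need not a priori converge along the given van Hove sequence, whereas convergence is automatic for the WAP functions $g * \mu$. I would get around this by proving the van Hove identity for $\limsup$ and $\liminf$ separately, which is sufficient in both directions because vanishing of either side forces convergence to zero.
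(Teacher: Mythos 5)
Your proof is correct, and it is worth noting that the paper itself does not actually prove this lemma: it defers to \cite[Prop.~5.7]{S-NS1} with the remark that the Fourier transformability hypothesis used there is only needed for other equivalences, and leaves the reader to check that the argument carries over. Your write-up supplies a complete, self-contained argument of exactly the kind that citation is meant to invoke. The two key ingredients --- the van Hove computation identifying $\limsup_n \theta_G(A_n)^{-1}\int_{A_n}(h*\left|\mu\right|)\dd\theta_G$ with $\bigl(\int_G h\dd\theta_G\bigr)\limsup_n \left|\mu\right|(A_n)/\theta_G(A_n)$ for $0\le h\in C_{\mathsf{c}}(G)$, and the choice of $0 \le g\in C_{\mathsf{c}}(G)$ supported in a neighbourhood $V$ with $(V-V)\cap(\supp(\mu)-\supp(\mu))=\{0\}$, which forces the pointwise identity $\left|g*\mu\right|=g*\left|\mu\right|$ --- are precisely what is needed, and your handling of the possible non-convergence of $\left|\mu\right|(A_n)/\theta_G(A_n)$ via separate $\limsup$/$\liminf$ estimates closes the one genuine loose end. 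You also correctly isolate where uniform discreteness enters: only in the forward direction; the converse needs nothing beyond $\left|f*\mu\right|\le \left|f\right|*\left|\mu\right|$ and translation boundedness.

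One small imprecision: translation boundedness of $\left|\mu\right|$ does not follow from uniform discreteness of the support alone (consider $\sum_{n\ge 1} n\ts\delta_n$); it follows from the hypothesis that $\mu$ is weakly almost periodic, since then $f*\mu$ is a bounded (indeed weakly almost periodic) function for every $f\in C_{\mathsf{c}}(G)$, which forces $\sup_{x\in\supp(\mu)}\left|\mu(\{x\})\right|<\infty$. This misattribution does not affect the validity of the argument, but it should be corrected.
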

\begin{proof} This result is proved under the extra assumption that \/ $\mu$ is Fourier transformable in \cite[Prop.~5.7]{S-NS1}. As Fourier transformability is only needed for some of the other equivalent conditions in \cite[Prop.~5.7]{S-NS1}, we skip the proof and instead we let the reader check that the proof from \cite{S-NS1} works in this case.
\end{proof}

Combining Corollary~\ref{WAPMEYdec}, Theorem~\ref{T2} and Lemma~\ref{L5.7} we get the following result.

\begin{proposition}\label{P5.7} Let\/ $\mu$ be a weakly almost periodic measure in some $\sigma$-compact LCAG $G$ such that\/ $\supp(\mu)$ is a subset of a Meyer set, and let\/ $A_n$ be a van Hove sequence . Then,
\begin{itemize}\itemsep=2pt
  \item[(i)] $\mu^{}_{\mathsf{s}}$ is norm-almost periodic.
  \item[(ii)] $   \lim_{n \to \infty}
   \frac{\left| \mu^{}_{0}\right|(A_n)}{\theta_G(A_n)} \,= \,0 \ts .$ \qed
\end{itemize}
\end{proposition}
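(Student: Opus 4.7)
The plan is to combine the three ingredients cited just above the statement, matching each to one of the hypotheses of the two claims.

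First I would apply Corollary~\ref{WAPMEYdec} to $\mu$. Because $\supp(\mu)$ is contained in a Meyer set, the corollary gives that both $\supp(\mu^{}_{\mathsf{s}})$ and $\supp(\mu^{}_{0})$ are subsets of a (possibly larger) Meyer set. In particular each of these supports is uniformly discrete, each has finite local complexity, and, by Corollary~\ref{CT1333}, each of the difference sets $\supp(\mu^{}_{\mathsf{s}}) - \supp(\mu^{}_{\mathsf{s}})$ and $\supp(\mu^{}_{0}) - \supp(\mu^{}_{0})$ is uniformly discrete as well. Both components are discrete translation bounded measures, inheriting these properties from $\mu$ via the canonical decomposition.

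For part (i), the component $\mu^{}_{\mathsf{s}}$ is strongly almost periodic by the very definition of the decomposition. Together with the fact that $\supp(\mu^{}_{\mathsf{s}}) - \supp(\mu^{}_{\mathsf{s}})$ is uniformly discrete, this is precisely hypothesis (iii) of Theorem~\ref{T2} (restricted, as permitted, to the set of points where the weight is non-zero). Applying that theorem immediately produces hypothesis (i) of the same, namely that $\mu^{}_{\mathsf{s}}$ is norm-almost periodic.

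For part (ii), $\mu^{}_{0}$ is null weakly almost periodic by the decomposition, and its support was just shown to be uniformly discrete. Lemma~\ref{L5.7} then applies directly and yields
\[
    \lim_{n \to \infty}
    \frac{\left|\mu^{}_{0}\right|(A_n)}{\theta_G(A_n)} \, = \, 0 \ts .
\]

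The only place where one has to pause is in checking the translation boundedness and (local) regularity of the individual components $\mu^{}_{\mathsf{s}}$ and $\mu^{}_{0}$, so that both Theorem~\ref{T2} and Lemma~\ref{L5.7} are genuinely applicable. This is the main (and essentially only) obstacle, but it is not a serious one: once Corollary~\ref{WAPMEYdec} pins the two supports inside a fixed Meyer set, discreteness and local boundedness propagate from $\mu$ to the components, and no new estimates beyond those already embedded in the cited results are required.
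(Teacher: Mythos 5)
Your proof is correct and follows exactly the route the paper intends: the paper's own "proof" is the single sentence "Combining Corollary~\ref{WAPMEYdec}, Theorem~\ref{T2} and Lemma~\ref{L5.7} we get the following result," and you have simply spelled out that combination, including the (correct) observation that translation boundedness and discreteness of the components follow from the weak almost periodicity of $\mu$ and the Meyer set containment of the supports. Nothing to add.
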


We complete this section by proving that for weakly almost periodic measures supported inside Meyer sets, the conditions (i) and (ii) in Proposition~\ref{P5.7} characterize the weakly almost periodic decomposition.

\begin{proposition}\label{P5.8}Let\/ $\mu$ be a weakly almost periodic measure in some $\sigma$-compact LCAG $G$, and let\/ $A_n$ be a van Hove sequence. Let $\nu^{}_{1}$ and $\nu^{}_{2}$ be two measures such that
\begin{itemize}\itemsep=2pt
  \item[(i)] $\mu=\nu^{}_{1}+\nu^{}_{2}$.
  \item[(ii)] $\nu^{}_{1}$ is norm-almost periodic.
  \item[(iii)] $ \lim_{n \to \infty}  \frac{\left| \nu^{}_{2} \right|(A_n)}{\theta_G(A_n)} \,= \, 0 \ts .$
\end{itemize}
Then,
\[
   \nu^{}_{1} \,=\,
   \mu^{}_{\mathsf{s}} \quad\text{and}\quad
    \nu^{}_{2} \,=\,
   \mu^{}_{0} \ts .
\]
\end{proposition}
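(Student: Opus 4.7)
The Eberlein decomposition $\mu = \mu^{}_{\mathsf{s}} + \mu^{}_{0}$ is unique among decompositions of a weakly almost periodic measure into a strongly almost periodic summand and a null weakly almost periodic summand (the intersection of these two classes being $\{0\}$, as recorded in \cite{S-MoSt}). I would therefore reduce the proof to verifying that $\nu^{}_{1}$ is strongly almost periodic and that $\nu^{}_{2}$ is null weakly almost periodic, from which the stated identifications follow immediately.

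The first claim is immediate: by Proposition~\ref{BMSNAP}(i), norm-almost periodicity of $\nu^{}_{1}$ implies strong almost periodicity. For the second, observe that $\nu^{}_{2} = \mu - \nu^{}_{1}$ is weakly almost periodic and translation bounded, so the task reduces to showing $M(|f * \nu^{}_{2}|) = 0$ for every $f \in C_{\mathsf{c}}(G)$. A direct Tonelli-type computation with $K := -\supp(f)$ yields
\[
\int_{A_n} |f * \nu^{}_{2}|(x) \dd x \,\leq\, \|f\|^{}_{1}\, |\nu^{}_{2}|(A_n + K) \ts .
\]
The translation boundedness of $\nu^{}_{2}$ combined with the van Hove property of $\{A_n\}$ gives $|\nu^{}_{2}|(A_n + K) \leq |\nu^{}_{2}|(A_n) + o(\theta_G(A_n))$, and hypothesis (iii) then forces the right-hand side, divided by $\theta_G(A_n)$, to converge to zero. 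Hence $M(|f * \nu^{}_{2}|) = 0$ for every $f \in C_{\mathsf{c}}(G)$, so $\nu^{}_{2}$ is null weakly almost periodic.

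\textbf{Main obstacle.} Two points require care: first, extracting from \cite{S-MoSt} the precise characterisation of null weak almost periodicity as the vanishing of $M(|f * \nu^{}_{2}|)$ for all $f \in C_{\mathsf{c}}(G)$; and second, justifying that the Eberlein decomposition is genuinely unique in this setting. If one prefers to avoid invoking uniqueness, a parallel route is to set $\delta := \mu^{}_{\mathsf{s}} - \nu^{}_{1} = \nu^{}_{2} - \mu^{}_{0}$, note that $\delta$ is strongly almost periodic (so $f * \delta$ and $|f * \delta|$ are Bohr almost periodic), bound $M(|f * \delta|) \leq M(|f * \nu^{}_{2}|) + M(|f * \mu^{}_{0}|) = 0$ using the estimate above together with the definition of null weak almost periodicity for $\mu^{}_{0}$, and then invoke the classical fact that a non-negative Bohr almost periodic function with mean zero is identically zero to conclude $f * \delta = 0$ for every $f \in C_{\mathsf{c}}(G)$, hence $\delta = 0$.
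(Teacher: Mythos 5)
Your proof is correct and follows essentially the same route as the paper's: norm-almost periodicity of $\nu^{}_{1}$ gives strong almost periodicity via Proposition~\ref{BMSNAP}(i), the van Hove property converts hypothesis (iii) into $M(|f*\nu^{}_{2}|)=0$ for all $f\in C_{\mathsf{c}}(G)$, and uniqueness of the Eberlein decomposition finishes the argument. The only cosmetic difference is that you absorb the boundary effect into $|\nu^{}_{2}|(A_n+K)$ whereas the paper restricts the integral to $A_n\setminus\partial^K(A_n)$; these are equivalent uses of the van Hove condition.
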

\begin{proof}
As $\nu^{}_{1}$ is norm almost periodic, it follows from \cite[Lemma~7]{S-BM} that $\nu^{}_{1}$ is strongly almost periodic. Thus, $\nu^{}_{2}=\mu-\nu^{}_{1}$ is weakly almost periodic as the difference of weakly almost periodic measures. Hence, for each $f \in C_{\mathsf{c}}(G)$ the limit

\[
  M(|f* \nu^{}_{2}|) = \lim_{n \to \infty}
      \frac{ \int_{A_n} \bigl|  (f*\nu^{}_{2})(s)\bigr| \dd s}{\theta_G(A_n)}\ts ,
\]
exists. To complete the proof, we need to show that this limit is always equal to zero.

Let $K:= \supp(f)$.  An easy computation shows that

\[
\int_{A_n \backslash \partial^K(A_n)} \bigl|  (f*\nu^{}_{2})(s)\bigr| \dd s \leq \bigl( \int_G  \bigl|  f(s)\bigr| ds \bigr)  \bigl|  \nu^{}_{2}\bigr| (A_n) \,.
\]

Combining this inequality with $\text{(iii)}$ and the van Hove condition, we get
\[
  M(|f* \nu^{}_{2}|) =0 \ts,
\]
Since this holds for all $f \in C_{\mathsf{c}}(G)$, it follows that $\nu^{}_{2}$ is a null weakly almost periodic measure. The claim now follows from the uniqueness of the almost periodic decomposition (see \cite{S-ARMA}, or \cite{S-MoSt} in this volume).
\end{proof}

Proposition~\ref{P5.7} and Proposition~\ref{P5.8} give a simple characterization of the weakly almost periodic decomposition for measures supported inside Meyer sets.

\begin{corollary} Let\/ $\mu$ be a weakly almost periodic measure with Meyer set support in some $\sigma$-compact LCAG $G$, and let\/ $A_n$ be a van Hove sequence. Then\/ $\mu^{}_{\mathsf{s}}$ and \/ $\mu^{}_{0}$ are uniquely characterized by
\begin{itemize}\itemsep=2pt
  \item[(i)] $\mu=\mu^{}_{\mathsf{s}}+ \mu^{}_{0} $.
  \item[(ii)] $\mu^{}_{\mathsf{s}}$ is norm-almost periodic.
  \item[(iii)] $ \lim_{n \to \infty}  \frac{\left|  \mu^{}_{0}  \right|(A_n)}{\theta_G(A_n)} \,= \, 0 \ts .$
\end{itemize}
\end{corollary}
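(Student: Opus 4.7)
The plan is to bundle the two previous propositions: existence of a decomposition with the three listed properties comes from Proposition~\ref{P5.7}, while uniqueness is precisely Proposition~\ref{P5.8}. Both halves are already available, so the work reduces to verifying that their hypotheses are met and stitching the statements together.

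For existence, I take the canonical decomposition $\mu = \mu^{}_{\mathsf{s}} + \mu^{}_{0}$ of the weakly almost periodic measure $\mu$ guaranteed by the general theory (see \cite{S-MoSt}), which immediately gives (i). Since $\supp(\mu)$ lies in a Meyer set, Corollary~\ref{WAPMEYdec} tells me that $\supp(\mu^{}_{\mathsf{s}})$ and $\supp(\mu^{}_{0})$ also lie in Meyer sets, and in particular are subsets of a common model set. This places $\mu^{}_{\mathsf{s}}$ and $\mu^{}_{0}$ in the setting of Proposition~\ref{P5.7}, whose two conclusions are exactly properties (ii) and (iii).

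For uniqueness, suppose $\nu^{}_{1}$, $\nu^{}_{2}$ are any two measures satisfying (i)--(iii). I observe that Proposition~\ref{P5.8} does not require $\supp(\mu)$ to be contained in a Meyer set: it only needs $\mu$ weakly almost periodic and $\nu^{}_{1}, \nu^{}_{2}$ to fulfill the three conditions. Hence Proposition~\ref{P5.8} applies verbatim and yields $\nu^{}_{1} = \mu^{}_{\mathsf{s}}$ and $\nu^{}_{2} = \mu^{}_{0}$, which is the desired uniqueness.

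There is no genuine obstacle here beyond book-keeping; the nontrivial content has been absorbed into Corollary~\ref{WAPMEYdec} (which controls the supports through the Baake--Moody machinery and Theorem~\ref{T2}), into Lemma~\ref{L5.7} (which identifies null weak almost periodicity with vanishing van Hove average for measures with uniformly discrete support), and into the Eberlein/van Hove estimate used in Proposition~\ref{P5.8}. The only subtle point to keep in mind when writing the proof is that, in the uniqueness step, one must not invoke the Meyer hypothesis on $\nu^{}_{1}$, $\nu^{}_{2}$ separately; it is enough that $\mu$ is weakly almost periodic, and the norm-almost periodicity of $\nu^{}_{1}$ together with the density condition on $\nu^{}_{2}$ already force the identification with the canonical components.
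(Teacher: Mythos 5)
Your proposal is correct and follows exactly the route the paper intends: the corollary is stated immediately after the remark that Proposition~\ref{P5.7} and Proposition~\ref{P5.8} together give the characterisation, with existence of the decomposition satisfying (i)--(iii) coming from Proposition~\ref{P5.7} (via Corollary~\ref{WAPMEYdec}) and uniqueness from Proposition~\ref{P5.8}. Your added observation that Proposition~\ref{P5.8} needs no Meyer hypothesis on $\nu^{}_{1},\nu^{}_{2}$ is accurate and consistent with how that proposition is stated in the paper.
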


\section[Diffraction from Meyer sets]{Diffraction
of weighted Dirac combs with Meyer set support in $\sigma$-compact
LCAGs}\label{diffMS}

The diffraction of Meyer sets in $\RR^d$ was studied in \cite{S-NS1} and
\cite{S-NS2}. As the two papers use two characterisations of Meyer sets, which were known to be equivalent in $\RR^d$ but not in general, it was previously not clear if and how these results can be generalized outside the case $G=\RR^d$. Theorem~\ref{MEYchar} eliminates now this problem.

In this section we use the results from Section~\ref{WAPMS} to study the diffraction measures of weighted Dirac comb supported inside Meyer sets. This generalizes all the results of \cite{S-NS1}
and \cite{S-NS2} in two directions: from the case $G=\RR^d$ to a $\sigma$-compact LCAG and from Meyer sets to arbitrary complex weighted Dirac combs supported inside Meyer sets.

We start by reviewing a Lemma which we will use few times.

\begin{lemma}\cite[Prop.~3.5]{S-NS1}\label{lmksd1}
  Let\/ $\mu \neq 0$ be a strongly almost periodic measure in some
  LCAG $G$. Then, $\supp(\mu)$ is relatively dense in\/ $G$. \qed
\end{lemma}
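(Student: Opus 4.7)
The plan is to deduce the relative denseness of $\supp(\mu)$ from the relative denseness of the set of $\varepsilon$-almost periods of a suitably chosen continuous function obtained from $\mu$. Since $\mu$ is strongly almost periodic, every $f \ast \mu$ with $f \in C_{\mathsf{c}}(G)$ is an almost periodic (in particular, bounded and continuous) function on $G$, and the class of almost periodic functions has the standard property that $\varepsilon$-almost periods form a relatively dense set for every $\varepsilon > 0$. This will be the tool that converts \emph{non-vanishing} of $\mu$ into \emph{relative denseness} of $\supp(\mu)$.

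First, I would use that $\mu \neq 0$ to pick some $f \in C_{\mathsf{c}}(G)$ with $h := f \ast \mu \not\equiv 0$. Such $f$ exists: if $f \ast \mu = 0$ for every $f \in C_{\mathsf{c}}(G)$, then $\int f \dd \mu = 0$ for every continuous function of compact support (by pairing $f$ against an approximate identity), which forces $\mu = 0$. Fix a point $x_{0}$ and a constant $c > 0$ with $|h(x_{0})| > c$.

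Next, because $h$ is almost periodic, the set
\[
   T \, := \, \bigl\{ t \in G \,\big|\, \| h - T_{t} h \|_{\infty} < c \bigr\}
\]
is relatively dense in $G$. For every $t \in T$, one has $|h(x_{0}+t)| \geq |h(x_{0})| - |h(x_{0}+t) - h(x_{0})| > 0$, so $h(x_{0}+t) \neq 0$. But $h = f \ast \mu$ implies
\[
   \bigl\{ x \in G \mid h(x) \neq 0 \bigr\} \, \subset \, \supp(\mu) + \supp(f) \ts ,
\]
since $h(x) \neq 0$ forces the existence of $y \in \supp(\mu)$ with $x - y \in \supp(f)$. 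Therefore
\[
   x_{0} + T \, \subset \, \supp(\mu) + \supp(f) \ts .
\]

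Finally, I would conclude by the elementary observation that if a relatively dense set is contained in $B + K$ with $K$ compact, then $B$ itself is relatively dense. Concretely, pick a compact set $K'$ with $x_{0} + T + K' = G$; then $G \subset \supp(\mu) + \supp(f) + K'$, and $\supp(f) + K'$ is compact. Hence $\supp(\mu)$ is relatively dense in $G$, completing the proof. No step is truly the ``hard part'' here; the only point that needs a little care is the initial selection of $f$ ensuring $f \ast \mu \neq 0$, but this is immediate from the definition of a nonzero measure.
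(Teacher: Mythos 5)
Your argument is correct, and it is essentially the standard proof of this fact: the paper itself gives no proof here but cites \cite[Prop.~3.5]{S-NS1}, whose argument is exactly this one (choose $f\in C_{\mathsf{c}}(G)$ with $f*\mu\neq 0$, use the relatively dense set of $\varepsilon$-almost periods of the almost periodic function $f*\mu$ to produce a relatively dense set on which $f*\mu$ does not vanish, and observe that $\{x \mid (f*\mu)(x)\neq 0\}\subset \supp(\mu)+\supp(f)$). All the individual steps — the existence of $f$ with $f*\mu\neq 0$, the lower bound $|h(x_0+t)|>0$ for $c$-almost periods $t$, and the final compactness argument — are sound.
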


Now we use Corollary~\ref{WAPMEYdec} to show that both the discrete and continuous diffraction spectra of a weighted Dirac comb with Meyer set support are almost periodic measures.

\begin{theorem}\label{S21}
  Let\/ $\omega$ be any weighted Dirac comb supported inside some Meyer set, and let\/ $\gamma$ be any autocorrelation for\/ $\omega$. Then\/ $\widehat{\gamma}^{}_{\mathsf{pp}}$ and\/
$\widehat{\gamma}^{}_{\mathsf{c}}$ are strongly almost periodic measures.

In particular, each of the pure point and continuous diffraction spectra of\/ $\omega$ is either trivial or has a relatively dense support.
\end{theorem}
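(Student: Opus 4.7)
The plan is to exploit the Fourier duality between the almost periodic decomposition of the autocorrelation $\gamma$ and the Lebesgue decomposition of the diffraction $\widehat{\gamma}$, combined with Corollary~\ref{WAPMEYdec} applied to $\gamma$.

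I would first observe that, since $\omega$ is supported in some Meyer set $\vL$, one has $\supp(\gamma) \subset \vL - \vL$, which is again a Meyer set by Corollary~\ref{CT1333}. As an autocorrelation, $\gamma$ is positive definite, and therefore weakly almost periodic and Fourier transformable (see \cite{S-MoSt} in this volume). Applying Corollary~\ref{WAPMEYdec} to $\gamma$, both $\gamma^{}_{\mathsf{s}}$ and $\gamma^{}_{0}$ are supported in Meyer sets and in particular on uniformly discrete sets; they are therefore purely atomic measures in the Lebesgue sense.

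Next I would invoke the standard Fourier duality from \cite{S-ARMA} (reviewed in \cite{S-MoSt}): for a Fourier transformable weakly almost periodic measure, the almost periodic decomposition corresponds to the Lebesgue decomposition of the Fourier transform, giving
\[
   \widehat{\gamma^{}_{\mathsf{s}}} \, =\, \widehat{\gamma}^{}_{\mathsf{pp}}
   \quad\text{and}\quad
   \widehat{\gamma^{}_{0}} \, =\, \widehat{\gamma}^{}_{\mathsf{c}} \ts .
\]
In particular, both $\widehat{\gamma}^{}_{\mathsf{pp}}$ and $\widehat{\gamma}^{}_{\mathsf{c}}$ are themselves Fourier transformable. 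To deduce strong almost periodicity, I would then use the criterion of \cite{S-ARMA} in the converse direction: a Fourier transformable measure is strongly almost periodic if and only if its Fourier transform is a pure point measure. By Fourier inversion, the Fourier transforms of $\widehat{\gamma}^{}_{\mathsf{pp}}$ and $\widehat{\gamma}^{}_{\mathsf{c}}$ equal the reflections of $\gamma^{}_{\mathsf{s}}$ and $\gamma^{}_{0}$ respectively, both of which were shown in the previous step to be purely atomic. This yields the strong almost periodicity of $\widehat{\gamma}^{}_{\mathsf{pp}}$ and $\widehat{\gamma}^{}_{\mathsf{c}}$.

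The final statement is then immediate from Lemma~\ref{lmksd1}: any non-zero strongly almost periodic measure has relatively dense support, so if either diffraction component fails to vanish, its support is automatically relatively dense. The main technical obstacle I anticipate is the careful bookkeeping of Fourier transformability of the individual pieces $\widehat{\gamma}^{}_{\mathsf{pp}}$ and $\widehat{\gamma}^{}_{\mathsf{c}}$ and the precise formulation of Fourier inversion used; these are standard but delicate, and would rely on the ambient Fourier theory of measures as developed in \cite{S-ARMA} and \cite{S-MoSt}.
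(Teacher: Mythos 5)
Your argument follows the paper's proof almost step for step: reduce to the fact that $\supp(\gamma)\subset\vL-\vL$ is contained in a Meyer set, apply Corollary~\ref{WAPMEYdec} to conclude that $\gamma^{}_{\mathsf{s}}$ and $\gamma^{}_{0}$ are discrete measures, identify $\widehat{\gamma^{}_{\mathsf{s}}}=(\widehat{\gamma})^{}_{\mathsf{pp}}$ and $\widehat{\gamma^{}_{0}}=(\widehat{\gamma})^{}_{\mathsf{c}}$, and finish with Lemma~\ref{lmksd1}. The one place where you deviate is the last duality step, and there your route is needlessly fragile: you want to certify strong almost periodicity of $(\widehat{\gamma})^{}_{\mathsf{pp}}$ and $(\widehat{\gamma})^{}_{\mathsf{c}}$ by showing that \emph{their} Fourier transforms are pure point via inversion, which presupposes that these two measures are themselves Fourier transformable, i.e.\ that $\gamma^{}_{\mathsf{s}}$ and $\gamma^{}_{0}$ are twice transformable --- something you have not established (the paper only invokes twice transformability later, as an explicit hypothesis in Proposition~\ref{S22}). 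The paper avoids this entirely by using the forward direction of the duality from \cite[Cor.~11.1]{S-ARMA}: the Fourier transform of a transformable \emph{discrete} measure is already strongly almost periodic, so once you know $\gamma^{}_{\mathsf{s}}$ and $\gamma^{}_{0}$ are discrete and transformable you are done, with no inversion and no extra bookkeeping. Replacing your converse-plus-inversion step by this forward implication closes the only gap in the proposal.
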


\begin{proof}
Let $\vL$ be any Meyer set containing $\supp(\omega)$. Then, $\supp(\gamma) \subset \vL-\vL$. It follows from Corollary~\ref{CT1333} that $\supp(\gamma)$ is a subset of a Meyer set.
Thus, by Corollary~\ref{WAPMEYdec}, $\gamma^{}_{\mathsf{s}}$ and ${\gamma^{}_{0}}$ are supported inside a Meyer set, hence they are discrete measures.

As $\gamma$ is a positive definite measure, it is Fourier Transformable and by \cite{S-MoSt} or \cite{S-LS2} we have
\[
    \widehat{\gamma^{}_{\mathsf{s}}} \,=\,
    (\widehat{\gamma})^{}_{\mathsf{pp}} \quad\text{and}\quad
    \widehat{\ts\gamma^{}_{0}\ts} \,= \,\nts
    (\widehat{\gamma})^{}_{\mathsf{c}} \ts .
\]

Hence, by \cite[Cor.~11.1]{S-ARMA}, both measures
$(\widehat{\gamma})^{}_{\mathsf{pp}}$ and
$(\widehat{\gamma})^{}_{\mathsf{c}}$ are strongly almost periodic.

The rest of the proof follows from Lemma~\ref{lmksd1}.
\end{proof}

The following corollary is immediate consequence of
Proposition~\ref{S21}.

\begin{corollary}\label{cms}
  Let\/ $M \subset G$ be a Meyer set with autocorrelation \/ $\gamma$. Then,
  $(\widehat{\gamma})^{}_{\mathsf{pp}}$ and\/
  $(\widehat{\gamma})^{}_{\mathsf{c}}$ are strongly almost periodic, and\/
  $(\widehat{\gamma})^{}_{\mathsf{pp}}\neq 0$.

  In particular, $M$ has a relatively dense set of Bragg
  peaks. Moreover, either\/ $M$ is pure point diffractive or\/ $M$ has a
  relatively dense continuous spectrum.
\end{corollary}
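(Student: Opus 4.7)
My plan is to deduce the corollary as an essentially immediate consequence of Theorem~\ref{S21} applied to the Dirac comb $\omega := \delta_M$. Since $M$ is itself a Meyer set, Theorem~\ref{S21} gives at once that $(\widehat{\gamma})^{}_{\mathsf{pp}}$ and $(\widehat{\gamma})^{}_{\mathsf{c}}$ are strongly almost periodic and that each is either zero or has relatively dense support. Everything in the statement except the nonvanishing assertion $(\widehat{\gamma})^{}_{\mathsf{pp}} \neq 0$ then follows from this: the relatively dense set of Bragg peaks is exactly the support of $(\widehat{\gamma})^{}_{\mathsf{pp}}$ (once known to be nonzero), and the final dichotomy amounts to the two alternatives for $(\widehat{\gamma})^{}_{\mathsf{c}}$.

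The only real work is to verify $(\widehat{\gamma})^{}_{\mathsf{pp}} \neq 0$. The natural way is to produce the trivial Bragg peak at $0 \in \widehat{G}$. Let $(A_n)$ denote a van Hove sequence along which the autocorrelation $\gamma$ is defined. Being a Meyer set, $M$ is uniformly discrete and relatively dense, so its density
\[
   \dens(M) \, := \, \lim_{n \to \infty} \frac{\#(M \cap A_n)}{\theta_G(A_n)}
\]
exists (passing if necessary to a subsequence used in defining $\gamma$) and is strictly positive. A standard Fourier-analytic computation then identifies the mass of the diffraction at the origin with the squared density, giving $\widehat{\gamma}(\{0\}) = \dens(M)^2 > 0$. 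Hence $(\widehat{\gamma})^{}_{\mathsf{pp}}$ has a point mass at $0$ and is in particular nonzero.

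With nonvanishing established, Theorem~\ref{S21} (last sentence) applied to the strongly almost periodic measure $(\widehat{\gamma})^{}_{\mathsf{pp}}$ forces its support to be relatively dense in $\widehat{G}$, which yields the relatively dense set of Bragg peaks for $M$. The final dichotomy is then immediate: either $(\widehat{\gamma})^{}_{\mathsf{c}} = 0$, meaning $M$ is pure point diffractive, or $(\widehat{\gamma})^{}_{\mathsf{c}} \neq 0$, in which case its support is relatively dense by Theorem~\ref{S21} once more.

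The only subtle point is the identification $\widehat{\gamma}(\{0\}) = \dens(M)^2$. This is a classical fact but depends on the conventions chosen for the autocorrelation and diffraction measure; in the framework of the paper it is a direct consequence of expressing the diffraction as the vague limit of $|\widehat{\delta_{M \cap A_n}}|^2/\theta_G(A_n)$ and evaluating at the origin. I do not expect a real obstacle here beyond making the conventions precise.
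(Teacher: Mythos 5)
Your proposal follows essentially the same route as the paper: the paper also derives everything except the nonvanishing from Theorem~\ref{S21}, and obtains $(\widehat{\gamma})^{}_{\mathsf{pp}}\neq 0$ from the positivity of the intensity at the origin. The one small caution is that you assert the equality $\widehat{\gamma}(\{0\})=\dens(M)^2$, which in general requires more than the existence of the autocorrelation; the paper instead uses the always-valid inequality $\widehat{\gamma}(\{0\})\geq \bigl(\underline{\dens}(M)\bigr)^2$, which already gives positivity since a relatively dense set has strictly positive lower density, so your conclusion stands once you weaken the equality to this inequality.
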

\begin{proof}
$(\widehat{\gamma})^{}_{\mathsf{pp}}\neq 0$ follows immediately from the observation
\[
    \widehat{\gamma}( \{0 \}) \,\geq\,
     \bigl(\underline{\dens}(M)\bigr)^{2} \, >  \, 0 \ts ,
\]
compare \cite[Cor.~9.1]{S-TAO} for the case $G=\RR^{d}$.

The rest of the proof follows now from Proposition~\ref{S21}.
\end{proof}

One simple consequence of these results is the fact that, if a Meyer
set has no singularly continuous spectrum, and if its absolutely
continuous spectrum is given by some uniformly continuous bounded
function $f$, then $f$ is automatically an almost periodic function.

\begin{corollary}
  Let\/ $\omega$ be a weighted Dirac comb supported inside a Meyer set and let\/ $\gamma$ be an autocorrelation
  of\/ $\omega$.  Suppose that\/ $(\widehat{\gamma})^{}_{\mathsf{sc}}=0$ and\/
  $(\widehat{\gamma})^{}_{\mathsf{ac}}=f \theta_G$ for some\/ $f \in
  C_{\mathsf{u}}(\widehat{G})$. Then, $f$ is an almost periodic function on\/
  $\widehat{G}$.
\end{corollary}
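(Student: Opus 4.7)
The plan is to show that under the hypotheses the continuous component of the diffraction is exactly $f\ts\theta_{\widehat{G}}$, apply Theorem~\ref{S21} to conclude it is strongly almost periodic, and then invoke the standard correspondence between strongly almost periodic Haar-density measures and almost periodic functions.

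First, I would use the Lebesgue decomposition of the continuous part together with the hypotheses $(\widehat{\gamma})^{}_{\mathsf{sc}}=0$ and $(\widehat{\gamma})^{}_{\mathsf{ac}}=f\ts\theta_{\widehat{G}}$ to write
\[
(\widehat{\gamma})^{}_{\mathsf{c}} \, = \, (\widehat{\gamma})^{}_{\mathsf{sc}} + (\widehat{\gamma})^{}_{\mathsf{ac}} \, = \, f\ts\theta_{\widehat{G}} \ts .
\]
Since $\omega$ is a weighted Dirac comb supported inside a Meyer set, Theorem~\ref{S21} gives that $(\widehat{\gamma})^{}_{\mathsf{c}}$ is a strongly almost periodic measure on $\widehat{G}$. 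Combining this with the identity above, $f\ts\theta_{\widehat{G}}$ is itself a strongly almost periodic measure.

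At this point the hypothesis $f\in C_{\mathsf{u}}(\widehat{G})$ enters. The classical equivalence says that for a bounded uniformly continuous function, the measure $f\ts\theta_{\widehat{G}}$ is strongly almost periodic if and only if $f$ is an almost periodic function; this is precisely the implication used in the proof of the preceding Lemma of Section~\ref{DWMC} (the characterisation of norm-almost periodicity of $f\dd\theta_G$), where it is attributed to \cite{S-MoSt}. Applying it to our $f$ yields that $f$ is an almost periodic function on $\widehat{G}$, finishing the proof.

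There is no real obstacle in the argument: it is a direct chain of three facts (hypotheses on the spectrum, Theorem~\ref{S21}, and the measure-to-function transfer). The only point to double-check is that the measure $f\ts\theta_{\widehat{G}}$ makes sense as a translation bounded measure so that strong almost periodicity is well defined — which follows from $f\in C_{\mathsf{u}}(\widehat{G}) \subset L^{\infty}(\widehat{G})$ — and that the Meyer-support hypothesis is genuinely used, which it is, through the invocation of Theorem~\ref{S21} to produce the strong almost periodicity of $(\widehat{\gamma})^{}_{\mathsf{c}}$.
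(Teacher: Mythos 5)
Your proposal is correct and follows essentially the same route as the paper: both deduce from Theorem~\ref{S21} that $(\widehat{\gamma})^{}_{\mathsf{c}}=f\ts\theta_{\widehat{G}}$ is strongly almost periodic and then transfer this to almost periodicity of $f$ via the standard fact for uniformly continuous densities (which the paper cites from \cite{S-ARMA} and \cite{S-MoSt} and then reproves by convolving with an approximate identity and using that almost periodic functions are closed under uniform limits). The only difference is that you cite this transfer step rather than spelling out the approximate-identity argument, which the paper includes only as an optional self-contained addendum.
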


\begin{proof}
  We know that the measure $(\widehat{\gamma})^{}_{\mathsf{c}}=f
  \theta_G$ is a strongly almost periodic measure.  It follows
  from \cite[Cor.~5.2]{S-ARMA} or \cite{S-MoSt} that $f$ is an almost periodic
  function. Since the proof of this result is relatively simple, we
  include it here.

  Let $f_\alpha \subset C_{\mathsf{c}}(G)$ be an approximate identity
  for the convolution algebra $(C_{\mathsf{u}}(G), *)$. Then, for all
  $g \in C_{\mathsf{u}}(G)$, the net $g*f_\alpha$ converges uniformly
  to $g$.

  Since $f \theta_G$ is a strongly almost periodic measure, for all $\alpha$ the function $f
  *f_\alpha$ is almost periodic. Therefore $f$ is the uniform limit of the net of almost periodic functions $f*f_\alpha$. The claim follows now from the fact that the space of almost periodic functions is closed in $(C_{\mathsf{u}}(G), \| \, \|_\infty)$ \cite{S-EBE,S-MoSt}.
\end{proof}

Corollary~\ref{cms} tells us that, for all $f \in K_2(\widehat{G})$, the
sets
\[
    P_{\mathsf{pp}}(\varepsilon; f)  \, := \,
    \bigl\{ t \in \widehat{G} \mid
     \bigl\|T_t( f*(\widehat{\gamma})^{}_{\mathsf{pp}}) -
     f* (\widehat{\gamma})^{}_{\mathsf{pp}} \bigr\|^{}_\infty < \varepsilon\bigr\}
\]
and
\[
   P_{\mathsf{c}}(\varepsilon; f) \, := \,
   \bigl\{ t \in \widehat{G} \mid
    \bigl\|T_t(f*(\widehat{\gamma})^{}_{\mathsf{c}}) -
     f* (\widehat{\gamma})^{}_{\mathsf{c}} \bigr\|^{}_\infty < \varepsilon \bigr\}
\]
are relatively dense.

As our next result, we will prove directly that these sets contain
some $\varepsilon$-dual sets, thus providing new proofs for
Corollary~\ref{cms} and Theorem~\ref{S21}.  More precisely, we
prove that there exists a $C := C(f; \gamma)$ so that, for any regular
model set $\vG$ containing $\vL$, we have $(\vG-\vG)^\varepsilon
\subset P_1(C\varepsilon; f) \cap P_2(C\varepsilon; f)$.

For a finite measure $\nu$, we will denote its total variation by $\|
\nu \|$, that is
\[
   \| \nu \| \, := \, \left| \nu \right| (G) \ts .
\]

\begin{lemma}\label{mu2}
  Let\/ $\mu$ be a twice Fourier transformable measure, and let\/
  $\Delta$ be a concentration set for\/ $\mu$. Let\/ $f \in
  K_2(\widehat{G})$. Then\/ $\| \widecheck{f} \mu \|< \infty$, and for all\/
  $\varepsilon >0$ and all\/ $\chi \in (\Delta)^\varepsilon$ we have
\[
    \bigl\| T_\chi (f *\widehat{\mu})- f*\widehat{\mu} \bigr\|^{}_\infty
     \,\leq\,   \bigl\| \widecheck{f}  \mu \bigr\| \ts\varepsilon \ts .
\]
\end{lemma}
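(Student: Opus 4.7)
The plan is to invoke the convolution theorem for the Fourier transform of measures, which identifies $f*\widehat{\mu}$ with the Fourier transform of the finite measure $\widecheck{f}\ts\mu$, and then to exploit the fact that a translation by $\chi$ on the Fourier side corresponds to modulation by $\overline{\chi}$ on the physical side. The hypothesis $\chi \in (\Delta)^\varepsilon$ together with the concentration of $\left| \mu \right|$ on $\Delta$ will then give a uniform $\varepsilon$-bound. For the finiteness claim: since $\mu$ is Fourier transformable and $f \in K_2(\widehat{G})$, the defining duality $\int g\,\dd\widehat{\mu} = \int \widecheck{g}\,\dd\mu$ for $K_2$-test functions forces $\widecheck{f} \in L^1(\left|\mu\right|)$, so $\widecheck{f}\ts\mu$ is a finite complex measure with $\|\widecheck{f}\ts\mu\| = \int_G |\widecheck{f}(x)|\,\dd\left|\mu\right|(x) < \infty$.

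Next I would establish the identity $(f*\widehat{\mu})(\psi) = \int_G \overline{\psi(x)}\,\widecheck{f}(x)\,\dd\mu(x)$ for every $\psi \in \widehat{G}$. This comes from applying the duality above to the $K_2$-function $g(\chi) := f(\psi - \chi)$; a change of variable in the inverse Fourier transform yields $\widecheck{g}(x) = \overline{\psi(x)}\,\widecheck{f}(x)$, and so the identity follows. Substituting $\psi \mapsto \psi - \chi$ and subtracting then gives
\[
   T_\chi(f*\widehat{\mu})(\psi)-(f*\widehat{\mu})(\psi) \,=\, \int_G \overline{\psi(x)}\,(\chi(x)-1)\,\widecheck{f}(x)\,\dd\mu(x) \ts .
\]

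Taking absolute values, using $|\psi(x)|=1$ and passing the modulus under the integral, the right hand side is bounded by $\int_G |\chi(x)-1|\,|\widecheck{f}(x)|\,\dd\left|\mu\right|(x)$. Since $\left| \mu \right|$ is concentrated on $\Delta$ and $\chi \in (\Delta)^\varepsilon$ yields $|\chi(x)-1|\leq\varepsilon$ for every $x \in \Delta$, this integral is at most $\varepsilon\,\|\widecheck{f}\ts\mu\|$, uniformly in $\psi$, which is exactly the claimed estimate. The only substantive step is the convolution-Fourier identity in the second paragraph; but this is essentially built into the $K_2$-test-function definition of Fourier transformability, so the difficulty is notational rather than conceptual. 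Once it is in place, the remaining argument is a one-line estimate.
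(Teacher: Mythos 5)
Your argument is correct and follows essentially the same route as the paper: the finiteness of $\|\widecheck{f}\ts\mu\|$ from the $K_2$-duality, the identity expressing $f*\widehat{\mu}$ as an integral against $\widecheck{f}\ts\mu$ so that translation by $\chi$ becomes modulation by $\chi$ on the physical side, and the final estimate using $|\chi(t)-1|\leq\varepsilon$ on the concentration set. The only difference is cosmetic (you derive the pointwise integral formula explicitly before subtracting, while the paper writes the same computation as $T_\chi(f*\widehat{\mu})=f*\widehat{T_\chi\mu}$; the conjugation convention on $\psi$ is immaterial since $|\psi|=1$).
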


\begin{proof}
  Since $f \in K_2(\widehat{G})$, and $\widehat{\mu}$ is Fourier
  transformable, $\widecheck{f} \in L^1(\mu) $. Hence $\widecheck{f}
  \mu$ is a finite measure, and thus
\[
   \bigl\| \widecheck{f}  \mu \bigr\|  \,<\, \infty \ts .
\]
Now, let $\chi \in \Delta^\varepsilon$, and let $\psi \in
\widehat{G}$. Then,
\[
\begin{split}
   \bigl| T_\chi( &f*\widehat{\mu})(\psi) -
             (f* \widehat{\mu}) ( \psi) \bigr|\,
   =\, \bigl| ( f*\widehat{T_\chi \mu})(\psi) -
     (f* \widehat{\mu}) ( \psi) \bigr|  \\[1mm]
   &=\, \bigl| f*\widehat{(T_\chi \mu-\mu)}(\psi) \bigr|
   \, = \, \left| \int_G \widecheck{f}(t)\,
       (\chi(t)-1)\, \psi(t) \dd \mu(t) \right|\\[1mm]
   &\leq \int_G \bigl|  \widecheck{f}(t)\, (\chi(t)-1)\,\psi(t) \bigr|
     \dd |\mu |(t)
   \, = \int_\Delta \bigl| \widecheck{f}(t)  \bigr|\, \bigl|\chi(t)-1 \bigr|\,
     \bigl|\psi(t) \bigr| \dd |\mu |(t) \\[1mm]
   &\leq \, \varepsilon \int_\Delta \bigl| \widecheck{f}\bigr| (t)
     \dd | \mu | (t)
    \,\leq\, \varepsilon \ts  \bigl\| \widecheck{f}  \mu \bigr\| \ts ,
\end{split}
\]
which completes the argument.
\end{proof}

Using Lemma~\ref{mu2} for $\gamma^{}_{\mathsf{s}}$ and $\gamma^{}_{0}$, we obtain
the following result.

\begin{proposition}\label{S22}
  Let\/ $\omega$ be any translation bounded measure, let\/ $\gamma$ be any autocorrelation for $\omega$ and
  let\/ $\Delta$ be any concentration set for both\/ $\gamma^{}_{\mathsf{s}}$
  and\/ $\gamma^{}_{0}$.  Let\/ $f \in K_2(\widehat{G})$. If\/ $\gamma$ is twice Fourier transformable, then, there
  exists a\/ $C$ which depends on\/ $f$, $\gamma^{}_{\mathsf{s}}$ and\/
  $\gamma^{}_{0}$, so that, for all $\varepsilon >0$, we have
\[
   \Delta^\varepsilon \,\subset\,
    P_{\mathsf{pp}}(C\varepsilon; f) \,=\,
    \bigl\{ t \in \widehat{G} \mid
     \bigl\|T_t( f*(\widehat{\gamma})^{}_{\mathsf{pp}}) -
     f* (\widehat{\gamma})^{}_{\mathsf{pp}} \bigr\|^{}_{\infty}
     < \varepsilon \bigr\}
\]
and
\[
    \Delta^\varepsilon \,\subset\,
     P_{\mathsf{c}}(C\varepsilon; f) \,=\,
     \bigl\{ t \in \widehat{G} \mid
    \bigl\|T_t( f*(\widehat{\gamma})^{}_{\mathsf{c}}) -
    f* (\widehat{\gamma})^{}_{\mathsf{c}} \bigr\|^{}_{\infty}
    < \varepsilon \bigr\}  \ts .
\]
\end{proposition}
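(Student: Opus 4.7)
The plan is to apply Lemma~\ref{mu2} twice, once to $\mu = \gamma^{}_{\mathsf{s}}$ and once to $\mu = \gamma^{}_{0}$, and then combine the resulting bounds. The main ingredient that makes this go through smoothly is the identification of the pp/continuous parts of $\widehat{\gamma}$ with the Fourier transforms of the strongly and null-weakly almost periodic components of $\gamma$, already used in the proof of Theorem~\ref{S21}, namely $\widehat{\gamma^{}_{\mathsf{s}}} = (\widehat{\gamma})^{}_{\mathsf{pp}}$ and $\widehat{\gamma^{}_{0}} = (\widehat{\gamma})^{}_{\mathsf{c}}$.

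First I would observe that, since $\gamma$ is twice Fourier transformable and its almost periodic decomposition commutes with Fourier transformation (via the identifications above, cf.\ \cite{S-MoSt,S-LS2}), both $\gamma^{}_{\mathsf{s}}$ and $\gamma^{}_{0}$ are themselves twice Fourier transformable. With $\Delta$ serving as a concentration set for each of them by hypothesis, all assumptions of Lemma~\ref{mu2} are met. Applying the lemma with $\mu = \gamma^{}_{\mathsf{s}}$ yields $\bigl\| \widecheck{f}\ts \gamma^{}_{\mathsf{s}}\bigr\| < \infty$ and, for every $\chi \in \Delta^\varepsilon$,
\[
   \bigl\| T_\chi(f * \widehat{\gamma^{}_{\mathsf{s}}}) - f * \widehat{\gamma^{}_{\mathsf{s}}} \bigr\|^{}_\infty
   \,\leq\, \bigl\| \widecheck{f}\ts \gamma^{}_{\mathsf{s}} \bigr\| \ts \varepsilon \ts .
\]
Setting $C_1 := \bigl\| \widecheck{f}\ts \gamma^{}_{\mathsf{s}}\bigr\|$ and repeating the argument for $\gamma^{}_{0}$ with $C_2 := \bigl\| \widecheck{f}\ts \gamma^{}_{0}\bigr\|$, the constant $C := \max\{C_1, C_2\}$ depends only on $f$, $\gamma^{}_{\mathsf{s}}$ and $\gamma^{}_{0}$. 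Substituting the identifications $\widehat{\gamma^{}_{\mathsf{s}}} = (\widehat{\gamma})^{}_{\mathsf{pp}}$ and $\widehat{\gamma^{}_{0}} = (\widehat{\gamma})^{}_{\mathsf{c}}$ into the two inequalities yields both asserted inclusions simultaneously.

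The only genuine obstacle I foresee is the verification that the canonical weakly almost periodic decomposition of $\gamma$ is preserved by Fourier transformation, so that $\gamma^{}_{\mathsf{s}}$ and $\gamma^{}_{0}$ inherit twice Fourier transformability from $\gamma$ and Lemma~\ref{mu2} can be legitimately invoked on each summand. This, however, is precisely the identification already used in the proof of Theorem~\ref{S21}, so it comes for free here. Once it is in place, the rest of the argument is a two-line application of the lemma and the observation that $C$ depends only on the data listed in the statement.
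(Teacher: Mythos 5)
Your proposal is correct and follows essentially the same route as the paper: both arguments note that $\gamma^{}_{\mathsf{s}}$ and $\gamma^{}_{0}$ inherit twice Fourier transformability from $\gamma$ (via the identifications $\widehat{\gamma^{}_{\mathsf{s}}} = (\widehat{\gamma})^{}_{\mathsf{pp}}$ and $\widehat{\gamma^{}_{0}} = (\widehat{\gamma})^{}_{\mathsf{c}}$ together with the Fourier transformability of the pure point and continuous parts of $\widehat{\gamma}$), and then apply Lemma~\ref{mu2} to each component with $C := \max\bigl\{ \bigl\| \widecheck{f}\ts \gamma^{}_{\mathsf{s}} \bigr\|, \bigl\| \widecheck{f}\ts \gamma^{}_{0} \bigr\| \bigr\}$. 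No gaps.
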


\begin{proof}
First lets note that since $\gamma$ is Fourier transformable, so are $\gamma^{}_{\mathsf{s}}$
  and $\gamma^{}_{0}$ \cite{S-MoSt}. Moreover, as $\widehat{\gamma}$ is Fourier transformable, the measures $\widehat{\gamma})^{}_{\mathsf{pp}}$ and $\widehat{\gamma})^{}_{\mathsf{c}}$ are also Fourier transformable \cite[Thm.~11.1]{S-ARMA}. This implies that each of $\gamma^{}_{\mathsf{s}}$
  and\/ $\gamma^{}_{0}$ is twice Fourier transformable.

Using
\[
   C  \, := \, \max \Bigl\{ \bigl\| \widecheck{f}
            \gamma^{}_{\mathsf{s}} \bigr\| ,
            \bigl\|  \widecheck{f}  \gamma^{}_{0} \bigr\| \Bigr\} \ts ,
\]
this becomes an immediate consequence of Lemma~\ref{mu2}.
\end{proof}

Note that, unless $\Delta$ is uniformly discrete,
$(\Delta)^\varepsilon$ cannot be relatively dense. If $\vL$ is Meyer,
combining Theorem~\ref{S21} with Proposition~\ref{S22} establishes the
following result.

\begin{proposition}
  Let\/ $\omega$ be an weighted Dirac comb supported inside some Meyer set, let\/ $\oplam(W)$ be any regular
  model set containing\/ $\supp(\omega)$, with\/ $W$ compact and let\/ $\Delta:= \oplam(W) - \oplam(W)$. Let \/ $\gamma$ be any autocorrelation of\/ $\omega$ and let\/ $f \in K_2(\widehat{G})$.

  If\/ $\gamma$ is twice Fourier transformable, then there exists a\/ $C$ which depends
  only on\/ $f$, $\gamma^{}_{\mathsf{s}}$ and\/ $\gamma^{}_{0}$, so that, for all\/
  $\varepsilon >0$, we have
\[
    \Delta^\varepsilon \,\subset\,
    P_{\mathsf{pp}}(C\varepsilon; f) \,=\,
    \bigl\{ t \in \widehat{G} \mid
    \bigl\|T_t( f*(\widehat{\gamma})^{}_{\mathsf{pp}}) -
    f* (\widehat{\gamma})^{}_{\mathsf{pp}} \bigr\|^{}_{\infty} <
    \varepsilon \bigr\}
\]
and
\[
    \Delta^\varepsilon \,\subset\,
    P_{\mathsf{c}}(C\varepsilon; f) \,=\,
     \bigl\{ t \in \widehat{G} \mid
     \bigl\|T_t( f*(\widehat{\gamma})^{}_{\mathsf{c}}) -
     f* (\widehat{\gamma})^{}_{\mathsf{c}} \bigr\|^{}_{\infty}
    < \varepsilon \bigr\} .
\]
In particular, $f*(\widehat{\gamma})^{}_{\mathsf{pp}}$ and\/
$f*(\widehat{\gamma})^{}_{\mathsf{c}}$ are strongly almost periodic
functions.
\end{proposition}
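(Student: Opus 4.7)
The plan is a direct application of Proposition~\ref{S22} once the appropriate concentration set is identified from the Meyer-support hypothesis on $\omega$. I would first observe that the autocorrelation $\gamma$ is positive definite, hence weakly almost periodic, and that
\[
   \supp(\gamma) \,\subset\, \supp(\omega) - \supp(\omega)
   \,\subset\, \oplam(W) - \oplam(W) \,=\, \Delta \,\subset\, \oplam(W-W)\ts .
\]
Since $W-W$ is compact in $H$, Theorem~\ref{WAPModdec} applied to $\gamma$ with window $W-W$ yields the inclusions $\supp(\gamma^{}_{\mathsf{s}}) \subset \oplam(W-W)$ and $\supp(\gamma^{}_{0}) \subset \oplam(W-W)$, so a common concentration set is identified. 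The twice Fourier transformability of $\gamma$ descends to each component via $\widehat{\gamma^{}_{\mathsf{s}}} = (\widehat{\gamma})_{\mathsf{pp}}$ and $\widehat{\gamma^{}_{0}} = (\widehat{\gamma})_{\mathsf{c}}$, each of which is itself Fourier transformable by \cite[Thm.~11.1]{S-ARMA}, so every hypothesis of Proposition~\ref{S22} is in place.

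Proposition~\ref{S22} then produces the constant
\[
   C \,:=\, \max \Bigl\{ \bigl\|\widecheck{f}\ts \gamma^{}_{\mathsf{s}}\bigr\|,\ \bigl\|\widecheck{f}\ts \gamma^{}_{0}\bigr\|\Bigr\}\ts ,
\]
which depends only on $f$, $\gamma^{}_{\mathsf{s}}$ and $\gamma^{}_{0}$, together with the two inclusions $\Delta^\varepsilon \subset P_{\mathsf{pp}}(C\varepsilon;f)$ and $\Delta^\varepsilon \subset P_{\mathsf{c}}(C\varepsilon;f)$ valid for every $\varepsilon>0$. For the \emph{in particular} claim, I would invoke Theorem~\ref{MEYchar}: since $\supp(\omega)$, and hence $\Delta$, is a subset of a Meyer set, the set $\Delta^\varepsilon$ is relatively dense in $\widehat{G}$ for every $0<\varepsilon<2$. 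Combined with the two inclusions, this exhibits for every $\varepsilon>0$ a relatively dense set of $C\varepsilon$-almost periods in the supremum norm on $\widehat{G}$ for each of the continuous bounded functions $f*(\widehat{\gamma})_{\mathsf{pp}}$ and $f*(\widehat{\gamma})_{\mathsf{c}}$, so both are strongly almost periodic on $\widehat{G}$.

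The one delicate point, which I expect to be the \textbf{main obstacle}, is aligning $\Delta=\oplam(W)-\oplam(W)$ with the slightly larger Meyer set $\oplam(W-W)$ that Theorem~\ref{WAPModdec} naturally hands back as the concentration set for the components. Since the integral bound underpinning Lemma~\ref{mu2} is controlled by $|\chi-1|$ evaluated at the support points of the measures, one must check that restricting to $\Delta^\varepsilon$ still delivers the requisite $\varepsilon$-closeness at every point of $\supp(\gamma^{}_{\mathsf{s}})$ and $\supp(\gamma^{}_{0})$. This can be arranged either by a sharper support localisation for the weakly almost periodic decomposition (using that $\supp(\gamma) \subset \Delta$ itself) or, if that refinement is not available, by running the same argument with $\oplam(W-W)$ in place of $\Delta$, which still produces a relatively dense set of $C\varepsilon$-almost periods via Theorem~\ref{MEYchar} and hence still establishes the almost-periodicity conclusion.
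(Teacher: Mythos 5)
Your proposal is correct and follows essentially the same route as the paper, which establishes this proposition simply by combining Theorem~\ref{S21} (i.e.\ the support localisation coming from Theorem~\ref{WAPModdec}) with Proposition~\ref{S22}, exactly as you do. The subtlety you flag --- that Theorem~\ref{WAPModdec} naturally returns $\oplam(\overline{W-W})$ rather than $\Delta=\oplam(W)-\oplam(W)$ as the concentration set for $\gamma^{}_{\mathsf{s}}$ and $\gamma^{}_{0}$ --- is genuinely present and is not addressed in the paper either, and your fallback of running the argument with $\oplam(W-W)$ (still a Meyer set, so its $\varepsilon$-dual sets are relatively dense) is enough to secure every stated conclusion.
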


The following results show that, for Meyer sets and, more generally,
for any measure whose autocorrelation is supported on a Meyer set, the
pure point spectrum is sup-almost periodic. Moreover, we show that,
up to multiplication by some constant, the $\varepsilon$-dual sets are
subsets of the $\varepsilon$ sup-almost periods of the pure point
spectrum.

\begin{proposition}\label{S31}
  Let\/ $\gamma$ be a Fourier transformable measure and let\/
  $\Delta$ be any concentration set for\/ $\gamma$.  Then, there exists some\/ $C >0$ such that for all\/
  $\varepsilon >0$ and all\/ $\psi \in \Delta^{\varepsilon}$, we have
\[
    \bigl| \widehat{\gamma}(\{ \psi+\chi\})-
     \widehat{\gamma}(\{\chi\}) \bigr|
     \,\leq\, C \ts\ts \varepsilon \ts .
\]
In particular, we have
\[
   \Delta^{\varepsilon/C}_{} \,\subset\,
    P_{\!\varepsilon}^{\infty} \bigl( (\widehat{\gamma})^{}_{\mathsf{pp}}\bigr) \ts .
\]
\end{proposition}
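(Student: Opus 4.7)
The plan is to extract each point mass $\widehat{\gamma}(\{\chi\})$ via a Bombieri--Taylor long-range average of $\overline{\chi}$ against $\gamma$, and then bound the difference of two such point masses by combining two ingredients: the concentration of $|\gamma|$ on $\Delta$, and the fact that $|\psi(x)-1|\leq\varepsilon$ for all $x\in\Delta$ whenever $\psi\in\Delta^\varepsilon$. This mirrors the technique of Lemma~\ref{mu2}, but adapted from $f*\widehat{\gamma}$ to the point masses of $\widehat{\gamma}$ itself.

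Concretely, I fix a van Hove sequence $(A_n)$ in $G$ and invoke the Bombieri--Taylor formula (available for any Fourier transformable translation bounded measure; see \cite{S-MoSt}): for every $\chi\in\widehat{G}$,
\[
   \widehat{\gamma}(\{\chi\}) \, = \, \lim_{n\to\infty}
   \frac{1}{\theta_G(A_n)} \int_{A_n}\! \overline{\chi(x)}\dd\gamma(x) \ts .
\]
Writing the analogous identity at $\psi+\chi$ and subtracting gives
\[
   \widehat{\gamma}(\{\psi+\chi\}) - \widehat{\gamma}(\{\chi\})
   \, =\,  \lim_{n\to\infty}
   \frac{1}{\theta_G(A_n)} \int_{A_n \cap \Delta}\!\! \overline{\chi(x)}\bigl(\overline{\psi(x)}-1\bigr)\dd\gamma(x) \ts ,
\]
where the restriction of the domain to $\Delta$ is justified by the assumption that $\Delta$ is a concentration set for $\gamma$. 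For $\psi\in\Delta^\varepsilon$ one has $|\overline{\psi(x)}-1|=|\psi(x)-1|\leq\varepsilon$ on $\Delta$, while $|\overline{\chi(x)}|=1$ everywhere, so the integrand is bounded in absolute value by $\varepsilon$ on the effective domain.

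Taking absolute values and pulling out $\varepsilon$ yields
\[
   \bigl|\widehat{\gamma}(\{\psi+\chi\})-\widehat{\gamma}(\{\chi\})\bigr|
   \,\leq\, \varepsilon \cdot \limsup_{n\to\infty}
   \frac{|\gamma|(A_n)}{\theta_G(A_n)} \ts ,
\]
and setting $C:=\limsup_n |\gamma|(A_n)/\theta_G(A_n)$ gives the main estimate. This $C$ is finite: translation boundedness of $\gamma$ (built into Fourier transformability in this framework) combined with the van Hove property allows one to cover $A_n$ by finitely many translates of a fixed compact set $K$, producing a bound $|\gamma|(A_n)\leq C'\theta_G(A_n)+o(\theta_G(A_n))$. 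Since the bound is uniform in $\chi$, taking the supremum over $\chi$ converts it into the $\|\cdot\|_\infty$-bound on $T_\psi(\widehat{\gamma})^{}_{\mathsf{pp}} - (\widehat{\gamma})^{}_{\mathsf{pp}}$, which is exactly the inclusion $\Delta^{\varepsilon/C}\subset P^\infty_{\!\varepsilon}((\widehat{\gamma})^{}_{\mathsf{pp}})$. The only substantive input is the Bombieri--Taylor identity; once that is granted as a standard diffraction-theoretic fact, the rest is a direct estimate driven by the definition of $\Delta^\varepsilon$ and the concentration property.
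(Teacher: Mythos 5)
Your argument is correct and is essentially the paper's own proof: both extract the point masses $\widehat{\gamma}(\{\cdot\})$ via the Bombieri--Taylor/eigenfunction-intensity limit along a van Hove sequence, restrict the integral to $\Delta$ by the concentration property, bound $|\psi(t)-1|\leq\varepsilon$ there, and take $C$ to be (a bound on) $\limsup_n |\gamma|(A_n)/\theta_G(A_n)$, which is finite by translation boundedness. No substantive difference.
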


\begin{proof}

While this result is more general, the proof below is almost identical to the proofs of \cite[Th.~3.1]{S-NS2} and \cite[Thm.~5.1]{S-NS9}.

As $\gamma$ is translation bounded, there exists a constant $0 < C < \infty$ such that
\[
\limsup_{n \to \infty} \frac{ \bigl| \gamma \bigr|(A_n) }{\theta_G(A_n)} \, < \, C \ts .
\]

Let $\chi \in \widehat{G}$ and $\psi \in
  \Delta^{\varepsilon}$. Then, by \cite[Cor.5]{S-L1} or \cite{S-MoSt} we have

\[
\begin{split}
    \bigl| \widehat{\gamma}(\{& \psi+\chi\})-
     \widehat{\gamma}(\{\chi\}) \bigr| \,
    =\, \lim_{n \to \infty} \biggl|
     \frac{ \int_{A_n} \bigl(\psi(t)\chi(t)-\chi(t)\bigr)
     \dd \gamma (t)}{\theta_G(A_n)} \biggr| \\[1mm]
    &\leq\, \lim_{n \to \infty}
    \frac{ \int_{A_n}  \bigl| \psi(t)\chi(t)-\chi(t)\bigr|
     \dd \bigl| \gamma \bigr| (t)  }{\theta_G(A_n)}
    \,=\, \lim_{n \to \infty}
    \frac{ \int_{A_n}  | \psi(t)-1 |
    \dd \bigl| \gamma\bigr|(t)}{\theta_G(A_n)} \\[1mm]
    &=\, \lim_{n \to \infty} \frac{ \int_{A_n \cap \Delta} | \psi(t)-1 |
    \dd \bigl| \gamma\bigr| (t)  }{\theta_G(A_n)}
    \,\leq\, \lim_{n \to \infty}
    \frac{ \int_{A_n \cap \Delta}  \varepsilon
     \bigl|\gamma\bigr|(t)  }{\theta_G(A_n)}\\[1mm]
    &=\, \varepsilon
    \lim_{n \to \infty} \frac{\bigl| \gamma \bigr|(A_n) }{\theta_G(A_n)}
    \, < \, \varepsilon\ts C \ts ,
\end{split}
\]
which completes the proof.
\end{proof}

\begin{corollary}\label{cms33}
  Let\/ $\vL$ be a Meyer set in\/ $G$, and let\/ $\Delta=\vL-
  \vL$. Then, there exists a\/ $C$ so that, for all\/ $\varepsilon
  >0$, $\Delta^{\varepsilon/C} \subset
  P_{\!\varepsilon}^{\infty}\bigl((\widehat{\gamma})^{}_{\mathsf{pp}}\bigr)$. In
  particular, $(\widehat{\gamma})^{}_{\mathsf{pp}}$ is sup-almost
  periodic.\qed
\end{corollary}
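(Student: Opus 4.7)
The plan is to derive this as a direct consequence of Proposition~\ref{S31}, using the stability of the Meyer property under difference sets (Corollary~\ref{CT1333}) and the characterisation of Meyer sets in terms of $\varepsilon$-dual sets (Theorem~\ref{MEYchar}). The only real work is choosing the right concentration set $\Delta$ so that Proposition~\ref{S31} yields a relatively dense family.

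First I would observe that, since $\vL$ is a Meyer set, it is in particular uniformly discrete, so that $\delta^{}_{\vL}$ is translation bounded and admits an autocorrelation $\gamma$. As a positive definite measure, $\gamma$ is Fourier transformable, and by construction $\supp(\gamma) \subset \vL-\vL = \Delta$. In particular, $\Delta$ is a concentration set for $\gamma$. Applying Proposition~\ref{S31} to this choice of $\Delta$ therefore yields a constant $C>0$ such that
\[
   \Delta^{\varepsilon/C} \,\subset\, P_{\!\varepsilon}^{\infty}\bigl((\widehat{\gamma})^{}_{\mathsf{pp}}\bigr)
\]
for every $\varepsilon>0$, which is the first assertion.

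To obtain the concluding statement that $(\widehat{\gamma})^{}_{\mathsf{pp}}$ is sup-almost periodic, it suffices to verify that the left-hand side above is relatively dense for every $\varepsilon > 0$. Here I would invoke Corollary~\ref{CT1333}, which tells us that $\Delta = \vL - \vL$ is itself a Meyer set. By the equivalence $\text{(i)} \Longleftrightarrow \text{(iii)}$ in Theorem~\ref{MEYchar}, the set $\Delta^{\delta}$ is relatively dense in $\widehat{G}$ for every $0 < \delta < 2$. Taking $\delta = \varepsilon/C$ covers all $0 < \varepsilon < 2C$, and for $\varepsilon \geq 2C$ one has $\Delta^{\varepsilon/C} \supset \Delta^{\delta}$ for any smaller $\delta < 2$, so relative denseness persists trivially. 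Hence $P_{\!\varepsilon}^{\infty}\bigl((\widehat{\gamma})^{}_{\mathsf{pp}}\bigr)$ is relatively dense for every $\varepsilon > 0$, proving sup-almost periodicity.

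There is no real obstacle here: all the technical content sits in Proposition~\ref{S31} and in the nontrivial equivalence between \emph{Meyer set} and \emph{relatively dense $\varepsilon$-dual sets} from Theorem~\ref{MEYchar}. The only point to be careful about is to choose the concentration set to be a Meyer set itself, which forces the choice $\Delta = \vL - \vL$ and relies on the closure of the Meyer property under differences given by Corollary~\ref{CT1333}.
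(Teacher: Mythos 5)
Your proposal is correct and follows exactly the route the paper intends (the corollary is stated with a \qed as an immediate consequence): take $\Delta=\vL-\vL$ as the concentration set, invoke Proposition~\ref{S31} for the inclusion, and use Corollary~\ref{CT1333} together with the equivalence (i)$\Leftrightarrow$(iii) of Theorem~\ref{MEYchar} to get relative denseness of $\Delta^{\varepsilon/C}$, hence sup-almost periodicity. No gaps.
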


Combining Corollary~\ref{cms33} and Corollary~\ref{123123123} yields
the following result.

\begin{corollary}\label{cms445}
  Let\/ $\vL$ be a Meyer set in\/ $G$, and let\/ $\Delta=\vL-
  \vL$. Then, for all\/ $0 < \varepsilon <1$, the set\/
  $P^{}_{\widehat{\gamma}(\{0\})
    \varepsilon}\bigl((\widehat{\gamma})^{}_{\mathsf{pp}}\bigr)$ is a
  model set containing\/ $\Delta^\varepsilon$.\qed
\end{corollary}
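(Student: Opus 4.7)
The plan is to apply Corollary~\ref{123123123} to the measure $(\widehat{\gamma})^{}_{\mathsf{pp}}$, and to verify that the specific threshold $\widehat{\gamma}(\{0\})\varepsilon$ is exactly the one produced (up to a harmless limit issue) by the constant in Corollary~\ref{cms33}. Throughout, $\gamma$ denotes the autocorrelation of the Meyer set $\vL$, which is a positive, positive definite, translation bounded measure supported inside $\Delta=\vL-\vL$.

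First I would identify the relevant norm. Since $\gamma$ is positive, so is $\widehat{\gamma}$, and the standard Eberlein-type formula for Bragg peaks gives, for any van Hove sequence $\{A_n\}$ and any character $\chi$,
\[
   \bigl| \widehat{\gamma}(\{\chi\}) \bigr| \,=\, \lim_{n \to \infty}
   \frac{1}{\theta_G(A_n)} \left| \int_{A_n} \overline{\chi(t)} \dd \gamma(t) \right|
   \,\leq\, \lim_{n \to \infty}  \frac{\gamma(A_n)}{\theta_G(A_n)}
   \,=\, \widehat{\gamma}(\{0\}) \ts ,
\]
so $\|(\widehat{\gamma})^{}_{\mathsf{pp}}\|_\infty = \widehat{\gamma}(\{0\})$, with the supremum attained at $0$.

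Next I would refine Corollary~\ref{cms33} to pin down the constant. Inspecting the proof of Proposition~\ref{S31} with $\Delta=\vL-\vL$ as the concentration set for $\gamma$ and using $|\gamma|=\gamma$, the same computation yields, for every $\varepsilon >0$ and every $\psi \in \Delta^{\varepsilon}$,
\[
   \bigl| \widehat{\gamma}(\{\psi+\chi\}) - \widehat{\gamma}(\{\chi\}) \bigr|
   \,\leq\, \varepsilon \ts \lim_{n \to \infty} \frac{\gamma(A_n)}{\theta_G(A_n)}
   \,=\, \widehat{\gamma}(\{0\})\, \varepsilon \ts .
\]
Taking the supremum over $\chi$ gives $\Delta^{\varepsilon}\subset P^{\infty}_{\widehat{\gamma}(\{0\})\varepsilon}\bigl((\widehat{\gamma})^{}_{\mathsf{pp}}\bigr)$ (for any $\varepsilon$ below $1$ one gains strict inequality after shrinking $\varepsilon$ slightly; this is the only mildly delicate point and can be absorbed by replacing $\varepsilon$ with $\varepsilon'<\varepsilon$ before taking the sup).

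Finally I would combine the two ingredients. By Corollary~\ref{cms33} the measure $(\widehat{\gamma})^{}_{\mathsf{pp}}$ is sup-almost periodic, and it is a discrete regular measure because it is the pure point part of the translation bounded measure $\widehat{\gamma}$. Since $0<\varepsilon<1$, the threshold satisfies
\[
   0\,<\, \widehat{\gamma}(\{0\})\, \varepsilon \,<\, \widehat{\gamma}(\{0\})
   \,=\, \|(\widehat{\gamma})^{}_{\mathsf{pp}}\|^{}_\infty \ts ,
\]
so Corollary~\ref{123123123} applies and $P^{\infty}_{\widehat{\gamma}(\{0\})\varepsilon}\bigl((\widehat{\gamma})^{}_{\mathsf{pp}}\bigr)$ is a model set; by the previous step, it contains $\Delta^{\varepsilon}$. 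The only real obstacle is the bookkeeping of strict versus non-strict inequalities in transferring $\Delta^{\varepsilon}$ into $P^{\infty}$, but this is easily managed by passing to a slightly smaller $\varepsilon'<\varepsilon$.
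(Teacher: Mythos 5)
Your proposal is correct and follows essentially the same route as the paper, which proves this corollary simply by combining Corollary~\ref{cms33} (i.e.\ Proposition~\ref{S31} applied with $\Delta=\vL-\vL$ and $|\gamma|=\gamma$, giving the constant $C=\widehat{\gamma}(\{0\})$) with Corollary~\ref{123123123}. Your extra care about the strict versus non-strict inequality when passing from $\Delta^{\varepsilon}$ into $P^{\infty}_{\widehat{\gamma}(\{0\})\varepsilon}$ is a point the paper glosses over, so flagging it is reasonable rather than a deviation.
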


More generally, we obtain the following consequence.

\begin{corollary}
  Let\/ $\mu$ be any translation bounded measure on\/ $G$, let\/
  $\gamma$ be an autocorrelation for\/ $\mu$ and let\/ $\Delta=
  \supp(\gamma)$. Then there exists a\/ $C > 0$ so
  that, for all\/ $\varepsilon >0$, $\Delta^{\varepsilon/C} \subset
  P_{\!\varepsilon}^{\infty}\bigl(
  (\widehat{\gamma})^{}_{\mathsf{pp}}\bigr)$. In particular, if\/
  $\Delta$ is a Meyer set, then\/
  $(\widehat{\gamma})^{}_{\mathsf{pp}}$ is sup-almost periodic.\qed
\end{corollary}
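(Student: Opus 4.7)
The plan is to observe that this corollary is essentially a direct application of Proposition~\ref{S31} with the choice of concentration set $\Delta = \supp(\gamma)$. Since $\gamma$ is an autocorrelation, it is positive definite, and therefore Fourier transformable by standard Bochner-type arguments (see e.g.\ \cite{S-ARMA} or \cite{S-MoSt}). The support of any measure is trivially a concentration set for that measure, so Proposition~\ref{S31} applies and yields directly the inclusion
\[
   \Delta^{\varepsilon/C} \,\subset\,
   P^{\infty}_{\!\varepsilon}\bigl((\widehat{\gamma})^{}_{\mathsf{pp}}\bigr)
\]
for the same constant $C$ produced by Proposition~\ref{S31}, namely any $C$ bounding the upper density of $|\gamma|$ along the averaging sequence $A_n$.

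For the second assertion, I would argue as follows. Suppose $\Delta = \supp(\gamma)$ is a Meyer set. By Theorem~\ref{MEYchar}, condition (iii), the $\delta$-dual set $\Delta^{\delta}$ is relatively dense for every $0 < \delta < 2$. In particular, choosing $\delta = \varepsilon/C$, we get that $\Delta^{\varepsilon/C}$ is relatively dense for all $0 < \varepsilon < 2C$. Combined with the inclusion above, this forces $P^{\infty}_{\!\varepsilon}((\widehat{\gamma})^{}_{\mathsf{pp}})$ to be relatively dense for every $0 < \varepsilon < 2C$, and hence for every $\varepsilon > 0$ (larger tolerances only enlarge the set of almost periods). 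By definition, this is exactly sup-almost periodicity of $(\widehat{\gamma})^{}_{\mathsf{pp}}$.

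There is no real obstacle here; the corollary is a formal consequence of Proposition~\ref{S31} together with the Meyer set characterisation in Theorem~\ref{MEYchar}. The only subtlety worth flagging is that Proposition~\ref{S31} is stated for an arbitrary concentration set, so one must verify that $\supp(\gamma)$ qualifies — but this is immediate, since $\gamma$ vanishes outside its support and the estimate in the proof of Proposition~\ref{S31} goes through verbatim with $\Delta$ replaced by $\supp(\gamma)$. In the parallel situation of Corollary~\ref{cms33}, the choice $\Delta = \vL - \vL$ was convenient because $\vL - \vL$ is itself a Meyer set and a concentration set; here we work directly with $\supp(\gamma)$, which gives the more general statement with the same proof.
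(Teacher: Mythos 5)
Your proposal is correct and matches the paper's intent exactly: the corollary carries a \qed with no written proof precisely because it is the immediate application of Proposition~\ref{S31} with $\Delta=\supp(\gamma)$ (a concentration set, with $\gamma$ Fourier transformable as a positive definite measure), followed by the observation that the Meyer property of $\Delta$ makes $\Delta^{\varepsilon/C}$ relatively dense via Theorem~\ref{MEYchar}. Nothing is missing.
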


In the rest of the section we will study the set of Bragg peaks of the
diffraction measure with intensity above a certain threshold. Most of
the results will hold more generally for (positive and) positive
definite measures. For a positive definite measure $\gamma$ and $a>0$,
we define
\[
  I(a) \, := \,  \{ \chi \in \widehat{G} \mid
   \widehat{\gamma}(\{ \chi \}) \geq a \} \ts .
\]
We will call $I(a)$ the set of \emph{$a$-visible Bragg peaks}.

Let $a_{\text{max}}:= \sup_{\chi \in \widehat{G}} \bigl\{
\widehat{\gamma}(\{ \chi \})\bigr\}$. If $\gamma$ is positive, it is
easy to see that $a_{\text{max}}= \widehat{\gamma}(\{0\})$.

For $a \leq 0$, we have $I(a)= \widehat{G}$, while we have $I(a)=
\varnothing$ for $a > a_{\text{max}}$. Also, Lemma~\ref{L13} shows that,
in general, for aperiodic positive $\gamma$ we have
$I(a_{\text{max}})=\{ 0\}$. Hence, we are usually interested in the
case $0< a < a_{\text{max}}$.

We first need to recall that, given any positive definite measure $\mu$,
we can use Krein's inequality for the discrete component $\mu^{}_{d}$.

\begin{lemma}\cite{S-LS1}\label{positive-definite-measure-function}
  Let\/ $\mu$ be a positive definite measure on\/ $G$. Then, the
  discrete part\/ $\mu^{}_{d}$ is a positive definite measure on\/ $G$, and the
  function\/ $f\! :\, G \longrightarrow \CC$, defined by\/ $f(x) = \mu
  (\{ x \})$, is a positive definite function. In particular, for
  all\/ $x, t \in G$, we have
\[
   \bigl| \mu(\{ x+t\}) - \mu(\{ x \}) \bigr|^2
   \,\leq\, 2\ts \mu(\{ 0 \})\ts
   \bigl[ \mu(\{ 0 \}) - \Real \bigl(\mu(\{ t\})\bigr)\bigr] \ts .
\]
\end{lemma}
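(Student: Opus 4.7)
The plan is to establish the three assertions of the lemma in sequence, with the first being the heart of the matter and the remaining two following by standard manipulations.

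For the first assertion, that $\mu^{}_{d}$ is positive definite, I would follow the Fourier-analytic argument of Lenz--Spindeler \cite{S-LS1}. Since $\mu$ is positive definite, its Fourier transform $\widehat{\mu}$ exists as a positive translation bounded Radon measure on $\widehat{G}$, and a Bohr-type averaging of characters against $\widehat{\mu}$ over larger and larger compact sets in $\widehat{G}$ extracts exactly the point masses $\mu(\{x\})$, identifying $\mu^{}_{d}$ as the Fourier transform of a positive measure on $\widehat{G}$, and hence as a positive definite measure on $G$.

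For the second assertion, fix distinct points $x_1, \ldots, x_n \in G$ and scalars $c_1, \ldots, c_n \in \CC$, and choose a symmetric continuous nonnegative approximate identity $(u_\alpha)^{}_{\alpha}$ in $C_{\mathsf{c}}(G)$ with $\int u_\alpha \dd \theta^{}_{G} = 1$ and $\supp(u_\alpha)$ shrinking to $\{0\}$. Set $\Phi_\alpha := \sum_i c_i\, T_{x_i} u_\alpha$; a direct convolution computation yields
\[
    \mu^{}_{d}(\Phi_\alpha * \widetilde{\Phi}^{}_\alpha) \,=\,
    \sum_{i,j} c_i \ts \overline{c_j} \ts
    \bigl((u_\alpha * u_\alpha) * \mu^{}_{d}\bigr)(x_i - x_j)
    \,\geq\, 0 \ts ,
\]
where the inequality uses the first assertion. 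As $\alpha$ shrinks, the dominant contribution to $\bigl((u_\alpha * u_\alpha) * \mu^{}_{d}\bigr)(x_i - x_j)$ is $\mu(\{x_i - x_j\}) \cdot (u_\alpha * u_\alpha)(0)$, while the remainder is bounded by $(u_\alpha * u_\alpha)(0) \cdot |\mu^{}_{d}|\bigl((\supp(u_\alpha * u_\alpha) + (x_i - x_j)) \setminus \{x_i - x_j\}\bigr)$, which is $o\bigl((u_\alpha * u_\alpha)(0)\bigr)$ by translation boundedness of $\mu^{}_{d}$ and regularity of the Radon measure $|\mu^{}_{d}|$. Dividing through by $(u_\alpha * u_\alpha)(0) > 0$ and passing to the limit gives $\sum_{i,j} c_i \overline{c_j} f(x_i - x_j) \geq 0$, where $f(y) := \mu(\{y\})$.

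The final inequality is a standard consequence of positive definiteness of $f$ via the Cauchy--Schwarz inequality. The $3 \times 3$ matrix $A := [f(y_i - y_j)]^{}_{i,j=1,2,3}$ with $y_1 := 0$, $y_2 := x$, $y_3 := x + t$ is positive semi-definite, and applied to the vectors $u := (1, 0, 0)^T$ and $v := (0, 1, -1)^T$ the inequality $|u^* A v|^2 \leq (u^* A u)(v^* A v)$ reads, using $f(-y) = \overline{f(y)}$,
\[
    |f(x + t) - f(x)|^2 \,\leq\,
    f(0) \cdot \bigl(2 f(0) - 2 \Real (f(t))\bigr) \ts ,
\]
which is precisely the claim with $f(y) = \mu(\{y\})$. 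The main obstacle is the first assertion, whose proof rests on the Fourier-analytic machinery for positive definite measures on LCAGs developed in \cite{S-LS1}; once it is in place, the remaining two steps are routine.
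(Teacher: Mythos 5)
The paper offers no proof of this lemma: it is imported wholesale from \cite{S-LS1} (Lenz--Strungaru, not ``Lenz--Spindeler''), so there is no internal argument to compare yours against. Your reconstruction is correct and is a reasonable account of what the citation hides. The third step is exactly the classical Krein inequality: the matrix $A=[f(y_i-y_j)]$ with $y_1=0$, $y_2=x$, $y_3=x+t$ is Hermitian positive semi-definite precisely because $f$ is positive definite, and your choice $u=(1,0,0)^T$, $v=(0,1,-1)^T$ in the Cauchy--Schwarz inequality for the form $\langle u,v\rangle=v^*Au$ gives $|f(x+t)-f(x)|^2\leq f(0)\,(2f(0)-2\Real f(t))$, which is the stated bound. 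The second step (smearing the points $x_i$ by an approximate identity, extracting the diagonal term $\mu(\{x_i-x_j\})(u_\alpha*u_\alpha)(0)$ and killing the remainder by outer regularity of $|\mu^{}_{d}|$) is also sound; note that you could run it against $\mu$ itself rather than $\mu^{}_{d}$, since the continuous part contributes $o\bigl((u_\alpha*u_\alpha)(0)\bigr)$ for the same reason, which would make positive definiteness of the \emph{function} $f$ independent of the first assertion.

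The one soft spot is the first assertion, which you do not actually prove: the claim that Bohr averaging of $\widehat{\mu}$ ``identifies $\mu^{}_{d}$ as the Fourier transform of a positive measure'' is itself the nontrivial content of \cite{S-LS1}. Making it precise requires the Gil de Lamadrid--Argabright machinery: $\mu$ positive definite implies $\widehat{\mu}\geq 0$ is transformable with $\widehat{\widehat{\mu}}=\mu^\dagger$, one has $(\widehat{\nu})^{}_{\mathsf{pp}}=\widehat{\nu^{}_{\mathsf{s}}}$ for transformable $\nu$, and the strongly almost periodic component $\nu^{}_{\mathsf{s}}$ of a positive weakly almost periodic measure is again positive. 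None of this is more elementary than simply citing \cite{S-LS1}, so for that part your proposal is a citation dressed as a sketch --- which is acceptable here, since that is precisely what the paper does for the entire lemma.
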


\begin{lemma}\label{L13}
  Let\/ $\gamma$ be a positive and positive definite measure. If\/
  $\widehat{\gamma}(\{\chi\})=a_{\rm max}$ for some\/ $\chi \neq
  0$, then\/ $\chi$ is a period for\/ $\widehat{\gamma}$.
\end{lemma}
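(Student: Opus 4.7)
The plan is to apply the positive definite inequality of Lemma~\ref{positive-definite-measure-function} directly to the measure $\widehat{\gamma}$ on $\widehat{G}$. Two small preliminary observations are needed. First, $\widehat{\gamma}$ is simultaneously positive and positive definite: since $\gamma$ is positive definite and translation bounded, $\widehat{\gamma}$ exists as a positive measure on $\widehat{G}$, and since $\gamma$ itself is positive, the Fourier transform $\widehat{\gamma}$ is positive definite (the Fourier transform of a positive measure is positive definite, by standard Bochner--Godement duality). Second, as noted in the paragraph immediately preceding the lemma, for a positive measure $\gamma$ we have $a_{\max}=\widehat{\gamma}(\{0\})$, so combined with the hypothesis this gives
\[
   \widehat{\gamma}(\{0\}) \,=\, \widehat{\gamma}(\{\chi\}) \,=\, a_{\max}\ts ,
\]
with $\widehat{\gamma}(\{\chi\})$ a non-negative real number.

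With these in hand, I would apply Lemma~\ref{positive-definite-measure-function} to $\mu:=\widehat{\gamma}$ with $t:=\chi$ and $x$ arbitrary in $\widehat{G}$. The inequality reads
\[
   \bigl|\widehat{\gamma}(\{x+\chi\})-\widehat{\gamma}(\{x\})\bigr|^{2}
   \,\leq\, 2\,\widehat{\gamma}(\{0\})\bigl[\widehat{\gamma}(\{0\})
   -\Real\bigl(\widehat{\gamma}(\{\chi\})\bigr)\bigr]
   \,=\, 2\,a_{\max}(a_{\max}-a_{\max})\,=\,0 \ts .
\]
Hence $\widehat{\gamma}(\{x+\chi\})=\widehat{\gamma}(\{x\})$ for every $x\in\widehat{G}$, which is precisely the statement that $\chi$ is a period of $\widehat{\gamma}$ in the sense relevant to its Bragg-peak structure (the subsequent use of the lemma is only concerned with the atomic values $\widehat{\gamma}(\{\psi\})$).

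The main obstacle, such as it is, lies entirely in the preliminary step: verifying that $\widehat{\gamma}$ is indeed positive definite requires that $\gamma$ be twice Fourier transformable. This is a standard property of positive definite translation bounded measures and is invoked freely elsewhere in the paper. Once this point is granted, the rest is a direct calculation using the Lin--Solomyak inequality at $t=\chi$, where the equality $\widehat{\gamma}(\{\chi\})=\widehat{\gamma}(\{0\})=a_{\max}$ is exactly what forces the right-hand side to vanish.
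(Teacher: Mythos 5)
Your proof is correct and takes essentially the same route as the paper: both apply the inequality of Lemma~\ref{positive-definite-measure-function} to the positive definite measure $\widehat{\gamma}$ at $t=\chi$ and use $\widehat{\gamma}(\{\chi\})=\widehat{\gamma}(\{0\})=a_{\text{max}}\in[0,\infty)$ to make the right-hand side vanish. (Your extra remarks justifying that $\widehat{\gamma}$ is positive definite are a reasonable elaboration of what the paper asserts without comment; note only that the paper refers to the inequality as Krein's inequality rather than the Lin--Solomyak inequality.)
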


\begin{proof}
  Since $\widehat{\gamma}$ is a positive definite measure,
  Lemma~\ref{positive-definite-measure-function} implies that
\[
    \bigl|  \widehat{\gamma} (\{ \chi + \psi \}) -
         \widehat{\gamma}(\{ \psi\}) \bigr|^2
     \,\leq\, 2\, \widehat{\gamma}(\{ 0\}) \,
     \bigl[  \widehat{\gamma}(\{0\}) -
     \Real\bigl( \widehat{\gamma}(\{ \chi\})\bigr)\bigr]
\]
holds for all $\psi \in \widehat{G}$.  Since $\widehat{\gamma}(\{0\})
= a_{\text{max}} = \widehat{\gamma}(\{ \chi\}) \in [0, \infty)$, we
conclude that
\[
   \bigl|  \widehat{\gamma} (\{ \chi + \psi \}) -
    \widehat{\gamma}(\{ \psi\}) \bigr|^2  \,\leq\,  0\ts ,
\]
which completes the proof.
\end{proof}

Next, we extend \cite[Thm.~4.4]{S-NS2} to arbitrary $\sigma$-compact, LCAG $G$.

\begin{corollary}\label{CT1x}
  Let\/ $\gamma$ be a  positive definite measure with\/ $\widehat{\gamma}_{pp} \neq 0$ and let\/
  $\Delta= \supp(\gamma)$. If\/ $\Delta$ is a Meyer set, then, for all
  $0< a < a_{\text{max}}$, the set of $a$-visible Bragg
  peaks is a Meyer set.
\end{corollary}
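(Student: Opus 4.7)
The plan is to realise $(\widehat{\gamma})^{}_{\mathsf{pp}}$ as a continuous weighted model comb via Theorem~\ref{T1}, identify $I(a)$ as the intersection of the lattice projection with a compact super-level set of the associated function, and then deduce relative denseness directly from sup-almost periodicity. By Corollary~\ref{cms33}, the measure $(\widehat{\gamma})^{}_{\mathsf{pp}}$ is sup-almost periodic. Since $\gamma$ is positive definite, $\widehat{\gamma}$ is a positive translation bounded measure, so $(\widehat{\gamma})^{}_{\mathsf{pp}}$ is a discrete regular measure. Theorem~\ref{T1} then supplies a cut and project scheme $(\widehat{G}\times H,\cL)$ and a uniformly continuous $g\in C_{0}(H)$ with $(\widehat{\gamma})^{}_{\mathsf{pp}}=\omega_{g}$; positivity of $\widehat{\gamma}$ together with the density of $\pi^{}_{2}(\cL)$ in $H$ forces $g\geq 0$ on $H$.

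Once this representation is in hand, I would show that $I(a)$ is contained in a model set. By construction,
\[
    I(a) \, = \, \{\chi\in\pi^{}_{1}(\cL) \mid g(\chi^{\star})\geq a\} \ts ,
\]
and since $g\in C_{0}(H)$ with $a>0$, the level set $W_{a}:=\{y\in H\mid g(y)\geq a\}$ is a compact subset of $H$. Picking any compact $W\subset H$ with non-empty interior satisfying $W_{a}\subset W$, we obtain $I(a)\subset \oplam(W_{a})\subset\oplam(W)$, so $I(a)$ is a subset of a model set.

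Next, I would check that $I(a)$ is relatively dense. Since $a<a_{\max}$, there is a $\chi^{}_{0}\in\widehat{G}$ with $\widehat{\gamma}(\{\chi^{}_{0}\})>a$; set $\eta:=\widehat{\gamma}(\{\chi^{}_{0}\})-a>0$. For any $\psi\in P^{\infty}_{\eta}\bigl((\widehat{\gamma})^{}_{\mathsf{pp}}\bigr)$, the inequality
\[
    \bigl|\widehat{\gamma}(\{\chi^{}_{0}+\psi\})-\widehat{\gamma}(\{\chi^{}_{0}\})\bigr|\, < \, \eta
\]
gives $\widehat{\gamma}(\{\chi^{}_{0}+\psi\})>a$, so $\chi^{}_{0}+\psi\in I(a)$. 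Thus $\chi^{}_{0}+P^{\infty}_{\eta}\bigl((\widehat{\gamma})^{}_{\mathsf{pp}}\bigr)\subset I(a)$, and since sup-almost periodicity makes $P^{\infty}_{\eta}$ relatively dense, so is $I(a)$.

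Combining the two conclusions, $I(a)$ is a relatively dense subset of a model set, and Theorem~\ref{MEYchar} yields that $I(a)$ is a Meyer set. The only real technical point is ensuring that Theorem~\ref{T1} applies to $(\widehat{\gamma})^{}_{\mathsf{pp}}$, which reduces to the translation boundedness of $\widehat{\gamma}$; once that is secured, both steps follow immediately from the cut and project realisation and the fact that $g$ vanishes at infinity.
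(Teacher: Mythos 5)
Your argument is correct, but it takes a genuinely different route from the paper's. The paper never builds a cut and project scheme for $(\widehat{\gamma})^{}_{\mathsf{pp}}$: it works directly with the $\varepsilon$-dual sets, using Proposition~\ref{S31} to sandwich $I(a)$ as $\phi+\Delta^\varepsilon \subset I(a) \subset \Delta^\varepsilon + F$, where the finite set $F$ comes from the local finiteness of $I(a-C\varepsilon)$ (translation boundedness of $\widehat{\gamma}$) together with the relative denseness of $\Delta^\varepsilon$; the Meyer property of $I(a)$ then follows because $\Delta^\varepsilon$ is itself a Meyer set. You instead invoke Theorem~\ref{T1} to write $(\widehat{\gamma})^{}_{\mathsf{pp}}=\omega_g$ with $g\in C_0(H)$, identify $I(a)$ with $\oplam(W_a)$ for the compact super-level set $W_a$, and obtain relative denseness by translating a set of sup-almost periods by a high-intensity Bragg peak. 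This is essentially the mechanism of Proposition~\ref{FTPMRDSVBP} later in the paper, and it buys a slightly stronger structural conclusion (all the $I(a)$ are cut out of a single cut and project scheme by compact level sets of one function), at the cost of invoking the full Baake--Moody machinery where the paper's proof needs only Proposition~\ref{S31} and elementary covering arguments. Two small points of bookkeeping: Corollary~\ref{cms33} is stated for the autocorrelation of a Meyer set $\vL$ with $\Delta=\vL-\vL$, whereas here $\gamma$ is an arbitrary positive definite measure with $\supp(\gamma)$ Meyer, so the sup-almost periodicity should be drawn directly from Proposition~\ref{S31} (with $\Delta=\supp(\gamma)$ as concentration set) combined with the relative denseness of $\Delta^\varepsilon$; and the translation boundedness of $\widehat{\gamma}$, which you correctly flag as the one prerequisite for applying Theorem~\ref{T1}, is exactly the Argabright--de Lamadrid fact the paper also relies on, so it is available. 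Neither point is a gap.
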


\begin{proof}
  Pick some $C >1$ as in Proposition~\ref{S31}. Let $0
  < a <  a_{\text{max}}$. Fix some $0< \varepsilon <
  \max\bigl\{ \frac{ a_{\text{max}}-a}{C}, \frac{a}{C},
  \frac{\sqrt{3}}{2} \bigr\}$.

We start by showing that $I(a)$ contains a translate of $\Delta^\epsilon$.

As $a+ C \ts \varepsilon < a_{\text{max}}$, there exists some $ \phi$ such that
\[
\widehat{\gamma}(\{ \phi \}) > a+ C \ts \epsilon \ts.
\]

Let $\psi \in \Delta^\varepsilon$ be arbitrary. By Proposition~\ref{S31}, we have
\[
    \bigl| \widehat{\gamma}(\{ \psi +\phi \})-
     \widehat{\gamma}(\{\phi\}) \bigr|
     \,\leq\, C \ts\varepsilon \,< \, \widehat{\gamma}(\{ \phi \}) -a  \ts .
\]
Hence $\widehat{\gamma}(\{ \psi +\phi \}) \geq a$, and thus
\[
 \phi+ \Delta^\varepsilon \,\subset\, I(a) \ts .
\]

Next, we show that $I(a)$ is contained in finitely many translates of $\Delta^\varepsilon$.

By Proposition~\ref{S31}, for all $\psi \in \Delta^\varepsilon$
and $\chi \in I(a)$, we have
\[
   \bigl| \widehat{\gamma}(\{ \psi \pm \chi\}) -
     \widehat{\gamma}(\{\chi\}) \bigr| \,\leq\, C\ts\varepsilon  \ts ,
\]
and thus
\[
  \widehat{\gamma}(\{ \psi \pm \chi\}) \,\geq\,
  a- C\ts\varepsilon \,>\,0 \ts .
\]
and
\[
   I(a) \pm \Delta^\varepsilon \,\subset\, I(a-C\varepsilon) \ts .
\]
Since $\Delta$ is a Meyer set and $\varepsilon<\frac{\sqrt{3}}{2}$, the set $\Delta^\varepsilon$ is relatively dense. Therefore, there exists a compact set $K$ so that
$\Delta^\varepsilon+K =\widehat{G}$.

Since $\widehat{\gamma}$ is translation bounded, and
$a-C\varepsilon>0$, the set $I(a-C\varepsilon)$ is locally
finite.

Thus $F:= I(a-C\varepsilon) \cap K$ is finite.

Let now $\chi \in I(a)$ be arbitrary. Then $\chi \in G$, hence $\chi=
\psi+\varphi$ with $\psi \in \Delta^\varepsilon$ and  $\varphi \in K$.
Then,
\[
   \varphi\, =\, \chi -\psi \,\in\,
    I(a) - \Delta^\varepsilon \, \subset\,  I(a-C\varepsilon) \ts ,
\]
and hence $\varphi \in F$. This  proves that
\[
   I(a) \,\subset\, \Delta^\varepsilon +F \ts .
\]

Since $\vL$ is a Meyer set, so is $\Delta$ and hence, by
Theorem~\ref{MEYchar}, $\Delta^\varepsilon$ is a Meyer set. Therefore $\Delta^\varepsilon+F$ is a Meyer set.
\[
  \phi+ \Delta^\varepsilon \,\subset\, I(a) \,\subset\,
   \Delta^\varepsilon +F \ts ,
\]
we conclude that $I(a)$ is a Meyer set.
\end{proof}

\begin{corollary}
  Let\/ $\vL$ be a Meyer set in\/ $G$, and let\/ $\Delta=\vL-
  \vL$. Then, for all\/ $0< a < \widehat{\gamma}(\{ 0 \})$, the set
  of\/ $a$-visible Bragg peaks is a Meyer set.
\end{corollary}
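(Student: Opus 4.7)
The plan is to apply Corollary~\ref{CT1x} to the autocorrelation $\gamma$ of $\vL$, using $\Delta = \vL - \vL$ as the concentration set rather than $\supp(\gamma)$. First, I would observe that $\gamma$ is a positive, positive definite, translation bounded measure, and that $\supp(\gamma) \subset \vL - \vL = \Delta$. By Corollary~\ref{CT1333}, the set $\Delta = \vL - \vL$ is itself a Meyer set, so $\Delta$ is a concentration set for $\gamma$ which happens to be a Meyer set. Moreover, since $\vL$ is relatively dense,
\[
  \widehat{\gamma}(\{0\}) \,\geq\, \bigl(\underline{\dens}(\vL)\bigr)^2 \,>\, 0
\]
(as recalled in the proof of Corollary~\ref{cms}), so in particular $\widehat{\gamma}_{\mathsf{pp}} \neq 0$, and since $\gamma$ is positive, $a_{\text{max}} = \widehat{\gamma}(\{0\})$.

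Next I would note that the proof of Corollary~\ref{CT1x} uses the set $\Delta$ only in two ways: as a concentration set in the invocation of Proposition~\ref{S31}, and as a Meyer set in order to guarantee (via Theorem~\ref{MEYchar}) that $\Delta^\varepsilon$ is relatively dense and that $\Delta^\varepsilon + F$ is a Meyer set for any finite $F$. The particular equality $\Delta = \supp(\gamma)$ is never used. Our $\Delta = \vL - \vL$ satisfies both required properties, so the same chain of inequalities goes through verbatim, yielding
\[
  \phi + \Delta^\varepsilon \,\subset\, I(a) \,\subset\, \Delta^\varepsilon + F
\]
for an appropriate $\varepsilon > 0$ depending on $a$ and $a_{\text{max}}$, some $\phi \in \widehat{G}$, and a finite set $F$.

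This sandwiches $I(a)$ between a translate of the relatively dense set $\Delta^\varepsilon$ and the Meyer set $\Delta^\varepsilon + F$, so that $I(a)$ is relatively dense and is a subset of a Meyer set; hence $I(a)$ is itself a Meyer set by Corollary~\ref{CT1333} (or directly by Theorem~\ref{MEYchar}). There is no substantial obstacle: the result is essentially a specialisation of Corollary~\ref{CT1x}, and the only observation needed is that $\vL - \vL$ is an admissible choice of concentration set, which is precisely the content of Corollary~\ref{CT1333}.
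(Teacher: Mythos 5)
Your proposal is correct and matches the paper's (implicit) argument: the corollary is stated as an immediate consequence of Corollary~\ref{CT1x}, whose proof indeed only uses $\Delta$ as a concentration set for $\gamma$ that is a Meyer set, and your observations that $\supp(\gamma)\subset\vL-\vL$, that $\vL-\vL$ is Meyer by Corollary~\ref{CT1333}, and that $\widehat{\gamma}(\{0\})=a_{\text{max}}>0$ are exactly what is needed. No gaps.
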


Note that all results in this section depend on the autocorrelation we
may choose for the Meyer set. Frequently, a Meyer set can have multiple
autocorrelations. An interesting question posed by Michael Baake is:
Given two different autocorrelations of the same Meyer set, how
different can the two diffractions be?

We will conclude this section by showing that, given two
autocorrelations of the same Meyer set, there does not seem to be any
connection between their continuous spectra, yet their pure point
spectra are somehow related.

\begin{example}
  Let $\vL_1, \vL_2 \subset \RR$ be two Meyer subsets of the same
  model set. Let
\[
   \vL \,:= \, \bigl[ \vL_{1} \cap (-\infty, 0)\bigr] \cup
               \bigl[\vL_{2} \cap [0, \infty) \bigr]\ts .
\]
If $A_n=(-n^2,n)$ or $A_n=(-n^2, 0)$, then any autocorrelation of
$\vL$ with respect to $A_n$ is actually an autocorrelation of
$\vL_{1}$.  Similarly, if $B_n=(-n,n^2)$ or $B_n=(0, n)$, then any
autocorrelation of $\vL$ with respect to $B_n$ is actually an
autocorrelation of $\vL_2$.

Now, if we pick $\vL_1=\ZZ$ and $\vL_2$ to be a random Bernoulli
subset of $\ZZ$, and $C_{2n}=[0,n^2]$, $C_{2n+1}=[-n^2,0]$, then, with
respect to the van Hove sequence $C_n$, the Meyer set $\vL:= \bigl[
\vL_1 \cap (-\infty, 0)\bigr] \cup \bigl[\vL_2 \cap [0, \infty)\bigr]$
has both
\[
    \gamma^{}_{1} \,=\, \delta^{}_{\ZZ} \quad\text{and}\quad
    \gamma^{}_{2} \,=\,\myfrac{1}{4}\ts\delta^{}_{\ZZ}  +
                      \myfrac{1}{4}\ts\delta^{}_0
\]
as autocorrelations. However, while $\gamma^{}_{1}$ is pure point
diffractive, $\gamma^{}_{2}$ has mixed pure point and absolutely
continuous spectrum.

If we set $\vL_{1}=\ZZ$ and $\vL_{2}$ to be the set of $a$-type
Thue--Morse, and again consider $\vL:= \bigl[ \vL_1 \cap (-\infty,
0)\bigr] \cup \bigl[\vL_2 \cap [0, \infty) \bigr]$, then, with respect
to the same van Hove sequence, $\vL$ has two different
autocorrelations, one being pure point diffractive, while the other
has mixed pure point and singularly continuous spectrum.

One can also mix the Thue--Morse system with a Bernoulli system to
construct a Meyer set with two autocorrelations, both with mixed
spectrum, one pure point and absolutely continuous, the other pure
point and singularly continuous.
\exend
\end{example}

\begin{remark}
  While this example is not homogeneous, it is natural in the
  following sense. If $\vL$ has two different autocorrelation
  measures, it means that arbitrarily large finite parts of $\vL$ have
  different statistical properties, and thus are different. In some
  sense, two `halves' (parts) of $\vL$ are completely different.

  A more interesting example is the following one. Let $\vL_1$ be the
  $a$-type Thue--Morse set and $\vL_2$ be a Bernoulli subset of
  $\ZZ$. Let
\[
\begin{split}
   \vL \,:= \, & \Biggl[ \bigcup_{n \in \ZZ_+}
               \Bigl[ \ZZ \cap \pm \bigl[3^{3n}, 3^{3n+1}\bigr) \Bigr] \Biggr]
               \cup \Biggl[ \bigcup_{n \in \ZZ_+}
               \Bigl[ \vL_1 \cap \pm \bigl[3^{3n+1}, 3^{3n+2}\bigr)
               \Bigr] \Biggr] \\
              & \quad \cup \Biggl[ \bigcup_{n \in \ZZ_+}
              \Bigl[ \vL_2 \cap \pm \bigl[3^{3n+2}, 3^{3n+3}\bigr)
              \Bigr] \Biggr] \ts .
\end{split}
\]
Then, $\vL$ has each of the autocorrelation of $\ZZ$, $\vL_1$ and
$\vL_2$ as cluster points, and also their sums.
\exend
\end{remark}

This example shows that, given two different autocorrelations of the
same Meyer set, we should not expect any connection between their
continuous spectra.

What about the pure point spectrum?

Let $\vL$ be a Meyer set and let $\gamma_1$ and $\gamma_2$ be two
autocorrelations of $\vL$. Let $\vG$ be any regular model set
containing $\vL$. Then, as we saw in the proof of
Proposition~\ref{S21}, since the autocorrelation $\gamma$ of $\vG$
exists and is the same with respect to any van Hove sequence, there
exists some finite sets $F_1$ and $F_2$ so that
\[
   (\gamma^{}_{1})^{}_{\mathsf{s}} \,\leq\, \gamma
   \,\leq\, \delta^{}_{F_1} * (\gamma^{}_1)^{}_{\mathsf{s}}
\]
and
\[
   (\gamma^{}_{2})^{}_{\mathsf{s}} \,\leq\, \gamma \,\leq\,
    \delta^{}_{F_2} * (\gamma^{}_{2})^{}_{\mathsf{s}} \ts .
\]
Combining the two, we get that there exists a finite set $F$ so that
\[
 (\gamma^{}_{1})^{}_{\mathsf{s}} \,\leq\,
  \delta^{}_{F} * (\gamma^{}_{2})^{}_{\mathsf{s}}
\]
and
\[
 (\gamma^{}_{2})^{}_{\mathsf{s}} \,\leq\,
  \delta^{}_{F} * (\gamma^{}_{1})^{}_{\mathsf{s}} \ts .
\]
Thus, the strongly almost periodic components of the two
autocorrelations are related as shown, which implies a relation
between the two pure point spectra. However, it is not apparent how
precisely the pure point spectra are related.

We complete this section by using the results we just proved to add few extra conditions to
the characterisation of Meyer sets in Theorem~\ref{MEYchar}.

\begin{theorem}
  Let\/ $\vL \subset G$ be relatively dense. Then, the following
  statements are equivalent.
  \begin{itemize}\itemsep=1pt
  \item[(i)] $\vL$ is a subset of a model set.
  \item[(ii)] $\vL$ is harmonious.
  \item[(iii)] $\vL^\varepsilon$ is relatively dense for all\/ $0<
    \varepsilon <2$.
  \item[(iv)] $\vL^\varepsilon$ is a model set for all\/ $0< \varepsilon
    < \frac{\sqrt{3}}{2}$.
  \item[(v)] $\vL^\varepsilon$ is a model set for some\/ $0< \varepsilon
    < \frac{\sqrt{3}}{2}$.
  \item[(vi)] $\vL^\varepsilon$ is relatively dense for some\/ $0<
    \varepsilon < \frac{\sqrt{3}}{6}$.
  \item[(vii)] $\vL$ is an almost lattice.
  \item[(viii)] $\vL- \vL- \vL$ is locally finite.
  \item[(ix)] There exists a regular sup-almost periodic discrete measure\/
    $\eta$ and an\/ $0 < \varepsilon < \| \eta \|^{}_\infty$ so that\/ $\vL
    \subset P_{\!\varepsilon}^{\infty}(\eta)$.
  \item[(x)] There exists a strongly almost periodic measure\/ $\eta$
    with\/ $\supp(\eta)-\supp(\eta)$ uniformly discrete, so that\/ $\vL
    \subset \supp( \eta)$.
  \item[(xi)] There exists a regular sup-almost periodic discrete measure\/
    $\eta$ so that\/ $\supp(\eta)$ has finite local complexity and\/ $\vL
    \subset \supp( \eta)$.
  \item[(xii)] There exists a norm-almost periodic discrete measure\/ $\eta$ so
    that\/ $\supp(\eta)$ has finite local complexity and\/ $\vL
    \subset \supp( \eta)$.
  \item[(xiii)] There exists a regular sup-almost periodic measure\/
    $\eta$ and\/ $a>0$ so that\/ $\supp(\eta)$ has finite local complexity and\/ $\vL\subset
    \{ x \in G \mid \left| \eta(\{ x \}) \right| > a \}$.
  \item[(xiv)] There exists a
    regular sup-almost periodic measure\/ $\eta$ so that\/ $\supp(\eta)$ has finite local complexity and\/
    $\delta^{}_{\nts\vL} \leq \eta $.
  \item[(xv)] There exists a
    norm-almost periodic measure\/ $\eta$ so that\/ $\supp(\eta)$ has finite local complexity and \/ $\delta_{\nts\vL} \leq
    \eta$.
 \end{itemize}
\end{theorem}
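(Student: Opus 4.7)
The plan is to append the seven new conditions~(ix)--(xv) to the equivalent block~(i)--(viii) of Theorem~\ref{MEYchar}, via a small cycle of implications passing through the cut and project machinery of Sections~\ref{construction cp}--\ref{Continuous weighted model combs}. Several implications among the new conditions are immediate: (xii)$\Longrightarrow$(xi), (xv)$\Longrightarrow$(xiv), (xv)$\Longrightarrow$(xii), and (xiv)$\Longrightarrow$(xiii) (with $a=\tfrac{1}{2}$) all follow from Lemma~\ref{MSNAP} and from $\delta^{}_{\nts\vL} \leq \eta$; while (xii)$\Longrightarrow$(x) follows from Proposition~\ref{BMSNAP}(ii) combined with the equivalence (i)$\Leftrightarrow$(ii) of Theorem~\ref{T2} (a norm-almost periodic discrete measure with FLC support has uniformly discrete $\supp(\eta) - \supp(\eta)$). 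In each of (x)--(xv), the hypothesis on $\supp(\eta)$ forces $\eta$ to be a discrete measure.

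For the backward direction, each new condition implies (i). Condition (ix) is immediate from Corollary~\ref{123123123}: $P^\infty_\varepsilon(\eta)$ is itself a model set containing $\vL$. For the remaining conditions I would apply Theorem~\ref{T2} to $\eta$: the sup- or norm-almost periodicity together with FLC support produces a cut and project scheme $(G \times H, \cL)$ and $f \in C_{\mathsf{c}}(H)$ with $\eta = \omega_f$. In cases (xi), (xii), (xiv), (xv) this gives $\vL \subset \supp(\eta) \subset \oplam(\supp f)$, a model set. In case (xiii), the open set $U := \{y \in H : |f(y)| > a\}$ has closure inside $\supp f$, hence is precompact, and the hypothesis forces $\vL \subset \oplam(\overline{U})$. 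Finally, (x)$\Longrightarrow$(xii) via Proposition~\ref{BMSNAP}(ii) closes the backward direction.

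The forward direction (i)$\Longrightarrow$(x)--(xv) is handled by a single construction. Given $\vL \subset \oplam(W)$ with $W$ compact in a scheme $(G \times H, \cL)$, I would invoke Urysohn's lemma in $H$ to pick $g \in C_{\mathsf{c}}(H)$ with $0 \leq g$ and $g \equiv 1$ on $W$, and set $\eta := \omega_g$. Then $\eta$ is discrete and regular; by Theorem~\ref{T2}((vi)$\Rightarrow$(ii)) it is norm-almost periodic, and $\supp(\eta) \subset \oplam(\supp g)$ is a model set (hence FLC). Since $\eta(\{x\}) = g(x^\star) = 1$ for every $x \in \vL$, one has $\delta^{}_{\nts\vL} \leq \eta$, establishing (xv); conditions (x)--(xiv) then follow from the chains of the first paragraph.

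The main obstacle is the remaining forward implication (i)$\Longrightarrow$(ix), since now the elements of $\vL$ must appear as genuine $\varepsilon$-sup-almost-periods of $\eta$, not merely as support points. My plan is to construct $g \in C_{\mathsf{c}}(H)$ that is approximately $W$-translation-invariant, in the sense that $\|T_w g - g\|^{}_{\infty} < \varepsilon$ for all $w \in W$, with $\|g\|^{}_{\infty} > \varepsilon$. Since $W$ is compact, it is contained in a compactly generated open subgroup $H_0 \leq H$, and the structure theorem for LCAG gives $H_0 \cong \RR^m \oplus \ZZ^n \oplus K$ with $K$ compact; a tensor-product bump $g = g_1 \otimes g_2 \otimes g_3$ with $g_1$ a smooth plateau on $\RR^m$, $g_2$ a wide tent on $\ZZ^n$, and $g_3 = 1_K$ (each component flat enough that translation by the corresponding projection of $W$ produces at most $\varepsilon/3$ variation), extended by zero to $H$, furnishes the required $g$. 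Setting $\eta := \omega_g$ then yields, for each $t \in \vL$ with $t^\star \in W$,
\[
  \|\eta - T_t \eta\|^{}_{\infty} \,=\,
  \sup_{x \in \pi^{}_{1}(\cL)} \bigl| g(x^\star + t^\star) - g(x^\star) \bigr|
  \,\leq\, \|T_{t^\star} g - g\|^{}_{\infty} \,<\, \varepsilon \ts ,
\]
so that $\vL \subset P^\infty_\varepsilon(\eta)$ as required; the technical content here is purely the structure theorem for LCA groups, while the rest of the argument is bookkeeping against Theorem~\ref{T2} and Corollary~\ref{123123123}.
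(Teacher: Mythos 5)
Your proof is correct, and for everything except the implication (i) $\Longrightarrow$ (ix) it coincides with the paper's argument: the same chains of easy implications among (ix)--(xv), the same use of Theorem~\ref{T2} and Corollary~\ref{123123123} for the return to (i), and essentially the same window function $g \geq 1^{}_{W}$ for (i) $\Longrightarrow$ (xv). The one genuinely different step is (i) $\Longrightarrow$ (ix). The paper's primary proof stays inside the diffraction machinery already developed: it takes the model set $\vL^\varepsilon$ with $\varepsilon < \tfrac{1}{2}$, lets $\gamma$ be its autocorrelation, and combines Proposition~\ref{S31} with Lemma~\ref{DMS3} to get $\vL \subset \vL^{\varepsilon\varepsilon} \subset (\vL^\varepsilon - \vL^\varepsilon)^{2\varepsilon} \subset P^{\infty}_{2\widehat{\gamma}(\{0\})\varepsilon}\bigl((\widehat{\gamma})^{}_{\mathsf{pp}}\bigr)$, so that $\eta = (\widehat{\gamma})^{}_{\mathsf{pp}}$ does the job with no new construction. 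You instead build an almost $W$-invariant bump $g$ on $H$ directly; the paper records precisely this as an alternative route, but suggests realising the almost invariance via a positive definite compactly supported $g$ with $|g|$ bounded below on the window (so that Krein's inequality, Lemma~\ref{positive-definite-measure-function}, controls $\|T_w g - g\|_\infty$), whereas you invoke the structure theorem for compactly generated LCA groups and a tensor-product plateau. Both are valid; your version costs the structure theorem but is elementary and explicit, while the paper's primary route reuses Proposition~\ref{S31} and, as a bonus, exhibits $\eta$ as the pure point spectrum of a model set. Two small points to tidy: the supremum defining $P^{\infty}_{\varepsilon}(\eta)$ runs over all $x \in G$, not only over $\pi^{}_{1}(\cL)$ --- harmless, since for $t \in \pi^{}_{1}(\cL)$ and $x \notin \pi^{}_{1}(\cL)$ both $\eta(\{x\})$ and $\eta(\{t+x\})$ vanish, but worth a sentence; and you should fix a concrete $\varepsilon$, say $\varepsilon = \tfrac{1}{2} < 1 = \|g\|_\infty = \|\eta\|_\infty$ (the last equality by density of $\pi^{}_{2}(\cL)$ in $H$), so that the requirement $\varepsilon < \|\eta\|_\infty$ in (ix) is visibly met.
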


\begin{proof}
  We already know that the first eight conditions are equivalent.
  Since any norm-almost periodic measure is sup-almost periodic, the
  implication $\text{(xv)} \Longrightarrow \text{(xiv)}$ is obvious. Also, by picking any $0< a< 1$, the implication
  $\text{(xiv)} \Longrightarrow \text{(xiii)}$ is trivial, while  $\text{(xiii)} \Longrightarrow \text{(xi)}$ is immediate. Further,
  $\text{(x)} \Longleftrightarrow \text{(xi)} \Longleftrightarrow
  \text{(xii)} \Longrightarrow \text{(i)}$ follows from
  Theorem~\ref{T2}, and $\text{(ix)} \Longrightarrow \text{(i)}$
  follows from Corollary~\ref{123123123}.

  We now show that $\text{(i)} \Longrightarrow \text{(ix)}$.  Pick $0
  < \varepsilon <\frac{1}{2}$. Then, $\vL^\varepsilon$ is a model set.
  Let $\gamma$ be the autocorrelation of $\vL^\varepsilon$.  Let
  $\vG:= \vL^\varepsilon - \vL^\varepsilon$. Since $0 \in
  \vL^\varepsilon$, we have that $\vL^\varepsilon \subset \vG$.  By
  Proposition~\ref{S31}, for all $x \in G$ and $y \in
  \vG^{\varepsilon'}$, we have
\[
    \bigl| \widehat{\gamma}(\{ x+y\})-
    \widehat{\gamma}(\{x\}) \bigr| \,\leq\,
    \widehat{\gamma}(\{0\})\ts\varepsilon' \ts .
\]
Thus, for all $\varepsilon' <1$, we obtain
$\widehat{\gamma}(\{0\})\ts\varepsilon' <
\|(\widehat{\gamma})^{}_{\mathsf{pp}}\|^{}_\infty=
\widehat{\gamma}(\{0\})$ and
\[
   \vG^{\varepsilon'} \,\subset\,
   P_{\widehat{\gamma}(\{0\})\varepsilon'}^{}
   \bigl((\widehat{\gamma})^{}_{\mathsf{pp}}\bigr)\ts .
\]
Thus $(\widehat{\gamma})^{}_{\mathsf{pp}}$ is sup-almost periodic.
Also, by Lemma~\ref{DMS3}, we have
\[
    \vL^{\varepsilon \varepsilon} \,\subset\, \vG^{2\varepsilon} \ts .
\]
Using $2\varepsilon <1$, we get
\[
   \vL \,\subset\, \vL^{\varepsilon \varepsilon}
    \,\subset\, \vG^{2\varepsilon}
     \,\subset\, P^{}_{2\widehat{\gamma}(\{0\})\varepsilon}
    \bigl((\widehat{\gamma})^{}_{\mathsf{pp}}\bigr) \ts .
\]
Since $(\widehat{\gamma})^{}_{\mathsf{pp}}$ is sup-almost periodic, we
are done.

Alternately, one can prove $\text{(i)} \Longrightarrow \text{(ix)}$ by picking a cut and project scheme $(G \times H, \cL)$ and some precompact open set $U \subset H$ such that $\vL \subset \oplam(U)$, and then construct a continuous compactly supported function $g$ on $H$ such that, for some $0< a< \|g \|_\infty$ we have
\[
\|g-T^tg \|_\infty \,>\, a  \mbox{ for all } x \in U \ts .
\]
This can be achieved for example, by constructing $g$ to be continuous, positive definite, compactly supported and such that $|g(x)|$ is bounded from below on $U$.

Then, the measure $\omega_g$ satisfies the requirements of $\text{ix)}$.

To complete the proof, we show $\text{(i)} \Longrightarrow \text{(xv)}$, by constructing a measure which satisfies $\text{(xv)}$. Our measure will also satisfy the condition $\text{(xii)}$, but this implication is not needed.

Let $(G \times H, \cL)$ be a cut and project scheme and $W
\subset H$ be a compact set so that $\vL \subset \oplam(W)$.  Let $g
\in C_{\mathsf{c}}(H)$ be so that $g \geq 1^{}_{W}$. Let $\eta :=
\omega_g$.  Since $\supp(\eta)$ is a model set, and $\|g\|^{}_\infty <
\infty$, it is easy to see that $\eta$ is a regular measure. Moreover,
$\delta^{}_{\nts\vL} \leq \eta$.  Then, by Theorem~\ref{T2}, $\eta$ is
sup-almost periodic, and $\supp(\eta)$ is a model set, thus has finite
local complexity. Moreover,
\[
    \vL  \,\subset\,
     \bigl\{ x \in G \mid \eta(\{ x \}) > \tfrac{1}{2} \bigr\}
    \,\subset\, \supp(\eta) \ts .
\]

\end{proof}

\section{Period stable model sets}

When one studies cut and project schemes, two similar issues sometimes
occur.

The first issue is when the window has some periods. If $c^\star \in \pi^{}_{2}
(\cL)$ is such that $c^\star+W=W$, then $c$ is a period for
$\oplam(W)$.  The second issue is that, sometimes, the $\star$-map fails
to be one to one. In this case, $\ker(\star)$ is a periodic group for
$\oplam(W)$.  In both cases, we can see that $\oplam(W)$ has periods.

Given a Meyer set $\vL$, we are often able to construct a cut and
project scheme and to try to identify the window using the star map,
but if the candidate window has periods, or if $\ker(\star) \neq 0$,
we need additional conditions to obtain $\vL$ from the window in that
particular cut and project scheme: multiple translates of points must
simultaneously be either in $\vL$ or outside $\vL$.

One way of getting around this problem is by studying the larger
category of inter model sets. In this section we try a different approach to get
around this issue, namely we restrict our attention to a smaller class
of model sets.

\begin{definition}
  A point set $\vL$ is called \emph{period-stable} if, for all groups $H
  \leq G$ so that $H \subset \vL-\vL$, we have $H+ \vL=\vL$.
\end{definition}

We obtain a simple characterisation for period-stable
model sets.

\begin{theorem}\label{psms}
  Let\/ $\vL \subset G$ be relatively dense and period-stable. Then,
  the following statements are equivalent.
  \begin{itemize}\itemsep=1pt
  \item[(i)] $\vL$ is a model set.
  \item[(ii)] $\vL$ has finite local complexity, and there exists a
    positive sup-almost periodic measure\/ $\eta \neq 0$ so that\/ $\eta \leq
    \delta^{}_{\nts\vL}$.
  \item[(iii)] $\vL$ has finite local complexity, and there exists a
    positive norm-almost periodic measure\/ $\eta \neq 0$ so that\/ $\eta
    \leq \delta^{}_{\nts\vL}$.
  \item[(iv)] $\vL-\vL$ is uniformly discrete, and there exists a
    positive strongly almost periodic measure\/ $\eta \neq 0$ so that\/ $\eta
    \leq \delta^{}_{\nts\vL}$.
 \end{itemize}
\end{theorem}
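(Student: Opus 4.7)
The plan is to handle (i) $\Rightarrow$ (ii), (iii), (iv) by a single construction. Given $\vL=\oplam(W)$ in some CPS $(G\times H,\cL)$, I would pick $f\in C_{\mathsf{c}}(H)$ with $0\le f\le 1^{}_{W}$ and $f\not\equiv 0$, and set $\eta:=\omega_f$. Theorem~\ref{T2} then guarantees that $\eta$ is norm-almost periodic (hence strongly and sup-almost periodic), while $\eta(\{x\})=f(x^\star)\le 1=\delta^{}_{\nts\vL}(\{x\})$ on $\vL$ and $\eta(\{x\})=0$ off $\vL$; thus $\eta\le\delta^{}_{\nts\vL}$. Since $\vL-\vL$ is uniformly discrete and $\vL$ has FLC, all of (ii), (iii), (iv) follow. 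The conversions (iv) $\Rightarrow$ (iii) $\Rightarrow$ (ii) come from Proposition~\ref{BMSNAP} (using $\supp(\eta)-\supp(\eta)\subset\vL-\vL$ uniformly discrete) and Lemma~\ref{MSNAP}.

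The substantive direction is (ii) $\Rightarrow$ (i). Since $\supp(\eta)\subset\vL$ has FLC, Theorem~\ref{T2} applied to $\eta$ produces a CPS $(G\times H,\cL)$ and $f\in C_{\mathsf{c}}(H)$ with $\eta=\omega_f$. Exploiting the freedom in axiom~(A5) of the Baake--Moody construction, I would take $L:=\langle\vL\rangle$ as the base group, so that $\vL\subset\pi^{}_{1}(\cL)=L$. Theorem~\ref{T2} also says $\eta$ is strongly almost periodic, so $\supp(\eta)$ is relatively dense by Lemma~\ref{lmksd1}. Combined with FLC of $\vL$, Lemma~\ref{NS1x} yields a finite $F\subset L$ with $\vL\subset\supp(\eta)+F$. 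Hence $\vL^\star\subset\supp(f)+F^\star$ is precompact in $H$, and $W:=\overline{\vL^\star}$ is compact.

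Period-stability now enters. Writing $\phi\colon L\to H$ for the star-map and $K:=\ker\phi$, I fix $y_0\in\supp(\eta)$ with $f(y_0^\star)>0$; then for every $t\in K$ we have $\eta(\{y_0+t\})=f(y_0^\star)>0$, forcing $y_0+t\in\supp(\eta)\subset\vL$. Hence $K\subset \vL-y_0\subset \vL-\vL$, and as $K$ is a subgroup of $G$, period-stability gives $K\subset\per(\vL)$. I would then claim $\vL=\oplam(W)$: the inclusion $\vL\subset\oplam(W)$ is immediate, and for $x\in\oplam(W)$ with $x^\star=y^\star$ for some $y\in\vL$, one has $x-y\in K\subset\per(\vL)$, so $x\in\vL$.

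The hard part will be the residual case $x^\star\in\overline{\vL^\star}\setminus\vL^\star$, where $x^\star$ is a limit point of $\vL^\star$ not itself attained by $\vL$. The cleanest way out is to take the window $W:=\vL^\star$ directly (precompact but possibly not closed), after which $\oplam(\vL^\star)\subset\vL$ is automatic from $K\subset\per(\vL)$, provided the paper's notion of model set admits precompact windows. Alternatively, keeping $W:=\overline{\vL^\star}$, one must show that every $h\in\overline{\vL^\star}$ lying in $\pi^{}_{2}(\cL)$ is already in $\vL^\star$, i.e., $\vL^\star$ is closed in the relative topology of $\pi^{}_{2}(\cL)$; the argument would use that a sequence $y_n\in\vL$ with $y_n^\star\to x^\star$ and $y_n^\star\ne x^\star$ is forced by discreteness of $\cL$ to satisfy $y_n\to\infty$ in $G$, and then FLC of $\vL$ combined with period-stability would be used to produce a contradiction.
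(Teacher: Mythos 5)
Your forward direction and the reductions among (ii)--(iv) match the paper's, and the way you exploit period-stability in (ii) $\Rightarrow$ (i) --- showing $\ker(\star)\subset\vL-\vL$, hence $\ker(\star)+\vL=\vL$ --- is exactly the paper's argument. The gap is in your choice of window, and neither of your two proposed repairs closes it. Taking $W:=\vL^\star$ fails because a model set requires a window with non-empty interior (this is the paper's standing convention; see the proof of (i) $\Rightarrow$ (ii) and Corollary~\ref{capschemecor}), and $\vL^\star$ is a countable subset of a generally non-discrete group $H$, so it has empty interior. Taking $W:=\overline{\vL^\star}$ fails because the statement you would need --- that $\overline{\vL^\star}\cap\pi^{}_{2}(\cL)\subset\vL^\star$ --- is false in general: for $\vL=\oplam([0,1))$ in the standard $\ZZ[\sqrt{2}\,]$ scheme (a relatively dense, period-stable set satisfying (ii), since $\vL-\vL$ contains no non-trivial subgroup), one has $1\in\overline{\vL^\star}\cap L^\star$ but $1\notin\vL^\star$, so $\oplam\bigl(\overline{\vL^\star}\bigr)=\vL\cup\{1\}\neq\vL$. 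Your sketched route via ``$y_n\to\infty$ in $G$ plus FLC plus period-stability'' cannot succeed, because there is nothing to contradict.

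The paper's resolution is a third choice of window: $W:=W_1\cup\vL^\star$ with $W_1:=\{y\in H\mid g(y)>0\}$, where $g\in C_{\mathsf{c}}(H)$ realises $\eta=\omega_g$. This $W$ is precompact (it sits inside $W_1+F^\star$ for a finite $F$ with $\vL\subset\supp(\eta)+F$), has non-empty interior because it contains the non-empty open set $W_1$, and satisfies $\oplam(W)=\vL$: a point $x\in\oplam(W)$ has either $x^\star\in W_1$, whence $\eta(\{x\})=g(x^\star)>0$ and $x\in\supp(\eta)\subset\vL$, or $x^\star\in\vL^\star$, whence your $\ker(\star)$ argument applies. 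Adjoining the open set $W_1$ costs nothing, since $\oplam(W_1)$ already lies in $\vL$, but it supplies the missing interior. If you replace your window by this one, the rest of your proof goes through as written.
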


\begin{proof}
  The equivalences $\text{(ii)} \Longleftrightarrow \text{(iii)}
  \Longleftrightarrow \text{(iv)}$ follow immediately from
  Theorem~\ref{T2}. It remains to be shown that $\text{(i)}
  \Longleftrightarrow \text{(ii)}$.

  We first prove the implication $\text{(i)} \Longrightarrow
  \text{(ii)}$. Let $(G \times H, \cL)$ be a cut and project
  scheme and let $W \subset H$ be precompact with $W^\circ \neq
  \varnothing$ and
\[
    \vL \, =\,  \oplam(W) \ts .
\]
Pick a non-zero $g \in C_{\mathsf{c}}(G)$ so that $0 \leq g \leq
1^{}_{W^\circ}$. Then, choosing $\eta=\omega_g$ works.

Now consider $\text{(ii)} \Longrightarrow \text{(i)}$.  Let $L:=
\langle \vL \rangle$. Then, the pair
$(P_{\!\varepsilon}^{\infty}(\eta), L)$ satisfies the axioms
(A1)--(A5), thus the Baake--Moody construction yields a cut and project
scheme $(G \times H, \cL)$.  Moreover, by
Lemma~\ref{capschemelemma}, there exists a $g \in C_{\mathsf{c}}(H)$
so that $\eta= \omega_g$.

Let $W_1:= \{ y \in H \mid g(y) > 0 \}$. Then $W_1$ is open and
non-empty.
Let
\[
   W \, := \, W_1 \cup \{ x^\star \mid x \in \vL \} \ts .
\]

Then, $W_1 \subset W$, and hence $W^\circ \neq \varnothing$. We now show
that $\overline{W}$ is compact and that $\vL= \oplam(W)$.

Since $\eta$ is sup-almost periodic and non zero, $\supp(\eta)$ is
relatively dense, and thus, by Lemma~\ref{NS1x}, there exists a finite
set $F$ so that
\[
   \vL \,\subset\, \supp(\eta) +F \ts .
\]
Since both $\vL, \supp(\eta) \subset L$, by picking a
minimal $F$ and eventually adding $0 \in L$, we can assume, without
loss of generality, that $0 \in F \subset L$.

For any subset $A \subset L$, we denote $A^\star:= \{ x^\star \mid x \in
A \}$. Then, it is easy to see that
\[
   W \,\subset\, W_1 + F^\star \ts .
\]
Since $g \in C_{\mathsf{c}}(G)$, we have that $W_1$ is precompact, and
hence $W$ is precompact. Also, we know that $\vL^\star \subset W$, and
thus the inclusion $\vL \subset \oplam(W)$ is obvious. We now show
that $\oplam(W) \subset \vL$.

Let $x \in \oplam(W)$. Then $x^\star \in W$. Hence, either $x^\star \in
W_1$ or $x^\star \in \vL^\star$.

Consider the first case, $x^\star \in W_1 \Longrightarrow
g(x^\star)>0$. Then,
\[
   \delta^{}_{\nts\vL}(\{ x \}) \,\geq\, \eta(\{x\})
   \,=\, \omega_g(\{x\}) \,=\,
   g(x^\star) \,>\, 0 \ts ,
\]
and hence $x \in \vL$.

For the second case, $x^\star \in \vL^\star$, we first show that
$\ker(\star) \subset \vL- \vL$. Let $z \in \ker(\star)$, then $z \in
L$ and $z^\star=0$.

Since $W_1$ is open and $L^\star$ is dense in $H$, there exists a $w
\in L$ so that $w^\star \in W_1$. Then, $g(w^\star)=g(w^\star+z^\star)
\neq 0$, and hence
\[
  \eta(\{w\}) \,=\, \eta(\{ w+z \}) \,\neq\, 0 \ts .
\]
Using $\eta \leq \delta^{}_{\nts\vL}$, we get that $w, w+z \in \vL$,
which shows that $z \in \vL- \vL$.  Thus, $\ker(\star) \subset \vL-
\vL$, and, since $\vL$ is period-stable, we have $\vL +
\ker(\star)= \vL$.

Now, since $x^\star \in \vL^\star$, we have $x^\star=y^\star$ for some
$y \in \vL$. If $x=y$, we are done. Otherwise, $(x-y)^\star=0$, thus
$x-y \in \ker(\star)$. Hence
\[
  x \, =\, x-y+y \, \in\,  \ker(\star)+\vL \, =\, \vL \ts ,
\]
which completes the proof.
\end{proof}

If $\vL$ is not period-stable, the following result can be proved
exactly like Theorem~\ref{psms}.

\begin{proposition}
  Let\/ $\vL \subset G$ be relatively dense. Then, the following
  statements are equivalent.
  \begin{itemize}\itemsep=1pt
  \item[(i)] There is a cut and project scheme\/ $(G\times H,
    \cL)$ and two sets\/ $U \subset W$ with\/ $\varnothing
    \neq U$ open and\/ $W$ compact so that\/ $\oplam(U) \subset
    \vL\subset \oplam(W)$.
  \item[(ii)] $\vL$ has finite local complexity, and there exists a
    positive sup-almost periodic measure\/ $\eta$ so that\/ $\eta \leq
    \delta^{}_{\nts\vL}$.
  \item[(iii)] $\vL$ has finite local complexity, and there exists a
    positive norm-almost periodic measure\/ $\eta$ so that\/ $\eta
    \leq \delta^{}_{\nts\vL}$.
  \item[(iv)] $\vL-\vL$ is uniformly discrete, and there exists a
    positive strongly almost periodic measure\/ $\eta$ so that\/ $\eta
    \leq \delta^{}_{\nts\vL}$.
 \end{itemize}
\end{proposition}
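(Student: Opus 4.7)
My plan is to follow the structure of Theorem~\ref{psms}, establishing the cycle (i)$\Rightarrow$(iv)$\Rightarrow$(iii)$\Rightarrow$(ii)$\Rightarrow$(i), and I will tacitly assume $\eta \neq 0$ (as in the analogous Theorem~\ref{psms} statement). The implications (iv)$\Rightarrow$(iii) and (iii)$\Rightarrow$(ii) should follow immediately from Lemma~\ref{MSNAP} together with the observation that $\supp(\eta) \subset \vL$ inherits finite local complexity (or uniform discreteness of differences) from $\vL$, which allows Theorem~\ref{T2} applied to $\eta$ to move freely between the three flavours of almost periodicity.

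For (i)$\Rightarrow$(iv), I will pick any nonzero $g \in C_{\mathsf{c}}(H)$ with $0 \leq g \leq 1^{}_{U}$ and set $\eta := \omega_g$. Theorem~\ref{T2} makes $\eta$ strongly almost periodic; since $g \leq 1$ pointwise while $\oplam(U) \subset \vL$, I obtain both $\supp(\eta) \subset \vL$ and $\eta(\{x\}) = g(x^\star) \leq 1$ for all $x$, so $\eta \leq \delta^{}_{\!\vL}$; finally $\vL - \vL \subset \oplam(W-W)$ is uniformly discrete because $W - W$ is compact.

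The main content will be (ii)$\Rightarrow$(i). First I will apply Theorem~\ref{T1} and Theorem~\ref{T2} to $\eta$ to obtain \emph{some} cut and project scheme where $\eta = \omega_{g_0}$ with $g_0 \in C_{\mathsf{c}}$ and $g_0 \geq 0$; however, in that initial scheme the projection of the lattice is only $\langle \supp(\eta)\rangle$, which may be strictly smaller than $\langle \vL \rangle$. To fix this I will reapply the Baake--Moody construction to the same family $\{P_{\!\varepsilon}^{\infty}(\eta)\}$ but with the enlarged group $L := \langle \vL \rangle$: axioms (A1)--(A4) come from the verification inside the proof of Theorem~\ref{T1}, while (A5) holds since $P_{\!\varepsilon}^{\infty}(\eta) \subset \supp(\eta) - \supp(\eta) \subset \vL - \vL \subset L$. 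This produces a new scheme $(G \times H, \cL)$ with $\pi^{}_{1}(\cL) = L \supseteq \vL$. Lemma~\ref{capschemelemma} then supplies $g \in C_{\mathsf{u}}(H)$ with $\eta = \omega_g$, and another application of Theorem~\ref{T2} (using FLC of $\vL$ via $\supp(\eta)$) forces $g \in C_{\mathsf{c}}(H)$.

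For the inner inclusion I will set $U := \{y \in H \mid g(y) > 0\}$, which is open and nonempty because $\eta \neq 0$, giving $\oplam(U) = \supp(\eta) \subset \vL$. For the outer inclusion I will invoke Lemma~\ref{NS1x}: since $\eta$ is strongly almost periodic and nonzero, $\supp(\eta)$ is relatively dense by Lemma~\ref{lmksd1}, so there is a finite set $F \subset \vL - \supp(\eta) \subset L$ with $\vL \subset \supp(\eta) + F$. As $F \subset L = \pi^{}_{1}(\cL)$, the star-image $F^{\star}$ is well-defined in $H$, and a direct verification gives $\oplam(U) + F = \oplam(U + F^{\star})$. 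Taking $W := \overline{U + F^{\star}}$, which is compact since $U$ is precompact ($g \in C_{\mathsf{c}}$) and $F^{\star}$ is finite, closes the chain: $\oplam(U) \subset \vL \subset \oplam(W)$ in the same scheme. The principal obstacle will be this lattice-enlargement step, which compensates for the loss of period-stability by letting the scheme's lattice accommodate all of $\vL$ rather than merely $\supp(\eta)$.
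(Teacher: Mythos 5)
Your proposal is correct and takes essentially the paper's own route: the paper proves this Proposition simply by remarking that it goes ``exactly like Theorem~\ref{psms}'', and your (ii)$\Rightarrow$(i) step --- the Baake--Moody construction applied to $\{P^{\infty}_{\!\varepsilon}(\eta)\}$ with the enlarged group $L=\langle \vL\rangle$, the lift to $g\in C_{\mathsf{c}}(H)$ via Lemma~\ref{capschemelemma} and the Theorem~\ref{T2} compactness argument, then $U=\{y\in H \mid g(y)>0\}$ for the inner inclusion and $W=\overline{U+F^{\star}}$ via Lemmas~\ref{lmksd1} and~\ref{NS1x} for the outer one --- is precisely that proof with the period-stability step correctly discarded, since the reverse inclusion $\oplam(W)\subset\vL$ is no longer claimed. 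Two cosmetic points only: enlarge $F$ to contain $0$ (as the paper does in Theorem~\ref{psms}) so that $U\subset W$ as item (i) requires, and note that your tacit assumption $\eta\neq 0$ is indeed needed (and intended, matching Theorem~\ref{psms}), while your cycle (i)$\Rightarrow$(iv)$\Rightarrow$(iii)$\Rightarrow$(ii)$\Rightarrow$(i) is a sound reorganisation, since the uniform discreteness of $\vL-\vL$ in (iv) is most naturally obtained from (i) rather than from Theorem~\ref{T2} alone.
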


The first condition seems equivalent to $\vL$ being a inter model
set, but we are not sure if one can get the right window.

\section[Relatively dense sets of Bragg peaks]{Positive
measures with a relatively dense set of visible Bragg peaks}

Given a Meyer set $\vL$ with autocorrelation $\gamma$, we proved earlier that the set of $a$-visible Bragg peaks in the diffraction of $\vL$ is a Meyer set for all $0 < a <\widehat{\gamma}(\{0 \})$.

The goal of this section is to extend this result to any measure $\mu$ with positive autocorrelation measure and relatively dense sets of $a$-visible Bragg peaks.

In $\RR^d$, given a measure $\mu$ with positive autocorrelation $\gamma$, it was proven in \cite{S-LS1} that if the set of $a$-visible Bragg
peaks is relatively dense for some $\widehat{\gamma}(\{ 0 \})\left(\sqrt{3}-1 \right) < a < \widehat{\gamma}(\{ 0 \})$, then it must be a Meyer set.

We start by proving that if all the sets of $a$-visible Bragg
peaks are relatively dense, then they are model sets in the same cut and project scheme. In particular this is the case for the diffraction of a Meyer set.

Then, we prove a result similar to the one in \cite{S-LS1} about a single set of $a$-visible Bragg peaks, with $a$ large enough.

\begin{proposition}\label{FTPMRDSVBP}
  Let\/ $\mu$ be a measure and let\/ $\gamma$ be its autocorrelation. If\/
  $\gamma$ is positive and\/ $\widehat{\gamma}(\{ 0 \}) >0$, then the
  following statements are equivalent.
  \begin{itemize}\itemsep=1pt
  \item[(i)] For all\/ $0 < a < \widehat{\gamma}(\{ 0 \})$, the set\/ $I(a)$ of\/ $a$-visible Bragg peaks is relatively dense.
  \item[(ii)] For all\/ $0 < a < \widehat{\gamma}(\{ 0 \})$, the set\/ $I(a)$ of\/ $a$-visible Bragg peaks are model sets in the same cut and
    project scheme.
  \item[(iii)] There exists a cut and project scheme\/ $(\widehat{G}
    \times H_{\mathsf{fd}}, \widetilde{L_{\mathsf{fd}}})$ and a
    continuous function\/ $g^{}_{\mathsf{fd}} \in C_0 (H_{\mathsf{fd}})$
    so that\/ $(\widehat{\gamma})^{}_{\mathsf{pp}} =
    \omega_{g^{}_{\mathsf{fd}}}$.
\end{itemize}
\end{proposition}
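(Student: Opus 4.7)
The plan is to prove the cycle $\text{(i)} \Longrightarrow \text{(iii)} \Longrightarrow \text{(ii)} \Longrightarrow \text{(i)}$, writing $M := \widehat{\gamma}(\{0\})$ throughout. The implication $\text{(ii)} \Longrightarrow \text{(i)}$ is immediate, since every model set is relatively dense.

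For $\text{(iii)} \Longrightarrow \text{(ii)}$, one starts from the representation $(\widehat{\gamma})^{}_{\mathsf{pp}} = \omega_{g_{\mathsf{fd}}}$ and observes that evaluating at $0$ gives $g_{\mathsf{fd}}(0) = M$. For each $0 < a < M$, define the sub-level set $W_a := \{ y \in H_{\mathsf{fd}} \mid g_{\mathsf{fd}}(y) \geq a \}$. Since $g_{\mathsf{fd}} \in C_0(H_{\mathsf{fd}})$ is continuous, $W_a$ is compact, and continuity at $0$ together with $g_{\mathsf{fd}}(0) = M > a$ forces $0$ to lie in the interior of $W_a$. Unwinding the definition of $\omega_{g_{\mathsf{fd}}}$ then yields $I(a) = \oplam(W_a)$, exhibiting $I(a)$ as a model set in the given scheme.

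The substantive direction is $\text{(i)} \Longrightarrow \text{(iii)}$. The strategy is to apply Theorem~\ref{T1} to the pure point diffraction $(\widehat{\gamma})^{}_{\mathsf{pp}}$, regarded as a measure on the LCAG $\widehat{G}$. Since $\gamma$ is simultaneously positive and positive definite (the latter being automatic for an autocorrelation), $\widehat{\gamma}$ is itself a positive, translation bounded, positive definite measure, and Lemma~\ref{positive-definite-measure-function} delivers the Krein-type inequality
\[
 \bigl| \widehat{\gamma}(\{\chi + \psi\}) - \widehat{\gamma}(\{\psi\}) \bigr|^{2}
  \,\leq\, 2 M \bigl[ M - \widehat{\gamma}(\{\chi\}) \bigr]
\]
for all $\chi, \psi \in \widehat{G}$. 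This yields the inclusion $I(a) \subset P^{\infty}_{\!\sqrt{2M(M-a)}}\bigl( (\widehat{\gamma})^{}_{\mathsf{pp}} \bigr)$ for every $0 < a < M$. Given an arbitrary $0 < \varepsilon < M\sqrt{2}$, choosing $a := M - \varepsilon^{2}/(2M) \in (0, M)$ allows the relative denseness of $I(a)$ from (i) to transfer to $P^{\infty}_{\!\varepsilon}\bigl( (\widehat{\gamma})^{}_{\mathsf{pp}} \bigr)$. Hence $(\widehat{\gamma})^{}_{\mathsf{pp}}$ is sup-almost periodic and, being the pure point part of the translation bounded measure $\widehat{\gamma}$, it is also a regular discrete measure on $\widehat{G}$. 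Theorem~\ref{T1} then produces a cut and project scheme $(\widehat{G} \times H_{\mathsf{fd}}, \widetilde{L_{\mathsf{fd}}})$ and a uniformly continuous $g_{\mathsf{fd}} \in C_{0}(H_{\mathsf{fd}})$ with $(\widehat{\gamma})^{}_{\mathsf{pp}} = \omega_{g_{\mathsf{fd}}}$.

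The main delicate point will be to justify cleanly that the hypothesis of Lemma~\ref{positive-definite-measure-function} is satisfied by $\widehat{\gamma}$: one must record that the positivity of $\gamma$ makes $\widehat{\gamma}$ a positive definite measure via Pontryagin duality, so that Krein's inequality genuinely applies to $\widehat{\gamma}$. A secondary routine verification is that $(\widehat{\gamma})^{}_{\mathsf{pp}}$ is translation bounded on $\widehat{G}$ so as to satisfy the regularity hypothesis of Theorem~\ref{T1}.
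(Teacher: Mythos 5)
Your proposal is correct and follows essentially the same route as the paper: (ii)$\Rightarrow$(i) is immediate, (i)$\Rightarrow$(iii) via the Krein inequality of Lemma~\ref{positive-definite-measure-function} applied to the positive definite measure $\widehat{\gamma}$ to get $I\bigl(\widehat{\gamma}(\{0\})-\varepsilon^2/(2\widehat{\gamma}(\{0\}))\bigr)\subset P^{\infty}_{\!\varepsilon}\bigl((\widehat{\gamma})^{}_{\mathsf{pp}}\bigr)$ and then Theorem~\ref{T1}, and (iii)$\Rightarrow$(ii) by taking the super-level sets $W_a$ of $g^{}_{\mathsf{fd}}$ as windows. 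The two points you flag as needing care (positive definiteness of $\widehat{\gamma}$ and regularity of $(\widehat{\gamma})^{}_{\mathsf{pp}}$) are exactly the hypotheses the paper also invokes, so nothing is missing.
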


\begin{proof}
The implication $\text{(ii)} \Longrightarrow \text{(i)}$ is obvious.

We now prove $\text{(i)} \Longrightarrow \text{(iii)}$.

Let $0 < \varepsilon <\sqrt{2}\,\widehat{\gamma}(\{ 0 \}) $, let $\chi \in I\bigl( \widehat{\gamma}(\{ 0 \})
-\frac{\varepsilon^2}{ 2 \widehat{\gamma}(\{ 0 \})}\bigr)$, and let
$\psi \in \widehat{G}$.

Then, by
Lemma~\ref{positive-definite-measure-function}, we have
\[
   \bigl| \widehat{\gamma}(\psi+\chi) -
          \widehat{\gamma}(\psi) \bigr|^2
   \, <\, 2\, \widehat{\gamma}(\{ 0 \}) \ts
    \bigl[ \widehat{\gamma}(\{ 0 \}) -
    \widehat{\gamma}(\{ \chi \})\bigr]
   \,<\, \varepsilon^{2} ,
\]
Therefore
\[
  I\biggl( \widehat{\gamma}(\{ 0 \}) -
   \myfrac{\varepsilon^2}{ 2\, \widehat{\gamma}(\{ 0 \})}\biggr)
   \,\subset\, P^{\infty}_{\!\varepsilon}
    \bigl((\widehat{\gamma})^{}_{\mathsf{pp}} \bigr) \ts .
\]
As $ I\biggl( \widehat{\gamma}(\{ 0 \}) -
   \myfrac{\varepsilon^2}{ 2\, \widehat{\gamma}(\{ 0 \})}\biggr) $ is relatively dense, so is $P^{\infty}_{\!\varepsilon} \bigl(
(\widehat{\gamma})^{}_{\mathsf{pp}} \bigr)$.

This proves that $(\widehat{\gamma})^{}_{\mathsf{pp}}$ is a discrete sup-almost periodic measure. The conclusion follows now from Theorem~\ref{T1}.

We next prove $\text{(iii)} \Longrightarrow
\text{(ii)}$. For this implication, we will simply denote
\[
g:= g^{}_{\mathsf{fd}} \ts .
\]

Since $\gamma$ is positive and positive definite, we have
$\widehat{\gamma}(\{ 0 \}) \geq \left| \widehat{\gamma}(\{ \chi \})
\right|=\widehat{\gamma}(\{ \chi \})$ for all $\chi \in \widehat{G}$.

As $0 \leq g(\chi^\star) \leq g(0)$ for all $(\chi, \chi^\star) \in
\widetilde{L_{\mathsf{fd}}}$, and since $\pi^{}_{2}\bigl(
\widetilde{L_{\mathsf{fd}}}\bigr)$ is dense in $H$, it follows
immediately from the continuity of $g$, that $g \geq 0$ and $\|g\|^{}_{\infty} =g(0) =
\widehat{\gamma}(\{ 0 \})$.

Let $0< a < \widehat{\gamma}(\{ 0 \}) =g(0)$.

Let $W_a := \{ y \in H \mid g(y) \geq a \}$.  Since $g \in C_0(H)$,
the set $W_a$ is compact. Moreover, since $a < g(0)$, the set
$W_a$ contains the non-empty open set $\{ y \in H \mid g(y) > a \}$.
Therefore $W_a$ is a compact set with non-empty interior.

As $(\widehat{\gamma})^{}_{\mathsf{pp}} =
    \omega_{g^{}_{\mathsf{fd}}}$, it follows from the definition of $\omega^{}_{g}$ that
    \[
     \bigl\{ \chi \in \widehat{G} \mid \widehat{\gamma}(\{ \chi \}) \neq 0 \bigr\} \, \subset \, \pi^{}_{1}\bigl(\widetilde{L_{\mathsf{fd}}}\bigr) \ts .
    \]
Therefore
\[
   I(a)\, = \,
   \bigl\{ \chi \in \widehat{G} \mid
         \widehat{\gamma}(\{ \chi \}) \geq a \bigr\}
   \,= \, \bigl\{ \chi  \mid (\chi, \chi^\star) \in
      \widetilde{L_{\mathsf{fd}}},\, g( \chi^\star) \geq a \bigr\}
    \,=\,  \oplam(W_a) \ts ,
\]
which completes the proof.
\end{proof}

Now, let us look at a single $I(a)$. Our goal is to prove that if $a$ is close enough to $\widehat{\gamma}(\{ 0 \})$, then $I(a)$ is a Meyer set.
The key to the proof is Lemma~\ref{LI(ab)} below, which shows that if $\chi$ and $\psi$ are Bragg peaks of high intensity, then $\chi+\psi$ is also a Bragg peak whose intensity is bounded from below by a quantity which only depends on the intensities of the Bragg peaks at $\chi$ and $\psi$.

\begin{lemma}\label{LI(ab)}
  Let\/ $\mu$ be a measure with positive autororrelation\/ $\gamma$. Let \/ $0< a,b < \widehat{\gamma}(\{ 0 \})$. Then
\[
I(a) \pm I(b) \, \subset \, I(b- \sqrt{ 2\, \widehat{\gamma}(\{ 0 \}) \ts
    \bigl[ \widehat{\gamma}(\{ 0 \}) -
    a \bigr]} ) \ts .
\]
\end{lemma}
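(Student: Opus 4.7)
The plan is to apply Krein's inequality (Lemma~\ref{positive-definite-measure-function}) to the measure $\widehat{\gamma}$. Since $\gamma$ is assumed positive, it is symmetric (being hermitian as an autocorrelation and real as a positive measure), and being positive definite it has a positive Fourier transform; moreover, $\widehat{\gamma}$ is itself positive definite (its Fourier transform recovers $\gamma$, which is positive). In particular $\widehat{\gamma}$ takes values in $[0,\infty)$, and we have the symmetry $\widehat{\gamma}(\{-\chi\}) = \widehat{\gamma}(\{\chi\})$, which immediately gives $I(a) = -I(a)$, so the $\pm$ cases coincide and it suffices to treat the $+$ case.

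Next I would fix $\chi \in I(a)$ and $\psi \in I(b)$. Applying Lemma~\ref{positive-definite-measure-function} to $\widehat{\gamma}$ with $x = \psi$ and $t = \chi$ yields
\[
   \bigl| \widehat{\gamma}(\{\psi+\chi\}) - \widehat{\gamma}(\{\psi\}) \bigr|^{2}
   \,\leq\, 2\,\widehat{\gamma}(\{0\}) \,
   \bigl[ \widehat{\gamma}(\{0\}) - \widehat{\gamma}(\{\chi\})\bigr] \ts .
\]
Since $\widehat{\gamma}(\{\chi\}) \geq a$ by the choice of $\chi$, the right-hand side is bounded above by $2\,\widehat{\gamma}(\{0\})\bigl[\widehat{\gamma}(\{0\}) - a\bigr]$. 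Taking square roots and using the reverse triangle inequality together with $\widehat{\gamma}(\{\psi\}) \geq b$ gives
\[
   \widehat{\gamma}(\{\psi+\chi\}) \,\geq\, \widehat{\gamma}(\{\psi\})
   - \sqrt{2\,\widehat{\gamma}(\{0\})\bigl[\widehat{\gamma}(\{0\}) - a\bigr]}
   \,\geq\, b - \sqrt{2\,\widehat{\gamma}(\{0\})\bigl[\widehat{\gamma}(\{0\}) - a\bigr]} \ts ,
\]
which is precisely the statement that $\chi + \psi$ belongs to $I\bigl(b - \sqrt{2\,\widehat{\gamma}(\{0\})[\widehat{\gamma}(\{0\}) - a]}\bigr)$.

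There is no real obstacle here; this is essentially the same computation underlying Lemma~\ref{L13}, only with the endpoint $a_{\mathrm{max}}$ replaced by a variable threshold. The only subtlety worth flagging in the write-up is the justification that $\widehat{\gamma}(\{\chi\})$ is a non-negative real number (so that the inequality $\widehat{\gamma}(\{\chi\}) \geq a$ makes literal sense on the right-hand side of Krein's bound, without an explicit $\Real$), and that $I(a)$ is symmetric under negation, which is needed to absorb the $-$ case at no extra cost.
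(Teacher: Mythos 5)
Your proof is correct and follows essentially the same route as the paper: apply Krein's inequality (Lemma~\ref{positive-definite-measure-function}) to the positive and positive definite measure $\widehat{\gamma}$ at $\psi\pm\chi$, bound the right-hand side using $\widehat{\gamma}(\{\chi\})\geq a$, and conclude with $\widehat{\gamma}(\{\psi\})\geq b$. Your handling of the $-$ case via the symmetry $I(a)=-I(a)$ is a harmless repackaging of what the paper absorbs directly into the $\pm$ in Krein's bound.
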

\begin{proof}
Let $\chi \in I(a)$ and $\psi \in I(b)$. Then, as $\widehat{\gamma}$ is positive and positive definite, by Lemma~\ref{positive-definite-measure-function}, we have
\[
   \bigl| \widehat{\gamma}(\psi \pm \chi) -
          \widehat{\gamma}(\psi) \bigr|^2
   \, \leq\, 2\, \widehat{\gamma}(\{ 0 \}) \ts
    \bigl[ \widehat{\gamma}(\{ 0 \}) -
    \widehat{\gamma}(\{ \chi \})\bigr]
\]
Therefore
\[
\widehat{\gamma}(\psi \pm \chi) \, \geq \, b- \sqrt{ 2\, \widehat{\gamma}(\{ 0 \}) \ts
    \bigl[ \widehat{\gamma}(\{ 0 \}) -
    \widehat{\gamma}(\{ \chi \})\bigr]}  \, \geq \, b- \sqrt{ 2\, \widehat{\gamma}(\{ 0 \}) \ts
    \bigl[ \widehat{\gamma}(\{ 0 \}) -
    a \bigr]}
\]
\end{proof}

A simple induction on $n$ combined with Lemma~\ref{LI(ab)} yields Corollary~\ref{CI(ab)} below.

\begin{corollary}\label{CI(ab)}
  Let\/ $\mu$ be a measure with positive autororrelation\/ $\gamma$ be its autocorrelation. Let \/ $0< a < \widehat{\gamma}(\{ 0 \})$. Then
\[
\underbrace{I(a) \pm I(a) \pm ... \pm I(a)}_{n \mbox{ times}} \, \subset \, I \biggl( a- \, (n-1) \sqrt{ 2\, \widehat{\gamma}(\{ 0 \} ) \,
    \bigl[ \widehat{\gamma}(\{ 0 \}) -
    a \bigr]} \biggr) \ts .
\]
\end{corollary}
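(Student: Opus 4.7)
The plan is to induct on $n$, feeding one new copy of $I(a)$ into Lemma~\ref{LI(ab)} at each step. To ease notation I would write $\delta := \sqrt{2\,\widehat{\gamma}(\{0\})\, [\widehat{\gamma}(\{0\}) - a]}$, so the claim becomes
\[
\underbrace{I(a) \pm I(a) \pm \cdots \pm I(a)}_{n \text{ times}} \,\subset\, I\bigl(a - (n-1)\delta\bigr) \ts .
\]
The base case $n=1$ is the tautology $I(a) \subset I(a)$.

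For the inductive step, assume the claim holds for $n$. Since the sign in front of the final summand of an $(n+1)$-fold symmetric sum is unrestricted, every element of $\underbrace{I(a) \pm \cdots \pm I(a)}_{n+1 \text{ times}}$ sits inside
\[
\bigl(\underbrace{I(a) \pm \cdots \pm I(a)}_{n \text{ times}}\bigr) \pm I(a) \,\subset\, I\bigl(a - (n-1)\delta\bigr) \pm I(a) \ts ,
\]
using the inductive hypothesis. Applying Lemma~\ref{LI(ab)} to $I(a) \pm I\bigl(a-(n-1)\delta\bigr)$ (with the second parameter of the lemma playing the role of $b := a-(n-1)\delta$, and the first parameter still equal to $a$) yields
\[
I\bigl(a - (n-1)\delta\bigr) \pm I(a) \,\subset\, I\bigl(a - (n-1)\delta - \delta\bigr) \,=\, I(a - n\delta) \ts ,
\]
which is exactly the claim for $n+1$ and closes the induction.

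The only subtle point, and the main obstacle, is that Lemma~\ref{LI(ab)} as stated requires both threshold arguments to lie in $(0, \widehat{\gamma}(\{0\}))$, whereas $a - (n-1)\delta$ can become non-positive for sufficiently large $n$. I would handle this with a single remark at the start of the inductive step: once $a - (n-1)\delta \leq 0$ one has $I\bigl(a-(n-1)\delta\bigr) = \widehat{G}$ and $I(a - n\delta) = \widehat{G}$, so the inclusion degenerates to $\widehat{G} \subset \widehat{G}$ and holds trivially; in the remaining case $0 < a-(n-1)\delta < \widehat{\gamma}(\{0\})$ (the upper bound is automatic since $\delta \geq 0$ and $a < \widehat{\gamma}(\{0\})$), Lemma~\ref{LI(ab)} applies verbatim.
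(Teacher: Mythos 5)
Your proof is correct and is precisely the argument the paper intends: the paper offers no written proof beyond the remark that ``a simple induction on $n$ combined with Lemma~\ref{LI(ab)}'' gives the result, and your induction, with $b = a-(n-1)\sqrt{2\,\widehat{\gamma}(\{0\})[\widehat{\gamma}(\{0\})-a]}$ fed into the lemma at each step, is exactly that. Your explicit treatment of the degenerate case $a-(n-1)\delta \leq 0$, where both sides collapse because $I(c)=\widehat{G}$ for $c\leq 0$, is a point the paper leaves implicit and is handled correctly.
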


We are now ready to prove that if $I(a)$ is relatively dense for $a$ close enough to $\widehat{\gamma}(\{ 0 \})$, then $I(a)$ is a Meyer set.
\begin{proposition}
  Let\/ $\mu$ be a measure with positive autororrelation\/ $\gamma$. Let \/ $ \widehat{\gamma}(\{ 0 \})\bigl( \sqrt{24}-4 \bigr) < a < \widehat{\gamma}(\{ 0 \})$. If the set\/ $I(a)$ of\/ $a$-visible Bragg peaks is relatively dense then\/ $I(a)$ is a Meyer set.
\end{proposition}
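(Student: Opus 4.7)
The plan is to verify condition (viii) of Theorem~\ref{MEYchar} for the set $I(a)$: since $I(a)$ is relatively dense by hypothesis, it will suffice to show that $I(a) - I(a) - I(a)$ is locally finite, whence $I(a)$ is a Meyer set.

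First, I would apply Corollary~\ref{CI(ab)} with $n=3$. Writing $b_0 := \widehat{\gamma}(\{0\})$ for brevity, this yields
\[
  I(a) - I(a) - I(a) \,\subset\, I(c), \qquad
  c \,:=\, a - 2\sqrt{2\ts b_0 (b_0 - a)} \ts .
\]
Next, I would check that the lower bound $a > b_0(\sqrt{24}-4)$ is precisely the condition forcing $c > 0$. Squaring the inequality $a > 2\sqrt{2 b_0 (b_0 - a)}$ gives $a^2 + 8 b_0 a - 8 b_0^2 > 0$, and the positive root of the associated quadratic $t^2 + 8 b_0 t - 8 b_0^2 = 0$ is $b_0(-4 + 2\sqrt{6}) = b_0(\sqrt{24}-4)$, so the threshold in the statement is exactly what the argument needs.

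Finally, since $\gamma$ is a translation bounded positive definite measure, its Fourier transform $\widehat{\gamma}$ is a translation bounded positive measure. For any compact set $K \subset \widehat{G}$, positivity then gives
\[
  c \,\cdot\, \#\bigl(I(c) \cap K\bigr) \,\leq\, \widehat{\gamma}(K) \,<\, \infty \ts ,
\]
so $I(c)$ is locally finite. Consequently $I(a) - I(a) - I(a) \subset I(c)$ is locally finite, and the implication $\text{(viii)} \Longrightarrow \text{(i)}$ of Theorem~\ref{MEYchar}, together with the relative denseness assumption, shows that $I(a)$ is a Meyer set.

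The argument is essentially routine once Corollary~\ref{CI(ab)} is in hand; the only subtle point is the algebra pinning down $\sqrt{24}-4$ as the positive root of the governing quadratic, and the mild observation that positivity of $\widehat{\gamma}$ together with translation boundedness is enough to force local finiteness of $I(c)$ for any $c > 0$. The main obstacle to improving the constant would be to sharpen Lemma~\ref{LI(ab)}: the estimate used here applies the Krein-type inequality twice in succession, and any looseness in a single application is inherited by the threshold $c$.
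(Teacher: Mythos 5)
Your argument is correct and follows essentially the same route as the paper's own proof: apply Corollary~\ref{CI(ab)} with $n=3$, verify that the threshold $\widehat{\gamma}(\{0\})(\sqrt{24}-4)$ is exactly the positive root of the quadratic governing $a - 2\sqrt{2\,\widehat{\gamma}(\{0\})[\widehat{\gamma}(\{0\})-a]} > 0$, and use translation boundedness of $\widehat{\gamma}$ to get local finiteness of $I(c)$ for $c>0$. Your explicit counting inequality for the local finiteness step is a slightly more detailed justification than the paper gives, but the proof is otherwise identical.
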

\begin{proof}
We start by observing that
\[
\begin{array}{lr}
\begin{split}
a \, &- \, \ts 2\sqrt{ 2\, \widehat{\gamma}(\{ 0 \}) \ts
    \bigl[ \widehat{\gamma}(\{ 0 \}) -
    a \bigr]} \,  > \, 0 \ts &\Longleftrightarrow  \ts \\
a^2 \, & > \,  8\, \widehat{\gamma}(\{ 0 \}) \ts
    \bigl[ \widehat{\gamma}(\{ 0 \}) -
    a \bigr] \ts &\Longleftrightarrow \\
a^2 \, & + \,  8\, \widehat{\gamma}(\{ 0 \}) a \, + \, 16  \bigl( \widehat{\gamma}(\{ 0 \}) \bigr)^2 \ts
    > 24 \bigl( \widehat{\gamma}(\{ 0 \}) \bigr)^2 \ts &\Longleftrightarrow\\
 \bigl[a \bigr.\, &+\, \bigl. 4\widehat{\gamma}(\{ 0 \}) \bigr]^2 \ts
    > 24 \bigl( \widehat{\gamma}(\{ 0 \}) \bigr)^2 \ts &\Longleftrightarrow \\
 a \,&>\,
    \widehat{\gamma}(\{ 0 \})\bigl( \sqrt{24}-4 \bigr) \ts & \,
    \end{split}
    \end{array}
    \]

Therefore, our hypothesis implies that $a- \ts 2\sqrt{ 2\, \widehat{\gamma}(\{ 0 \}) \ts
    \bigl[ \widehat{\gamma}(\{ 0 \}) -
    a \bigr]} >0$.

As $\widehat{\gamma}$ is a translation bounded measure, for all $b>0$ the set $I(b)$ must be locally finite. In particular, $I\biggl(a- \ts 2\sqrt{ 2\, \widehat{\gamma}(\{ 0 \}) \ts
    \bigl[ \widehat{\gamma}(\{ 0 \}) -
    a \bigr]} \biggr)$ is locally finite.

By Corollary~\ref{CI(ab)} we have
\[
I(a) \pm I(a) \pm  I(a) \, \subset \, I\biggl(a- \ts 2\sqrt{ 2\, \widehat{\gamma}(\{ 0 \}) \ts
    \bigl[ \widehat{\gamma}(\{ 0 \}) -
    a \bigr]} \biggr) \ts .
\]

Therefore $I(a)-I(a)-I(a)$ is a locally finite set, which proves our claim.
\end{proof}

We complete the section by showing that any point set having
relatively dense sets of visible Bragg peaks shares many of them with
a regular model set.

\begin{proposition}
  Let\/ $\vL$ be a Delone set, and let\/ $\gamma$ be its
  autocorrelation. If, for all\/ $0 < a < \widehat{\gamma}(\{0\})$, the
  set\/ $I(a)$ of visible Bragg peaks is relatively dense, then there
  exists a regular model set\/ $\vL'$ so that
  \begin{itemize}\itemsep=1pt
  \item[(i)] $\vL$ and\/ $\vL'$ share a relative dense set of Bragg
    peaks;
  \item[(ii)] the sets of visible Bragg peaks of\/ $\vL$ and\/ $\vL'$ are
    equivalent by finite translations, that is, each set can be covered
    by finitely many translates of the other set.
 \end{itemize}
\end{proposition}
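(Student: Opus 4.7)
The plan is to first apply Proposition~\ref{FTPMRDSVBP}: the hypothesis that every\/ $I(a)$ with\/ $0 < a < \widehat{\gamma}(\{0\})$ is relatively dense produces a cut and project scheme\/ $(\widehat{G} \times H_{\mathsf{fd}}, \widetilde{L_{\mathsf{fd}}})$ and a function\/ $g \in C_0(H_{\mathsf{fd}})$ with\/ $(\widehat{\gamma})_{\mathsf{pp}} = \omega_g$, and in the proof of that proposition the identification\/ $I(a) = \oplam(W_a)$ is obtained, where\/ $W_a := \{y \in H_{\mathsf{fd}}:g(y) \geq a\}$ is compact with nonempty interior containing\/ $0$ (since\/ $g(0) = \widehat{\gamma}(\{0\}) > a$). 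The main construction will be to invoke Theorem~\ref{dualcps} to pass to the dual cut and project scheme\/ $(G \times \widehat{H_{\mathsf{fd}}}, Z)$ and to take\/ $\vL' := \oplam(V)$ for any compact set\/ $V \subset \widehat{H_{\mathsf{fd}}}$ of nonempty interior and null boundary, so that\/ $\vL'$ is a regular model set in\/ $G$.

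The second step is to describe the diffraction of\/ $\vL'$. By the classical Schlottmann formula for the pure point diffraction of a regular model set, $\widehat{\gamma'}$ is supported on the Fourier module of the dual scheme, which by Pontryagin double-duality is the annihilator of\/ $Z$ inside\/ $\widehat{G} \times H_{\mathsf{fd}}$, namely\/ $\widetilde{L_{\mathsf{fd}}}$; thus the Bragg positions of\/ $\vL'$ lie in\/ $\pi_1(\widetilde{L_{\mathsf{fd}}}) \subset \widehat{G}$ and the intensity at\/ $\chi \in \pi_1(\widetilde{L_{\mathsf{fd}}})$ is proportional to\/ $|\widehat{1_V}(\chi^\star)|^2$, where\/ $\chi^\star \in H_{\mathsf{fd}}$ is the star image under the original scheme. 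Since\/ $\widehat{1_V}$ is continuous, vanishes at infinity and equals\/ $\vol(V) > 0$ at the origin, there exists\/ $b > 0$ so that\/ $U_b := \{y \in H_{\mathsf{fd}} : |\widehat{1_V}(y)|^2 \geq b/\dens(Z)^2\}$ is a compact set with nonempty interior containing\/ $0$, and the set\/ $J(b)$ of\/ $b$-visible Bragg peaks of\/ $\vL'$ equals\/ $\oplam(U_b)$, a model set in the \emph{same} scheme\/ $(\widehat{G} \times H_{\mathsf{fd}}, \widetilde{L_{\mathsf{fd}}})$.

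With both\/ $I(a)$ and\/ $J(b)$ realised as model sets in the common scheme, the conclusions follow by standard manipulations. For (i), the set\/ $W_a \cap U_b$ is compact with nonempty interior (each factor contains a neighbourhood of\/ $0$), so\/ $\oplam(W_a \cap U_b) \subset I(a) \cap J(b)$ is a relatively dense set of positions where both\/ $\widehat{\gamma}$ and\/ $\widehat{\gamma'}$ have (visible) Bragg peaks. For (ii), both\/ $I(a)$ and\/ $J(b)$ lie inside the model set\/ $\oplam(W_a \cup U_b)$, which has finite local complexity. Applying Lemma~\ref{NS1x} to the pair\/ $(I(a) \cup J(b),\, I(a))$ and again to\/ $(I(a) \cup J(b),\, J(b))$ yields finite sets\/ $F_1, F_2 \subset \widehat{G}$ such that\/ $J(b) \subset I(a) + F_1$ and\/ $I(a) \subset J(b) + F_2$, which is precisely the required equivalence by finite translations.

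The hard part will be importing the diffraction formula for the regular model set\/ $\vL'$ of the dual scheme and carefully tracking the Pontryagin identifications, so that the Fourier module of\/ $\vL'$ is genuinely\/ $\pi_1(\widetilde{L_{\mathsf{fd}}})$ (the same lattice projection that supports\/ $(\widehat{\gamma})_{\mathsf{pp}}$) and the\/ $\star$-map in\/ $H_{\mathsf{fd}}$ furnished by the double dualisation coincides with the original one. Once this identification is settled, the window\/ $V$ is a free parameter and any choice with null boundary and nonempty interior will produce a\/ $\vL'$ satisfying (i) and (ii); the remainder of the argument is purely a matter of intersecting and comparing windows in the common scheme.
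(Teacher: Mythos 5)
Your proposal is correct, but it takes a genuinely different route from the paper's own proof. The paper never passes to the dual cut and project scheme: it fixes $0<\varepsilon<\frac{\sqrt{3}}{4}$ and $a$ with $\widehat{\gamma}(\{0\})-a<\varepsilon$, forms the $\varepsilon$-dual set $\vL'':=I_{\vL}(a)^{\varepsilon}\subset G$ (a model set, since $I_{\vL}(a)$ is a model set in $\widehat{G}$ by Proposition~\ref{FTPMRDSVBP} and one can apply Theorem~\ref{T1x}), takes $\vL'$ to be any regular model set inside $\vL''$, and then uses Proposition~\ref{S31} together with Lemma~\ref{DMS3} to get the chain $I_{\vL}(a)\subset I_{\vL}(a)^{\varepsilon\varepsilon}\subset(\vL')^{\varepsilon}\subset I_{\vL'}\bigl(\widehat{\eta}(\{0\})(1-2\varepsilon)\bigr)$; part (ii) then follows from Proposition~\ref{FTPMRDSVBP} and finite local complexity much as in your last step. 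You instead place $\vL'$ directly in the dual scheme $(G\times\widehat{H_{\mathsf{fd}}},Z)$ of Theorem~\ref{dualcps} and read off its diffraction from the model-set intensity formula (Theorem~\ref{LR1} applied to the admissible window function $1^{}_{V}$, plus the double-duality identification of the annihilator of $Z$ with $\widetilde{L_{\mathsf{fd}}}$ and of the induced $\star$-map with the original one). What your route buys is an explicit common internal space: both spectra become model sets $\oplam(W_a)$ and $\oplam(U_b)$ with concrete windows in the single scheme $(\widehat{G}\times H_{\mathsf{fd}},\widetilde{L_{\mathsf{fd}}})$, so (i) and (ii) reduce to intersecting and uniting windows that each contain a neighbourhood of $0$. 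The price is the imported machinery you flag yourself; the paper's argument stays inside its own $\varepsilon$-dual/almost-period framework and, as a by-product, yields the stronger conclusion that \emph{every} $a$-visible Bragg peak of $\vL$ is a visible Bragg peak of $\vL'$, whereas your $\vL'$ is only guaranteed to share the relatively dense subset $\oplam(W_a\cap U_b)$ (which suffices for (i)). Two points you should make explicit when writing this up: the existence of a compact $V\subset\widehat{H_{\mathsf{fd}}}$ with non-empty interior and null boundary (standard, via level sets of a function in $C_{\mathsf{c}}(\widehat{H_{\mathsf{fd}}})$), and the fact that statement (ii) concerns all visibility thresholds, not just the one pair $(a,b)$ you treat --- in your setting this is immediate because every $I(b)$ and every $J(c)$ is a model set in the common scheme whose window contains a neighbourhood of $0$, so Lemma~\ref{NS1x} applies to any pair.
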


\begin{proof}
  By Proposition~\ref{FTPMRDSVBP}, the set of
  $a$-visible Bragg peaks $I_\vL(a)$ of $\vL$ is a model set for all
  $0 < a < \widehat{\gamma}(\{0\})$. Pick some $0 < \varepsilon <
  \frac{\sqrt{3}}{4}$ and some $0< a < \widehat{\gamma}(\{0\})$ so that
  $\widehat{\gamma}(\{0\})-a < \varepsilon$. For the rest of the
  proof, $a$ and $\varepsilon$ are fixed.

  Let $\vL'' := I_\vL(a)^\varepsilon$. Then $\vL''$ is a model
  set. Let $\vL'$ be any regular model set contained in $\vL''$.

  We prove that $\vL'$ has the desired properties. We start by proving (i).

 Let $\eta$
  be any autocorrelation of $\vL'$.  As $\supp(\eta) \subset
  \vL'-\vL'$, for all $\varepsilon' >0$ and all $\psi \in
  (\vL'-\vL')^{\varepsilon'}$ we have, by Theorem~\ref{S31},
\[
   \bigl| \widehat{\eta}(\{ \psi\})-\widehat{\eta}(\{ 0 \}) \bigr|
    \,\leq\, \widehat{\eta}(\{0\})\,\varepsilon' \ts .
\]
Thus, for all $0 < \varepsilon' <1$, we obtain
\[
    \widehat{\eta}(\{ \psi\}) \,\geq\,
    \widehat{\eta}(\{0\})\, (1-\varepsilon') \,,
\]
This shows that $(\vL'-\vL')^{\varepsilon'} \subset I^{}_{\vL'}
\bigl(\widehat{\eta}(\{0\})\ts (1-\varepsilon')\bigr)$.

Now, by Lemma~\ref{DMS3}, we have $\vL'^{\varepsilon'} \subset
(\vL'-\vL')^{2\varepsilon'}$, and therefore
\[
   \vL'^{\varepsilon} \,\subset\,
    I^{}_{\vL'} \bigl(\widehat{\eta}(\{0\})\ts (1-2\varepsilon)\bigr) \ts .
\]
Also, since $\vL' \subset \vL''= I^{}_{\vL}(a)^\varepsilon$, we have
\[
   I^{}_{\vL}(a) \,\subset\,
   I^{}_{\vL}(a)^{\varepsilon \varepsilon} \,\subset\,
   \vL'^{\varepsilon} \,\subset\,
   I^{}_{\vL'} \bigl(\widehat{\eta}(\{0\})\ts (1-2\varepsilon)\bigr) \ts .
\]
This proves that every $a$-visible Bragg peak of $\vL$ is a
$\bigl(\widehat{\eta}(\{0\})\ts (1-2\varepsilon)\bigr)$-visible Bragg
peak of $\vL'$. The claim follows from the relatively denseness of $I^{}_{\vL}(a)$.

We now prove (ii). We showed in (i) that
\[
   I^{}_{\vL}(a) \,\subset\, I^{}_{\vL'}
  \bigl(\widehat{\eta}(\{0\})\ts (1-2\varepsilon)\bigr) \ts .
\]
By Proposition~\ref{FTPMRDSVBP}, we know that, for all $0< b <
\widehat{\gamma}(\{ 0\})$, the sets $I^{}_{\vL}(b)$ are model sets in
the same cut and project scheme.

Therefore, for all $0< b$ , $c < \widehat{\gamma}(\{ 0\})$, there exists a finite set $F^{}_{b,c}$ so
that
\[
   I^{}_{\vL}(b) \,\subset\, I^{}_{\vL}(c) +F^{}_{b,c}
   \quad \text{and}\quad
   I^{}_{\vL}(c)  \,\subset\, I^{}_{\vL}(b) +F^{}_{b,c} \ts .
\]

Since $\vL'$ is a regular model set, it is a pure point diffractive Meyer set. Hence, by Corollary~\ref{CT1x}, for all $0< b <
\widehat{\eta}(\{ 0\})$, the set $I^{}_{\vL'}(b)$ is relatively dense.
Therefore, it follows from Proposition~\ref{FTPMRDSVBP} that, for all $0< b <
\widehat{\eta}(\{ 0\})$, the sets $I^{}_{\vL'}(b)$ are model sets in
the same cut and project scheme. Hence, for all $0< b$ , $c <
\widehat{\eta}(\{ 0\})$, there exists a finite set $F'_{b,c}$ so that
\[
   I^{}_{\vL'}(b) \,\subset\,
    I^{}_{\vL'}(c) +F'_{b,c} \quad\text{and}\quad
    I^{}_{\vL'}(c) \subset I^{}_{\vL'}(b) +F'_{b,c} \ts .
\]

Moreover, we also know that $ I^{}_{\vL}(a) \,\subset\, I^{}_{\vL'}
  \bigl(\widehat{\eta}(\{0\})\ts (1-2\varepsilon)\bigr)$ ; that $I^{}_{\vL}(a)$ is relatively dense and that $I^{}_{\vL'}
\bigl(\widehat{\eta}(\{0\})\ts (1-2\varepsilon)\bigr)$ has finite local
complexity. Therefore, there exists a finite set $F$ so that
\[
  I^{}_{\vL'} \bigl(\widehat{\eta}(\{0\}) (1-2\varepsilon)\bigr)
  \,\subset\, I(a)+F \ts .
\]
Pick any $0< b < \widehat{\gamma}(\{ 0\})$ and $0< c <
\widehat{\eta}(\{ 0\})$. Setting $F^{}_1:=F^{}_{a,b} $ and
$F^{}_2:=F'_{c,\widehat{\eta}(\{0\}) (1-2\varepsilon)} $, we obtain
\[
    I^{}_{\vL}(b) \,\subset\,
    I^{}_{\vL}(a)+F^{}_{1} \,\subset\,
    I^{}_{\vL'} \bigl(\widehat{\eta}(\{0\})\ts (1-2\varepsilon)\bigr)
    + F^{}_{1} \,\subset\,
    I^{}_{\vL'}(c) + F^{}_{1} + F^{}_{2} \ts .
\]
and
\[
  I^{}_{\vL'}(c) \,\subset\,
  I_{\vL'} \bigl(\widehat{\eta}(\{0\})\ts (1-2\varepsilon)\bigr)
  + F^{}_{2} \,\subset\,
   I^{}_\vL(a) + F + F^{}_{2} \,\subset\,
   I^{}_{\vL}(b) + F^{}_{1} + F + F^{}_{2} \ts ,
\]
which completes the proof.
\end{proof}

\section[Pure point spectra of Meyer sets]{Some more comments
on the Bragg spectra of Meyer sets}\label{sectfewnotes}

Let $\omega$ be any weighted Dirac comb supported on some Meyer set $\vL$ and let $\gamma$ be any autocorrelation of
$\omega$. Then, by \cite{S-MoSt} there exists a decomposition
$\gamma= \gamma^{}_{\mathsf{s}} + \gamma^{}_{0} $ so that both
$\gamma^{}_{\mathsf{s}}$ and $\gamma^{}_{0}$ are Fourier
transformable and
\[
   \widehat{\gamma^{}_{\mathsf{s}}} \,=\,
   \bigl(\widehat{\gamma}\bigr)_{\mathsf{pp}}
   \quad\text{and}\quad
   \widehat{\gamma^{}_{0}} \,=\,
   \bigl(\widehat{\gamma}\bigr)_{\mathsf{c}} \ts.
\]
It follows from Theorem~\ref{WAPModdec} that, for any regular model set
with closed window $\vG$ containing $\vL$, we have
\[
   \supp(\gamma^{}_{\mathsf{s}}) \,\subset\, \Delta
   \quad\text{and}\quad
   \supp(\gamma^{}_{0}) \,\subset\, \Delta \ts ,
\]
where $\Delta = \vG - \vG$.

Since $\gamma^{}_{\mathsf{s}}$ is strongly almost periodic and
$\Delta$ is a Meyer set, by Theorem~\ref{T2} we obtain a cut and
project scheme $(G \times H, \cL)$ and a function $g \in
C_{\mathsf{c}}(H)$ so that $\gamma^{}_{\mathsf{s}} = \omega^{}_{g}$.
Moreover, in Corollary~\ref{cms33} we showed that there exists a real
number $C > 0$ so that, for all $\varepsilon >0$,
\[
  \Delta^{\varepsilon /C }_{} \,\subset\,
  P^\infty \bigl((\widehat{\gamma})^{}_{\mathsf{pp}}\bigr) \ts .
\]
Hence the pure point spectrum
$(\widehat{\gamma})^{}_{\mathsf{pp}}$ is a regular discrete
sup-almost periodic measure. We thus established the following result.

\begin{proposition}\label{MS21}
  Let\/ $\omega$ be any weighted Dirac comb with Meyer set support and let\/ $\gamma$ be an autocorrelation
  of\/ $\vL$. Then, there exists a cut and project scheme\/ $(G \times
  H, \cL)$ and a function\/ $g \in C_{\mathsf{c}}(H)$ so
  that\/ $\gamma^{}_{\mathsf{s}} = \omega^{}_{g}$. Furthermore,
  there exists a cut and project scheme\/ $(\widehat{G} \times
  H_{\mathsf{fd}}, \widetilde{L_{\mathsf{fd}}})$ and a continuous
  function\/ $g^{}_{\mathsf{fd}} \in C_0 (H)$ which gives the pure
  point spectra of $\vL$, that is
  $\bigl(\widehat{\gamma}\bigr)_{\mathsf{pp}} =
  \omega^{}_{g^{}_{\mathsf{fd}}}$.\qed
\end{proposition}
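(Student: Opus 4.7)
The plan is to derive both claims as direct applications of the representation theorems already developed in the chapter. One begins by invoking the weakly almost periodic decomposition $\gamma = \gamma^{}_{\mathsf{s}}+\gamma^{}_{0}$ of \cite{S-MoSt}, which gives $\widehat{\gamma^{}_{\mathsf{s}}} = (\widehat{\gamma})^{}_{\mathsf{pp}}$ and makes $\gamma^{}_{\mathsf{s}}$ strongly almost periodic. Since $\vL$ is Meyer, Theorem~\ref{MEYchar} embeds $\vL$ in a regular model set $\oplam(W)$ with $W$ compact, and setting $\Delta := \oplam(W) - \oplam(W)$ makes $\Delta$ itself a Meyer set by Corollary~\ref{CT1333}; in particular, $\Delta-\Delta$ is uniformly discrete. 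Since $\supp(\gamma)\subset \vL-\vL\subset \Delta$, Theorem~\ref{WAPModdec} forces $\supp(\gamma^{}_{\mathsf{s}})\subset \Delta$.

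For the first claim, the measure $\gamma^{}_{\mathsf{s}}$ is strongly almost periodic and its support difference lies in $\Delta-\Delta$, hence is uniformly discrete. This is exactly condition (iii) of Theorem~\ref{T2}, so the equivalence (iii)~$\Longleftrightarrow$~(vi) immediately yields a cut and project scheme $(G\times H,\cL)$ together with a function $g\in C_{\mathsf{c}}(H)$ satisfying $\gamma^{}_{\mathsf{s}}=\omega^{}_{g}$.

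For the second claim, I first observe that $(\widehat{\gamma})^{}_{\mathsf{pp}}$ is a discrete, translation bounded, and therefore regular, measure on $\widehat{G}$: it is the pure point part of the translation bounded measure $\widehat{\gamma}$. The extension of Corollary~\ref{cms33} to arbitrary translation bounded measures (stated immediately after that corollary), applied with $\supp(\gamma)\subset \Delta$, yields a constant $C>0$ such that $\Delta^{\varepsilon/C}\subset P^{\infty}_{\!\varepsilon}((\widehat{\gamma})^{}_{\mathsf{pp}})$ for every $\varepsilon>0$. Because $\Delta$ is Meyer, Theorem~\ref{MEYchar} guarantees that $\Delta^{\varepsilon/C}$ is relatively dense for all sufficiently small $\varepsilon$, and consequently so is $P^{\infty}_{\!\varepsilon}((\widehat{\gamma})^{}_{\mathsf{pp}})$; that is, $(\widehat{\gamma})^{}_{\mathsf{pp}}$ is sup-almost periodic. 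Theorem~\ref{T1} now produces a cut and project scheme $(\widehat{G}\times H_{\mathsf{fd}},\widetilde{L_{\mathsf{fd}}})$ and a uniformly continuous $g^{}_{\mathsf{fd}}\in C_0(H_{\mathsf{fd}})$ with $(\widehat{\gamma})^{}_{\mathsf{pp}}=\omega^{}_{g^{}_{\mathsf{fd}}}$. No substantial obstacle remains beyond confirming the discreteness and regularity of $(\widehat{\gamma})^{}_{\mathsf{pp}}$, which is immediate; the whole argument is an amalgamation of Theorems~\ref{T1} and \ref{T2}, Corollary~\ref{cms33}, and the weakly almost periodic decomposition.
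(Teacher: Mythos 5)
Your argument is correct and follows essentially the same route as the paper: the weakly almost periodic decomposition together with Theorem~\ref{WAPModdec} to localise $\supp(\gamma^{}_{\mathsf{s}})$ inside the Meyer set $\Delta$, Theorem~\ref{T2} for the first claim, and Proposition~\ref{S31}/Corollary~\ref{cms33} followed by Theorem~\ref{T1} for the second. The only difference is cosmetic — you make explicit the appeal to Theorem~\ref{T1} and the verification that $(\widehat{\gamma})^{}_{\mathsf{pp}}$ is a regular discrete measure, which the paper leaves implicit.
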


\begin{remark}
  Let $\vG$ be a regular model set with autocorrelation $\eta$. Let $( G \times H, \cL)$
  be the cut and project scheme and let $W$ be the window which
  produces it. Let $(\widehat{G} \times \widehat{H}, Z)$ be the dual
  cut and project scheme. Then $\eta^{}_{\mathsf{s}}= \eta$, and
  we have \cite{S-BL}
\[
   \eta (\{ x \})\, = \,\begin{cases}
               1^{}_{W}* \widetilde{\ts 1^{}_W\ts}\nts (x^\star)\ts , &
               \mbox{ if $x \in \pi^{}_1(\cL)$\ts ,} \\
              0\ts , & \mbox{ otherwise,}
               \end{cases}
\]
and
\[
   \bigl(\widehat{\eta}\bigr)(\{ \chi \})\, =\,
   \begin{cases}
   \widehat{1^{}_{W}* \widetilde{\ts 1^{}_W\ts}}\nts(\chi^\star)\ts , &
   \mbox{ if $\chi \in \pi^{}_{1}(Z)$}\ts , \\
   0\ts , & \mbox{ otherwise.}
   \end{cases}
\]
Note that $1^{}_{W}* \widetilde{\ts 1^{}_W\ts }\nts \in C_{\mathsf{c}}(H)$ and
$\widehat{ 1^{}_{W}* \widetilde{\ts 1^{}_W\ts}\nts} \in
C_0(\widehat{H})$.

In Prop.~\ref{MS21} we showed that something similar holds for the pure
point part of the diffraction of Meyer sets. Given a Meyer set $\vL$ with autocorrelation $\gamma$,
there exist two cut and project schemes $(G \times H, \cL)$
and $(\widehat{G} \times H_{\mathsf{fd}},
\widetilde{L_{\mathsf{fd}}})$ and two functions $g \in C_{\mathsf{c}}(H)$ and
$g^{}_{\mathsf{fd}} \in C_0(H_{\mathsf{fd}})$ so that
$\gamma^{}_{\mathsf{s}}$ and
$\widehat{\gamma^{}_{\mathsf{s}}}=(\widehat{\gamma})^{}_{\mathsf{pp}}$ are
discrete measures satisfying the relations
\[
   \gamma^{}_{\mathsf{s}} (\{ x \}) \, =\,
   \begin{cases}
     g(x^\star)\ts , & \mbox{ if $x \in \pi^{}_1(\cL)$}\ts , \\
     0\ts , & \mbox{ otherwise,}
   \end{cases}
\]
and
\[
   \bigl(\widehat{\gamma}\bigr)(\{ \chi \}) \,=\,
   \begin{cases}
    g^{}_{\mathsf{fd}}( \chi^\star)\ts , &
    \mbox{ if $\chi \in \pi^{}_{1}(\widetilde{L_{\mathsf{fd}}})$}\ts , \\
    0\ts , & \mbox{ otherwise.}
    \end{cases}
\]
\exend
\end{remark}

Later in this section, we will see that there is a connection between
the two cut and project schemes $(G \times H, \cL)$ and
$(\widehat{G} \times H_{\mathsf{fd}},
\widetilde{L_{\mathsf{fd}}})$. More precisely, in
Theorem~\ref{dualdiffraction}, we will show that we can get the
results from Proposition~\ref{MS21} with $(\widehat{G} \times
H_{\mathsf{fd}}, \widetilde{L_{\mathsf{fd}}})$ as the cut and
project scheme dual to $(G \times H, \cL)$, and that $g^{}_{\mathsf{fd}}=\widehat{g}$.  But first
we will see some nice consequences of Proposition~\ref{MS21}.

\begin{corollary}
  Let\/ $\vL$ be a Meyer set and\/ $\gamma$ be an autocorrelation of\/
  $\vL$.
  \begin{itemize}\itemsep=1pt
  \item[(i)] For all\/ $0< a < \widehat{\gamma}(\{0 \})$, the sets\/
    $I(a)$ of\/ $a$-visible Bragg peaks are model sets in the same cut
    and project scheme.
  \item[(ii)] The set\/ $\supp(\gamma^{}_{\mathsf{s}})$ is a model set.\qed
 \end{itemize}
\end{corollary}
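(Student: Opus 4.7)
The plan is to derive both parts as immediate corollaries of Proposition~\ref{MS21} together with earlier results, with essentially no new work required.

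For (i), Proposition~\ref{MS21} already supplies a cut and project scheme $(\widehat{G} \times H_{\mathsf{fd}}, \widetilde{L_{\mathsf{fd}}})$ and a function $g^{}_{\mathsf{fd}} \in C_0(H_{\mathsf{fd}})$ with $(\widehat{\gamma})^{}_{\mathsf{pp}} = \omega^{}_{g^{}_{\mathsf{fd}}}$, which is exactly condition (iii) of Proposition~\ref{FTPMRDSVBP}. So I would just check that the hypotheses of that proposition are met. Since $\gamma$ is the autocorrelation of the positive measure $\delta^{}_{\nts\vL}$, it is both positive and positive definite, and $\widehat{\gamma}(\{0\}) \geq \bigl(\underline{\dens}(\vL)\bigr)^2 > 0$ by the relative denseness of $\vL$ (as noted in the proof of Corollary~\ref{cms}). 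The equivalence $\text{(iii)} \Longleftrightarrow \text{(ii)}$ of Proposition~\ref{FTPMRDSVBP} then delivers the claim, placing all $I(a)$ with $0 < a < \widehat{\gamma}(\{0\})$ as model sets in a common cut and project scheme.

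For (ii), Proposition~\ref{MS21} also furnishes a cut and project scheme $(G \times H, \cL)$ and a function $g \in C_{\mathsf{c}}(H)$ with $\gamma^{}_{\mathsf{s}} = \omega^{}_{g}$. Setting $V := \{ y \in H \mid g(y) \neq 0 \}$, I would observe that $V$ is open by continuity of $g$ and precompact because $g \in C_{\mathsf{c}}(H)$. Since $\omega^{}_{g}$ is a discrete measure whose atoms form a subset of the model set $\oplam\bigl(\overline{V}\vphantom{L}\bigr)$ and are therefore locally finite and closed in $G$, one has $\supp(\gamma^{}_{\mathsf{s}}) = \{ x \in \pi^{}_1(\cL) \mid g(x^\star) \neq 0 \} = \oplam(V)$. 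It remains only to check that $V$ is nonempty, equivalently that $\gamma^{}_{\mathsf{s}} \neq 0$. This follows from Corollary~\ref{cms}, which gives $(\widehat{\gamma})^{}_{\mathsf{pp}} \neq 0$, combined with $\widehat{\gamma^{}_{\mathsf{s}}} = (\widehat{\gamma})^{}_{\mathsf{pp}}$ and Fourier uniqueness. Hence $V$ is nonempty open and precompact, so $\oplam(V)$ is a model set.

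The substance of the argument lies entirely in the earlier results (Theorem~\ref{T2}, Corollary~\ref{cms33} and Proposition~\ref{MS21}, and Proposition~\ref{FTPMRDSVBP}); the only thing to verify is the matching of hypotheses. If there is any mild obstacle, it is just confirming nontriviality: that $\widehat{\gamma}(\{0\}) > 0$ for part (i) and that $\gamma^{}_{\mathsf{s}} \neq 0$ for part (ii). Both are handled by the standard density lower bound on $\widehat{\gamma}(\{0\})$ and by Corollary~\ref{cms} respectively, so no new techniques are needed.
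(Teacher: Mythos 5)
Your proposal is correct and follows exactly the route the paper intends: the corollary is stated with no written proof precisely because it is the combination of Proposition~\ref{MS21} with Proposition~\ref{FTPMRDSVBP} (for part (i)) and with the observation that $\supp(\omega_g)=\oplam(\{y\mid g(y)\neq 0\})$ for $g\in C_{\mathsf{c}}(H)$ (for part (ii)). Your verification of the hypotheses --- positivity of $\gamma$, the bound $\widehat{\gamma}(\{0\})\geq(\underline{\dens}(\vL))^2>0$, and $\gamma^{}_{\mathsf{s}}\neq 0$ via Corollary~\ref{cms} --- supplies exactly the details the paper leaves implicit.
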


If $\vL$ is a repetitive Meyer set with pure point diffraction, we
have that $\supp(\gamma)=\supp(\gamma^{}_{\mathsf{s}})=\vL- \vL$, establishing the
following result.

\begin{corollary}
  Let\/ $\vL$ be a repetitive Meyer set with pure point
  diffraction. Then\/ $\vL - \vL$ is a model set.\qed
\end{corollary}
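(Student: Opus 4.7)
The plan is to invoke part (ii) of the preceding corollary, which asserts that $\supp(\gamma^{}_{\mathsf{s}})$ is a model set, and then to identify $\vL - \vL$ with $\supp(\gamma^{}_{\mathsf{s}})$ using the two hypotheses.

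First I would dispatch the pure point diffraction hypothesis. Since $(\widehat{\gamma})^{}_{\mathsf{c}} = 0$ and, as recalled at the start of this section, $\widehat{\gamma^{}_0} = (\widehat{\gamma})^{}_{\mathsf{c}}$, we obtain $\widehat{\gamma^{}_0} = 0$. As the Fourier transform is injective on Fourier transformable measures, this forces $\gamma^{}_0 = 0$, hence $\gamma = \gamma^{}_{\mathsf{s}}$ and, in particular, $\supp(\gamma) = \supp(\gamma^{}_{\mathsf{s}})$.

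The main step — and the only real obstacle — is to show $\supp(\gamma) = \vL - \vL$. The inclusion $\supp(\gamma) \subset \vL - \vL$ is immediate: since $\vL$ is Meyer, $\vL - \vL$ is uniformly discrete (hence closed), and by construction $\gamma$ is supported in $\vL - \vL$. For the reverse inclusion, fix $z = x_0 - y_0$ with $x_0, y_0 \in \vL$, and set
\[
   T \, := \, \{ t \in G \mid t + x_0 \in \vL \ \text{and} \ t + y_0 \in \vL \} \ts .
\]
Repetitivity of $\vL$ applied to the finite patch $\{x_0, y_0\}$ gives that $T$ is relatively dense in $G$. Choosing a van Hove sequence $(A_n)$ that realizes $\gamma$, each $t \in T$ with $t + x_0, t + y_0 \in A_n$ contributes the term $\delta_z$ to $\delta^{}_{\vL \cap A_n} * \widetilde{\delta^{}_{\vL \cap A_n}}$, and the van Hove property combined with the relative denseness of $T$ gives
\[
   \gamma(\{z\}) \, \geq \, \liminf_{n \to \infty} \frac{\# (T \cap A_n)}{\theta_G(A_n)} \, > \, 0 \ts ,
\]
so $z \in \supp(\gamma)$.

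Combining the two identifications, $\vL - \vL = \supp(\gamma) = \supp(\gamma^{}_{\mathsf{s}})$, and the preceding corollary concludes that this is a model set. Everything beyond the support-of-autocorrelation computation is direct invocation of results already established in the paper.
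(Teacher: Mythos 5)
Your proposal is correct and follows essentially the same route as the paper, which justifies this corollary in one line by noting that repetitivity and pure point diffraction force $\supp(\gamma)=\supp(\gamma^{}_{\mathsf{s}})=\vL-\vL$ and then citing part (ii) of the preceding corollary. You merely supply the details (injectivity of the Fourier transform to get $\gamma^{}_{0}=0$, and the return-time argument for $\vL-\vL\subset\supp(\gamma)$) that the paper leaves implicit.
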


In the remainder of the section, we show that for combs $\omega_g$ with $g$ admissible, if $\omega^{}_{g}$ is Fourier transformable, then its
Fourier transform is $\omega^{}_{\widehat{g}}$. We first need to recall
some results by Lenz and Richard \cite{S-LR}.

\begin{definition}
  Let $(G \times H, \cL)$ be a cut and project scheme.  A
  function $g\!:\, H \longrightarrow \CC$ is called \emph{admissible}
  if it is measurable, locally bounded and, for all $\varepsilon >0$
  and $\phi \in C_{\mathsf{c}}(G)$, there exists a compact set $K \subset H$ so
  that, for all $(s,t) \in G \times H$, we have
\[
    \sum_{(t,h) \in \cL}
    \bigl| \phi(t+s)\, g(h+k) \bigr|\, \bigl(1-1^{}_K(h+k)\bigr)
    \,\leq\, \varepsilon \ts.
\]
\end{definition}

It is obvious that any $g \in C_{\mathsf{c}}(H)$ is admissible.

\begin{theorem}\cite{S-LR}\label{LR1}
  Let\/ $(G \times H, \cL)$ be a cut and project scheme, and
  let\/ $g\!:\, H \longrightarrow \CC$ be admissible. Let further\/
  $(\widehat{G} \times \widehat{H}, Z)$ be the dual cut and project
  scheme.
  \begin{itemize}\itemsep=1pt
  \item[(i)] $\omega^{}_{g}$ is strongly almost periodic.
  \item[(ii)] Let\/ $(\chi, \chi^\star) \in Z$. Then, the limit
\[
    c^{}_{\chi} \, = \, \lim_{n \to \infty}
    \frac{ \sum_{(s,s^\star) \in \cL} \chi(s)\, g(s^\star)}
        { \theta_G( A_n)}
\]
exists and satisfies
\[
  c^{}_{\chi} \, = \, \int_H g(h)\, \overline{\chi^\star} (h) \dd h \ts .
\]
\item[(iii)] Let\/ $\gamma$ be the autocorrelation of\/
  $\omega^{}_{g}$. Then $\widehat{\gamma}$ is discrete, and satisfies
\[
   \widehat{\gamma}\, =
    \sum_{(\chi, \chi^\star) \in Z}
    \left| c^{}_{\chi} \right|^2  \delta_{\chi} \ts .
\]
\end{itemize}
\end{theorem}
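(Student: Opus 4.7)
The plan is to settle (i), (ii) and (iii) in order, with the bulk of the genuine work occurring in (i) and (ii); once those are in hand, part (iii) reduces to the Bombieri--Taylor identity for strongly almost periodic measures recalled in \cite{S-ARMA,S-MoSt}. In both (i) and (ii) the strategy is first to handle the transparent case $g \in \Cc(H)$, and then to pass to admissible $g$ by truncating $g$ to an increasingly large compact set $K \subset H$ and using the admissibility definition to bound the tail uniformly in the relevant auxiliary variable.

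For part (i), I would realise the orbit closure of $\omega^{}_{g}$ under the $G$-action as a continuous image of the compact group $H$. For each $h \in H$, introduce the formal comb
\[
   \omega_g^{(h)} \, := \sum_{(s,s^\star) \in \cL}
   g(s^\star + h)\,\delta^{}_{s} \ts,
\]
and for each $f \in \Cc(G)$ check that $h \mapsto f * \omega_g^{(h)}$ is continuous from $H$ into $\Cu(G)$. Admissibility is used to truncate the defining sum to $\{s^\star + h \in K\}$ with uniform error in $h$; on the truncation, continuity in $h$ is immediate from uniform continuity of $g$ on compact subsets of $H$ combined with density of $\pi^{}_{2}(\cL)$ in $H$. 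A short calculation gives $T_t(f*\omega^{}_{g}) = f*\omega_g^{(-t^\star)}$ for every $t \in \pi^{}_{1}(\cL)$, and the relative denseness of $\cL$ in $G \times H$ together with strong continuity of the $G$-action on $\Cu(G)$ then show that every translate $T_t(f*\omega^{}_{g})$ lies within a uniformly small $\Cu(G)$ neighbourhood of some member of $\{f*\omega_g^{(h)} \mid h \in H\}$. Since the latter set is the continuous image of the compact group $H$, it is compact, so the orbit is precompact in $\Cu(G)$, which is exactly Bohr almost periodicity of $f*\omega^{}_{g}$.

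For part (ii), the engine is Weyl-type equidistribution for cut and project schemes: for any van Hove sequence $(A_n)$ in $G$ and any $F \in \Cc(H)$,
\[
   \lim_{n \to \infty}\frac{1}{\theta^{}_{G}(A_n)}
   \sum_{\substack{(s,s^\star) \in \cL \\ s \in A_n}}
   F(s^\star)
   \,=\, \int_{H} F(h) \dd h\ts,
\]
with Haar measures normalised so that the covolume of $\cL$ is one. For $(\chi,\chi^\star) \in Z$ the annihilator relation yields $\chi(s) = \overline{\chi^\star(s^\star)}$ on $\cL$, so the sum defining $c^{}_{\chi}$ becomes $\sum_{s \in A_n}\overline{\chi^\star(s^\star)}\,g(s^\star)$; equidistribution applied to $F = \overline{\chi^\star}\,g$ gives exactly $\int_{H} g\,\overline{\chi^\star} \dd h$. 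For admissible but not compactly supported $g$, one approximates $g$ by $g\cdot 1^{}_{K}$ and controls the tail uniformly in $n$ via the admissibility inequality with $\phi$ taken to be a fixed cutoff near the origin. Part (iii) follows by combining strong almost periodicity from (i) with the Bombieri--Taylor identity $\widehat{\gamma}(\{\chi\}) = |c^{}_{\chi}(\omega^{}_{g})|^{2}$, part (ii) for the Fourier--Bohr coefficients at $\chi \in \pi^{}_{1}(Z)$, and a separate equidistribution argument showing that $c^{}_{\chi}(\omega^{}_{g}) = 0$ for $\chi \notin \pi^{}_{1}(Z)$ (there the character $(\chi,1)$ on $G\times H$ is nontrivial on $\cL$, so the Weyl averages of $s \mapsto \chi(s)g(s^\star)$ vanish).

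The principal technical obstacle is the uniform handling of admissibility: in (i) one needs uniformity in the shift $h \in H$ when truncating $g$, and in (ii) uniformity in the van Hove index $n$. The admissibility definition is engineered precisely to yield such uniform tail estimates, so the passage from compactly supported $g$ to admissible $g$ reduces to a standard approximation argument once those uniform bounds are invoked.
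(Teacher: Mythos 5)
The paper does not prove this theorem; it is imported verbatim from Lenz--Richard \cite{S-LR}, so there is no internal proof to compare against and your argument must stand on its own. Your overall strategy for (ii) and (iii) --- Weyl equidistribution over the window, the annihilator relation $\chi(s)=\overline{\chi^\star(s^\star)}$ on $\cL$, Bombieri--Taylor, and vanishing of the Fourier--Bohr coefficients off $\pi^{}_{1}(Z)$ --- is sound in outline and is essentially the route of \cite{S-LR}. But there is a genuine gap in your proof of (i): you justify precompactness of the orbit by saying that $\{f*\omega_g^{(h)} \mid h \in H\}$ is ``the continuous image of the compact group $H$''. The internal group of a cut and project scheme is locally compact but almost never compact (e.g.\ $H=\RR$ in the Fibonacci scheme), so this justification fails outright. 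Worse, since $\omega_g^{(u^\star)} = T_{-u}\ts\omega_g$ for $(u,u^\star)\in\cL$ and $\pi^{}_{2}(\cL)$ is dense in $H$, the set $\{f*\omega_g^{(h)}\mid h\in H\}$ is precisely the closure of $\{T_{-u}(f*\omega_g) \mid u \in \pi^{}_{1}(\cL)\}$ --- up to a further compact set of translations it \emph{is} the orbit closure whose precompactness you are trying to prove, so the argument as written is circular. The adjacent step is also flawed: relative denseness of $\cL$ only lets you write $t = s+k$ with $(s,s^\star)\in\cL$ and $k$ in a \emph{fixed} compact set $K_0$, not in an arbitrarily small neighbourhood of $0$, so strong continuity of the $G$-action does not place $T_t(f*\omega_g)=T_k\bigl(f*\omega_g^{(-s^\star)}\bigr)$ uniformly close to $f*\omega_g^{(-s^\star)}$.

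The standard repair is to parametrise by the compact quotient $(G\times H)/\cL$ rather than by $H$: the map $\Phi(t,h):=\sum_{(s,s^\star)\in\cL} g(s^\star - h)\ts\delta_{s+t}$ is $\cL$-invariant, descends to $(G\times H)/\cL$, and (using admissibility for the uniform tail bound) $f*\Phi$ is continuous into $\Cu(G)$. Since $T_t(f*\omega_g)=f*\Phi(t,0)$ and $\overline{G\times\{0\}+\cL}=G\times H$, the orbit closure is the continuous image of the compact torus, which is exactly the compactness you need; your $H$-parametrised family and the compact set $K_0$ of residual translations are then absorbed into this single picture. Two smaller points: the Weyl equidistribution statement you invoke as ``the engine'' of (ii) (convergence for every van Hove sequence, uniformly in the base point, with the covolume normalisation) is itself a nontrivial theorem equivalent to unique ergodicity of the torus system and should be either proved or cited; and the identity $\widehat{\gamma}(\{\chi\})=|c_\chi|^2$ in (iii) likewise needs a reference (the paper uses \cite{S-L1} and \cite{S-HOF} for it elsewhere).
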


An immediate consequence of Theorem~\ref{LR21} is that an Meyer set has exactly the same set of Bragg peaks as a weighted model set.

\begin{corollary}\label{C1.19}
  Let\/ $\vL$ be a Meyer set and\/ $\gamma$ be an autocorrelation
  of\/ $\vL$.  Then, there exists a cut and project scheme\/ $(G
  \times H, \cL)$, a regular model set\/ $\oplam(W)$
  and a function\/ $g \in C_{\mathsf{c}}(H)$ so that, for the weight\/
  $\omega$ on the model set\/ $\oplam(W)$ defined by
\[
   \omega \, := \sum_{x \in \oplam(W)} g(x^\star)\, \delta_x \ts ,
\]
with autocorrelation $\eta$ we have
\[
   \widehat{\eta} (\{ \chi \})
   \, =\, \Bigl[ \widehat{\gamma} (\{ \chi \}) \Bigr]^2 \ts .
\]
In particular, $\omega$ and\/ $\vL$ have the same set of Bragg peaks.
\end{corollary}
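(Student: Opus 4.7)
The plan is to apply Proposition~\ref{MS21} to realise the strongly almost periodic component $\gamma^{}_{\mathsf{s}}$ of $\gamma$ as a continuous weighted model comb with compactly supported window, and then to identify the two representations for its Fourier coefficients given by Theorem~\ref{LR1}(ii) and by the Bohr formula for strongly almost periodic measures.

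By Proposition~\ref{MS21} there exists a cut and project scheme $(G\times H,\cL)$ and a function $g\in C_{\mathsf{c}}(H)$ with $\gamma^{}_{\mathsf{s}}=\omega^{}_g$. Pick a compact set $W\subset H$ with non-empty interior, Haar-null boundary, and $\supp(g)\subset W$; then $\oplam(W)$ is a regular model set and, since $g$ vanishes off $W$,
\[
\omega \;:=\!\!\sum_{x\in\oplam(W)}\!\! g(x^\star)\,\delta_x \;=\;\omega^{}_g .
\]
Because $g\in C_{\mathsf{c}}(H)$ is admissible, Theorem~\ref{LR1}(iii) yields that the autocorrelation $\eta$ of $\omega$ is Fourier transformable with
\[
\widehat{\eta} \;=\!\!\sum_{(\chi,\chi^\star)\in Z}\!\! |c^{}_\chi|^2\,\delta^{}_\chi ,
\]
where $Z$ is the dual lattice and, by Theorem~\ref{LR1}(ii),
$c^{}_\chi = \int_H g(h)\,\overline{\chi^\star(h)}\dd h = M(\chi\,\omega^{}_g)=M(\chi\,\gamma^{}_{\mathsf{s}})$.

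On the other hand, the Bohr coefficients of the strongly almost periodic measure $\gamma^{}_{\mathsf{s}}$ are $\widehat{\gamma^{}_{\mathsf{s}}}(\{\chi\})=M(\overline{\chi}\,\gamma^{}_{\mathsf{s}})$ (see \cite{S-ARMA,S-MoSt}), and $\widehat{\gamma^{}_{\mathsf{s}}}=(\widehat{\gamma})^{}_{\mathsf{pp}}$, so
\[
c^{}_{\overline{\chi}} \;=\; \widehat{\gamma^{}_{\mathsf{s}}}(\{\chi\}) \;=\; \widehat{\gamma}(\{\chi\}) \;\geq\; 0 .
\]
Since $\gamma$ is the autocorrelation of the real measure $\delta^{}_{\nts\vL}$, uniqueness of the canonical decomposition forces $\gamma^{}_{\mathsf{s}}$ to be real, hence $g$ is real on the dense subset $\pi^{}_2(\cL)\subset H$ and therefore, by continuity, everywhere on $H$. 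This gives $c^{}_{\overline{\chi}}=\overline{c^{}_\chi}$, whence $c^{}_\chi=\widehat{\gamma}(\{\chi\})$ for every $(\chi,\chi^\star)\in Z$. For $\chi\notin\pi^{}_1(Z)$ both sides of the claimed identity vanish, since the Bohr spectrum of the weighted model comb $\omega^{}_g=\gamma^{}_{\mathsf{s}}$ is contained in $\pi^{}_1(Z)$. Combining these observations,
\[
\widehat{\eta}(\{\chi\}) \;=\; |c^{}_\chi|^2 \;=\; [\widehat{\gamma}(\{\chi\})]^2
\]
for all $\chi\in\widehat{G}$, and the coincidence of Bragg peak sets for $\omega$ and $\vL$ follows. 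The main obstacle is keeping Fourier sign and conjugation conventions consistent through the identification $c^{}_\chi=\widehat{\gamma}(\{\chi\})$; the reality of $g$ together with the non-negativity of $\widehat{\gamma}(\{\chi\})$ is precisely what removes the phase ambiguity and makes the squared modulus match the square of the Bragg amplitude.
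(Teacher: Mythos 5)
Your proof is correct and follows essentially the same route as the paper: realise $\gamma^{}_{\mathsf{s}}$ as a compactly supported continuous weighted model comb (you via Proposition~\ref{MS21}, the paper directly via Theorem~\ref{T2}), choose a compact window containing $\supp(g)$, and then apply Theorem~\ref{LR1}(iii) together with the identification of the Fourier--Bohr coefficients $c^{}_{\chi}$ of $\omega^{}_{g}=\gamma^{}_{\mathsf{s}}$ with $\widehat{\gamma}(\{\chi\})$. Your extra care in taking $\theta_H(\partial W)=0$ (so that $\oplam(W)$ is genuinely a regular model set) and in resolving the conjugation convention through the reality of $g$ and the positivity of $\widehat{\gamma}$ only makes explicit steps that the paper's one-line chain of equalities leaves implicit.
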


\begin{proof}
  We know that $\gamma^{}_{\mathsf{s}}$ is a continuous weighted model
  comb supported on a Meyer set. Thus, by Theorem~\ref{T2}, there
  exists a cut and project scheme $(G \times H, \cL)$ and a
  function $g \in C_{\mathsf{c}}(H)$ so that $ \gamma^{}_{\mathsf{s}} = \omega^{}_{g}$.

  Let $W$ be any compact set in $H$ so that $\supp(g) \subset
  W^\circ$. Then,
\[
   \omega^{}_{g} \,= \, \gamma^{}_{\mathsf{s}} \, =
   \sum_{x \in \oplam(W)} \! g(x^\star)\, \delta_x \ts .
\]
As $g \in C_{\mathsf{c}}(H)$, it follows that $g$ is admissible. Therefore, by Theorem~\ref{LR21} (iii) and \cite{S-MoSt} we have
\[
  \widehat{\eta} (\{ \chi \}) \, = \,
  \bigl[ \widehat{\omega^{}_{g}} (\{ \chi \}) \bigr]^2 \, = \,
  \Bigl[ \widehat{\gamma} (\{ \chi \}) \Bigr]^2 ,
\]
which completes the proof.
\end{proof}

Another immediate consequence of Theorem~\ref{LR21} is the following result.

\begin{proposition}\label{LR21}
  Let\/ $(G \times H, \cL)$ be a cut and project scheme, and
  let\/ $g \in L^1(H)$ be admissible. Let further\/ $(\widehat{G}
  \times \widehat{H}, Z)$ be the dual cut and project scheme. If\/
  $\omega^{}_{g}$ is Fourier transformable, then the following
  statements hold.
  \begin{itemize}\itemsep=1pt
  \item[(i)] $\widehat{\omega^{}_{g}} = \omega^{}_{\widehat{g}}$.
  \item[(ii)] $\widehat{\omega^{}_{g}}$ is sup-almost periodic.
  \item[(iii)] $\omega^{}_{g}$ is positive definite if and only if\/ $g$ is
    positive definite.
\end{itemize}
\end{proposition}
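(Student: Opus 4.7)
The plan is to establish (i) first by identifying $\widehat{\omega^{}_{g}}$ through Fourier--Bohr coefficients, and then to deduce (ii) and (iii) as essentially formal consequences.

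For (i), I would invoke Theorem~\ref{LR1}(i) to see that $\omega^{}_{g}$ is strongly almost periodic, and then use the fact (from the ARMA framework, see \cite{S-ARMA,S-MoSt}) that a Fourier transformable strongly almost periodic measure $\mu$ has a pure point Fourier transform whose weights are precisely its Fourier--Bohr coefficients. By Theorem~\ref{LR1}(ii), for each $(\chi,\chi^\star)\in Z$ the relevant coefficient equals $c^{}_{\chi}=\int_H g(h)\ts\overline{\chi^\star(h)}\ts\dd h=\widehat{g}(\chi^\star)$ (up to sign conventions). For $\chi\notin\pi^{}_{1}(Z)$ the coefficient must vanish: by Theorem~\ref{LR1}(iii) the autocorrelation diffraction $\widehat{\gamma}$ is supported on $\pi^{}_{1}(Z)$, and for strongly almost periodic Fourier transformable measures the Bragg intensity equals the squared modulus of the Fourier--Bohr coefficient. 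Assembling these, $\widehat{\omega^{}_{g}}=\sum_{(\chi,\chi^\star)\in Z}\widehat{g}(\chi^\star)\,\delta^{}_{\chi}=\omega^{}_{\widehat{g}}$, where the last equality is read in the dual scheme $(\widehat{G}\times\widehat{H},Z)$.

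For (ii), I would first note that $g\in L^1(H)$ implies $\widehat{g}\in C_0(\widehat{H})$ by the Riemann--Lebesgue lemma, and that $\widehat{g}$ is uniformly continuous by the standard translation estimate $|\widehat{g}(\psi^{}_{1})-\widehat{g}(\psi^{}_{2})|\leq\int_H|g(h)|\ts|\psi^{}_{1}(h)-\psi^{}_{2}(h)|\ts\dd h$. Since $\widehat{g}\in C_0(\widehat{H})$, the set $\{\psi\in\widehat{H}\mid|\widehat{g}(\psi)|>\varepsilon\}$ is precompact for each $\varepsilon>0$, so $\omega^{}_{\widehat{g}}$ is a regular discrete measure (its support meets every compact set of $\widehat{G}$ in a uniformly discrete set via the dual scheme). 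Combining (i) with Theorem~\ref{T1} applied in the dual scheme with window function $\widehat{g}\in C_{\mathsf{u}}(\widehat{H})\cap C_0(\widehat{H})$ then yields the sup-almost periodicity of $\widehat{\omega^{}_{g}}$. For (iii), recall that a Fourier transformable measure is positive definite iff its Fourier transform is a positive measure. By (i), this translates to $\widehat{g}(\chi^\star)\geq 0$ for all $(\chi,\chi^\star)\in Z$; density of $\pi^{}_{2}(Z)$ in $\widehat{H}$ and continuity of $\widehat{g}$ upgrade this to $\widehat{g}\geq 0$ on all of $\widehat{H}$. Bochner's theorem for $L^1$ functions (or, elementarily, the Plancherel identity $\int\nts\int g(x-y)f(x)\overline{f(y)}\ts\dd x\ts\dd y=\int\widehat{g}(\xi)\ts|\widehat{f}(\xi)|^2\ts\dd\xi$ for $f\in C_{\mathsf{c}}(H)$) shows that $\widehat{g}\geq 0$ is equivalent to $g$ being positive definite.

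The principal obstacle I anticipate is the rigorous identification, in part (i), of the weights of $\widehat{\omega^{}_{g}}$ with the Fourier--Bohr coefficients of $\omega^{}_{g}$, and in particular showing that no Bragg peaks occur off $\pi^{}_{1}(Z)$. This relies on the identity $\widehat{\gamma}(\{\chi\})=|a^{}_{\chi}(\omega^{}_{g})|^2$ for strongly almost periodic Fourier transformable measures, which is not made explicit in the excerpt and must be imported from the ARMA framework of \cite{S-ARMA,S-MoSt}. Once this identification is secured, (ii) reduces to applying Theorem~\ref{T1} to $\widehat{g}$ in the dual scheme, and (iii) is a standard positive-definiteness argument together with the density of $\pi^{}_{2}(Z)$ in $\widehat{H}$.
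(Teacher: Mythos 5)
Your proposal is correct and follows essentially the same route as the paper's proof: strong almost periodicity from Theorem~\ref{LR1}(i), identification of the point masses of $\widehat{\omega^{}_{g}}$ with the Fourier--Bohr coefficients $c^{}_{\chi}=\widehat{g}(\chi^\star)$ on $Z$, vanishing off $\pi^{}_{1}(Z)$ via $\widehat{\gamma}(\{\chi\})=|c^{}_{\chi}|^2$ and Theorem~\ref{LR1}(iii), then Theorem~\ref{T1} in the dual scheme for (ii) and the density of $\pi^{}_{2}(Z)$ plus continuity of $\widehat{g}$ for (iii). The "obstacle" you flag is handled in the paper exactly as you anticipate, by importing the intensity identity from \cite{S-LR,S-HOF}.
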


\begin{proof}
  We start by proving (i). By Theorem~\ref{LR1}, $\omega^{}_{g}$ is
  strongly almost periodic, thus $\widehat{\omega^{}_{g}}$ is
  discrete \cite{S-MoSt}.  Hence, by \cite{S-ARMA} or \cite{S-MoSt}, for all $\chi \in \widehat{G}$ we
  have
\[
   \widehat{\omega^{}_{g}}(\chi) \, =\,  \lim_{n \to \infty}
    \frac{ \sum_{(s,s^\star) \in \cL} \chi(s) \,
    \omega^{}_{g}(s)}{\theta_G( A_n)}\ts .
\]
Therefore, for all $(\chi, \chi^\star) \in Z$, we obtain
\[
  \widehat{\omega^{}_{g}}(\chi) \, =\,
   \int_H g(h)\, \overline{\chi^\star (h)} \dd h
  \, =\,  \widehat{g}(\chi^\star) \ts .
\]
To complete the proof of (i), we need to show that, for all $\chi
\notin \pi^{}_{1}(Z)$, we have $\widehat{\omega^{}_{g}}(\chi)=0$.

Let $\gamma$ be the autocorrelation of $\omega^{}_{g}$. Since
$\omega^{}_{g}$ is strongly almost periodic, the limit
\[
   c^{}_{\chi} \, =\,  \lim_{n \to \infty}
   \frac{ \sum_{(s,s^\star) \in \cL} \chi(s)\,
         \omega^{}_{g}(s)}{ \theta_G( B_n)}
\]
exists uniformly for every van Hove sequence $\{ B_n \}$ and for all $\chi \in \widehat{G}$. Then, by
\cite{S-LR,S-HOF}, we have
\[
  \widehat{\gamma}(\{ \chi \}) \, = \, \left| c^{}_{\chi} \right|^2 \, = \,  \widehat{\omega^{}_{g}}(\chi) \ts .
\]
Now by Theorem~\ref{LR1}(iii) we have
$\widehat{\gamma}(\chi)=0$ for all $\chi \notin \pi^{}_{1}(Z)$, and
thus $\left| \widehat{\omega^{}_{g}}(\chi)\right|^2=0$.

To prove (ii), note that $\omega^{}_{\widehat{g}}$ is the Fourier
transform of $\widehat{\omega^{}_{g}}$, and thus translation
bounded. Also, $\widehat{g} \in C_0(\widehat{H})$, and thus, by
Theorem~\ref{T1} is sup-almost periodic.

Finally, we show (iii).  We have that $\omega^{}_{g}$ is positive
definite if and only if $\widehat{\omega^{}_{g}}=\omega_{\widehat{g}}$
is positive. Also, $g$ is positive definite if and only if
$\widehat{g}$ is positive.

Since $\widehat{g}$ is continuous and $\pi^{}_{2}(Z)$ is dense in $H$,
we get that $\widehat{g}$ is positive if and only if $\omega^{}_{\widehat{g}}$
is positive, which completes the proof.
\end{proof}

We now are able to prove that the results in Proposition~\ref{MS21}
can be obtained in a dual pair of cut and project schemes.

\begin{theorem}\label{dualdiffraction}
  Let\/ $\vL \subset G$ be a Meyer set, and let\/ $\gamma$ be an
  autocorrelation of\/ $\vL$. Then, there exists a cut and project scheme\/
  $(G \times H, \cL)$ and a\/ $g \in C_{\mathsf{c}}(H)$ so that
\[
   \gamma^{}_{\mathsf{s}} \,=\,  \omega^{}_{g} \quad\text{and}\quad
  (\widehat{\gamma})_{\mathsf{pp}} \,=\, \omega^{}_{\widehat{g}} \ts .
\]
\end{theorem}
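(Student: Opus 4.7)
The plan is to combine Proposition~\ref{MS21} with Proposition~\ref{LR21}. By Proposition~\ref{MS21}, there exists a cut and project scheme $(G \times H, \cL)$ and a function $g \in C_{\mathsf{c}}(H)$ such that $\gamma^{}_{\mathsf{s}} = \omega^{}_{g}$. This already establishes the first identity; the task reduces to showing that in this same scheme, with dual cut and project scheme $(\widehat{G} \times \widehat{H}, Z)$, we have $(\widehat{\gamma})^{}_{\mathsf{pp}} = \omega^{}_{\widehat{g}}$.

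First I would verify that the hypotheses of Proposition~\ref{LR21} are satisfied for this particular $g$. Since $g \in C_{\mathsf{c}}(H)$, it is automatically admissible (as remarked right after the definition of admissibility) and belongs to $L^1(H)$ (as $H$ is locally compact, any continuous compactly supported function is integrable with respect to the Haar measure on $H$). Next, $\omega^{}_{g} = \gamma^{}_{\mathsf{s}}$ is Fourier transformable: $\gamma$ is positive definite as an autocorrelation, hence Fourier transformable, and by the results recalled from \cite{S-ARMA,S-MoSt} the strongly almost periodic component of a Fourier transformable measure is itself Fourier transformable, with
\[
   \widehat{\gamma^{}_{\mathsf{s}}} \,=\, (\widehat{\gamma})^{}_{\mathsf{pp}} \ts .
\]

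With these checks in place, Proposition~\ref{LR21}(i) yields $\widehat{\omega^{}_{g}} = \omega^{}_{\widehat{g}}$. Combining this with the identities $\omega^{}_{g} = \gamma^{}_{\mathsf{s}}$ and $\widehat{\gamma^{}_{\mathsf{s}}} = (\widehat{\gamma})^{}_{\mathsf{pp}}$ gives
\[
    (\widehat{\gamma})^{}_{\mathsf{pp}} \,=\,
    \widehat{\gamma^{}_{\mathsf{s}}} \,=\,
    \widehat{\omega^{}_{g}} \,=\, \omega^{}_{\widehat{g}} \ts ,
\]
which is the desired identity, and the proof is complete.

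The main substantive work—producing the cut and project scheme and the compactly supported continuous function $g$ from a Meyer-supported weighted comb—has already been done in Theorem~\ref{T2} and packaged as Proposition~\ref{MS21}, while the transformation identity $\widehat{\omega^{}_{g}} = \omega^{}_{\widehat{g}}$ is handled by Proposition~\ref{LR21}. The only conceptual point that requires attention is that $g^{}_{\mathsf{fd}}$ in Proposition~\ref{MS21} is \emph{a priori} only asserted to exist in some dual-like scheme, whereas here we claim it equals $\widehat{g}$ and lives in the dual of the specific scheme chosen for $\gamma^{}_{\mathsf{s}}$; this rigidity is exactly what Proposition~\ref{LR21}(i) provides, so no additional obstacle arises.
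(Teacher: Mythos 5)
Your proposal is correct and follows essentially the same route as the paper: invoke Proposition~\ref{MS21} to obtain the scheme and $g \in C_{\mathsf{c}}(H)$ with $\gamma^{}_{\mathsf{s}} = \omega^{}_{g}$, note that $g$ is admissible and in $L^1(H)$, and apply Proposition~\ref{LR21} in the dual scheme. Your explicit verification that $\omega^{}_{g}=\gamma^{}_{\mathsf{s}}$ is Fourier transformable (via positive definiteness of $\gamma$ and $\widehat{\gamma^{}_{\mathsf{s}}}=(\widehat{\gamma})^{}_{\mathsf{pp}}$) is a hypothesis of Proposition~\ref{LR21} that the paper leaves implicit, so this is a welcome addition rather than a deviation.
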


\begin{proof}
  The existence of the cut and project scheme and of the function $g$
  with $\gamma^{}_{\mathsf{s}}= \omega^{}_{g}$ follow from Proposition~\ref{MS21}.

  Since $g \in C_{\mathsf{c}}(H)$, $g$ is in $L^1(H)$ and admissible.  Thus, by
  Proposition~\ref{LR21}, we have in the dual cut and project scheme
  $(\widehat{\gamma})_{\mathsf{pp}} = \omega^{}_{\widehat{g}}$.
\end{proof}

We complete the section by introducing a recent result which complements Proposition~\ref{LR21}.

\begin{theorem}\label{RS11} \cite{S-RS}
  Let\/ $(G \times H, \cL)$ be a cut and project scheme, and
  let\/ $g \in  C_{\mathsf{c}}(H)$ be such that\/ $\widehat{g} \in L^1(\widehat{H})$. Let further\/ $(\widehat{G}
  \times \widehat{H}, Z)$ be the dual cut and project scheme. Then
  $\omega^{}_{g}$ is Fourier transformable and
\[
\widehat{\omega^{}_{g}} = \omega^{}_{\widehat{g}} \ts .
\]
\end{theorem}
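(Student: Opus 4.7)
Since Proposition~\ref{LR21} already gives $\widehat{\omega_g}=\omega_{\widehat{g}}$ \emph{once} Fourier transformability of $\omega_g$ is known, the entire burden of the theorem reduces to two points: first, that $\omega_{\widehat{g}}$ is a well-defined translation bounded measure on $\widehat{G}$; second, that $\omega_g$ is Fourier transformable in the Argabright--de Lamadrid sense. My plan is to deduce both simultaneously from an abstract Poisson summation formula applied to the lattice $\cL\subset G\times H$, exploiting the fact that its dual lattice in $\widehat{G}\times\widehat{H}\cong\widehat{G\times H}$ is precisely $Z$ by Theorem~\ref{dualcps}.

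The central computation is to apply Poisson summation to tensor product test functions $F(x,h):=f(x)g(h)$, where $f\in C_{\mathsf{c}}(G)$ is chosen so that also $\widehat{f}\in L^{1}(\widehat{G})$. This yields
\[
\int_G f\,\mathrm{d}\omega_g
\,=\sum_{(x,x^{\star})\in\cL} f(x)g(x^{\star})
\,=\,\kappa\sum_{(\chi,\chi^{\star})\in Z} \widehat{f}(\chi)\,\widehat{g}(\chi^{\star})
\,=\,\kappa\int_{\widehat{G}}\widehat{f}\,\mathrm{d}\omega_{\widehat{g}},
\]
where $\kappa$ is a covolume constant fixed by the Haar normalisation. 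This is precisely the Argabright--de Lamadrid defining relation for $\widehat{\omega_g}=\omega_{\widehat{g}}$, once the identity is extended from this test class to all of $K_{2}(G)$ by density.

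The principal obstacle lies in justifying Poisson summation under the weak hypothesis $\widehat{g}\in L^{1}(\widehat{H})$, which a priori gives no pointwise decay of $\widehat{g}$ and hence no evident absolute convergence of the sum on the right. I would handle this by approximation: choose a compactly supported approximate identity $\{k_n\}\subset C_{\mathsf{c}}(H)$ and set $g_n:=g*k_n\in C_{\mathsf{c}}(H)$, so that $\widehat{g_n}=\widehat{g}\,\widehat{k_n}$ inherits the rapid decay of $\widehat{k_n}$ and standard Poisson summation applies without issue to $F_n:=f\otimes g_n$. The identity $\widehat{\omega_{g_n}}=\omega_{\widehat{g_n}}$ is then transferred to the limit by combining uniform convergence $g_n\to g$ on compacta (which controls $\omega_{g_n}\to\omega_g$ against each fixed test function, since the $x^{\star}$-arguments lie in a fixed compact neighbourhood of $\supp(g)$) with $L^{1}$-convergence $\widehat{g_n}\to\widehat{g}$ in $\widehat{H}$ (which controls the right-hand sum uniformly on compact subsets of $\widehat{G}$). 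As a by-product of the uniform bounds obtained in this limit, one also extracts the translation boundedness of $\omega_{\widehat{g}}$, completing both halves of the argument.
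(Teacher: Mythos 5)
First, a point of orientation: the paper does not prove Theorem~\ref{RS11}; it is quoted from the reference \cite{S-RS} (a paper in preparation at the time), so there is no internal proof to compare against. That said, your overall strategy --- Poisson summation over the lattice $\cL\subset G\times H$ applied to tensor-product test functions, with the annihilator $Z$ of Theorem~\ref{dualcps} as dual lattice, followed by a density extension to the Argabright--de Lamadrid test class --- is the route taken in the cited work, and your reduction of the theorem to (a) well-definedness of $\omega_{\widehat{g}}$ as a translation bounded measure and (b) validity of the summation identity is the right framing.

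The regularization step, however, does not close the gap you yourself flag. For $F_n=f\otimes(g*k_n)$ you assert that ``standard Poisson summation applies without issue,'' but the classical PSF requires absolute summability of $\widehat{F_n}$ over $Z$, i.e.\ of $\sum_{(\chi,\chi^\star)\in Z}\bigl|\widehat{f}(\chi)\,\widehat{g}(\chi^\star)\,\widehat{k_n}(\chi^\star)\bigr|$, and multiplying by the bounded $C_0$ function $\widehat{k_n}$ does not produce this: in a general LCAG there is no notion of rapid decay, $\pi^{}_{2}(Z)$ is dense in $\widehat{H}$, and membership of $\widehat{g_n}$ in $L^1(\widehat{H})$ says nothing about its sums over the countable set $\pi^{}_{2}(Z)$. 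The passage to the limit has the same defect: $\|\widehat{g_n}-\widehat{g}\|_{L^1(\widehat{H})}\to 0$ does not control $\sum_{(\chi,\chi^\star)\in Z}|\widehat{f}(\chi)|\,|\widehat{g_n}(\chi^\star)-\widehat{g}(\chi^\star)|$, since point evaluations on a dense countable subset are not dominated by an $L^1$ norm; and because the translation boundedness of $\omega_{\widehat{g}}$ was to be extracted from precisely these uniform bounds, it too is left unproved. The standard repair --- and the one used in \cite{S-RS} --- is to establish the PSF first for positive definite tensor functions of the form $(f*\widetilde{f})\otimes(g*\widetilde{g})$, whose Fourier transform $|\widehat{f}|^2\,|\widehat{g}|^2$ is nonnegative, so that summability over $Z$ can be forced by a monotone convergence (Fej\'er-type) argument; this yields that $\omega_{|\widehat{g}|^2}$, and subsequently $\omega_{\widehat{g}}$, is a translation bounded measure, after which the identity extends by polarization and density. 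A smaller issue: the PSF carries a covolume factor, so the clean identity $\widehat{\omega^{}_{g}}=\omega^{}_{\widehat{g}}$ presupposes the Haar normalisation of Lemma~\ref{LRLemma}; your constant $\kappa$ should not silently disappear from the final formula.
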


Note that in Theorem~\ref{RS11} we get the Fourier transformability for free, as long as $g \in  C_{\mathsf{c}}(H)$ and $\widehat{g} \in L^1(\widehat{H})$.
Note that in general it is unclear which admissible functions $g$ on $H$ define Fourier Transformable measures $\omega_g$.

\section{Dense norm-almost periodic combs}

\begin{definition}
  Let $(G \times H, \cL)$ be a cut and project scheme, and
  let $g \in C_{\mathsf{u}}(H)$. $g$ is called \emph{concentrated} for $(G \times
  H, \cL)$ if, for all $\varepsilon >0$, there exists a
  compact set $W \subset H$ so that
\[
    \| \ts\omega^{}_g -\omega^{}_{g^{}_W}\nts \|^{}_{K} \, <\, \varepsilon \ts ,
\]
where $g^{}_{W} = g\ts 1^{}_{W}$.
\end{definition}

The concentrated functions are important because of the following
theorem.

\begin{theorem}\cite{S-BLRS}\label{BLRS1}
  If\/ $g$ is concentrated for\/ $(G \times H, \cL)$ and
  uniformly continuous, then $\omega^{}_{g}$ is norm-almost periodic.
\end{theorem}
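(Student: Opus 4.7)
We aim to show that for every $\varepsilon > 0$, the set $P^K_\varepsilon(\omega_g) := \{t \in G : \|\omega_g - T_t \omega_g\|^{}_K < \varepsilon\}$ is relatively dense. The approach is to truncate $g$ to a compactly supported avatar $g^{}_W$ using the concentration hypothesis, produce norm-almost periods for the truncated comb $\omega^{}_{g^{}_W}$ from the uniform continuity of $g$, and transfer them back via translation invariance of the seminorm:
\[
   \|\omega_g - T_t \omega_g\|^{}_K \,\leq\, 2\ts\|\omega_g - \omega^{}_{g^{}_W}\|^{}_K + \|\omega^{}_{g^{}_W} - T_t \omega^{}_{g^{}_W}\|^{}_K \ts .
\]
So once $W$ is chosen to make the tail smaller than $\varepsilon/6$, the task reduces to finding a relatively dense set of $t$ with $\|\omega^{}_{g^{}_W} - T_t \omega^{}_{g^{}_W}\|^{}_K < 2\varepsilon/3$.

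First I would fix an auxiliary precompact symmetric open neighborhood $V_1$ of $0 \in H$; use the concentration hypothesis to choose a compact $W_0 \subset H$ with $\|\omega_g - \omega^{}_{g^{}_{W_0}}\|^{}_K < \varepsilon/6$; and record the constant $N := \sup_{z \in G} \#\bigl(\oplam(W_0 + 2V_1) \cap (z+K)\bigr)$, which is finite since $\oplam(W_0 + 2V_1)$ is a model set and hence uniformly discrete. Then I would shrink $V_1$ to a symmetric open $V \subset V_1$ such that $|g(h) - g(h')| < \varepsilon/(3N)$ whenever $h - h' \in \overline{V}$, using uniform continuity of $g$, and set $W := W_0 + \overline{V}$. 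The inclusion $W_0 \subset W \subset W_0 + \overline{V_1}$ preserves both the tail bound $\|\omega_g - \omega^{}_{g^{}_W}\|^{}_K \leq \|\omega_g - \omega^{}_{g^{}_{W_0}}\|^{}_K < \varepsilon/6$ and the count $\sup_z \#(\oplam(W + \overline{V}) \cap (z+K)) \leq N$.

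The candidate almost periods are $t \in \oplam(V)$, a relatively dense subset of $G$ since $V$ is open and non-empty. For such a $t$, $t^\star \in V$, and I would split
\[
   \bigl|\omega^{}_{g^{}_W} - T_t \omega^{}_{g^{}_W}\bigr|(z+K)
   \,=\!\!\sum_{y \in \pi^{}_1(\cL) \cap (z+K)} \!\!\!\!
   \bigl| g^{}_W(y^\star) - g^{}_W(y^\star - t^\star) \bigr|
\]
into three cases: (i) both $y^\star, y^\star - t^\star \in W$, whose contribution is bounded by $\varepsilon/(3N)$ from uniform continuity times at most $N$ terms, totalling at most $\varepsilon/3$; (ii) $y^\star \in W$ but $y^\star - t^\star \notin W$; and (iii) $y^\star \notin W$ but $y^\star - t^\star \in W$. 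The hard part is cases (ii) and (iii), where the pointwise integrand $|g(y^\star)|$ (resp.\ $|g(y^\star - t^\star)|$) need not be small on its own; the buffer in the definition $W = W_0 + \overline{V}$ is precisely what rescues us. In case (ii), $y^\star \in W_0$ together with $t^\star \in V$ (so $-t^\star \in V$ by symmetry) would force $y^\star - t^\star \in W_0 + V \subset W$, a contradiction, so $y^\star \notin W_0$; consequently the case-(ii) sum is bounded by $|\omega^{}_{g - g^{}_{W_0}}|(z+K) \leq \|\omega_g - \omega^{}_{g^{}_{W_0}}\|^{}_K < \varepsilon/6$. Case (iii) is symmetric after reindexing $y' := y - t \in \pi^{}_1(\cL)$ and noting that $y \in z+K$ iff $y' \in (z-t)+K$. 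Adding the three cases gives $\|\omega^{}_{g^{}_W} - T_t \omega^{}_{g^{}_W}\|^{}_K < \varepsilon/3 + \varepsilon/6 + \varepsilon/6 = 2\varepsilon/3$, and combined with the tail estimate yields $\|\omega_g - T_t \omega_g\|^{}_K < \varepsilon$ for every $t \in \oplam(V)$, establishing norm-almost periodicity.
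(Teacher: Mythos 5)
Your proof is correct. Note that the paper does not prove Theorem~\ref{BLRS1} itself (it is quoted from \cite{S-BLRS}), but the argument is echoed in the proof of Lemma~\ref{BLRS4}, and your strategy is the same at its core: split $\|\omega^{}_{g}-T_t\omega^{}_{g}\|^{}_{K}$ into two tail terms controlled by concentration and a middle term controlled by uniform continuity of $g$ together with a uniform point count $N$, drawing the almost periods from $\oplam(V)$ for a small neighbourhood $V\subset H$. Where you genuinely improve on the sketched argument is the treatment of the boundary: in the version recorded in Lemma~\ref{BLRS4} the term $\|T_t\omega-(T_t\omega)|^{}_{\vL}\|^{}_{K}$ is really $\|\omega-\omega|^{}_{-t+\vL}\|^{}_{K}$, i.e.\ the concentration bound for the \emph{translated} set rather than for $\vL$ itself; your buffered window $W=W_0+\overline{V}$ together with the reindexing in case (iii) handles exactly this effect (points entering or leaving the window under translation) cleanly, reducing both boundary sums to $\|\omega^{}_{g}-\omega^{}_{g^{}_{W_0}}\|^{}_{K}$ evaluated on a translate of $K$, which is harmless since the $K$-norm is a supremum over all translates. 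Two trivial bookkeeping points: you should choose $V$ small enough that $\overline{V}+\overline{V}$ lies inside the precompact set used to define $N$ (with $V\subset V_1$ alone, $W+\overline{V}\subset W_0+\overline{V_1}+\overline{V_1}$ need not sit inside $W_0+2V_1$), and it is worth saying explicitly that only the at most $N$ points of $\oplam(W+\overline{V})\cap(z+K)$ contribute to the a priori infinite sum over $\pi^{}_1(\cL)\cap(z+K)$. Neither affects the validity of the argument.
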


We first give a more general definition for \emph{concentrated}, which
does not depend on the cut and project setting and coincides with the
original one when both make sense. For this, we first prove the
following lemma.

\begin{lemma}\label{BLRS2}
  Let\/ $(G \times H, \cL)$ be a cut and project scheme, and
  let\/ $g \in C_{\mathsf{u}}(H)$. Then, the following statements are equivalent.
  \begin{itemize}\itemsep=1pt
  \item[(i)] The function\/ $g$ is concentrated.
  \item[(ii)] For each\/ $\varepsilon >0$, there exists a set\/ $\vL
    \subset G$ with finite local complexity so that
\[
   \Bigl\| \, \omega^{}_{g} -{\omega^{}_{g}\big|}^{}_{\!\vL}\ts \Bigr\|^{}_{K}
   \,< \,\varepsilon \ts .
\]
\end{itemize}
\end{lemma}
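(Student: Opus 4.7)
For (i) $\Rightarrow$ (ii), I would take $\Lambda := \oplam(W)$, where $W$ is the compact witness of concentration at level $\varepsilon$. Since $\oplam(W)$ is a model set it has finite local complexity. The identity $\omega_g|_{\oplam(W)} = \omega_{g^{}_W}$ holds atom by atom: for each $y \in \pi_{1}(\cL)$ the point $y$ lies in $\oplam(W)$ iff $y^\star \in W$ iff $1^{}_{W}(y^\star) = 1$, and in the positive case both measures assign the weight $g(y^\star)$. Consequently $\|\omega_g - \omega_g|_\Lambda\|_K = \|\omega_g - \omega_{g^{}_W}\|_K < \varepsilon$, which is exactly (ii).

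For (ii) $\Rightarrow$ (i), fix $\varepsilon > 0$ and apply (ii) with threshold $\varepsilon/2$ to obtain a finite local complexity set $\Lambda$ with $\|\omega_g - \omega_g|_\Lambda\|_K < \varepsilon/2$; replacing $\Lambda$ by $\Lambda \cap \pi_{1}(\cL)$ does not change the restricted measure, so I may assume $\Lambda \subset \pi_{1}(\cL)$, with local complexity bound $|\Lambda \cap (x+K)| \leq N$ for every $x \in G$. The natural candidate for the compact witness of concentration is the sub-level set $W := \{h \in H : |g(h)| \geq \varepsilon/(2N)\}$. Granting for the moment that this $W$ is compact, one decomposes
\[
 \omega_g - \omega_{g^{}_W} \;=\; \omega_g|_{\Lambda \setminus \oplam(W)} \;+\; \omega_g|_{(\pi_{1}(\cL) \setminus \Lambda) \setminus \oplam(W)}
\]
and bounds the two summands separately: on $\Lambda \setminus \oplam(W)$ every atom has modulus strictly less than $\varepsilon/(2N)$ and at most $N$ atoms lie in any $K$-translate, so the first summand has $K$-norm $\leq N \cdot \varepsilon/(2N) = \varepsilon/2$; the second summand is dominated by $\omega_g|_{\pi_{1}(\cL) \setminus \Lambda}$ and hence has $K$-norm $< \varepsilon/2$. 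Summing gives $\|\omega_g - \omega_{g^{}_W}\|_K < \varepsilon$ as required.

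The main obstacle is confirming that $W = \{h \in H : |g(h)| \geq \varepsilon/(2N)\}$ is actually compact in $H$, since $g \in C_{\mathsf{u}}(H)$ is not a priori assumed to lie in $C_0(H)$. The idea I would pursue is that (ii) already forces $g$ to decay at infinity. Indeed, for any $y \in \pi_{1}(\cL) \setminus \Lambda$ and any $k_0 \in K^\circ$, setting $x_0 := y - k_0$ gives $|g(y^\star)| \leq \bigl|\omega_g - \omega_g|_\Lambda\bigr|(x_0 + K) < \varepsilon/2$; the density of $\pi_{2}(\cL)$ in $H$ together with the continuity of $g$ then propagates this bound to $|g(h)| \leq \varepsilon/2$ for every $h \in H \setminus \overline{\Lambda^\star}$. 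Running this argument with the FLC witnesses corresponding to ever smaller thresholds $\delta$ confines each level set $\{h : |g(h)| \geq \delta\}$ to a closed set attached to the corresponding finer FLC witness. The final upgrade of this confinement to genuine precompactness of the level sets is the delicate step that I expect to be the true technical heart of the proof, and I would attempt to close it by invoking the regularity of $\omega_g$ to rule out any accumulation of $|g|$-large star-images of $\Lambda$-points at infinity in $H$.
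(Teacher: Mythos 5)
Your direction (i) $\Rightarrow$ (ii) is correct and is exactly the paper's argument. In (ii) $\Rightarrow$ (i) the decomposition and the two estimates are fine \emph{granted} that your candidate window $W=\{h\in H: |g(h)|\geq \varepsilon/(2N)\}$ is compact, but that is precisely the point you do not establish, and the closing strategy you sketch does not work: ``regularity of $\omega_g$'' says nothing about accumulation of $\vL^\star$ at infinity in $H$, and the confinement $\{|g|>\delta\}\subset\overline{\vL^\star}$ that you correctly derive from the density of $\pi^{}_{2}(\cL)$ is useless unless you can show $\overline{\vL^\star}$ is compact --- which finite local complexity of $\vL$ in $G$ alone does not give (an FLC set can easily have dense star image). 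So there is a genuine gap.

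The missing ingredient is Lemma~\ref{NS1x}. After reducing to $\varepsilon<\|g\|^{}_{\infty}$, the open superlevel set $U=\{h\in H:|g(h)|>\varepsilon\}$ is non-empty, so it contains a precompact open set $V$; since every $x\in\pi^{}_{1}(\cL)\setminus\vL$ carries an atom of $\omega^{}_{g}-\omega^{}_{g}|^{}_{\vL}$ of modulus $|g(x^\star)|<\varepsilon$, one gets $\oplam(V)\subset\oplam(U)\subset\vL$, and $\oplam(V)$ is a relatively dense model set. Lemma~\ref{NS1x} (relatively dense subset of an FLC set covers it up to a finite set) then yields $\vL\subset\oplam(V)+F$ with $F$ finite in $\pi^{}_{1}(\cL)$, so $\vL^\star\subset V+F^\star$ and $W:=\overline{V+F^\star}$ is compact with $\vL\subset\oplam(W)$. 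At that point your atom-counting with the constant $N$ is not even needed: since $\omega^{}_{g}-\omega^{}_{g^{}_{W}}$ is supported on $\{x: x^\star\notin W\}\subset\pi^{}_{1}(\cL)\setminus\vL$, monotonicity of the total variation gives $\|\omega^{}_{g}-\omega^{}_{g^{}_{W}}\|^{}_{K}\leq\|\omega^{}_{g}-\omega^{}_{g}|^{}_{\vL}\|^{}_{K}<\varepsilon$ directly. This is the paper's proof; I recommend you replace your level-set candidate by this construction.
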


\begin{proof}
  The implication $\text{(i)} \Longrightarrow \text{(ii)}$ is clear,
  since, for all compact sets $W \subset H$, the set $\oplam(W)$ has
  finite local complexity.

  Consider now $\text{(ii)} \Longrightarrow \text{(i})$.  Let
  $\varepsilon >0$. Note that, once the concentration condition holds
  for some value of $\varepsilon$, it holds for all larger values,
  thus, without loss of generality, we can assume that $\varepsilon <
  \| g \|_\infty$.

  Let $U := \{ x \in H \mid \left| g(x) \right| > \varepsilon
  \}$. Then, $U$ is a non empty open subset of $H$. Let $V$ be a
  precompact open subset of $U$.  It is easy to see that $\oplam(U)
  \subset \vL$ and hence $\oplam(V) \subset \vL$.

  Since $\oplam(V)$ is a model set, it is relatively dense. Hence, by
  Lemma~\ref{NS1x}, there exists a finite set $F$ so that $\vL \subset
  \oplam(V) +F$. As both $\vL$ and $\oplam(V)$ are subsets of $\pi^{}_{1}(\cL)$, by choosing a minimal such
  $F$, we have $F \subset \pi^{}_{1}(\cL)$. Let
\[
   W \, :=\, \overline {V + F^\star} \ts .
\]
Then, $W$ is compact, and since $\vL \subset \oplam(W)$, we have
\[
   \Bigl\| \,\omega^{}_{g} - \omega^{}_{g^{}_W} \Bigr\|^{}_{K} \,\leq\,
   \Bigl\| \, \omega^{}_{g} -\omega^{}_{g}\big|^{}_{\!\vL} \Bigr\|^{}_{K}
    \, <\, \varepsilon \ts ,
\]
completing the argument.
\end{proof}

\begin{definition}
  A weighted comb $\omega := \sum_{x \in \vG} \omega(x)\ts \delta_x$ is
  called \emph{concentrated} if, for each $\varepsilon >0$, there exists
  a set $\vL \subset G$ with finite local complexity so that
\[
  \bigl\| \, \omega -\omega|^{}_{\!\vL} \bigr\|^{}_{K}
    \, <\, \varepsilon \ts .
\]
\end{definition}

As we see in the following lemma, any concentrated comb with finite
sup-norm is automatically translation bounded.

\begin{lemma}\label{BLRS3}
  Let\/ $\omega$ be a concentrated weighted comb with\/ $\| \omega
  \|_\infty < \infty$. Then, $\omega$ is translation bounded.
\end{lemma}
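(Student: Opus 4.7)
The plan is to fix $\varepsilon = 1$ and use the concentration property to reduce the estimate of $\|\omega\|^{}_K$ to two pieces: the part on an FLC set $\vL$, which is controlled by $\|\omega\|^{}_\infty$ and the uniform point count, and the remainder, which by hypothesis has $K$-norm less than $1$.

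First I would fix $\varepsilon = 1$ and invoke the definition of concentrated to obtain a set $\vL \subset G$ of finite local complexity so that $\bigl\| \omega - \omega|^{}_{\!\vL} \bigr\|^{}_K < 1$. Next I would verify the standard fact that finite local complexity yields weak uniform discreteness: since $\vL - \vL$ is locally finite, the set $(\vL - \vL) \cap (K - K)$ is finite; then, for any $t \in G$, if $\vL \cap (t+K) \neq \varnothing$, picking $y$ in this intersection gives $\vL \cap (t+K) \subset y + \bigl( (\vL - \vL) \cap (K - K)\bigr)$, so there exists a constant $N$, independent of $t$, with $\#(\vL \cap (t+K)) \leq N$ for all $t \in G$.

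With this bound in hand, the estimate on the restricted measure is immediate: for every $t \in G$,
\[
\bigl| \omega|^{}_{\!\vL} \bigr| (t + K) \,=\!\! \sum_{x \in \vL \cap (t+K)} \!\! \bigl| \omega(\{x\}) \bigr| \,\leq\, N \ts \| \omega \|^{}_\infty \ts .
\]
Combining this with the triangle inequality for the total variation norm,
\[
\left| \omega \right| (t+K) \,\leq\, \bigl| \omega - \omega|^{}_{\!\vL} \bigr| (t+K) + \bigl| \omega|^{}_{\!\vL} \bigr|(t+K) \,\leq\, 1 + N \ts \| \omega \|^{}_\infty \ts ,
\]
and taking the supremum over $t \in G$ yields $\| \omega \|^{}_K \leq 1 + N \| \omega \|^{}_\infty < \infty$, so $\omega$ is translation bounded.

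There is no genuine obstacle in this argument. The only point worth flagging is the implication FLC $\Longrightarrow$ weakly uniformly discrete (with a uniform multiplicity bound depending only on $K$), which is the mechanism that converts the sup-norm control $\| \omega \|^{}_\infty < \infty$ into the translation-bounded control $\| \omega \|^{}_K < \infty$ on the restricted piece $\omega|^{}_{\!\vL}$.
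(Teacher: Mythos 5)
Your proof is correct and follows essentially the same route as the paper: decompose $\omega$ as $\omega|^{}_{\!\vL}$ plus a remainder of $K$-norm less than $1$, bound the restricted piece using $\|\omega\|^{}_\infty$ and the uniform discreteness of $\vL$, and conclude by the triangle inequality. You merely spell out the step the paper leaves implicit, namely that finite local complexity gives a uniform bound $N$ on $\#(\vL \cap (t+K))$.
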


\begin{proof}
  Since $\omega$ is concentrated, there exists a $\vL$ with finite
  local complexity so that $\bigl\|\ts \omega -\omega|^{}_{\vL}
  \bigr\|^{}_{K} < 1$.  Since $\vL$ is uniformly discrete, and $\|
  \omega \|_\infty < \infty$, we also have
\[
    \bigl\|\,\omega|^{}_{\vL}\ts \bigr\|^{}_{K} \,< \, \infty \ts .
\]
Therefore,
\[
   \bigl\| \ts\omega  \bigr\|^{}_{K} \,< \,
   \bigl\| \, \omega -\omega|^{}_{\vL} \ts\bigr\|^{}_{K}  +
   \bigl\| \,\omega|^{}_{\vL}\ts \bigr\|^{}_{K}
   \,< \, \infty \ts ,
\]
which completes the proof.
\end{proof}

If we have a measure with weakly uniformly discrete support, we saw
that norm- and sup-almost periodicity are equivalent. If a measure is
concentrated, the support of `most of the measure' has finite local
complexity, a stronger requirement than weak uniform discreteness. We
will see in the next lemma that, under the concentrated assumption,
norm- and sup-almost periodicity are still equivalent.

\begin{lemma}\label{BLRS4}
  Let\/ $\omega$ be a concentrated weighted comb with\/ $\| \omega
  \|_\infty < \infty$. Then, the following statements are equivalent.
  \begin{itemize}
  \item[(i)] $\omega$ is norm-almost periodic.
  \item[(ii)] $\omega$ is sup-almost periodic.
\end{itemize}
\end{lemma}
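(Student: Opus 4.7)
The easy implication (i) $\Longrightarrow$ (ii) is essentially immediate from results already in the paper. By Lemma~\ref{BLRS3}, the hypothesis that $\omega$ is concentrated with $\|\omega\|_\infty<\infty$ forces $\omega$ to be translation bounded. Since Lemma~\ref{MSNAP}(i) already tells us that, for a translation bounded measure, norm-almost periodicity implies sup-almost periodicity (via the trivial inequality $\|\cdot\|_\infty\leq\|\cdot\|_K$ of Lemma~\ref{LMSNAP}), this direction needs no new ideas.

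The substantive direction is (ii) $\Longrightarrow$ (i). The plan is to funnel the problem through the cut and project picture rather than attempt a direct triangle-inequality splitting $\omega=\omega|_\vL+(\omega-\omega|_\vL)$: the naive splitting estimate $\|T_t\omega-\omega\|_K\leq 2N\|T_t\omega-\omega\|_\infty+(4N+2)\|\omega-\omega|_\vL\|_K$ is available, but one has no a priori control on how the FLC constant $N$ of an approximating set $\vL$ depends on the concentration tolerance, which is exactly the obstacle that a direct argument runs into. I would therefore bypass this difficulty by invoking Theorem~\ref{T1}: since $\omega$ is translation bounded, discrete and sup-almost periodic, that theorem furnishes a cut and project scheme $(G\times H,\cL)$ together with a uniformly continuous $g\in C_{0}(H)$ such that $\omega=\omega_{g}$.

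Having realized $\omega$ as $\omega_{g}$ for a uniformly continuous $g$, I would then call on Lemma~\ref{BLRS2}, applied in this particular scheme, to translate the ambient concentration of $\omega$ (in the sense of the new Definition, via FLC subsets of $G$) into concentration of $g$ in the cut-and-project sense (via compact sets $W\subset H$ and the truncation $g_{W}=g\cdot 1^{}_W$). This is precisely the hypothesis of Theorem~\ref{BLRS1}, whose conclusion is that $\omega_{g}$ is norm-almost periodic. Chaining these three results therefore yields (i).

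The only delicate point, and the main thing I would check carefully, is the interplay between the two cut and project schemes lurking in the background: the one produced by Theorem~\ref{T1} from the sup-almost periods of $\omega$, and the (a priori different) scheme implicit in the concentration hypothesis. What rescues the argument is that Lemma~\ref{BLRS2} is stated for an arbitrary cut and project scheme and an arbitrary $g\in C_{\mathsf u}(H)$, so one may apply it in the specific scheme delivered by Theorem~\ref{T1}. Aside from this bookkeeping, the proof is a short chain of three citations, and the hard analytic content has already been done in the earlier sections.
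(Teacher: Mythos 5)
Your proof is correct, but it takes a genuinely different route from the paper's for the substantive direction $\text{(ii)}\Longrightarrow\text{(i)}$ (the easy direction is handled the same way in both). The paper does exactly the ``naive splitting'' you set aside: fix $\varepsilon>0$, use concentration to choose a single FLC set $\vL$ with $\bigl\|\omega-\omega|^{}_{\vL}\bigr\|^{}_K<\varepsilon/3$, let $N$ be the resulting bound on $\#\bigl[\vL\cap(x+K)\bigr]$, and only \emph{then} invoke sup-almost periodicity at tolerance $\varepsilon/(3N)$. The obstacle you describe --- no a priori control of $N$ in terms of the concentration tolerance --- does not arise, because the quantifiers are nested in that order: $N$ is allowed to depend on $\varepsilon$, and sup-almost periodicity supplies a relatively dense set of $\delta$-almost periods for \emph{every} $\delta>0$, in particular for $\delta=\varepsilon/(3N)$. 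Your alternative chain (Lemma~\ref{BLRS3} for translation boundedness, Theorem~\ref{T1} to realise $\omega=\omega^{}_{g}$ with $g\in C_0(H)$ uniformly continuous, Lemma~\ref{BLRS2} to transfer concentration of the comb to concentration of $g$ in that particular scheme, and finally Theorem~\ref{BLRS1}) is a valid sequence of results proved or quoted in the paper, and your worry about which cut and project scheme to use is correctly resolved, since Lemma~\ref{BLRS2} applies to an arbitrary scheme. What you lose is economy and self-containedness: Theorem~\ref{BLRS1} is an external citation whose proof is, as the paper itself notes, precisely the $\varepsilon/3$-argument reproduced in the proof of Lemma~\ref{BLRS4}, so your route defers the entire analytic content to a black box while additionally invoking the full Baake--Moody construction behind Theorem~\ref{T1} --- far more machinery than the statement needs. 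What it buys is the extra structural information that $\omega$ is a continuous weighted model comb, which the direct argument does not produce.
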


\begin{proof}
The implication $\text{(i)} \Longrightarrow \text{(ii)}$ is clear.

For $\text{(ii)} \Longrightarrow \text{(i)}$, we repeat the argument
from the proof of Theorem~\ref{BLRS1} in \cite{S-BLRS}.  Let
$\varepsilon >0$. Then, there exists a $\vL$ with finite local
complexity so that
\[
   \bigl\|\, \omega -\omega|^{}_{\vL}\ts \bigr\|^{}_{K}
  \, <\,  \myfrac{\varepsilon}{3} \ts .
\]
Since $\vL$ is uniformly discrete, there exists an $N>0$ so
that,
\[
   \sup_{x \in G} \# \bigl[\vL \cap (x+K)\bigr] \,\leq\, N \ts .
\]
Since $\omega$ is sup-almost periodic, the set
\[
   R_{\frac{\varepsilon}{3N}} \,:=\,  \Bigl\{ t \in G \;\Big|\; \sup_{x \in G}
  \bigl| \omega( \{ -t+x \}) -\omega(\{ x \}) \bigr| <
   \myfrac{\varepsilon}{3N} \Bigr\}
\]
is relatively dense. Let $t \in R_{\frac{\varepsilon}{3N}}$. Then,
\[
  \begin{split}
  \bigl\| \omega - T_t \omega \bigr\|^{}_{K} \,
   &\leq\, \bigl\| \omega - \omega^{}_{\!\vL} \bigr\|^{}_{K} +
      \bigl\| \omega^{}_{\!\vL} - (T_t \omega)^{}_{\!\vL} \bigr\|^{}_{K}  +
     \bigl\| T_t \omega - (T_t\omega)^{}_{\!\vL} \bigr\|^{}_{K} \\
   & <\, \myfrac{\varepsilon}{3} +
     \bigl\| \omega^{}_{\!\vL} - (T_t \omega)^{}_{\!\vL} \bigr\|^{}_{K} +
     \myfrac{\varepsilon}{3} \ts .
\end{split}
\]
Moreover,
\[
  \begin{split}
   \bigl\| \omega^{}_{\!\vL} - (T_t \omega)^{}_{\!\vL} \bigr\|^{}_{K} \,
   &=\, \sup_{x \in G}
    \sum_{y \in \vL \cap (x+K)} \bigl| \omega(y) -\omega(y-t)\bigr| \\
   &\leq\, \myfrac{\varepsilon}{3N}\,
      \sup_{x \in G} \# \bigl[ (x+K) \cap \vL \bigr]
    \, <\,  \myfrac{\varepsilon}{3} \ts .
\end{split}
\]
Combining these, we get $\| \omega - T_t \omega \|_K < \varepsilon$ for
all $t \in R_{\frac{\varepsilon}{3}}$.
\end{proof}

We conclude this section by characterising the concentrated
norm-almost periodic weighted combs.

\begin{theorem}
  Let\/ $\omega := \sum_{x \in \vG} \omega(x)\ts \delta_x$ be so that
  $\| \omega\|_\infty < \infty$.  Then, the following statements are
  equivalent.
  \begin{itemize}
  \item[(i)] $\omega$ is concentrated and norm-almost
    periodic.
  \item[(ii)] There exists a cut and project scheme\/ $(G \times H,
    \cL)$ and a uniformly continuous function\/ $g \in
    C_{\mathsf{u}}(H)$, which is concentrated with respect to\/ $(G \times H,
    \cL)$, so that $\omega = \omega^{}_{g}$.
\end{itemize}
\end{theorem}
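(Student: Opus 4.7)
The plan is to assemble both directions from the lemmas already proved in this section together with Theorem~\ref{T1}; no essentially new ingredient should be needed. The direction $\text{(ii)} \Longrightarrow \text{(i)}$ should be essentially immediate: if $g \in C_{\mathsf{u}}(H)$ is concentrated with respect to $(G \times H, \cL)$, then Theorem~\ref{BLRS1} yields that $\omega_g$ is norm-almost periodic, while Lemma~\ref{BLRS2} translates the concentration of $g$ as a function into the concentration of $\omega = \omega_g$ as a weighted comb.

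For $\text{(i)} \Longrightarrow \text{(ii)}$, I would first invoke Lemma~\ref{BLRS3} to conclude that $\omega$ is translation bounded, and then Lemma~\ref{MSNAP}(i) to deduce that $\omega$ is sup-almost periodic. The next step is to verify that $\omega$ is regular, so that Theorem~\ref{T1} can be applied. Given $\varepsilon > 0$, the concentration property provides a set $\vL$ of finite local complexity with $\| \omega - \omega|_{\vL} \|^{}_{K} < \varepsilon$; then for any $x \notin \vL$ and any $k \in K^{\circ}$, the inclusion $x \in x-k+K$ yields
\[
  |\omega(\{x\})| \,=\, \bigl| \omega - \omega|_{\vL} \bigr|(\{x\})
  \,\leq\, \bigl| \omega - \omega|_{\vL} \bigr|(x-k+K)
  \,<\, \varepsilon \ts ,
\]
so $\{ x \mid |\omega(\{x\})| \geq \varepsilon\}$ is contained in $\vL$ and hence locally finite. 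This proves that $\omega$ is regular, and Theorem~\ref{T1} then produces a cut and project scheme $(G \times H, \cL)$ together with a uniformly continuous $g \in C_{0}(H) \subset C_{\mathsf{u}}(H)$ satisfying $\omega = \omega_g$. A final application of Lemma~\ref{BLRS2} upgrades the concentration of the comb $\omega_g$ to the concentration of the function $g$ with respect to that scheme.

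The main obstacle should be the regularity verification: without the pointwise bound on $|\omega(\{x\})|$ extracted from the $\|\cdot\|^{}_{K}$-control, Theorem~\ref{T1} cannot be invoked. Once this bound is in hand, the entire proof reduces to a routine chaining of previously established results.
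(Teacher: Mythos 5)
Your proof is correct and follows essentially the same route as the paper: both directions are assembled from Theorem~\ref{BLRS1}, Lemma~\ref{BLRS2} and Theorem~\ref{T1}, with the paper passing through Lemma~\ref{BLRS4} where you use Lemma~\ref{BLRS3} plus Lemma~\ref{MSNAP}(i). Your extra superlevel-set argument is harmless but redundant (and by itself would not establish that $\omega$ is a measure): once Lemma~\ref{BLRS3} gives translation boundedness, regularity already follows, exactly as the paper uses implicitly.
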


\begin{proof}
  The implication $\text{(ii)} \Longrightarrow \text{(i)}$ follows
  immediately from Theorem~\ref{BLRS1} and Lemma~\ref{BLRS2}.

  Now consider $\text{(i)} \Longrightarrow \text{(ii)}$. Since
  $\omega$ is norm-almost periodic, it is sup-almost periodic by
  Lemma~\ref{BLRS4}. Thus, by Theorem~\ref{T1}, there exists a cut and
  project scheme $(G \times H, \cL)$ and a uniformly
  continuous function $g \in C_0(H)$ so that $\omega=\omega_{g}$.

  Since $\omega$ is concentrated, it follows from Lemma~\ref{BLRS2}
  that $g$ is concentrated.
\end{proof}

\section[Generic regular model sets]{A characterisation
of generic regular inter model sets}\label{char rms}

In this section, we try to characterise the generic regular inter model sets
as limits of continuous weighted model combs.

\begin{definition}
  Let $\omega$ and $\nu$ be two weighted Dirac combs.  We say that
  $\omega \preccurlyeq \nu$ if, for each $\varepsilon >0$, there
  exists an $\varepsilon' >0$ so that
\[
   P^{\infty}_{\!\varepsilon'}(\omega) \,\subset\,
   P^{\infty}_{\!\varepsilon}(\nu) \ts .
\]
\end{definition}

The meaning of this definition is that the support function of
$\omega$ is uniformly continuous in the uniformity defined by $\nu$.

\begin{lemma}\cite{S-LR}\label{LRLemma}
  Let\/ $(G \times H, \cL)$ be a cut and project scheme, and
  let\/ $\{ A_n \}$ be any van Hove sequence. Then, there exists a
  Haar measure on\/ $H$ so that, for all\/ $g \in C_{\mathsf{c}}(H)$,
  the following limit exists
\begin{equation}\label{EQ55x}
   M(\omega^{}_{g}) \,:=\,
   \lim_n \frac{\omega^{}_{g}(x+A_n)}{\theta_G(A_n)} \ts ,
\end{equation}
   and satisfies
\[
    M(\omega^{}_{g}) \, =\,  \int_{H} g \dd \theta_{\nts H} \ts .
\]
\end{lemma}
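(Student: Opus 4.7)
The plan is to reduce the claim to the classical uniform density theorem for regular model sets in a $\sigma$-compact LCAG (due to Schlottmann) and then extend from indicator functions to all of $C_{\mathsf{c}}(H)$ by a Darboux-type sandwich. Specifically, I would first invoke the following consequence of the van Hove property: for any precompact set $W \subset H$ with $\theta_H(\partial W) = 0$,
\[
  \lim_n \frac{\#\bigl(\oplam(W) \cap (x + A_n)\bigr)}{\theta_G(A_n)} \,=\, c \cdot \theta_H(W) \ts ,
\]
uniformly in $x \in G$, where $c > 0$ depends only on the covolume of $\cL$ and the chosen Haar measure on $H$. I would then fix the Haar measure in the statement by rescaling so that $c = 1$; testing on a single such $W$ shows this is the unique normalisation for which the lemma can hold, so the choice is forced.

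Next, for $g \in C_{\mathsf{c}}(H)$ I would split $g = (g^{}_1 - g^{}_2) + \ii (g^{}_3 - g^{}_4)$ with each $g^{}_j \geq 0$ in $C_{\mathsf{c}}(H)$, reducing by $\CC$-linearity of $g \mapsto \omega^{}_g$ to the case $g \geq 0$. Fixing $\varepsilon > 0$, the uniform continuity of $g$ on its compact support allows me to partition a compact set containing $\supp(g)$ into finitely many Jordan-measurable pieces $W^{}_1,\dots,W^{}_N$ (each with $\theta_H(\partial W^{}_i) = 0$) and to construct step functions $g^{-}_{\varepsilon} := \sum_i \alpha^{-}_i\ts 1^{}_{W^{}_i}$ and $g^{+}_{\varepsilon} := \sum_i \alpha^{+}_i\ts 1^{}_{W^{}_i}$ satisfying $g^{-}_{\varepsilon} \leq g \leq g^{+}_{\varepsilon}$ and $\alpha^{+}_i - \alpha^{-}_i < \varepsilon$. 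The pointwise inequality transfers to the measure inequality $\omega^{}_{g^{-}_\varepsilon} \leq \omega^{}_g \leq \omega^{}_{g^{+}_\varepsilon}$; evaluating on $x + A_n$, dividing by $\theta_G(A_n)$, and applying the density formula above to each indicator then yields
\[
  \int_H g^{-}_{\varepsilon} \dd \theta_H \,\leq\, \liminf_n \frac{\omega^{}_g(x+A_n)}{\theta_G(A_n)} \,\leq\, \limsup_n \frac{\omega^{}_g(x+A_n)}{\theta_G(A_n)} \,\leq\, \int_H g^{+}_{\varepsilon} \dd \theta_H \ts .
\]
Since $\int_H(g^{+}_\varepsilon - g^{-}_\varepsilon)\dd \theta_H \to 0$ as $\varepsilon \to 0$, both bounds converge to $\int_H g \dd \theta_H$, delivering both existence of the limit and its value (and its independence from $x$).

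The main obstacle is the first input, the uniform density formula for regular model sets in a general $\sigma$-compact LCAG. The uniformity in $x$ is precisely what the van Hove condition provides, via the vanishing of the relative volume of the $K$-boundary $\partial^K(A_n)$; without this uniformity, one would only recover the statement along a specific averaging rather than for every translate. Once this is granted, the remainder is a routine Darboux sandwich, the only mild technical subtlety being the construction of a partition into Jordan-measurable pieces in an arbitrary LCAG, which can be done using regularity of the Haar measure on $H$ to build precompact neighbourhoods whose boundaries have $\theta_H$-measure zero.
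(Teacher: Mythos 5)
Your argument is correct, but it is a genuinely different route from the one the paper takes. You reduce everything to the uniform distribution theorem for regular model sets in a $\sigma$-compact LCAG (Schlottmann/Moody, \cite{S-Martin1,S-MO}) and then run a Darboux sandwich with step functions over a partition into pieces with $\theta_H$-null boundary; the only technical points left open --- constructing such a partition in a non-metrisable $H$ (use level sets $\{f\geq t\}$ of some $f\in C_{\mathsf{c}}(H)$, all but countably many of which have null boundary), and extending the density formula from compact regular windows to the precompact difference sets $W_i$ (sandwich each between $W_i^\circ$ and $\overline{W_i}$) --- are routine. The paper instead argues functional-analytically: for $g\in C_{\mathsf{c}}(H)$ the comb $\omega_g$ is strongly almost periodic (Theorem~\ref{LR1}), hence amenable, so the uniform limit $M(\omega_g)$ exists for free; then $g\mapsto M(\omega_g)$ is a positive continuous linear functional on $C_{\mathsf{c}}(H)$, Riesz representation produces a measure $m$ on $H$, and invariance of $M$ under translations by the dense subgroup $\pi_2(\cL)$ forces $m$ to be a Haar measure. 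The paper's route thus \emph{produces} the correctly normalised Haar measure as output and never needs the uniform density theorem as input, whereas your route front-loads all the analytic work into that cited theorem and must fix the normalisation by hand. One structural remark: within this paper the density bounds for model sets (Proposition~\ref{sudinms}) are \emph{derived from} this lemma, so your proof reverses that dependency; this is not a circularity, since the uniform distribution theorem is an independent external result, but it does mean your argument could not be substituted into the paper without reorganising Section~12.
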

\begin{proof} This result is already proven in \cite{S-LR}. We include here a different proof, based on the theory of almost periodicity and means.

For every $g \in C_{\mathsf{c}}(H)$ the measure $\omega_g$ is strongly almost periodic, hence amenable \cite{S-ARMA,S-MoSt}. Therefore the limit
\[
   \lim_n \frac{\omega^{}_{g}(x+A_n)}{\theta_G(A_n)} \ts .
\]
exists and is equal to the mean  $M(\omega^{}_{g})$.

Now we define a function $L : C_{\mathsf{c}}(H) \to \CC$ by
\[
L(g) \, := \, M(\omega_g) \ts .
\]
It is obvious that $L$ is a linear function, we now prove that it is continuous in the inductive topology of $C_{\mathsf{c}}(H)$:

Let $K \subset H$ be any compact set, and let $K \subset W$ be any regular window. Let $g \in C_{\mathsf{c}}(H)$ with $\supp(g) \subset K$. Then
\[
  \begin{split}
   \bigl| L(g)  \bigr| \,
   &=\, \biggl| \lim_n \dfrac{\omega^{}_{g}(A_n)}{\theta_G(A_n)}  \biggr| \,=\, \biggl| \lim_n \dfrac{\!\sum_{ x \in \oplam(K) \cap A_n }\! g(x^*)}{\theta_G(A_n)} \biggr| \\
    &\leq\,  \limsup_n \dfrac{\sum_{ x \in \oplam(K) \cap A_n } \bigl|\, g(x^*)\bigr| }{\theta_G(A_n)} \\
    &\leq\,  \limsup_n \dfrac{\sum_{ x \in \oplam(K) \cap A_n } \bigl\| g\bigr\|_\infty }{\theta_G(A_n)} \\
     &\leq\,  \limsup_n \| g \|_\infty \dfrac{\# \bigl[  \oplam(K) \cap A_n \bigr] }{\theta_G(A_n)}   \\
     &\leq\,  \limsup_n \| g \|_\infty \dfrac{\# \bigl[  \oplam(W) \cap A_n \bigr] }{\theta_G(A_n)} \, \leq\,   C_K \| g \|_\infty  \ts ,
\end{split}
\]
where $C_K$ denotes the density of the regular model set $\oplam(W)$, and thus does not depend on the choice of $g$.

This shows that $L$ is a continuous linear functional on $C_{\mathsf{c}}(H)$. Therefore, by Riesz representation Theorem, there exists a measure $m$ on $H$ so that
\[
L(g) \,=\, \int_H g d m \, \forall g \in C_{\mathsf{c}}(H) \ts .
\]
Moreover, for all $(x,y) \in \cL$ we have $\omega^{}_{T_{-y}g}=T_{-x} \omega^{}_{g}$. Hence
\[
  \begin{split}
  \int_H g d T_{y} m \,
   &=\,  \int_H T_{-y} g d  m \, = \, M(\omega^{}_{T_{-y} g}) \, = \,  M(T_{-x} \omega^{}_{g})  \\
    &=\, M(\omega^{}_{g}) \,=\,  \int_H g d m
\end{split}
\]
This shows that $T_{y} m=m$ for all $y \in \pi^{}_2(\cL)$. Therefore, $m$ is invariant under translates by a dense subset of $H$, therefore a Haar measure.

This completes the proof.
\end{proof}

In the next theorem, we will need to make an extra assumption on the
cut and project scheme, namely that the internal group is metrisable.

In general, this happens. Indeed, if $\vL$ is a Delone set with contained between two model sets $\oplam(W^\circ) \subset \vL \subset \oplam(W)$, with
$W=\overline{W^\circ}$ and $\theta_H(\partial W)=0$, then the
canonical map $\beta\!:\, \XX(\vL) \longrightarrow \AAA(\vL)$ is
continuous and one-to-one almost everywhere \cite{S-BLM}. But then, by
\cite[Thm.~6]{S-BLM}, $\vL$ agrees with a regular model set up to a
set of density zero. Moreover, the cut and project scheme constructed in the proof of \cite[Thm.~6]{S-BLM} has metrisable internal group $H$.

This shows that any inter model set agrees with a regular model set
with metrisable internal group up to a set of density zero.

\begin{theorem}
  Let\/ $\vL \subset G$ be a Delone set. Then, the following
  statements are equivalent.
  \begin{itemize}
  \item[(i)]$\vL$ is a regular inter model set with metrisable
    internal group.
  \item[(ii)] There exist nets sequences
\[
    0 \,\leq\, \omega^{}_{1} \,\leq\, \omega^{}_{2}
    \,\leq\, \dots \,\leq\, \omega^{}_{n} \,\leq\,
    \dots \,\leq\, \delta^{}_{\!\vL}\,  \leq\, \dots \,
    \leq\, \nu^{}_{n}\, \leq\,\dots \,\leq\, \nu^{}_{1}
\]
   with the following properties:
     \begin{itemize}
     \item[(a)] All\/ $\omega_n$ and\/ $\nu_n$ are strongly almost
       periodic and have Meyer set support;
     \item[(b)] $\lim_n M(\nu^{}_{n} - \omega^{}_{n} ) = 0$;
     \item[(c)] For all\/ $n$ we have\/ $\nu^{}_{n} \preccurlyeq
       \omega^{}_{1}$ and $\omega^{}_{n} \preccurlyeq \omega^{}_{1}$.
\end{itemize}
\end{itemize}
\end{theorem}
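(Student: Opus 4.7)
The equivalence is established by passing between a cut and project structure and monotone sandwich sequences. For (i)$\Rightarrow$(ii), suppose $\vL$ is a regular inter model set: there is a cut and project scheme $(G\times H,\cL)$ with $H$ metrisable and a compact window $W\subset H$ satisfying $W=\overline{W^\circ}$ and $\theta_H(\partial W)=0$, with $\oplam(W^\circ)\subset\vL\subset\oplam(W)$. Using Urysohn's lemma in the metrisable LCAG $H$, pick $f_n\in C_{\mathsf{c}}(H)$ with $0\leq f_n\uparrow\mathbf{1}_{W^\circ}$ pointwise and $h_n\in C_{\mathsf{c}}(H)$ with $h_n\downarrow\mathbf{1}_W$ pointwise. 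Set $\omega_n:=\omega_{f_n}$ and $\nu_n:=\omega_{h_n}$. Theorem~\ref{T2} yields strong almost periodicity with Meyer support; the inequalities $f_n(x^\star)\leq\mathbf{1}_{W^\circ}(x^\star)\leq\mathbf{1}_\vL(x)\leq\mathbf{1}_W(x^\star)\leq h_n(x^\star)$ for $x\in\pi_1(\cL)$ give $\omega_n\leq\delta^{}_{\!\vL}\leq\nu_n$. Lemma~\ref{LRLemma} with dominated convergence yields $M(\nu_n-\omega_n)=\int_H(h_n-f_n)\,d\theta_H\to\theta_H(\partial W)=0$. Since all weighted combs arise from the same scheme, a small neighbourhood $V\subset H$ of $0$ simultaneously controls the uniform modulus of continuity of every $f_n,h_n$ on $H$; this gives $\omega_n,\nu_n\preccurlyeq\omega_1$.

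For (ii)$\Rightarrow$(i), apply Theorem~\ref{T2} to the strongly almost periodic Meyer-supported measure $\omega_1$ to obtain a cut and project scheme $(G\times H,\cL)$ and $g_1\in C_{\mathsf{c}}(H)$ with $\omega_1=\omega_{g_1}$; Remark~\ref{R1} and the countable basis $\{P^\infty_{1/n}(\omega_1)\}$ show that $H$ is metrisable. The hypothesis $\omega_n\preccurlyeq\omega_1$ says exactly that the support function of $\omega_n$ is uniformly continuous in the uniformity of $L$ used to construct $H$; Lemma~\ref{capschemelemma} therefore produces $f_n\in C_{\mathsf{u}}(H)$ with $\omega_n=\omega_{f_n}$, and Theorem~\ref{T2} applied to each $\omega_n$ upgrades this to $f_n\in C_{\mathsf{c}}(H)$. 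An analogous construction yields $h_n\in C_{\mathsf{c}}(H)$ with $\nu_n=\omega_{h_n}$. Density of $\pi_2(\cL)$ in $H$ and continuity propagate the pointwise inequalities $0\leq f_n\leq f_{n+1}\leq h_{n+1}\leq h_n$ from $\pi_2(\cL)$ to all of $H$. Set $f:=\sup_n f_n$ (lower semicontinuous) and $h:=\inf_n h_n$ (upper semicontinuous); both are compactly supported in $\supp(h_1)$. By Lemma~\ref{LRLemma} and monotone convergence, $\int_H(h-f)\,d\theta_H=\lim_n M(\nu_n-\omega_n)=0$, so $f=h$ $\theta_H$-almost everywhere.

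The window is $W:=\overline{\{f>0\}}$. Since $\{f>0\}$ and $\{h>0\}$ are two open sets whose symmetric difference is $\theta_H$-null, they share a common closure $W$, and $W=\overline{W^\circ}$ is automatic. For $x\in\vL$, $h_n(x^\star)\geq 1$ for all $n$, hence $h(x^\star)\geq 1>0$ and $x^\star\in W$, giving $\vL\subset\oplam(W)$. For $x\in L$ with $x^\star\in\{f>0\}$ some $f_n(x^\star)>0$, whence $\omega_n(\{x\})>0$ and $x\in\vL$, giving $\oplam(\{f>0\})\subset\vL$. The regularity $\theta_H(\partial W)=0$ reduces to $\theta_H(W\setminus\{f>0\})=0$, which follows from the a.e.\ equality $f=h$: any open set contained in $W$ and disjoint from $\{f>0\}$ would lie in $\{h=0\}\setminus\{f>0\}$, but $\{h>0\}$ is open and dense in $W$ while differing from $\{f>0\}$ only by a null set, so such an open set must be empty.

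The main obstacle is the gap between $\oplam(\{f>0\})\subset\vL$ and the desired $\oplam(W^\circ)\subset\vL$, since in principle $W^\circ$ can be strictly larger than $\{f>0\}$ on a nowhere-dense subset $W^\circ\setminus\{f>0\}$. Handling this requires showing this residual set is $\theta_H$-null (using again $f=h$ a.e.) and then absorbing the finitely many lattice points in $\oplam(W^\circ\setminus\{f>0\})\subset\oplam(\partial\{f>0\})$ via an enlargement of $W$ by finitely many extra boundary points that does not alter regularity. Metrisability of $H$ is essential for this step, providing the second countability needed to ensure $\{f>0\}$, $\{h>0\}$ and their boundaries are measurable and controllable via the sequence $(f_n,h_n)$ built from the countable uniformity basis $\{P^\infty_{1/n}(\omega_1)\}$.
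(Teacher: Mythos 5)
Your overall strategy is the same as the paper's (sandwich $\delta_{\vL}$ between continuous weighted model combs, detect $\theta_H(\partial W)=0$ via the mean, and use $\preccurlyeq$ to pull everything back into one internal group), but there is a genuine gap in your verification of property (c) in the direction (i)$\Rightarrow$(ii). What (c) requires is that every sufficiently good sup-almost period of $\omega_1$ be an $\varepsilon$-sup-almost period of $\nu_n$ and $\omega_n$; concretely, that $\|T_{t^\star}f_1-f_1\|_\infty$ being small forces $t^\star$ into a prescribed neighbourhood of $0\in H$, after which uniform continuity of the other $f_n,h_n$ takes over. Your equicontinuity remark runs in the opposite direction: it shows that $t^\star$ close to $0$ is an almost period of every comb, which says nothing about where the elements of $P^{\infty}_{\varepsilon'}(\omega_1)$ live. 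For a generic Urysohn function $f_1$ the needed implication is simply false: if, say, $H=\RR\times K$ with $K$ compact and the window is $K$-invariant, one may choose $f_1$ constant in the $K$-direction, and then every point of $\{0\}\times K$ is an exact period of $f_1$ while being far from $0$; lattice points $t$ with $t^\star$ near such a point lie in $P^{\infty}_{\varepsilon'}(\omega_1)$ for every $\varepsilon'$ but need not be almost periods of $\nu_n$. This is precisely why the paper does not take an arbitrary Urysohn family but builds $f_1$ explicitly from the metric $d$ (essentially $f_1(y)=\bigl(r-\min\{d(y,x^\star),r\}\bigr)/r$), so that the coarsest uniformity making $f_1$ uniformly continuous coincides with that of $H$; property (c) is then read off from the implication $\|T_tf_1-f_1\|_\infty<\varepsilon'\Rightarrow d(t,0)<\delta$. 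Your proof needs such a choice (or an equivalent argument that your $f_1$ separates the points of $H$ uniformly); as written, the step fails.

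There is a second, more easily repaired, defect in (ii)$\Rightarrow$(i). You take the cut and project scheme furnished by Theorem~\ref{T2} applied to $\omega_1$; that construction uses $L=\langle\supp(\omega_1)\rangle$. Since $\nu_n\geq\delta_{\vL}\geq\omega_1$, the supports of the $\nu_n$ (and possibly of the $\omega_n$) need not be contained in this $L$, so Lemma~\ref{capschemelemma} cannot be applied to lift their support functions to $H$. The paper instead runs the Baake--Moody construction on the pair $\bigl(\{P^{\infty}_{\varepsilon}(\omega_1)\},G\bigr)$, i.e.\ with $L=G$, exactly so that every $\omega_n$ and $\nu_n$ becomes liftable; you should do the same. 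Your closing discussion of the discrepancy between $\oplam(\{f>0\})$ and $\oplam(W^\circ)$ is a legitimate concern, but the paper's own final step passes over the same point, so I do not count it as a defect of your proposal specifically.
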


\begin{proof}
  We start by proving the implication $\text{(i)} \Longrightarrow
  \text{(ii)}$.  Let $( G \times H , \cL)$ be a cut and
  project scheme, let $d$ be the metric which defines the topology on
  $H$ and let $W$ be a compact set with $W^\circ \neq \varnothing$,
  $\theta_H(\partial W)=0$ and $\oplam(W ^\circ) \subset \vL \subset
  \oplam(W)$.

  We start by constructing two sequences $f_n$ and $g_n$ of functions in $C_{\mathsf{c}}(H)$ such that $f_n$ is increasing to $1_{W^\circ}$, $g_n$ is decreasing to $1_{W}$ and $\int_{H} g_n -f_n < \frac{2}{n}$ for all $n \geq 2$. Moreover, we pick this sequence in such a way that all $f_n$ and $g_n$ are uniformly continuous in the weakest topology which makes $f_1$ continuous. We do this by making sure that the weakest topology which makes $f_1$ continuous coincides with the topology of $H$.

  By the regularity of the Haar measure, for each $n$ we can
  find an open set $V_n$ and a compact set $W_n$ so that $W_n \subset
  W \subset V_n$, $\theta_H( W^\circ \backslash W_n) < \frac{1}{n}$
  and $\theta_H( V_n \backslash W) < \frac{1}{n}$.

  Define $K_n := \bigcup_{j=1}^n W_j$ and $U_n:= \bigcap_{j=1}^n V_j$. Then,
\[
    K_1 \,\subset\, K_2 \,\subset\, \dots \,\subset\, K_n
    \,\subset\, \dots \,\subset\, W^\circ \,\subset\, W
    \,\subset\, \dots \,\subset\, U_n \,\subset\, \dots \,\subset\, U_1 \ts ,
\]
and
\[
  \theta_H( W^\circ \backslash K_n) \, <\,  \tfrac{1}{n} \quad\text{and}\quad
  \theta_H( U_n \backslash W) \, <\, \tfrac{1}{n} \ts .
\]
Let $x^\star \in \pi^{}_{2}(\cL) \cap W^\circ$. Pick an open
neighbourhood $X$ of $x^\star$ so that $\overline{X} \subset W^\circ$,
and let $r >0$ be so that $B_r(x^\star) \subset X$.  Let $f\!:\, H
\longrightarrow \CC$ be defined by
\[
  f(y)\, :=\,  \min \{ d(y, x^\star) , r \} \ts .
\]
Then, $f(y) \equiv r$ outside $X$. Let
\[
   f^{}_{1}(y) \, :=\, \frac{r- f(y)}{r} \ts .
\]
It is easy to check that $f^{}_{1}$ has the following properties,
\[
   f^{}_{1}(x^\star) \, =\,  1\ts,\quad
   0 \,\leq\, f^{}_{1}(y) \,\leq 1\quad\text{and}\quad
    f^{}_{1} \,\equiv\, 0 \,\text{ outside }  X \ts .
\]
In particular, $\supp(f^{}_{1}) \subset W^\circ$. Let now $f'_n, g'_n \in
C_{\mathsf{c}}(H)$ be so that
\[
   1^{}_{K_n} \,\leq\, f'_n \,\leq\, 1^{}_{W^\circ}
   \,\leq\, 1^{}_{\overline{W}} \leq g'_n \,\leq\, 1^{}_{U_n} \ts .
\]
For each $n \geq 2$, we define
\[
  f^{}_n(x) \, = \,
  \max \{ f^{}_{1}(x), f'_2(x), f'_3(x), \dots , f'_n(x) \} \ts ,
\]
while for all $n \geq 1$ we define
\[
   g^{}_n(x) \, =\,  \min \{ g'_1(x), \dots , g'_n(x) \} \ts .
\]
Then $f^{}_n , g^{}_n \in C_{\mathsf{c}}(H)$ with
\[ f^{}_1 \, \leq \,  f^{}_2 \, \leq \,
\dots \,\leq\, f^{}_n \,\leq\, \dots \,\leq\, 1^{}_{W^\circ}  \,\leq\, 1^{}_{W}  \, \leq \dots \,\leq \, g^{}_n \, \leq\, \dots \,\leq\, g^{}_{1} \ts , \]
and, for
all $n \geq 2$, we have
\[
    \int_H f_n \dd \theta_H \, \leq\,
     \theta_H( W^\circ) \, =\,
     \theta_H( \overline{W}) \, \leq\,  \int_H g_n \dd \theta_H
\]
and
\begin{equation}\label{EQ66}
    \int_H (g_n -f_n) \dd \theta_H \, <\,  \myfrac{2}{n} \ts .
\end{equation}
Let $\omega^{}_{n} = \omega^{}_{g^{}_n}$ and $\nu^{}_{n} =
\omega^{}_{f^{}_n}$ be the weighted continuous model combs defined by
these functions. Then property (a) is clear, while (b) follows
immediately from Lemma~\ref{LRLemma} and Eq.~\ref{EQ66}.

We now prove property (c). We show that, for all $h \in C_{\mathsf{c}}(H)$ and for
all $\varepsilon >0$, there exists an $\varepsilon' >0$ so that
\[
   \| T_tf_1-f_1 \|_\infty \,< \,
   \varepsilon' \;\Longrightarrow\;
   \| T_th-h\|_\infty \,<\, \varepsilon \ts.
\]
Then, (c) follows immediately from this and the density of
$\pi^{}_{2}(\cL)$ in $H$.

Let $U:= \{ t \in H \mid \| T_th - h \|_\infty < \varepsilon \}$. Then,
$U$ is an open neighbourhood of zero in $H$. Pick a $\delta$ so that
$B_\delta(0) \subset U$. Pick any $0< \varepsilon'< \min \{ 1,
\frac{\delta}{r} \} $.

We show that $\| T_tf_1-f_1\|_\infty < \varepsilon'$ implies $t \in U$ which proves our claim.

Let $t$ be so that  $\| T_tf_1-f_1\|_\infty < \varepsilon'$. Then we have
\[
  \| T_tf-f\|_\infty \,<\, r\varepsilon' \ts .
\]
Hence
\[
   \bigl| f(-t+x^\star) -f(x^\star) \bigr| \,<\, r\varepsilon' \ts .
\]
Since $f(x^\star)=0$, we have $f(-t+x^\star) \neq r$, and thus
\[
   f(-t+x^\star)\, =\,
    d(-t+x^\star, x^\star) \, <\,  r\varepsilon' \, <\,  \delta \ts .
\]
Therefore
\[
   d(t, 0) \, <\,  \delta \ts ,
\]
which shows that $t \in B_\delta(0) \subset U$. which completes the proof.\smallskip

We now proof that $\text{(ii)} \Longrightarrow \text{(i)}$.  The pair
$(\{ P_\varepsilon(\omega^{}_{1})\}^{}_{0 < \varepsilon <
  \|\omega_1\|_\infty}, G)$ satisfies (A1)--(A5), thus the Baake--Moody construction produces a cut
and project scheme.

Since $\nu^{}_{n} \preccurlyeq \omega^{}_{1}$ and $\omega^{}_{n} \preccurlyeq
\omega^{}_{1}$, by Lemma~\ref{capschemelemma} there exist functions
$f_n, g_n \in C_{\mathsf{u}}(H)$ so that $\omega^{}_n =
\omega^{}_{g^{}_n}$ and $\nu^{}_n=\omega^{}_{f^{}_n}$.  Moreover,
since $\supp(\omega_n)$ and $\supp(\nu_n)$ are Meyer sets, exactly
like in the proof of Theorem~\ref{T2} we get that $f_n, g_n \in
C_{\mathsf{c}}(H)$.

Let $U := \bigcup_n \{ x \in H \mid g_n(x) \neq 0
\}$ and let $W=\left(\overline{U}\right)^\circ$.

We prove that $\oplam(W^\circ) \subset \vL \subset \oplam(W)$ and
$\theta_H(\partial W)=0$.

\emph{Step 1:} $\oplam(U) \subset \vL$.

Let $x \in \oplam(U)$. Then, there
exists an $n$ so that $g_n(x^\star) \neq 0$, and thus $\omega_n(x)
\neq 0$.

Since $0 \leq  \omega_n \leq \delta^{}_{\!\vL}$, we have
$\delta^{}_{\!\vL}(x) \geq \omega_n(x) > 0$.  Hence, $x \in \vL$, and
thus $\vL \supset \oplam(U)$.

\emph{Step 2:} $U$ is an open precompact set.

By definition, $U$ is a union
of open sets, thus open in $H$. Also, it is precompact, since $U \subset
\supp(g^{}_{1})$ which is a compact set.

\emph{Step 3:} $g_n \leq 1^{}_{W} \leq  1^{}_{\overline{W}} \leq f_n$.

First let us note that since $U$ is open and $U \subset \overline{U}$ we have $U \subset W$. Moreover, by definition we have $W \subset \overline{U}$ and therefore $\overline{W} \subset \overline{U}$. Therefore, it suffices to show that $g_n \leq 1^{}_{U} \leq  1^{}_{\overline{U}} \leq f_n$.

Let us first show that $g_n \leq 1_U$. It is clear by the definition
of $U$ that $g_n =0$ outside $U$.  Suppose, by contradiction, that
$g_n \nleq 1^{}_{U}$, then there exists $x \in U$ so that $g_n(x)
>1$. By the continuity of $g_n$ and density of
$\pi^{}_{2}(\cL)$, there exists an $y^\star \in U \cap
\pi^{}_{2}(\cL)$ so that $g_n(y^\star)>1$. Then $\omega_n(y) >1$,
which contradicts $\omega_n < \delta^{}_{\!\vL}$.

We show now that $f_n \geq 1^{}_{\overline{U}}$.  Let $W_n := \{ y \in
H \mid f_n(y) \geq 1 \}$. Then $W_n$ is closed. Moreover, for all $m$ we have $\nu_n \geq
\delta^{}_{\!\vL} \geq \omega_m$, and therefore $W_n$ contains $U \cap
\pi^{}_{2}(\cL)$.

As $U$ is open, $U \cap
\pi^{}_{2}(\cL)$ is dense in $\overline{U}$, and thus
$\overline{U} \subset W_n$. Therefore $f \geq 1$ on $\overline{U}$. Our claim follows now from the fact that $f_n$ is non-negative.

\emph{Step 4:} $\theta_H(\partial(W))=0$.

Let us first observe that, by Step 3, we have $f_n -g_n \geq 1^{}_{\partial(W)}$. Then,
\[
\theta_H(\partial(W)) \, \leq \,  \int_H (f_n -g_n) \dd \theta_H \ts .
\]
Now, by Lemma~\ref{LRLemma}, there exists a constant $C$ given by the choice of the Haar measure on $H$ so that for all $n$ we have
\[
\int_H (f_n -g_n) \dd \theta_H
  \,=\, C M(\nu_n -\omega_n) \ts .
\]
Therefore
\[
  \theta_H(\partial(W)) \,\leq\,
   \lim_n \int_H (f_n -g_n) \dd \theta_H
  \,=\, C \lim_n M(\nu_n -\omega_n) \,=\, 0\ts .
\]

\emph{Step 5:} $\vL$ is a inter regular model set.

We know that
$\oplam(W) \subset \vL \subset \oplam(\overline{W})$ and that
$\theta_H(\partial(W))=0$.

Since $W$ is open, we also obtain $\overline{W} = \overline{W^\circ}$.
\end{proof}

\section{Uniform density in model sets}

In this section, we look at the lower and upper density for pointsets, and to their connection to uniform density. For model sets we get bounds for the  lower and upper density in terms of the Haar measure of the window, which implies the results of \cite{S-MO}. The techniques used are similar to the ones used in \cite[Thm.~1]{S-MO}.

\begin{definition}
  For a Delone set $\vL$ we define
\[
\begin{split}
    \underline{\dens}(\vL) & \, =\,\liminf_{n \to \infty} \inf_{x \in G}
   \frac{ \# [\vL \cap (x + A_n)]}{\theta_G(A_n)} \ts ,\\
   \overline{\dens}(\vL) & \,=\,\limsup_{n \to \infty} \sup_{x \in G}
   \frac{ \# [\vL \cap (x + A_n)]}{\theta_G(A_n)} \ts .
  \end{split}
\]
We call $\underline{\dens}$ and $\overline{\dens}$ the \emph{lower}
and \emph{upper density} of $\vL$; compare \cite[Sec.~2.1]{S-TAO}. The
set $\vL$ is said to have \emph{uniform density} if the limit
\[
  \dens(\vL)\, := \, \lim_{n \to \infty}
   \frac{ \# [\vL \cap (x + A_n)]}{\theta_G(A_n)}
\]
exists uniformly in $x \in G$.
\end{definition}

We start by showing that uniform density is equivalent to the upper and lower density being equal.

\begin{proposition}
  A Delone set $\vL$ has uniform density if and only if\/
  $\underline{\dens}(\vL) = \overline{\dens}(\vL)$.
\end{proposition}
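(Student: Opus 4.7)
The plan is to unwind the definitions and check that the two-sided bound needed for uniform convergence is furnished precisely by the equality of lower and upper density. The argument splits into two implications, both short.

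For the forward implication, I assume $\vL$ has uniform density $d$. By definition, given $\varepsilon>0$ there is an $N$ such that $\bigl|\frac{\#[\vL \cap (x+A_n)]}{\theta_G(A_n)} - d\bigr|<\varepsilon$ for every $n\geq N$ and every $x\in G$. Taking the infimum and supremum over $x$ at each fixed $n$ yields $d-\varepsilon \leq \inf_x(\cdots) \leq \sup_x(\cdots) \leq d+\varepsilon$, and then passing to $\liminf$ and $\limsup$ in $n$ gives $d-\varepsilon \leq \underline{\dens}(\vL) \leq \overline{\dens}(\vL) \leq d+\varepsilon$. Since this holds for all $\varepsilon>0$ and $\underline{\dens}(\vL)\leq\overline{\dens}(\vL)$ is automatic, both densities equal $d$.

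For the converse, assume $\underline{\dens}(\vL)=\overline{\dens}(\vL)=d$. Fix $\varepsilon>0$. By the standard characterisation of $\liminf$, the equality $\liminf_n \inf_x \frac{\#[\vL\cap(x+A_n)]}{\theta_G(A_n)} = d$ means that there is an $N_1$ so that $\inf_x \frac{\#[\vL\cap(x+A_n)]}{\theta_G(A_n)} \geq d-\varepsilon$ for all $n\geq N_1$; equivalently, $\frac{\#[\vL\cap(x+A_n)]}{\theta_G(A_n)} \geq d-\varepsilon$ for all such $n$ and every $x\in G$. Symmetrically, the equality $\limsup_n \sup_x \frac{\#[\vL\cap(x+A_n)]}{\theta_G(A_n)} = d$ produces an $N_2$ so that $\frac{\#[\vL\cap(x+A_n)]}{\theta_G(A_n)} \leq d+\varepsilon$ for all $n\geq N_2$ and every $x\in G$. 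Setting $N:=\max\{N_1,N_2\}$ gives $\bigl|\frac{\#[\vL\cap(x+A_n)]}{\theta_G(A_n)} - d\bigr| \leq \varepsilon$ for all $n\geq N$ and all $x\in G$, which is exactly the definition of uniform density with limit $d$.

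There is essentially no obstacle here; the proposition is a direct translation between the analytic language of $\liminf/\limsup$ with $\inf/\sup$ over translates and the uniform convergence in $x$. The only thing to watch is the order of quantifiers (the $N$ must be chosen independently of $x$), but this is automatic because the $\inf$ and $\sup$ over $x$ have already been taken before the $\liminf$/$\limsup$ in $n$. So the proof reduces to invoking the elementary fact that a sequence $(s_n)$ satisfies $\limsup s_n = \liminf s_n$ if and only if it converges, applied to $s_n := \sup_x \frac{\#[\vL\cap(x+A_n)]}{\theta_G(A_n)}$ and $t_n := \inf_x \frac{\#[\vL\cap(x+A_n)]}{\theta_G(A_n)}$, together with $t_n \leq \frac{\#[\vL\cap(x+A_n)]}{\theta_G(A_n)} \leq s_n$.
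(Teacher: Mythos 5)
Your proof is correct and follows essentially the same route as the paper's: in one direction you sandwich $\inf_x$ and $\sup_x$ between $d\pm\varepsilon$ using the uniformity of the limit, and in the other you use that $\liminf_n \inf_x(\cdots)=\limsup_n\sup_x(\cdots)=d$ forces the quotient into $(d-\varepsilon,d+\varepsilon)$ for all large $n$ uniformly in $x$. Nothing further is needed.
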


\begin{proof}
  We first prove the $\Longrightarrow$ implication.

  Let $\varepsilon >0$. Since $\dens(\vL):= \lim_{n \to \infty} \frac{ \# [\vL \cap (x
    + A_n)]}{\theta_G(A_n)}$ uniformly in $x \in G$, there exists an
  $N$ so that, for all $n >N$ and all $x \in G$, we have
\[
  \left| \dens(\vL)- \frac{ \# [\vL
  \cap (x + A_n)]}{\theta_G(A_n)} \right| \,<\, \varepsilon \ts .
\]
Thus, for all $n >N$, we have
\[
   \dens(\vL)- \varepsilon \, <\,
    \inf_{x \in G} \frac{ \# [\vL \cap (x + A_n)]}{\theta_G(A_n)}
   \,\leq\,
   \sup_{x \in G} \frac{\# [\vL \cap (x + A_n)]}{\theta_G(A_n)}
   \,<\, \dens(\vL)+\varepsilon \ts ,
\]
and hence,
\[
   \dens(\vL)- \varepsilon \, <\,
   \underline{\dens}(\vL)  \,\leq\,
   \overline{\dens}(\vL) \, <\, \dens(\vL)+ \varepsilon \ts .
\]
As this is true for all $\varepsilon >0$, we get $\underline{\dens}(\vL)  =
   \overline{\dens}(\vL)$ as desired.

Next we prove the $\Longleftarrow$ implication.

Let $\varepsilon >0$ and let us denote $d:= \underline{\dens}(\vL) = \overline{\dens}(\vL)$.  Then, there
exists an $N$ so that, for all $n > N$, we have
\[
   d -\varepsilon \,< \,\inf_{x \in G}
  \frac{ \# [\vL \cap (x + A_n)]}{\theta_G(A_n)}
\]
and
\[
  \sup_{x \in G} \frac{ \# [\vL
   \cap (x + A_n)]}{\theta_G(A_n)} \,<\, d +\varepsilon \ts .
\]
Thus, for all $n >N$ and all $x \in G$, we have
\[
   d -\varepsilon \,<\,
   \frac{ \# [\vL \cap (x +A_n)]}{\theta_G(A_n)}
   \,<\, d +\varepsilon \ts ,
\]
and hence $\vL$ has uniform density.
\end{proof}

Now we will show that for arbitrary model sets, there exists a choice of Haar measure on $H$ for which, the lower and upper density of the model set is bounded from below by the measure of the interior of the window, respectively bounded from above by the measure of the closure of the window.

\begin{proposition}\label{sudinms}
  Let\/ $( G \times H , \cL)$ be a cut and project scheme and
  let\/ $\vL = \oplam(W)$ be a model set. Then, with the Haar measure on\/
  $H$ from Lemma~$\ref{LRLemma}$,
\[
    \theta_H( W^\circ ) \,\leq\, \underline{\dens}(\vL) \,\leq\,
   \overline{\dens}(\vL) \,\leq\, \theta_H( \overline{W} ) \ts .
\]
\end{proposition}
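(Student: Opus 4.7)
The plan is to sandwich the counting function $\#[\vL \cap (x+A_n)]$ between two weighted combs of the form $\omega^{}_g$ with $g \in C_{\mathsf{c}}(H)$, and then invoke Lemma~\ref{LRLemma} together with outer/inner regularity of the Haar measure $\theta_H$. The middle inequality $\underline{\dens}(\vL) \leq \overline{\dens}(\vL)$ is immediate from the definitions, so only the two outer inequalities require work, and they are entirely parallel.

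For the upper bound, I would fix $\varepsilon>0$ and, using outer regularity of $\theta_H$ together with Urysohn's lemma, choose $g \in C_{\mathsf{c}}(H)$ with $1^{}_{\overline{W}} \leq g \leq 1$ and $\int_H g \, d\theta_H < \theta_H(\overline{W})+\varepsilon$. Because $\vL = \oplam(W)$, every $x \in \vL$ satisfies $x^\star \in \overline{W}$ and hence $g(x^\star) \geq 1$. This gives, for all $x \in G$ and all $n$,
\[
   \#[\vL \cap (x+A_n)] \;\leq\; \sum_{y \in \pi^{}_{1}(\cL) \cap (x+A_n)} g(y^\star) \;=\; \omega^{}_{g}(x+A_n)\ts .
\]
A key point here is that, by Lemma~\ref{LRLemma}, $\omega^{}_{g}$ is strongly almost periodic, so the averages $\omega^{}_{g}(x+A_n)/\theta_G(A_n)$ converge to $\int_H g \, d\theta_H$ uniformly in $x \in G$. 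Dividing by $\theta_G(A_n)$, taking $\sup_{x}$ and then $\limsup_{n}$, I obtain $\overline{\dens}(\vL) \leq \int_H g\, d\theta_H < \theta_H(\overline{W})+\varepsilon$, and letting $\varepsilon \to 0$ finishes this half.

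For the lower bound, by inner regularity of $\theta_H$, I pick a compact $K \subset W^\circ$ with $\theta_H(K) > \theta_H(W^\circ) - \varepsilon$, and by Urysohn I choose $g \in C_{\mathsf{c}}(H)$ with $1^{}_{K} \leq g \leq 1^{}_{W^\circ}$. Now for each $y \in \pi^{}_{1}(\cL)$ with $g(y^\star) \neq 0$ we have $y^\star \in W^\circ \subset W$, hence $y \in \oplam(W) = \vL$; combined with $g \leq 1$, this yields
\[
   \omega^{}_{g}(x+A_n) \;=\; \sum_{y \in \pi^{}_{1}(\cL) \cap (x+A_n)} g(y^\star) \;\leq\; \#[\vL \cap (x+A_n)]\ts .
\]
Again invoking the uniform convergence provided by Lemma~\ref{LRLemma}, dividing by $\theta_G(A_n)$ and taking $\inf_{x}$ and then $\liminf_{n}$ gives $\underline{\dens}(\vL) \geq \int_H g\, d\theta_H \geq \theta_H(K) > \theta_H(W^\circ) - \varepsilon$, and sending $\varepsilon \to 0$ completes the proof.

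The only subtlety, and what I would treat as the main technical point, is the \emph{uniformity} in $x \in G$ of the convergence in Lemma~\ref{LRLemma}: without it, the step where I exchange $\sup_x$ (resp.\ $\inf_x$) with $\lim_n$ is not justified. This uniformity, however, is exactly what the amenability/strongly almost periodic argument in the second proof of Lemma~\ref{LRLemma} provides (the Eberlein-type mean of a strongly almost periodic measure is attained uniformly along any van Hove sequence), so the application is clean once that is acknowledged.
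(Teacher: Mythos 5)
Your proof is correct and follows essentially the same route as the paper: sandwiching $\delta^{}_{\!\vL}$ between $\omega^{}_{f}$ and $\omega^{}_{g}$ for $f \leq 1^{}_{W^\circ} \leq 1^{}_{\overline{W}} \leq g$ with integrals within $\varepsilon$ of $\theta_H(W^\circ)$ and $\theta_H(\overline{W})$, and then applying Lemma~\ref{LRLemma} uniformly in $x$. You merely spell out two steps the paper leaves implicit (the regularity/Urysohn construction of $f,g$ and the pointwise verification of the sandwich), and your emphasis on the uniformity in $x$ of the convergence is exactly the point the paper also relies on.
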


\begin{proof}
  Let $\varepsilon >0$.

  Pick $f, g \in C_{\mathsf{c}}(H)$ so that $f \leq
  1^{}_{W^\circ} \leq 1^{}_{\overline{W}} \leq g$ and such that $\int_H f \dd \theta_{\nts
    H} \geq \theta_H( W^\circ ) -\varepsilon$ and $\int_H g \dd
  \theta_{\nts H} \leq \theta_H( \overline{W} ) +\varepsilon$.  Then, by Lemma~\ref{LRLemma} we have
\[
    \lim_{n \to \infty}
    \frac{ \omega_{f} (x + A_n)}{\theta_G(A_n)}
    \,= \,\int_H f \dd \theta_{\nts H} \ts ,
\]
and
\[
   \lim_{n \to \infty}
   \frac{ \omega_{g} (x + A_n)}{\theta_G(A_n)}
    \,=\, \int_H g \dd \theta_{\nts H}  \ts .
\]
uniformly in $x$.

Since $\omega^{}_{f} \leq \delta^{}_{\!\vL} \leq \omega^{}_{g}$,  for all $n$
and all $x$ we have
\[
  \frac{ \omega^{}_{f} (x + A^{}_n)}{\theta^{}_G(A^{}_n)} \,\leq\,
  \frac{\delta^{}_{\!\vL} (x + A^{}_n)}{\theta^{}_G(A^{}_n)} \,\leq\,
   \frac{ \omega^{}_{g}(x + A^{}_n)}{\theta^{}_G(A^{}_n)} \ts ,
\]
and thus
\[
  \theta_{\nts H}( W^\circ ) -\varepsilon  \,\leq\,
   \int_H \! f \dd \theta_{\nts H}  \,\leq\,
   \underline{\dens}(\vL) \,\leq\,
    \overline{\dens}(\vL) \,\leq\,
    \int_H \! f d \theta_{\nts H}  \,\leq\,
     \theta_H( \overline{W} )  +\varepsilon \ts ,
\]
which completes the proof.
\end{proof}

\begin{corollary}
  If\/ $\vL$ is a regular model set, then\/ $\vL$ has uniform
  density.
\end{corollary}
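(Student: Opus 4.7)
The plan is to deduce this corollary directly from Proposition~\ref{sudinms} combined with the earlier characterisation of uniform density as the equality of lower and upper density. The key observation is that a regular model set is, by definition, a model set $\vL = \oplam(W)$ whose window $W$ is compact with non-empty interior and satisfies $\theta_H(\partial W) = 0$, so the Haar measures of $W^\circ$ and $\overline{W}$ coincide.

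First I would recall the sandwich inequality from Proposition~\ref{sudinms}, namely
\[
   \theta_H(W^\circ) \,\leq\, \underline{\dens}(\vL) \,\leq\,
    \overline{\dens}(\vL) \,\leq\, \theta_H(\overline{W})\ts ,
\]
which holds for the Haar measure on $H$ provided by Lemma~\ref{LRLemma}. Next, since $\overline{W} = W^\circ \cup \partial W$ as a disjoint union and $\theta_H(\partial W)=0$ by the regularity assumption, we have $\theta_H(\overline{W}) = \theta_H(W^\circ)$. Substituting this into the sandwich forces
\[
   \underline{\dens}(\vL) \,=\, \overline{\dens}(\vL) \,=\, \theta_H(W^\circ)\ts .
\]

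Finally, I would invoke the proposition proven just above, which asserts that a Delone set has uniform density if and only if its lower and upper densities agree. Applying this equivalence yields the uniform density of $\vL$, with the explicit value $\dens(\vL) = \theta_H(W^\circ) = \theta_H(\overline{W})$. There is no real obstacle here: the entire content of the corollary is packaged in Proposition~\ref{sudinms} and the preceding proposition, and the only nontrivial ingredient, namely the identification of $\theta_H(W^\circ)$ with $\theta_H(\overline{W})$, is exactly what the regularity hypothesis supplies.
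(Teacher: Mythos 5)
Your argument is correct and is exactly the deduction the paper intends: the corollary is stated without proof immediately after Proposition~\ref{sudinms}, and the intended reasoning is precisely the sandwich inequality together with $\theta_H(\partial W)=0$ forcing $\underline{\dens}(\vL)=\overline{\dens}(\vL)$, followed by the earlier equivalence between uniform density and equality of lower and upper densities. Nothing is missing.
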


Next let us look at an interesting example from \cite{S-BaMoPl,S-BM,S-PH}.

\begin{example}\label{nudnvlp}
   For each prime $p$, let $\ZZ_p$ be the ring of
  $p$-adic integers, and let $\QQ_p$ be the field of $p$-adic
  numbers. Let $\AAA_{\QQ}$ be the rationale adele ring.

  Then $\QQ$ is a   lattice in $\AAA_{\QQ}$, and we have a natural cut and project
  scheme
\begin{equation}
\begin{array}{ccccc}
   \RR^2 & \stackrel{\pi^{}_{1}}{\longleftarrow} &
  \RR^2 \times \prod_p \QQ_p^2 & \stackrel{\pi^{}_{2}}
  {\longrightarrow} & \prod_p \QQ_p^2  \\
 && \bigcup \\
 &&\QQ^2
\end{array}
\end{equation}
where the product $\prod_p \QQ_p^2$ is the restricted product. Let
$\Omega := \prod_p \ZZ_p^2$ and let $V:=\prod_p (\ZZ_p^2 \backslash
p\ZZ_p^2)$. Then, $V$ is an window which defines the visible points
of $\ZZ^2$. It is compact and has no interior.

Let $W := \Omega \backslash V$. Then $W$ is open and $\overline{W} =
\Omega$.  The set $\oplam(W)$ is the set of invisible points of
$\ZZ^2$. By construction, $\oplam(W)$ is a model set.

Since the visible points of $\ZZ^2$ form a subset of $\ZZ^2$ which has
arbitrary large holes, and since $\oplam(W)$ is its complement in
$\ZZ^2$, it is clear that
\[
   \overline{\dens}(\oplam(W)) \,=\, \dens (\ZZ^2) \,=\,
   \theta^{}_{\prod_p \!\QQ_p^2} ( \Omega) \,=\, \theta (\overline{W}) \ts .
\]
Also, Baake and Moody \cite{S-BM} showed that, with respect to some van
Hove sequence, the density of the visible lattice points is exactly
$\theta^{}_{\prod_p\! \QQ_p^2} (V)$. This shows that
\[
   \overline{\dens}(\oplam(V)) \,\geq\, \theta (V) \ts ,
\]
and hence, by the uniform density of $\ZZ^2 = \oplam(\Omega)$, we get
\[
  \underline{\dens}(\oplam(W)) \,\leq\, \theta (W) \, =\, \theta(W^\circ) \ts .
\]
This example shows that, in general, the inequalities in
Proposition~\ref{sudinms} cannot be improved. For some more recent
developments, we refer to \cite{S-PH,S-HB} and references
therein. \exend
\end{example}

An interesting question is whether, under some simple extra
assumptions, we can get $ \theta_{\nts H}( W^\circ )=
\underline{\dens}(\vL)$ and $\overline{\dens}(\vL) = \theta_{\nts H}(
\overline{W} )$. Any such result would allow one to characterise the
property $\theta(\partial W)=0$, which is usually the harder part in
any characterisation of regular model sets, in terms of uniform
density of $\oplam(W)$. If $\oplam(W)$ is uniquely ergodic, it looks
reasonable that we might get the desired relations.

\begin{conjecture}\label{CJ1}
  Let\/ $(G \times H , \cL)$ be a cut and project scheme and
  let\/ $W \subset H$ be a compact set. If\/ $\oplam(W)$ is uniquely ergodic,
  then
\[
   \theta_{\nts H}( W^\circ ) \,=\, \underline{\dens}(\vL) \quad\text{and}\quad
  \overline{\dens}(\vL) \,=\,\theta_{\nts H}(W ) \ts .
\]
\end{conjecture}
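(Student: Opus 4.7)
The plan is to combine Proposition~\ref{sudinms} with a dynamical analysis of the hull $\XX(\vL)$ under the unique ergodicity hypothesis. Proposition~\ref{sudinms} already gives
\[
    \theta_H( W^\circ ) \,\leq\, \underline{\dens}(\vL) \,\leq\,
    \overline{\dens}(\vL) \,\leq\, \theta_H(\overline{W}) \ts ,
\]
so the conjecture reduces to showing that both outer inequalities are equalities. Once this is achieved, the lower and upper densities coincide (so $\vL$ has a uniform density) and moreover $\theta_H(\partial W)=0$.

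By unique ergodicity of $\XX(\vL)$, the uniform ergodic theorem says that the averages $\theta_G(A_n)^{-1} \int_{A_n} F(t+\vL')\dd t$ converge uniformly in $\vL' \in \XX(\vL)$ for every continuous $F \in C(\XX(\vL))$ to $\int F \dd \mu$, where $\mu$ is the unique $G$-invariant probability measure. I would apply this to the continuous functionals $F^{}_{\phi}(\vL') := \sum_{x \in \vL'} \phi(x)$ with $\phi \in C_{\mathsf{c}}(G)$ and then sandwich the counting kernel $1^{}_{A_n}$ between continuous approximations from below and above, exactly as in the proof of Proposition~\ref{sudinms}, to show that $\vL$ has a uniform density $d$, which by uniqueness of $\mu$ is independent of the chosen representative in the hull.

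To identify $d$ with the Haar measures of $W^\circ$ and $\overline{W}$, I would exploit the canonical factor map $\beta\!:\, \XX(\vL) \longrightarrow \TT := (G \times H)/\cL$. The space $\TT$ is a compact abelian group on which $G$ acts minimally by translation, so its only invariant probability measure is Haar $\theta^{}_{\TT}$, and unique ergodicity forces $\beta^{}_{*}\mu = \theta^{}_{\TT}$. If $\theta_H(\partial W) >0$, then for a set of positive $\theta^{}_{\TT}$-measure of points $[(t,h)] \in \TT$ the fibre $\beta^{-1}([(t,h)])$ contains both a \emph{minimal} representative $t + \oplam(W^\circ - h)$ and a \emph{maximal} representative $t + \oplam(\overline W - h)$, which are distinct Delone sets whose densities differ by $\theta_H(\partial W) >0$. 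Measurable always-minimal and always-maximal sections $\sigma^{}_{-}, \sigma^{}_{+}\!:\,\TT \to \XX(\vL)$ of $\beta$ would then push Haar measure $\theta^{}_{\TT}$ forward to two distinct $G$-invariant probability measures on $\XX(\vL)$, contradicting unique ergodicity. Hence $\theta_H(\partial W)=0$ and $d = \theta_H(W^\circ) = \theta_H(\overline W)$.

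The main obstacle is to make the fibre analysis rigorous in the LCAG setting. One must show (a) that both the always-minimal and always-maximal lifts really land in $\XX(\vL)$ and agree on a set of positive $\theta^{}_{\TT}$-measure if and only if $\theta_H(\partial W)=0$, and (b) that the selections $\sigma^{}_{\pm}$ can be chosen measurably so that the push-forwards $\sigma^{}_{\pm *}\theta^{}_{\TT}$ are genuinely $G$-invariant. Some additional mild hypothesis such as $\overline{W^\circ}=\overline W$ (together with care about isolated boundary components of $W$) may be needed, and on non-regular windows one also has to check that the distinct lifts genuinely survive as distinct elements of $\XX(\vL)$ in the local topology, rather than collapsing in the closure.
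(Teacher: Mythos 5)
The statement you are trying to prove is stated in the paper as Conjecture~\ref{CJ1}; the author offers no proof and explicitly leaves it open, so there is no argument in the paper to compare yours against. What you have written is therefore best judged as a proposed attack on an open problem, and as such it is a reasonable outline --- the reduction via Proposition~\ref{sudinms}, the use of the uniform ergodic theorem to get existence of a uniform density, and the torus parametrisation $\beta\colon \XX(\vL)\to\TT$ in the spirit of \cite{S-BLM} are all the natural ingredients --- but it is not a proof, and the gaps you flag at the end are not ``mild hypotheses to be checked'' but the entire difficulty.

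The central unproven step is that the ``minimal'' and ``maximal'' lifts $t+\oplam(W^\circ-h)$ and $t+\oplam(\overline W-h)$ actually belong to the hull $\XX(\vL)$. The hull is the orbit closure of the single configuration $\oplam(W)$ in the local topology, and for a window with $\theta_H(\partial W)>0$ there is no reason that either extreme configuration arises as a limit of translates of $\oplam(W)$; identifying which fibre elements survive in the orbit closure is precisely the hard content of the torus-parametrisation results of \cite{S-BLM}, which are proved there under regularity assumptions you cannot invoke (you are trying to \emph{deduce} $\theta_H(\partial W)=0$, not assume it). Without that, the sections $\sigma_\pm$ need not map into $\XX(\vL)$, and the contradiction with unique ergodicity evaporates. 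Two further points need care even if that is resolved: you must show the push-forwards $\sigma_{\pm *}\theta_\TT$ are genuinely \emph{distinct} invariant measures, which requires a Weil-type integration formula identifying $\int_\TT F_\phi\circ\sigma_\pm\dd\theta_\TT$ with $\theta_G(\phi)\ts\theta_H(W^\circ)$ and $\theta_G(\phi)\ts\theta_H(\overline W)$ respectively; and you should note that your strategy, if completed, would prove the strictly stronger assertion that unique ergodicity forces $\theta_H(\partial W)=0$, whereas the conjecture as stated is also consistent with a fat boundary and $\underline{\dens}(\vL)<\overline{\dens}(\vL)$ --- a case your argument cannot reach, since Proposition~\ref{sudinms} only gives one-sided bounds there. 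In short: this is a sensible research plan, not a proof, and the statement remains open.
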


\begin{remark}
  Note that, if Conjecture~\ref{CJ1} holds, then the following are
  equivalent.
\begin{itemize}
\item[(i)] $\vL$ is a statistical regular model set.
\item[(ii)]
\begin{itemize}
\item[(a)]
 $\vL$ is Meyer set;
\item[(b)]
 $\vL$ is pure point diffractive;
 \item[(c)]
 $\vL$ is uniquely ergodic;
 \item[(d)]
 $\vL$ has uniform density. \exend
 \end{itemize}
 \end{itemize}
\end{remark}

\section{Concluding remarks}

The Baake--Moody construction of a cut and project scheme is becoming a powerful tool in the study of model sets and Meyer sets, with applications in many seemingly unrelated areas. Introduced in \cite{S-BM}, it was a key ingredient in the classification of regular model sets via dynamical systems \cite{S-BLM}. Also, in our current work, it was the key to both the classification of norm- and sup-almost periodic measures in terms of cut and project schemes, and the classification of Meyer sets in terms of $\epsilon$ dual characters. The appearance of this construction in so many difference situations is surprising, and one can only wonder what other applications it might have in the future, and if there might be a deeper reason why this construction can be used in so many different situations.

The classification of norm and sup almost periodic weighted Dirac combs turned out to be surprisingly simple, but it also leads to some new interesting questions. For example, given a CPS $(G \times H, \cL)$, it would be very helpful to find all the functions $g \in C_0(H)$ which define translation bounded measures $\omega_g$ on $G$.

Also, given an arbitrary norm almost periodic measure $\mu$ with locally finite sets of almost periods, we can apply the Baake--Moody construction on the sets $P^K_\varepsilon(\mu)$. In this CPS, the sets $P^K_\varepsilon(\mu)$ are model sets, but it is currently unclear if one can recover the measure $\mu$ from the CPS.

While strongly almost periodicity is very important for long range order because of the connection with pure point diffraction, this work suggests that norm- and sup-almost periodicity might be more suitable for sets and measures produced by cut and project schemes. Therefore, norm almost periodicity becomes a very interesting and useful concept: it is stronger than strongly almost periodicity, hence implies pure point diffractive, and it also leads to the construction of a CPS. Moreover, for measures supported inside Meyer sets, strongly and norm almost periodicity are equivalent.

A famous conjecture of Lagarias asked if given a point set $\vL$ which is repetitive, has FLC and pure point diffraction, does it follow that $\vL$ is a Meyer set? This conjecture was disproved recently by Kellendonk and Sadun \cite{S-KS}. Our above discussion suggests that to obtain the Meyer condition, norm-almost periodicity of the autocorrelation is a more natural assumption than pure point diffraction. This leads to the following question: given a point set $\vL$ which is repetitive, has FLC and whose autocorrelation is norm-almost periodic, does it follow that $\vL$ is a Meyer set? We don't know if the example of Kellendonk and Sadun answers to this new question.

Using the Baake--Moody construction and properties of almost periodic measures we were able to extend the classification of Meyer sets from $\RR^d$ to arbitrary $\sigma$-compact LCAG, excepting the Lagarias condition $\vL-\vL$ is uniformly discrete. We had to use instead the seemingly stronger condition $\vL-\vL-\vL$ is locally finite. It is currently unclear if this stronger condition is indeed needed for non-compactly generated LCAG groups, or if maybe the methods we used here are not strong enough to prove this equivalence in general.

Finally, the Baake--Moody construction allowed us characterize the regular inter model sets in Section~\ref{char rms} by approximations by almost periodic measures. While the characterization is based on the standard method of approximating the window from below and above by continuous functions, we needed to add an extra condition of all almost periodic measures being continuous in the topology defined by one of them. This condition was necessarily to be able to get all functions in the same cut and project scheme, but it looks artificial, and we expect that this characterisation can be improved in the future.

\section*{Acknowledgments}

I am grateful to Michael Baake, Daniel Lenz, Robert Moody, Uwe Grimm and Ian
Putnam for various remarks and suggestions which improved this
article. Part of this work was done while the author was a
Postdoctoral Fellow at the University of Victoria and was supported by
the Natural Sciences and Engineering Research Council of Canada, and I am grateful to NSERC for their support.
 The
manuscript was completed while the author visited University
of Bielefeld and was partially supported by the German Research
Foundation (DFG), within the CRC 701.

\section*{Appendix. Harmonious sets}

In this appendix, we review Meyer's proof \cite{S-MEY} of
Theorem~\ref{harm} in the context of $\sigma$-compact LCAGs.  Meyer's
original proof was in the context of separable metrisable groups, but
we will see that his proof works in the more general context of
$\sigma$-compact LCAGs. Our proof follows closely the proof from
\cite{S-MOO}.

For the entire section we will denote by $(\widehat{G}\ts )_{\text{b}}$ the dual group of $G_d$.
As the dual of a discrete group, this group is compact, and it is usually called the \emph{Bohr compactification} of $\widehat{G}$  \cite{S-MoSt}.
We should note here that Meyer \cite{S-MEY} and Moody \cite{S-MOO} use the notation $(\widehat{G}\ts )_{\text{alg}}$ for the Bohr compactification of $G$, as it represents the group of algebraic characters on $G$. We prefer the notation $(\widehat{G}\ts )_{\text{b}}$, as it is the standard notation in this book.

\begin{theorem}\label{harm}
  Let\/ $\vL \subset G$ and\/ $\varepsilon>0$. Then, $\vL^\varepsilon$ is
  relatively dense if and only if $\vL$ is\/ $\varepsilon$-harmonious.
\end{theorem}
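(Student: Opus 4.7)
The plan is to work in the Bohr compactification $(\widehat{G}\ts)_{\text{b}}=\widehat{G_d}$, where $G_d$ denotes $G$ with the discrete topology. This is a compact group containing $\widehat{G}$ as a (generally non-closed) dense subgroup, the density being in the coarser Bohr topology, i.e.~the topology of pointwise convergence on $G$. I introduce the auxiliary set
\[
   \vL^\varepsilon_{\text{b}} \, := \, \bigl\{ \chi \in (\widehat{G}\ts)_{\text{b}} \mid
   |\chi(x)-1| \leq \varepsilon \text{ for all } x \in \vL \bigr\} ,
\]
which is closed, hence compact, in $(\widehat{G}\ts)_{\text{b}}$ and satisfies $\vL^\varepsilon_{\text{b}} \cap \widehat{G} = \vL^\varepsilon$. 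Unwinding the definition, $\vL$ is $\varepsilon$-harmonious precisely when every algebraic character $\chi \in (\widehat{G}\ts)_{\text{b}}$ admits a continuous $\psi \in \widehat{G}$ with $\psi^{-1}\chi \in \vL^\varepsilon_{\text{b}}$; equivalently, $(\widehat{G}\ts)_{\text{b}} = \widehat{G} \cdot \vL^\varepsilon_{\text{b}}$. The theorem thus becomes the equivalence between this covering identity and the relative denseness of $\vL^\varepsilon$ in $\widehat{G}$.

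For the direction $\vL^\varepsilon$ relatively dense $\Longrightarrow$ $\vL$ is $\varepsilon$-harmonious, fix $\chi \in (\widehat{G}\ts)_{\text{b}}$ and pick a net $\psi_\alpha \in \widehat{G}$ converging to $\chi$ in the Bohr topology (possible by density). By relative denseness, write $\psi_\alpha = \tau_\alpha\, \eta_\alpha$ with $\tau_\alpha \in K$ (a fixed compact subset of $\widehat{G}$) and $\eta_\alpha \in \vL^\varepsilon$. After passing to a subnet, $\tau_\alpha \to \tau \in K$ in the usual topology of $\widehat{G}$, and therefore also in the coarser Bohr topology. Consequently $\eta_\alpha = \tau_\alpha^{-1}\psi_\alpha \to \tau^{-1}\chi$ in the Bohr topology of $(\widehat{G}\ts)_{\text{b}}$, and since each $\eta_\alpha$ lies in the closed set $\vL^\varepsilon_{\text{b}}$, the limit $\tau^{-1}\chi$ does too. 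Hence $\chi \in \tau\cdot\vL^\varepsilon_{\text{b}} \subset \widehat{G}\cdot \vL^\varepsilon_{\text{b}}$, which is the required approximation.

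For the reverse implication, I exploit compactness of $(\widehat{G}\ts)_{\text{b}}$ together with $\widehat{G}$-density to force $\vL^\varepsilon_{\text{b}}$ to have nonempty interior in $(\widehat{G}\ts)_{\text{b}}$. There are two natural routes: a Baire-category argument, using that a compact group is a Baire space and that $(\widehat{G}\ts)_{\text{b}}$ is covered by a family of $\widehat{G}$-translates of $\vL^\varepsilon_{\text{b}}$ (passing through $\sigma$-compact exhaustion of $G$ and countable approximation); or a Haar-measure argument, where translation invariance plus the covering identity forces $\vL^\varepsilon_{\text{b}}$ to have positive Haar measure, and a Steinhaus-type argument produces a neighborhood of the identity inside $\vL^\varepsilon_{\text{b}}\cdot(\vL^\varepsilon_{\text{b}})^{-1}$. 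Since the Bohr topology on $\widehat{G}$ is coarser than the usual one, a Bohr-neighborhood of the identity restricted to $\widehat{G}$ is automatically open in the usual topology; covering the compact group $(\widehat{G}\ts)_{\text{b}}$ by finitely many translates of such a neighborhood and intersecting with $\widehat{G}$ yields a finite $F \subset \widehat{G}$ with $\widehat{G} \subset \vL^\varepsilon + F$, giving relative denseness.

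The main obstacle will be the reverse implication: specifically, carrying out the above compactness argument without losing the sharp constant $\varepsilon$, since the natural Steinhaus-style argument would a priori only place a neighborhood of the identity inside $\vL^{2\varepsilon}_{\text{b}}$. Overcoming this requires exploiting the precise relationship between the Bohr topology, the usual topology on $\widehat{G}$, and the strict-versus-non-strict interpretation of the defining inequality $|\chi(x)-1|\leq\varepsilon$, so that the constants match on the two sides of the equivalence.
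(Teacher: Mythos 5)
Your forward implication is fine --- the net argument is just a reformulation of the paper's observation that $V^\varepsilon(\vL)+K$ is compact, hence closed, and contains the dense subgroup $\widehat{G}$, so it must be all of $(\widehat{G}\ts)_{\text{b}}$. The genuine gap is in the reverse implication, and it is not merely the loss of a factor $2$ in $\varepsilon$ that you flag at the end. Both of your proposed routes start from the premise that $\vL^\varepsilon_{\text{b}}$ itself acquires nonempty interior, or positive Haar measure, in $(\widehat{G}\ts)_{\text{b}}$. That premise is false precisely in the cases the theorem is used for: the covering identity $(\widehat{G}\ts)_{\text{b}}=\widehat{G}+\vL^\varepsilon_{\text{b}}$ is an \emph{uncountable} union of translates, so it forces neither positive measure nor second category on $\vL^\varepsilon_{\text{b}}$; and if $\vL$ is relatively dense and $\varepsilon$ is small, then $\vL^{2\varepsilon}$ is uniformly discrete in $\widehat{G}$, whereas a Steinhaus neighbourhood $V\subset \vL^\varepsilon_{\text{b}}-\vL^\varepsilon_{\text{b}}\subset \vL^{2\varepsilon}_{\text{b}}$ would place the nonempty set $V\cap\widehat{G}$, which is open in the ordinary topology of $\widehat{G}$ as you yourself note, inside $\vL^{2\varepsilon}$ --- impossible when $\widehat{G}$ is non-discrete. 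So $\vL^\varepsilon_{\text{b}}$ has empty interior and Haar measure zero there, and the Steinhaus route cannot be repaired. The same contradiction shows that your intended conclusion ``$\widehat{G}\subset\vL^\varepsilon+F$ with $F$ finite'' is far too strong; relative denseness only calls for a compact $K$.

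The idea you are missing is to apply Baire not to $\vL^\varepsilon_{\text{b}}$ but to the thickened sets $V^\varepsilon(\vL)+K_n$, where $\{K_n\}$ is a $\sigma$-compact exhaustion of $\widehat{G}$ (not of $G$). Each such sum is compact, hence closed, and these sets form a \emph{countable} closed cover of $(\widehat{G}\ts)_{\text{b}}$, so some $V^\varepsilon(\vL)+K_n$ has nonempty interior; finitely many $\widehat{G}$-translates of it then cover the compact group $(\widehat{G}\ts)_{\text{b}}$. The sharp constant is preserved by an algebraic, not a topological, observation: writing $\chi=\psi+\phi+\nu$ with $\chi\in\widehat{G}$, $\psi\in F\subset\widehat{G}$, $\phi\in V^\varepsilon(\vL)$ and $\nu\in K_n\subset\widehat{G}$, three of the four terms lie in $\widehat{G}$, hence so does $\phi$, and therefore $\phi\in V^\varepsilon(\vL)\cap\widehat{G}=\vL^\varepsilon$ exactly --- no doubling occurs --- while $\psi+\nu$ lies in the compact set $K:=F+K_n\subset\widehat{G}$. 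This yields $\widehat{G}=\vL^\varepsilon+K$, which is the required relative denseness.
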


\begin{proof}
  For each $\varepsilon >0$ we define
\[
  V^\varepsilon(\vL) \, :=\,
  \bigl\{ \chi \in (\widehat{G}\ts )_{\text{b}} \;\big|\;
    \left| \chi(x) -1 \right| \leq \varepsilon
    \text{ for all } x \in \vL\bigr\} .
\]
Then, $\vL$ is $\varepsilon$-harmonious if and only if
\[
   V^\varepsilon(\vL) + \widehat{G}\, =\, (\widehat{G}\ts )_{\text{b}} \ts .
\]

We first prove the $\Longrightarrow$ implication.

Since $G$ is a $\sigma$-compact LCAG, so is $\widehat{G}$.  Let $K \subset \widehat{G}$ be a compact set so
that $\vL^\varepsilon+K = \widehat{G}$.  Using $\vL^\varepsilon =
V^\varepsilon(\vL) \cap \widehat{G}$ we get that
\[
   \widehat{G} \,\subset\,
   V^\varepsilon(\vL) + K \,\subset\,
   (\widehat{G}\ts )_{\text{b}} \ts .
\]
Since $K$ is compact in $\widehat{G}$ and since the embedding
$\widehat{G} \hookrightarrow (\widehat{G}\ts )_{\text{b}}$ is
continuous, $K$ is also compact in $(\widehat{G}\ts )_{\text{b}}$.

By definition,
$V^\varepsilon(\vL)$ is closed hence compact in $(\widehat{G}\ts )_{\text{b}}$. Therefore $V^\varepsilon(\vL) + K$ is compact, hence closed in $(\widehat{G}\ts )_{\text{b}}$.

Therefore $V^\varepsilon(\vL) + K$ is closed in $(\widehat{G}\ts )_{\text{b}}$ and contains the dense
subset $\widehat{G}$. This implies that
\[
   V^\varepsilon(\vL) + K \, =\,  (\widehat{G}\ts )_{\text{b}} \ts ,
\]
which shows our implication.

We now prove the $\Longleftarrow$ implication.

As $\widehat{G}$ is $\sigma$-compact, we can find a sequence $K_n$ of compact sets so that $0 \in
K_1$, $K_n \subset K_{n+1}^\circ$ and $\widehat{G} =\bigcup_n K_n$. Then,
\[
   (\widehat{G}\ts )_{\text{b}}  \,=\, V^\varepsilon(\vL)+ \widehat{G} \,=\,
    \bigcup_n \bigl( V^\varepsilon(\vL)+ K_n\bigr)
     \ts .
\]
Again the continuity of $\widehat{G} \hookrightarrow
(\widehat{G}\ts )_{\text{b}}$ implies that $K_n$ is also compact in
$(\widehat{G}\ts )_{\text{b}}$. Then $V^\varepsilon(\vL)+ K_n$ is
compact, hence closed in $(\widehat{G}\ts )_{\text{b}}$.

Since $(\widehat{G}\ts )_{\text{b}}$ is a locally compact Haussdorff space, by
the Baire category theorem \cite{S-BOURB} there exists an $n$ so that
$V^\varepsilon(\vL)+ K_n$ has non-empty interior. Therefore, as the embedding $\widehat{G} \hookrightarrow (\widehat{G}\ts )_{\text{b}}$ has
dense image,
\[
 \widehat{G} +\bigl(V^\varepsilon(\vL) + K_n\bigr) \,=\,
 (\widehat{G}\ts )_{\text{b}} \ts .
\]
As the left-hand side is an open cover of the compact set
$(\widehat{G}\ts )_{\text{b}}$ by translates of the set open set $\left(V^\varepsilon(\vL)+ K_n\right)^\circ$  there exists a finite set $F \subset
\widehat{G}$ so that
\[
  F+ \bigl(V^\varepsilon(\vL) + K_n\bigr) \, =\,
  (\widehat{G}\ts )_{\text{b}} \ts .
\]

Let $K= F+ K_n$. Then $K$ is compact in $\widehat{G}$ and we will show
that
\[
   \vL^\varepsilon +K \, =\, \widehat{G} \ts .
\]
Indeed, let $\chi \in \widehat{G}$. Then we can write $\chi =\psi+\phi+\nu$ with
$\psi \in F$, $\phi \in V^\varepsilon(\vL)$ and $\nu \in K_n$.

By the definition of $K$ we have $\psi+\nu \in K$. Moreover, $\phi = \chi - \psi-\nu$, thus $\phi \in
\widehat{G}$.

This shows that
\[
  \chi \, =\, \phi+(\psi+\nu) \ts ,\quad
   \phi \,\in\,  \widehat{G}\cap V^\varepsilon(\vL)\, =\,
    \vL^\varepsilon \ts ,\quad \psi+\nu \,\in\, K \ts ,
\]
which completes the proof.
\end{proof}

\begin{corollary}
  Let\/ $\vL$ be relatively dense in\/ $G$. Then, $\vL$ is harmonious
  if and only if\/ $\vL^\varepsilon$ is relatively dense for all\/
  $\varepsilon>0$.\qed
\end{corollary}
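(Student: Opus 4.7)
The natural approach will be to work in the Bohr compactification $(\widehat{G}\ts )_{\text{b}} = \widehat{G_d}$, which reformulates $\varepsilon$-harmoniousness as the identity $V^\varepsilon(\vL) + \widehat{G} = (\widehat{G}\ts )_{\text{b}}$, where $V^\varepsilon(\vL) := \{\chi \in (\widehat{G}\ts )_{\text{b}} \mid |\chi(x) - 1 | \leq \varepsilon \text{ for all } x \in \vL\}$. The three structural facts I plan to exploit are: $V^\varepsilon(\vL)$ is closed, hence compact, in $(\widehat{G}\ts )_{\text{b}}$; $\vL^\varepsilon = V^\varepsilon(\vL) \cap \widehat{G}$; and the canonical inclusion $\widehat{G} \hookrightarrow (\widehat{G}\ts )_{\text{b}}$ is continuous with dense image.

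For the $\Longrightarrow$ direction I will pick a compact $K \subset \widehat{G}$ with $\vL^\varepsilon + K = \widehat{G}$. By continuity of the inclusion, $K$ remains compact in $(\widehat{G}\ts )_{\text{b}}$, so $V^\varepsilon(\vL) + K$ is a compact, hence closed, subset of $(\widehat{G}\ts )_{\text{b}}$ that contains the dense subgroup $\widehat{G}$ and must therefore equal $(\widehat{G}\ts )_{\text{b}}$, a fortiori giving $V^\varepsilon(\vL) + \widehat{G} = (\widehat{G}\ts )_{\text{b}}$.

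For the $\Longleftarrow$ direction, I will use that $\widehat{G}$ is $\sigma$-compact to write $\widehat{G} = \bigcup_n K_n$ with nested compact sets $K_n \subset K_{n+1}^\circ$. Harmoniousness then gives the decomposition $(\widehat{G}\ts )_{\text{b}} = \bigcup_n \bigl(V^\varepsilon(\vL) + K_n\bigr)$, a countable union of closed sets in a compact Hausdorff space. The Baire category theorem produces some $n$ for which $V^\varepsilon(\vL) + K_n$ has non-empty interior. Translating by elements of the dense subset $\widehat{G}$ yields an open cover of $(\widehat{G}\ts )_{\text{b}}$, from which compactness extracts a finite subcover $F + (V^\varepsilon(\vL) + K_n) = (\widehat{G}\ts )_{\text{b}}$ with $F \subset \widehat{G}$ finite. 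Setting $K := F + K_n \subset \widehat{G}$ and decomposing any $\chi \in \widehat{G}$ as $\chi = \psi + \phi + \nu$ with $\psi \in F$, $\phi \in V^\varepsilon(\vL)$, $\nu \in K_n$, the constraint $\chi, \psi, \nu \in \widehat{G}$ forces $\phi \in V^\varepsilon(\vL) \cap \widehat{G} = \vL^\varepsilon$, yielding $\widehat{G} = \vL^\varepsilon + K$.

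The main obstacle will be the interplay between the two distinct topologies on the character group: the locally compact topology of $\widehat{G}$, in which relative density is defined, and the compact topology of $(\widehat{G}\ts )_{\text{b}}$, in which $\varepsilon$-harmoniousness is naturally phrased. The embedding between them is continuous and dense but very far from a homeomorphism onto its image. The forward direction needs to bridge this gap only by transporting a single compact set from $\widehat{G}$ into $(\widehat{G}\ts )_{\text{b}}$ and invoking density, which is routine; the converse is the harder part, and its core is the Baire category argument extracting ``thickness'' of some $V^\varepsilon(\vL) + K_n$ inside the compact Hausdorff group $(\widehat{G}\ts )_{\text{b}}$.
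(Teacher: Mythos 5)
Your argument is correct and is essentially the paper's own: the paper proves exactly this per-$\varepsilon$ equivalence as Theorem~\ref{harm} in the Appendix, using the same reformulation $V^\varepsilon(\vL)+\widehat{G}=(\widehat{G}\ts)_{\text{b}}$ in the Bohr compactification, the same compactness-plus-density argument for the forward direction, and the same Baire category/finite subcover argument for the converse, and the Corollary then follows by letting $\varepsilon$ range over all positive values. No substantive differences.
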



\begin{thebibliography}{99}\itemsep=2pt

\bibitem{S-ARMA1}
Argabright L.N.\ and Gil~de~Lamadrid J.\ (1974).
Fourier analysis of unbounded measures on locally
compact Abelian groups,
\textit{Memoirs Amer.\ Math.\ Soc.} no.~145
(AMS, Providence, RI).

\bibitem{S-TAO}
Baake M.\ and Grimm U.\ (2013).
\textit{Aperiodic Order. Vol.~1: A Mathematical Invitation}
(Cambridge University Press, Cambridge).

\bibitem{S-BL}
Baake M.\ and Lenz D. (2004).
Dynamical systems on translation bounded
measures:\ Pure point dynamical and diffraction spectra,
\textit{Ergod.\ Th.\ \& Dynam.\ Syst.} \textbf{24}, 1867--1893.
\texttt{arXiv:math.DS/0302231}.

\bibitem{S-BLM}
Baake M., Lenz D.\ and Moody R.V.\ (2007).
Characterisation of model sets by dynamical systems,
\textit{Ergod.\ Th.\ \& Dynam.\ Syst.} \textbf{27}, 341--382.
\texttt{arXiv:math.DS/0511648}.

\bibitem{S-BLRS}
Baake M., Lenz D., Richard C.\ and Strungaru N.
{A small lemma on norm almost periodic measures}, in preparation.

\bibitem{S-BM}
Baake M.\ and Moody R.V.\ (2004).
Weighted Dirac combs with pure point diffraction,
\textit{J.\ reine angew.\ Math.\ (Crelle)} \textbf{573}, 61--94.
\texttt{arXiv:math.MG/0203030}.

\bibitem{S-BaMoPl}
Baake M., Moody, R.V.\ and Pleasants P.A.B.\ (2000).
Diffraction for visible lattice points and $k$th power free integers,
\textit{Discr.\ Math.} \textbf{221}, 3--42.
\texttt{arXiv:math.MG/9906132}.

\bibitem{S-BF}
Berg C.\ and Forst G.\ (1975).
\textit{Potential Theory on Locally Compact Abelian Groups}
(Springer, Berlin).

\bibitem{S-BOURB}
Bourbaki N.\ (1966).
\textit{General Topology},
parts I and II (Hermann, Paris).

\bibitem{S-EBE}
Eberlein E.F.\ (1949).
Abstract ergodic theorems and weak almost periodic functions,
\textit{Trans.\ Amer.\ Math.\ Soc.} \textbf{67}, 217--224.

\bibitem{S-ARMA}
Gil de Lamadrid J.\ and Argabright L.N.\ (1990).
Almost periodic measures,
\textit{Memoirs Amer.\ Math.\ Soc.} \textbf{85}, no.~428
(AMS, Providence, RI).

\bibitem{S-HeRo}
Hewitt E.\ and Ross K.A. (1963).
\textit{Abstract Harmonic Analysis}
(Springer-Verlag, Berlin).

\bibitem{S-HOF}
Hof A.\ (1995).
On diffraction by aperiodic structures,
\textit{Commun.\ Math.\ Phys.} \textbf{169}, 25--43.

\bibitem{S-HB}
Huck C.\ and Baake M.\ (2014).
Dynamical properties of $k$-free lattice points,
preprint \texttt{arXiv:1402.2202}.

\bibitem{S-KS}
J. Kellendonk, L. Sadun {\em Meyer sets, topological eigenvalues, and Cantor fiber bundles }, Journal of the London Mathematical Society, Vol. 89, N 1, 114-130, 2014.
 \texttt{arXiv:1211.2250}.

\bibitem{S-Kra82}
Kramer P.\ (1982).
Non-periodic central space filling with icosahedral symmetry
using copies of seven elementary cells,
\textit{Acta Cryst.\ A} \textbf{38}, 257--264.

\bibitem{S-KN84}
Kramer P.\ and Neri R.\ (1984)
On periodic and non-periodic space fillings of
$\mathbb{E}^{m}$ obtained by projection
\textit{Acta Cryst.\ A} \textbf{40}, 580--587 and
\textit{Acta Cryst.\ A} \textbf{41} (1985), 619 (erratum).

\bibitem{S-LAG}
Lagarias J.C.\ (1996).
Meyer's concept of quasicrystal and quasiregular sets,
\textit{Commun.\ Math.\ Phys.} \textbf{179}, 365--376.

\bibitem{S-L1}
Lenz D. (2009).
Continuity of Eigenfunctions of Uniquely Ergodic Dynamical Systems and Intensity of Bragg Peaks,
\textit{Commun.\ Math.\ Phys.} \textbf{87}, 225--258.
\texttt{arXiv:math.ph/0608026}.


\bibitem{S-LR}
Lenz D.\ and Richard C.\ (2007).
Pure point diffraction and cut and project schemes
for measures:\ The smooth case,
\textit{Math.\ Z.} \textbf{256}, 347--378.
\texttt{arXiv:math.DS/0603453}.

\bibitem{S-LS}
Lenz D.\ and Strungaru N.\ (2009).
Pure point spectrum for measurable dynamical systems on
locally compact Abelian groups,
\textit{J.\ Math.\ Pures Appl.} \textbf{92}, 323--341.
\texttt{arXiv:0704.2498}.

\bibitem{S-LS1}
Lenz D.\ and Strungaru N.\
{\em Note on the set of Bragg peaks with high intensity}, preprint.
http://arxiv.org/abs/1412.7377

\bibitem{S-LS2}
Lenz D.\ and Strungaru N.\
On Weakly Almost Periodic Measures,
in preparation.

\bibitem{S-MEY}
Meyer Y.\ (1972).
\textit{Algebraic Numbers and Harmonic Analysis}
(North Holland, Amsterdam).

\bibitem{S-MOO}
Moody R.V.\ (1997).
Meyer sets and their duals.
In
\textit{The Mathematics of Long-Range Aperiodic Order},
NATO ASI Series C 489,
Moody R.V.\ (ed.),  pp.\ 403--441.
(Kluwer, Dordrecht).

\bibitem{S-MO}
Moody R.V.\ (2002).
Uniform distribution in model sets,
\textit{Can.\ Math.\ Bull.} \textbf{45}, 123--130.

\bibitem{S-MS}
Moody R.V.\ and Strungaru N.\ (2004).
Discrete sets and dynamical systems in the autocorrelation topology,
\textit{Can.\ Math.\ Bull.} \textbf{47}, 82--99.

\bibitem{S-MoSt}
Moody R.V.\ and Strungaru N.\ (2014).
Almost Periodic Measures and their Fourier
Transforms: Results from Harmonic Analysis.
In
\textit{Aperiodic Order Vol. 2: Crystallography and Almost Periodicity},
Cambridge University Press, Cambridge.

\bibitem{S-PH}
Pleasants P.A.B.\ and Huck C.\ (2013).
Entropy and diffraction of the $k$-free points
in $n$-dimensional lattices,
\textit{Discr.\ Comput.\ Geom.} \textbf{50}, in press.
\texttt{arXiv:1112.1629}.

\bibitem{S-CR}
Richard C.\ (2003).
Dense Dirac combs in Euclidean space with pure point diffraction,
\textit{J.\ Math.\ Phys.} \textbf{44}, 4436--4449.
\texttt{arXiv:math-ph/0302049}.

\bibitem{S-RS}
Richard C.\ and Strungaru N.\
Diffraction of regular model sets and weighted Dirac combs with lattice support,
in preparation.

\bibitem{S-RUD}
Rudin W.\ (1962).
\textit{Fourier Analysis on Groups}
(Wiley, New York).




\bibitem{S-Martin1}
Schlottmann M.\ (1998).
Cut-and-project sets in locally compact Abelian groups.
In \textit{Quasicrystals and Discrete Geometry},
Fields Institute Monographs, vol.\ 10,
Patera J.\ (ed.), pp.~247--264
(AMS, Providence, RI).

\bibitem{S-JPS}
J.-P. Schreiber,  {\em Approximations diophantiennes et probl\'{e}mes additifs dans les groupes ab\'{e}liens localement compacts }, Bull. S.M.F 101,  297-332, 1973.

\bibitem{S-NS1}
Strungaru N.\ (2005).
Almost periodic measures and long-range order in Meyer sets,
\textit{Discr.\ Comput.\ Geom.} \textbf{33}, 483--505.

\bibitem{S-NS2}
Strungaru N.\ (2013).
On the Bragg diffraction spectra of a Meyer set,
\textit{Can.\ J.\ Math.} \textbf{65}, 675--701.
\texttt{arXiv:1003.3019}.

\bibitem{S-NS6}
Strungaru N.\ (2013).
Almost periodic measures and Bragg diffraction,
\textit{J. Phys. A: Math. Theor.} \textbf{46}, 125205: 1--11.
\texttt{arXiv:1209.2168}.

\bibitem{S-NS9}
Strungaru N.\ (2014).
On weighted Dirac combs supported inside model sets,
J. Phys. A: Math. Theor. \textbf{47}.
\texttt{arXiv:1309.7947}.


\end{thebibliography}
\end{document}